%% file: paper.tex
\documentclass[11pt]{article}
\pdfoutput=1
\usepackage[letterpaper,margin=1in]{geometry}
\usepackage[USenglish]{babel}
\usepackage[T1]{fontenc}
\usepackage{lmodern}
\usepackage{microtype}
\usepackage{amsmath,amssymb,amsthm,mathtools}
\usepackage{graphicx,booktabs,tabularx,array,placeins,needspace}
\usepackage{xcolor}
\usepackage[hidelinks,hypertexnames=false]{hyperref}
\newtheorem{theoremA}{Theorem}
\newtheorem{theoremB}{Theorem}
\newtheorem{theoremBintro}{Theorem}
\newtheorem{theoremAintro}{Theorem}
\newtheorem{theoremC}{Theorem}
\newtheorem{theoremD}{Theorem}
\newtheorem{theoremE}{Theorem}
\newtheorem{theorem}{Theorem}[section]
\newtheorem{conjecture}[theorem]{Conjecture}
\usepackage{mathrsfs}
\newtheorem{lemma}[theorem]{Lemma}
\newtheorem{proposition}[theorem]{Proposition}
\newtheorem{corollary}{Corollary}
\newtheorem{corollaryB}{Corollary}
\title{\fontsize{16}{19}\selectfont The Exact Welfare Guarantee of Fixed-Price Bilateral Trade}
\author{%
\begin{tabular}[t]{@{}c@{\hspace{4em}}c@{}}
Tingyi Lin\setcounter{footnote}{1}\thanks{Correspondence to \texttt{tingyi3@illinois.edu}.} & Yichen Shi\\
\small Adrasteia Labs and UIUC & \small Zhongnan University of Economics and Law\\[1.6ex]
Ke Dong & Jiazhuo Li\\
\small Tsinghua University & \small University of Michigan, Ann Arbor\\[1.6ex]
Shawn Yu & Huanxi Zhang\\
\small Boston College & \small University of Wisconsin--Madison
\end{tabular}}
\date{}
\begin{document}
\maketitle
\begin{abstract}
A seller and a buyer with independent private values can trade only at a posted price. We determine the worst case of this mechanism exactly: the best posted price always guarantees a \(\beta_*=0.738024\ldots\) fraction of first-best welfare, where \(\beta_*\) is given in closed form by the root of an explicit equation, the worst-case buyer is unique up to scaling, and no pair of distributions attains the worst case. This closes the gap \([0.7292,0.73805]\) left by a line of work from SODA 2016 through two STOC 2023 papers and an AAAI 2026 follow-up. The proof is an explicit certificate of optimality: after one change of variables, the gap between the optimal value and the value of any buyer distribution is a sum of manifestly nonnegative integrals, as in a linear-programming dual, and equality identifies the worst-case shape.

The certificate also gives the complete tradeoff between gains from trade and the seller's initial welfare: when first-best gains from trade are a fraction \(\kappa\) of initial seller welfare, the exact worst-case fraction \(\rho(\kappa)\) of gains obtained by the best price satisfies \(\rho(\kappa)\sim2/\log(1/\kappa)\) as \(\kappa\to0\). The worst-case buyer's survival function has two constant segments joined by an explicit nonexponential curve, and the worst case is approached through a vanishing atom whose value tends to infinity. The same constant is the exact guarantee of dominant-strategy mechanisms with individual rationality and strong budget balance in every realization.

For two units with increasing submodular valuations, an explicit finite instance has ratio below \(0.7290804\), so multi-unit trade is strictly harder than single-unit trade. Fixing the buyer, we characterize the least probability that a random common price needs to guarantee a given ratio against every seller; bounding this quantity over all buyers would determine the exact two-unit constant. Within an explicit family the worst ratio is \(0.729080\ldots\), which we conjecture to be the two-unit constant.
\end{abstract}

\clearpage
\setcounter{tocdepth}{1}
{\small\tableofcontents}
\clearpage

\section{Introduction}\label{sec:intro}

A seller owns an item that a buyer may want. Myerson and Satterthwaite \cite{ms83} showed that no mechanism can be at once efficient, incentive compatible, voluntary, and budget balanced. The simplest response is to post a price and let the two parties take it or leave it. How much welfare does this simplicity cost, in the worst case over all pairs of independent value distributions? This paper gives the exact answer, \(\beta_*=0.738024\ldots\), and describes the distributions that come arbitrarily close to forcing it. Two things are needed: a price that works for every pair, and a description of the pairs that defeat every price.

No pair of distributions attains \(\beta_*\). It is approached by pairs in which the buyer's value has an explicit density on an interval, no mass below that interval, and a vanishing atom far above it, while the seller's value has an atom at zero, an explicit distribution on the same interval, and an atom above it (Figure~\ref{fig:extremal}b,c). The shape carries more than the number: the same family of pairs, indexed by one parameter, gives every sharp guarantee on gains from trade and the exact rate at which such guarantees fail as gains become small relative to the seller's initial welfare.

The upper bound came first, and it came close. Colini-Baldeschi, de Keijzer, Leonardi, and Turchetta \cite{cbklt16} gave a pair of distributions on which no fixed price exceeds \(0.7485\), and Kang and Vondr\'{a}k \cite{kv19} lowered this to \(0.7385\). Cai and Wu \cite{cw23} and Liu, Ren, and Wang \cite{lrw23} developed complementary optimization approaches at STOC 2023, each with hard instances at \(0.7381\). Giambartolomei and de Keijzer \cite{gdk26} subsequently narrowed the welfare ratio to \([0.7292,0.73805]\).

We close the gap by proving a matching lower bound for every pair of distributions and identifying the worst-case pair exactly. Numerical programs bound the constant but cannot decide whether it is attained, what the worst case looks like, or how the guarantee changes with the instance; those are the questions answered here. The exact value \(0.738024\ldots\) raises the previous lower guarantee by about \(0.009\); the reported upper bound was already within \(3\times10^{-5}\) of this value. Our results were obtained concurrently with and independently of the work of Tao Jiang, Minbo Gao, and Shaowei Cai \cite{jgc26}, who report the same exact single-unit welfare ratio; they do not study gains-from-trade guarantees, the dominant-strategy welfare ceiling, or multiple units. They reach the same variational problem, in other coordinates, by saturating a constraint on the seller, and prove optimality by showing that a maximizer exists and solving its first-order conditions; our certificate compares every competitor directly and assumes no maximizer. Their saturated seller is the worst-case seller of \cite[Theorem 3.5]{lrw23}, and their threshold equivalence \cite[Proposition 1.2]{jgc26} corresponds to the criterion of \cite[Theorem 3.3]{lrw23}; they cite \cite{lrw23} only for its numerical bounds. Table~\ref{tab:contributions} summarizes what was known and what is new.

\begin{table}[!htbp]
\centering
\fontsize{9}{10.4}\selectfont
\setlength{\tabcolsep}{4pt}
\renewcommand{\arraystretch}{1.1}
\begin{tabularx}{\linewidth}{@{}>{\raggedright\arraybackslash}p{.17\linewidth}>{\raggedright\arraybackslash}X>{\raggedright\arraybackslash}X@{}}
\toprule
Question & Previously known & This paper \\
\midrule
Worst-case welfare ratio & The interval \([0.7292,0.73805]\): hard instances at \(0.7485\), \(0.7385\), \(0.7381\), and \(0.73805\) \cite{cbklt16,kv19,cw23,lrw23,gdk26}, and lower bounds from finite programs \cite{cw23,gdk26}. & The exact value \(\beta_*=0.738024\ldots\), the root of an explicit equation (Theorem~\ref{thm:exact}). \\
Worst-case pair & Necessary conditions on an unknown interval; an exponential trial family for upper bounds \cite{lrw23}. & A unique explicit shape with a nonexponential middle arc, never attained and approached through an escaping atom (Theorem~\ref{thm:calibration}, Proposition~\ref{prop:nonexponential}). \\
Method & Finite programs \cite{cw23,gdk26}; first-order conditions with an explicit trial family \cite{lrw23}. & A global optimality certificate for a nonconcave control problem: a concavifying change of variables (the reciprocal clock) and a calibration (a sufficiency certificate from the calculus of variations) with obstacle multipliers (Lemmas~\ref{lem:energy} and~\ref{lem:gaps}, Theorem~\ref{thm:calibration}). \\
Gains-from-trade guarantees & No constant fraction \cite{bd21}; logarithmic bounds in the efficient-trade probability \cite{cbg17}. & The exact affine frontier and the exact conditional guarantee \(\rho(\kappa)\) at each ratio \(\kappa=G/M\), with \(\rho(\kappa)\sim2/\log(1/\kappa)\) (Theorem~\ref{thm:frontier}, Corollary~\ref{cor:asymptotics}). \\
Dominant-strategy mechanisms & The randomized-price representation \cite{hr87,cp16}. & The exact welfare ceiling \(\beta_*\) under realizationwise individual rationality and strong budget balance (Corollary~\ref{cor:dsic}). \\
Two units & A numerically computed upper bound of \(0.7291\) and a separation conjecture \cite{gdk26}. & A finite instance below \(0.7290804\) and the separation theorem; a one-sided reduction of the exact constant and a conjectured value \(0.729080\ldots\) (Theorems~\ref{thm:two-separation} and~\ref{thm:two-pricing}, Conjecture~\ref{conj:two-exact}). \\
\bottomrule
\end{tabularx}
\caption{What was known and what is new.}
\label{tab:contributions}
\end{table}

Our two main results are the exact welfare guarantee in Theorem~\ref{thm:exact} and the complete gains-from-trade frontier in Theorem~\ref{thm:frontier}. The global comparison in Theorem~\ref{thm:calibration} underlies both results.

\Needspace{8\baselineskip}
\begin{theoremAintro}[Exact welfare guarantee, informal]\label{thm:exact-intro}
Let the seller's and the buyer's values be independent and nonnegative, with positive and finite first-best welfare. The best posted price always achieves at least a \(\beta_*=0.738024\ldots\) fraction of first-best welfare, and no larger constant is guaranteed. Here \(\beta_*\) is given in closed form by the unique root of an explicit equation in one variable. Every pair of distributions achieves strictly more than \(\beta_*\), and an explicit family of bounded pairs approaches it.
\end{theoremAintro}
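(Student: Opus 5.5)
The proof has three steps: reduce to a one-buyer variational problem, certify its optimum by a calibration, and read off strictness and the approaching family from the equality cases.

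\emph{Reduction.} Normalize so that first-best welfare equals \(1\). Let \(S\sim F\) be the seller's value and \(B\sim G\) the buyer's, independent. The posted-price welfare at price \(p\) is
\[
W(p)=\mathbb{E}[S]+\mathbb{E}\bigl[(B-S)\mathbf 1\{S\le p\le B\}\bigr],
\]
and first-best welfare is \(\mathbb{E}[\max(S,B)]\). Suppose the ratio is \(\beta\), so that \(W(p)\le\beta\,\mathbb{E}[\max(S,B)]\) for every \(p\). This inequality is linear in \(F\). So for a fixed buyer \(G\) we can follow the worst-case-seller characterization of \cite{lrw23} and take the worst seller to saturate the price constraints. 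That seller has an atom at \(0\), a distribution on an interval \([a,b]\) determined by \(G\), and an atom above \(b\). Substituting it back leaves a functional of the buyer's survival function \(\bar G\) alone. Call this the value \(V(\bar G)\). The theorem then says \(\inf_{\bar G}V(\bar G)=\beta_*\), the infimum is not attained, and it is approached by the stated family.

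\emph{Certificate.} The functional \(V\) is not concave in \(\bar G\), so first-order conditions alone would not give a global statement. I would therefore pass to the ``reciprocal clock'' coordinates, reparametrizing by a variable such as \(t=1/\bar G\) or its inverse function, in which the energy becomes concave; this is the content of Lemma~\ref{lem:energy}. Then I would build a calibration. From the Euler--Lagrange equation on the middle arc, solve for an explicit extremal. Add nonnegative obstacle multipliers on the two constant segments, where the monotonicity and range constraints bind. By Lemma~\ref{lem:gaps}, \(V(\bar G)-\beta_*\) can be written as a sum of integrals of nonnegative integrands, each vanishing only on the extremal shape. This is Theorem~\ref{thm:calibration}. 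Matching the free endpoints and the transversality condition imposes one scalar equation, and \(\beta_*\) is its unique root. Uniqueness would follow from monotonicity of the defining function in \(\beta\).

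\emph{Strictness and approach.} Equality in every gap integral forces the buyer to be the extremal shape. That shape needs an atom of vanishing mass at value tending to infinity, which no genuine distribution with finite first-best welfare can have. Every actual pair therefore has a strictly positive gap, so its ratio is strictly greater than \(\beta_*\). For the matching upper bound, truncate the escaping atom at a finite height \(H\) with mass of order \(1/H\), and pair this bounded buyer with the saturating seller. By continuity of the gap integrals, the best-price ratio of this pair tends to \(\beta_*\) as \(H\to\infty\). This family is bounded, as the theorem requires.

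\emph{Main obstacle.} The hard step is the calibration. One has to find the concavifying change of variables and verify that the multipliers are nonnegative on the whole constraint set, including the boundary regimes where the interval \([a,b]\) degenerates. I would first solve the Euler--Lagrange ODE in the new clock to get the candidate extremal. I would then guess the multipliers as the jumps of the corresponding costate at the junctions between the arc and the constant segments, and check their signs numerically before attempting a proof.
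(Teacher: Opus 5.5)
Your outline follows the paper's architecture, but the certificate step as you describe it would not close, and that step is where the paper's main idea lies.

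The reciprocal clock is not $1/\bar G$. The state is the reciprocal $p=1/g$ of the buyer's reversed tail integral, run on the tail time $a=\int_t^1g^{-2}$. This is what turns the memory term $\int k/g^2$ into the local weight $a\,h^2$, and the bounds $0\le h\le1$ into the linear obstacles $p\le x+a$ and $p\le 1/d$; monotonicity of $h$ is simply dropped.

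Even then, the problem is concave only once the endpoint $x=p(0)$ is fixed, because the line obstacle moves with $x$. A single extremal with multipliers and a transversality condition is therefore only a first-order condition in $x$. Against a competitor with another endpoint, the gap picks up the unsigned term $\log(x/x_*)$ from $-\log(d\cdot x)$, and the contact term $-2Qv$ loses its sign when $x>x_*$. Transversality makes these cancel only to first order and cannot exclude a second local maximum in $x$.

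The paper instead builds a reference curve $P_x$ with multiplier $Q_x$ for every endpoint $x\in(0,1/d]$. This requires continuing the branch into $C\le1/4$, where $D_C$ has real roots, and up to $\bar C(d)$; it gives \eqref{eq:24}. The paper then proves \eqref{eq:25}: the endpoint envelope $\Phi_d(x(C))$ has a unique maximum, which needs the sign of $\mathcal B_d$ on the whole branch. It must also verify that the reconstructed control satisfies the dropped time normalization $\int p^{-2}\,\mathrm da=1$. The multiplier signs you single out as the main obstacle are immediate from \eqref{eq:18}.

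Second, the calibration bounds the price mass $\mathcal J_d(h)\le S(C_d)$ at a fixed penalty $d$; it does not bound a welfare-ratio functional $V(\bar G)$. The seller is removed by the dual certificate of Lemma~\ref{lem:pricing}: a price density meeting the conditional target $\beta L(s)-\beta d s$ at every seller value, so one averaging covers all sellers. Substituting a worst seller instead needs a minimax argument you do not supply. Moreover, for a genuine buyer the equalizing seller does not return $\mathcal J_d$ exactly, since \eqref{eq:41} carries the residual $\eta d^2T_\eta$. Accordingly, $\beta_*$ is fixed by two conditions, not one transversality equation: endpoint optimality $d=\tau(C)$, and the consistency $\beta d=1-\beta$ with $\beta=1/S(C)$. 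Together these give $S(C_*)=1+\tau(C_*)$.

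Your strictness step also skips a link. The cutoff normalization forces $H(1)>0$, so the buyer's control cannot have the maximizer's initial zero arc, and uniqueness gives $\alpha=\beta_*\mathcal J_d(H(1-\cdot))<1$. This yields only $M+\Gamma_{\max}-\beta_*(M+G)\ge(1-\alpha)\Gamma_{\max}$. You still need $\Gamma_{\max}>0$ when $G>0$, and you must treat $G=0$ separately.

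For sharpness, continuity of $\mathcal J_d$ does not by itself control $\max_z\Gamma(z)$ of the hard pair. The equalizing seller of Lemma~\ref{lem:hard} is what evaluates every price exactly: GFT $d$ on $[0,1)$, and $d+\eta(1-M_\eta)$ on $[1,R_\eta]$. The buyer body must also be lifted to $\eta+(1-\eta)H_C$ so that it is consistent with the atom.
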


The equation is derived in Section~\ref{sec:overview} and stated in full in Theorem~\ref{thm:exact}; the family is in Lemma~\ref{lem:hard}.

In words: a posted price never loses more than \(26.2\) percent of the available welfare. The adversary who comes closest to this loss pairs a buyer who is ordinary with probability close to one, and a lottery ticket with the remaining sliver of probability, with a seller whose costs spread over the same range. A low price serves the ordinary buyer but shuts out every seller whose cost exceeds it; a high price lets every seller trade with the lottery ticket but loses the ordinary buyer. No single price serves both, and \(\beta_*\) is the exact price of that dilemma.

\Needspace{9\baselineskip}
\begin{theoremBintro}[Sharp gains-from-trade guarantees, informal]\label{thm:frontier-intro}
For independent nonnegative values with finite first-best welfare, the sharpest lower bounds on gains from trade that are linear in first-best gains from trade and in the seller's initial welfare form an explicit strictly convex frontier. At each prescribed ratio \(\kappa>0\) of first-best gains from trade to initial seller welfare, this frontier determines the exact fraction \(\rho(\kappa)\) of first-best gains from trade guaranteed by the best fixed price, with
\[
\rho(\kappa)\sim\frac{2}{\log(1/\kappa)}\qquad(\kappa\downarrow0).
\]
Bounded independent hard pairs approach the guarantee at every prescribed \(\kappa\).
\end{theoremBintro}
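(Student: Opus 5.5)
We would derive Theorem~\ref{thm:frontier-intro} from the same calibration that gives Theorem~\ref{thm:exact-intro}, run with a general linear objective instead of the welfare objective. Write \(S\) and \(B\) for the seller's and the buyer's values, \(M=\mathbb{E}[S]\) for initial seller welfare, \(G=\mathbb{E}[(B-S)^+]\) for first-best gains from trade, and \(\mathrm{GFT}(p)=\mathbb{E}[(B-S)\mathbf 1\{S\le p\le B\}]\) for the gains at price \(p\). Welfare at price \(p\) is \(M+\mathrm{GFT}(p)\), and first-best welfare is \(M+G\).

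\textbf{Frontier as a family of welfare-type problems.} For each slope \(a\in(0,1]\), let \(c(a)\) be the least constant such that
\[
\max_p \mathrm{GFT}(p)\ \ge\ a\,G-c(a)\,M
\]
holds for every admissible pair. The set of valid affine lower bounds is then the epigraph-type set determined by \(a\mapsto c(a)\). Scaling the inequality, the problem at slope \(a\) is equivalent to a weighted-welfare problem in which \(M\) enters with weight \(c(a)/a\) rather than \(1\). Taking \(a=\beta_*\) and \(c=1-\beta_*\) recovers the welfare case.

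\textbf{Calibration with a parameter.} We would carry out the reduction to the seller-saturated variational problem exactly as for Theorem~\ref{thm:exact}. This uses the reciprocal clock change of variables and the energy identity of Lemma~\ref{lem:energy}. The only change is that the weight on the seller's initial welfare becomes a free parameter \(\lambda\) instead of \(1\).

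We would then repeat the gap decomposition of Lemma~\ref{lem:gaps} with \(\lambda\) carried through. The optimal value minus the value of any buyer distribution should again be a sum of nonnegative integrals plus obstacle-multiplier terms. This yields the value function \(V(\lambda)\) and an explicit extremal shape \(F_\lambda\): two flat pieces joined by the explicit arc, plus an escaping atom. The construction of Lemma~\ref{lem:hard}, applied with the parameter \(\lambda\), gives bounded independent pairs approaching each frontier point. That matching construction shows the affine bounds are sharp.

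Strict convexity of the frontier would follow from two facts: \(V\) is a pointwise supremum of affine functions of \(\lambda\), and the optimizers \(F_\lambda\) are distinct for distinct \(\lambda\).

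\textbf{Conditional guarantee.} At a fixed ratio \(G/M=\kappa\), we have
\[
\rho(\kappa)=\sup_a \bigl(a-c(a)/\kappa\bigr),
\]
which is a Legendre-type transform of the frontier. Exactness at each \(\kappa\) requires that the supporting hard pair at the optimal slope can be tuned to have ratio exactly \(\kappa\). We would arrange this by mixing in an atom of the seller at \(0\), or by rescaling the atom positions, which moves \(G/M\) continuously without spoiling approximate optimality.

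\textbf{Main obstacle: the asymptotics.} As \(\kappa\to0\), the optimal slope tends to \(0\) and the relevant extremal arc lives on an interval whose length in the reciprocal-clock coordinate grows like \(\log(1/\kappa)\). We would first solve the Euler--Lagrange ODE for the arc in the regime \(\lambda\to\infty\) to leading order. There we expect it to become the equal-revenue-type curve \(1/x\), whose integral produces the logarithm.

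Next we would extract \(c(a)\sim\) an explicit exponential in \(-2/a\). Inverting the Legendre transform then gives \(\rho(\kappa)\sim 2/\log(1/\kappa)\).

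Matching upper and lower constants (the factor \(2\)) requires uniform control of the error terms in the calibration as \(\lambda\to\infty\). That uniform control is where we expect the most work. As a cross-check, the known logarithmic bounds of \cite{cbg17} can be compared with the upper bound obtained from equal-revenue instances.
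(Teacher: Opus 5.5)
Your skeleton matches the paper's, with one correction to the parametrization. The parameter that decouples the problem is \(d=c(a)/a\). It is not a weighted-welfare weight, and in the welfare case it equals \((1-\beta_*)/\beta_*=\tau(C_*)\), not \(1\). Theorem~\ref{thm:calibration} already holds for every \(d>0\), so no new calibration is needed. Lemmas~\ref{lem:pricing} and~\ref{lem:hard} then give the guarantee with \(\beta=1/S(C_d)\), \(\delta=\beta d\), and its sharpness, and the conditional guarantee is the conjugate \eqref{eq:53}.

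The gap is that every further claim in the statement rests on the explicit scalar curve \((1/S(C),\tau(C)/S(C))\), which your plan never extracts, and the substitutes you offer do not work as stated. Convexity of \(\delta(\beta)=\sup(\beta G-\Gamma_{\max})/M\) (over pairs with \(M>0\)) is free. But ``the optimizers are distinct for distinct parameters'' does not by itself exclude a linear piece, because distinct limiting pairs can define the same affine functional of \(\beta\). To make it work you would have to show that a linear piece makes one normalized limiting buyer, renormalized at different cutoffs, maximize \(\mathcal J_{d'}\) for several \(d'\), contradicting uniqueness modulo scaling. The paper instead differentiates the parametric curve, \eqref{eq:49}.

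You also need two facts you do not address. First, \(\delta'\) must sweep all of \((0,\infty)\), so that every \(\kappa>0\) is an interior tangent slope rather than a supremum escaping to \(\beta\to0\) or \(\beta\to1\). Second, the supporting hard pair must satisfy \(G/M\to\kappa\) before any tuning. Both come from the explicit curve, the second as \(G_*/M_*=d(S+u)/u=\kappa(C)\) in \eqref{eq:54}. Your tuning by a seller atom at zero only raises \(G/M\), so it is one-sided. The paper instead varies \(C\) and combines strict monotonicity of \(\kappa(C)\), locally uniform convergence \(\kappa_\eta\to\kappa\), and the intermediate value theorem.

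For the asymptotics you misidentify the difficulty. There are no calibration error terms to control uniformly: the calibration is exact for each \(d\). Theorem~\ref{thm:frontier} gives \(\kappa(C)=\tau(C)(S(C)+u)/u\) and \(\rho(\kappa(C))=1/(S(C)+u)\) exactly, so the upper and lower constants match automatically.

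The missing idea is that \(\kappa\downarrow0\) corresponds to \(C\downarrow1/4\). There \(D_C\) acquires a double root and, by \eqref{eq:5}, \(\mathcal I(C)\sim\pi/\sqrt{C-1/4}\to\infty\). Equivalently, \(P''=-CP/a^2\) has the double indicial root \(1/2\) at \(C=1/4\), and the optimal middle arc spans a \(\log a\)-range of length \(\mathcal I(C)\). Writing \(\mathcal I/2=S/(2C)-1\) and using \((4C-1)S\to0\) gives \(\tau\sim\frac e8e^{-2S}\). Hence \(\kappa\sim\frac e4Se^{-2S}\), \(\log(1/\kappa)\sim2S\), and \(\rho\sim1/S\sim2/\log(1/\kappa)\).

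The constant \(2\) is \(1/(2C)\) evaluated at the critical curvature \(C=1/4\). In your sketch it is asserted (``an explicit exponential in \(-2/a\)'') rather than derived, and the equal-revenue heuristic does not supply it.
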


In words: when the potential gains from trade equal what the seller already has, the best price secures \(54\) percent of them; when they are a tenth of it, \(36\) percent; a hundredth, \(26\) percent. The welfare worst case of Theorem~\ref{thm:exact-intro} sits at a ratio near \(2.6\), where the best price secures \(64\) percent of the gains. The logarithmic rate is reached only slowly; at a ratio of one hundredth, \(2/\log(1/\kappa)\) is still \(0.43\).

\paragraph{The complete gains-from-trade frontier.}
Welfare counts the seller's initial value; gains from trade do not. A posted price guarantees a constant fraction of welfare but no constant fraction of gains from trade \cite{bd21}, and the second main result says exactly how the guarantee on gains from trade decays. Write \(M\) for initial seller welfare and \(G\) for first-best gains from trade (GFT). Theorem~\ref{thm:frontier} gives the smallest penalty \(\delta(\beta)\) for which some price always obtains GFT at least \(\beta G-\delta(\beta)M\). This strictly convex curve meets \(\delta=1-\beta\) exactly at the welfare constant, because a price with welfare at least \(\beta(M+G)\) is exactly a price with GFT at least \(\beta G-(1-\beta)M\). For a fixed ratio \(\kappa=G/M>0\), let \(\rho(\kappa)\) denote the exact worst-case fraction of GFT obtained by the best price. Theorem~\ref{thm:frontier} gives
\[
\kappa\rho(\kappa)=\sup_{0<\beta<1}\{\kappa\beta-\delta(\beta)\}.
\]
Thus the known failure of a constant GFT approximation has an exact logarithmic scale when parameterized by \(G/M\); the asymptotic in Theorem~\ref{thm:frontier-intro} is proved in Corollary~\ref{cor:asymptotics}. The coefficient \(2\) is sharp: for each prescribed \(G/M\), bounded hard pairs approach the corresponding conditional guarantee. Both results are read off one explicit curve, indexed by a single parameter, which Section~\ref{sec:model} computes in closed form.

\paragraph{Why is this hard?}
The worst case is the value of an optimization problem over buyer distributions that is not concave, so first-order conditions cannot certify a candidate, and its supremum is not attained. The limiting buyer atom has positive contribution to expected value even as its probability vanishes. This loss of tail mass under weak convergence prevents a genuine distribution pair from attaining the worst-case welfare ratio. Identifying the value exactly requires controlling this limiting tail and proving a matching bound for every distribution. The upper bound of \cite[Theorem 3.8]{lrw23} comes from a buyer whose probability of valuing the item above \(s\) is of the form \(\lambda_0+\lambda_1e^{\lambda_2s}\) between two constant segments. The exact worst-case buyer also has two constant segments, but its middle segment is not of this form on any subinterval (Proposition~\ref{prop:nonexponential}), so no member of that family reaches the exact constant. Theorem~\ref{thm:calibration} supplies the global comparison and identifies the unique maximizing control.

Cai and Wu \cite[Lemmas 3.1 and 3.4]{cw23} characterize the exact ratio as the value of an infinite-dimensional min-max program and bound it through two discretizations that converge to it; Theorem~\ref{thm:exact} determines that value analytically. Liu, Ren, and Wang supply three of our ingredients: a price density that depends on the buyer, the criterion that reduces the worst case to maximizing a functional of the buyer, and a seller distribution that makes every price equally good \cite[Theorems 3.1, 3.3, and 3.5]{lrw23}. Their Theorem 3.4 gives the first-order equation that a maximizing buyer satisfies between the point where its survival function leaves one and the point where it reaches zero. We locate these two points, integrate the equation between them, and prove the global comparison with a unique maximizing control.

\paragraph{Here is the key idea.}
In a linear program, a dual solution certifies that a candidate is optimal by writing the gap between the candidate and any feasible point as a sum of nonnegative terms. Our problem has no finite-dimensional program and is not concave (Section~\ref{sec:overview} gives a two-point example), so the first-order conditions of \cite{lrw23} are necessary but not sufficient, and no trial family can be certified from them. Our proof imports the sufficiency tool of the calculus of variations, calibration \cite{abd03,bf16}, and obtains a certificate of this kind. One change of variables, which we call the reciprocal clock, measures the buyer by the reciprocal of its tail integral and measures time by how much of the second-moment penalty is still to come; in these coordinates the problem is concave once a single boundary value, the endpoint, is fixed. For every endpoint we write down a reference curve and a multiplier that is positive only where the constraints bind. Any competitor with the same endpoint is worse than the reference curve by a sum of three nonnegative integrals, and the value of the reference curve is an explicit function of the endpoint with a unique maximum. Equality forces the competitor to coincide with the reference curve at the best endpoint, which is the worst-case shape. The certificate assumes neither existence nor regularity of a maximizer. We need this generality: no genuine distribution pair attains the worst-case ratio, and the relaxed control problem has a maximizer only as a limit of distributions.

Two consequences place the result in the mechanism-design literature. By Cai and Wu's Theorem 4.1 \cite{cw23}, \(\beta_*\) is also the exact universal ratio when only the buyer's complete prior or only the seller's complete prior is known and randomized prices are allowed. Among all mechanisms for risk-neutral traders that are truthful in dominant strategies, balance the budget in every realization, and never make a trader worse off in any realization,\footnote{That is, ex post: for every realized pair of values, not only in expectation.} none guarantees more than \(\beta_*\) (Corollary~\ref{cor:dsic}). Up to null sets every such mechanism is a random posted price \cite{hr87,cp16}, so on smooth priors it does no better than the best fixed price, and smoothing carries this bound to the hard pairs. Simplicity costs nothing here: among mechanisms that are robust in this sense, a single posted price is worst-case optimal. It is not a ceiling for Bayesian incentive-compatible trade: Dobzinski and Shaulker \cite{ds26} obtain \(0.746\) with a mechanism that is DSIC for the seller and Bayesian IC for the buyer.

\paragraph{Extension to two units.}
When the seller owns two units and both traders have decreasing marginal values, one posted price must serve both marginal trades. We give an explicit finite instance on which no posted price reaches ratio \(0.7290804\), so two units are strictly harder than one, and an instance below \(0.83693\) in which the buyer's and the seller's valuations are identically distributed, improving the bound \(0.8372\) of \cite{gdk26} for that case (Theorem~\ref{thm:two-separation}). With one unit and identically distributed values the exact ratio is \((2+\sqrt2)/4\approx0.854\) \cite{kpv22}, so the second unit is costly in that case too. The first instance, evaluated in exact rational arithmetic, settles the separation conjecture of \cite{gdk26}. We then reduce the exact two-unit constant to a question about buyers alone. As in the single-unit argument of Section~\ref{sec:overview}, we fix the buyer and ask how much probability a random price needs to guarantee a target ratio against every seller. The seller's two costs are ordered, and this order couples the two units. A monotone compensation between the units removes the coupling; for each compensation the least price mass is the fixed point of a contraction, and we then minimize over compensations (Theorem~\ref{thm:two-pricing}). Within an explicit family of instances the worst ratio is \(0.729080\ldots\), the unique solution of an explicit system of equations, with rigorous bounds (Proposition~\ref{prop:2fam-optimum}). We conjecture that it is the two-unit constant (Conjecture~\ref{conj:two-exact}); what remains is the bound \eqref{eq:two-open-budget} on the least price mass over all buyers. A second unit therefore costs the designer something: the worst-case guarantee drops by at least \(0.0089\), and if the conjecture holds, by exactly \(\beta_*-0.729080\ldots\).

\paragraph{Formal verification.}
Theorems~\ref{thm:exact}--\ref{thm:two-pricing}, Corollaries~\ref{cor:dsic} and~\ref{cor:asymptotics}, and Proposition~\ref{prop:2fam-optimum} have machine-checked proofs in Lean~4 with Mathlib, with the decimal value of \(\beta_*\) checked by interval arithmetic; the formal proofs of Theorem~\ref{thm:two-pricing} and Proposition~\ref{prop:2fam-optimum} assume scalar facts certified by exact, interval, and SMT arithmetic, and the \href{https://drive.google.com/file/d/1bG8ySCTCiVD0MK4EAhtyTkSXpvBY_4NC/view?usp=drivesdk}{\textcolor{blue}{supplementary material}} lists what is formalized and what is not.

\section{Technical overview}\label{sec:overview}

\begin{figure}[t]
\centering
\includegraphics[width=\linewidth]{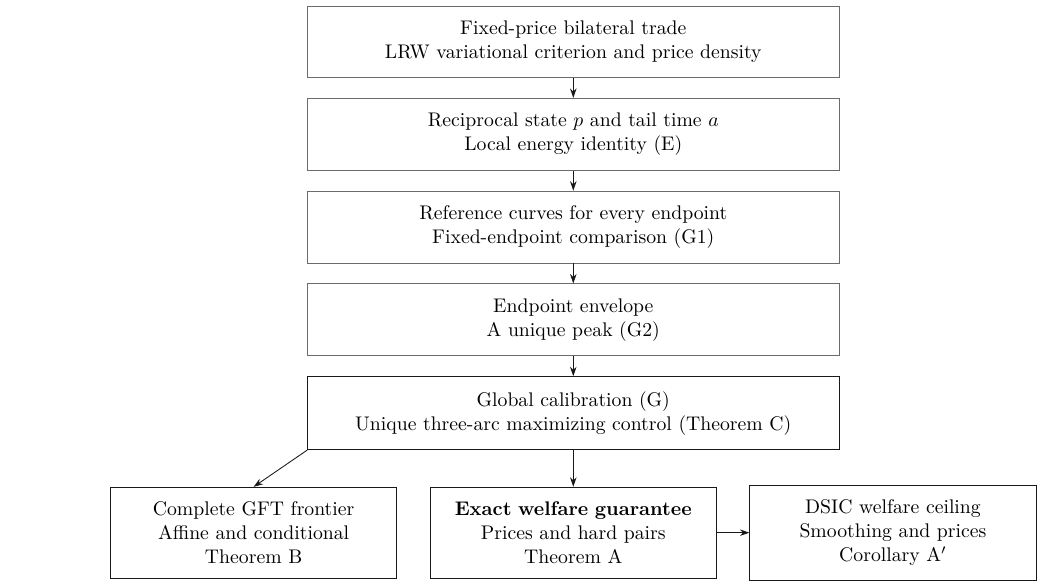}
\caption{The single-unit proof of the guarantee runs down the central chain. The reference curve at the maximizing endpoint and bounded hard pairs give sharpness; varying the parameter gives the GFT frontier. The DSIC consequence additionally uses the price representation and smoothing.}
\label{fig:proofmap}
\end{figure}

\paragraph{From a price distribution to a functional of the buyer.}
Fix the buyer's distribution and a target ratio \(\beta\). A random price certifies the guarantee \(\beta\) against every seller distribution if, for each seller value \(s\), the expected gains from trade conditional on \(s\) are at least \(\beta\) times the conditional first-best gains minus \(\beta d\) times \(s\), for a constant \(d>0\). Averaging over the seller gives GFT at least \(\beta G-\beta d\cdot M\), which is a welfare guarantee of \(\beta\) when \(\beta d\le1-\beta\). Liu, Ren, and Wang \cite[Theorem 3.1]{lrw23} give a price density that meets this target with equality, and its total mass is \(\beta\) times a functional of the buyer alone. The density is a random price when this mass is at most one, with the remaining probability placed at price zero; because the target is met in expectation over this random price, some fixed price meets it. We rescale values so that the target vanishes at the cutoff \(s=1\); above the cutoff the target is negative and holds trivially, and at the cutoff the buyer's tail integral equals \(d\). Writing \(h\) for the buyer's survival function in reversed and normalized time, \(g\) for its running integral started at \(d\), and \(k\) for the running integral of \(h^2\), that functional is
\[
\mathcal J_d(h)=\int_0^1\left[\frac{h}{g}+\frac{d^2-k}{g^2}\right]\mathrm dt .
\]
The guarantee holds whenever \(\beta\mathcal J_d(h)\le1\) for every buyer, and the equalizing seller of \cite{lrw23} shows that nothing weaker suffices. The worst-case buyer therefore maximizes \(\mathcal J_d\); we allow all measurable \(h\) with values in \([0,1]\), a relaxation, since survival functions give only nondecreasing \(h\). The seller has been eliminated, because the price density is the certificate against every seller (Lemma~\ref{lem:pricing}). Section~\ref{sec:model} records the normalization.

\paragraph{The value curve.}
Everything then rests on the value \(V(d)=\sup_h\mathcal J_d(h)\): the guarantee of GFT at least \(\beta G-\beta d\cdot M\) holds for all pairs exactly when \(\beta V(d)\le1\). Theorem~\ref{thm:calibration} computes this value along an explicit curve. For each \(d>0\) there is a unique \(C_d\in(1/4,1/2)\) with \(\tau(C_d)=d\), and \(V(d)=S(C_d)\), where \(\tau\) and \(S\) are the closed forms of Section~\ref{sec:model}. The frontier of Theorem~\ref{thm:frontier} is therefore the curve \((1/S(C),\tau(C)/S(C))\), and the welfare constant is its point on the line \(\delta=1-\beta\), where \(S(C_*)=1+\tau(C_*)\); numerically \(C_*\approx0.394\) and \(\tau(C_*)\approx0.355\).

\paragraph{The maximization is not concave.}
At \(d=1/2\), the controls equal to \((4/5,3/5)\) and to \((1/5,2/5)\) on the two halves of \([0,1]\) have values \(203/216\) and \(103/96\). Their average is the constant control \(1/2\), whose value is \(1\), yet the average of the two values exceeds \(1\) by \(11/1728\); Appendix~\ref{app:nonconcavity} gives the general two-step formula. No first-order or Jensen argument can therefore certify a maximizer. This is the obstacle; everything below is built to get around it.

\paragraph{The reciprocal clock.}
Each contribution \(h(u)^2\) to the accumulated second moment \(k\) is paid at every later time through the denominator \(g^2\). Collecting these later payments into a single weight defines a new clock, the tail time \(a=\int_t^1g^{-2}\), under which the objective becomes local. We take the reciprocal \(p=1/g\) of the tail integral as the new state. In the new clock the slope of \(p\) is exactly \(h\), so \(h\le1\) gives the upper obstacle \(p(a)\le x+a\), and \(h\ge0\) gives \(g\ge d\), hence the cap \(p\le1/d\); both obstacles are linear in \(p\). The objective becomes \(-\log(d\cdot x)\) plus an energy that is concave in \(\log p\) once the endpoint \(x=p(0)\), the reciprocal of the final tail integral, is fixed (Lemma~\ref{lem:energy}). This is the only step that changes the structure of the problem; everything after it is a comparison.

\paragraph{A reference curve at every endpoint, and a certificate.}
Fixing \(x\) leaves a concave problem in \(\log p\) with two upper obstacles. For every endpoint \(x\) we construct a reference curve \(P_x\) that follows the line \(x+a\), then solves the Euler equation of the energy on a middle arc, then follows the cap \(1/d\) (Lemma~\ref{lem:reference}). Comparing any admissible \(p\) with the same endpoint to \(P_x\), and writing \(v=\log(p/P_x)\), the energy of \(P_x\) minus the energy of \(p\) is
\[
\int \left[
a(v')^2+P_x^{-2}(e^{-2v}-1+2v)-2Q_xv
\right]\,\mathrm da.
\]
The first two terms are nonnegative. The function \(Q_x\) is nonnegative and supported where the reference curve \(P_x\) meets an upper obstacle, so \(v\le0\) there and the last term is nonnegative as well. In the fixed-endpoint concave relaxation, \(2Q_x\) is the obstacle multiplier. Its sign gives the weak-duality comparison familiar from LPs, and its contact-set support expresses complementary slackness. This is the fixed-endpoint gap, identity (G1) of Lemma~\ref{lem:gaps}. A second identity, (G2) of the same lemma, shows that the reference value as a function of the endpoint \(x\) has a unique maximum. Its derivative along the family of reference curves has the sign of a single explicit factor that changes sign exactly once; the nonconcavity of the original problem survives only in this one-dimensional comparison. Any competitor is therefore dominated in two steps, first by the reference curve at its own endpoint and then by the reference curve at the best endpoint. Equality in both steps forces it to coincide with the reference curve at the best endpoint, where the inverse coordinate change yields a control satisfying the original time constraint; this control is the unique maximizer (Theorem~\ref{thm:calibration}). In the language of the calculus of variations, \(P_x\) is a calibration: an exact comparison identity that certifies global optimality, with \(2Q_x\) as the obstacle multiplier.

\paragraph{What the relaxations cost.}
Three constraints were dropped on the way: monotonicity of \(h\); the slope bound \(0\le p'\le1\), kept only through its two integrated obstacles; and the normalization \(\int p^{-2}\,\mathrm da=1\), which says that original time has length one. The maximizing control satisfies all three, so the relaxed value is the true value. It is still not a buyer: it vanishes near \(t=0\), so the survival function is zero just below \(s=1\), while the tail integral at \(s=1\) equals \(d>0\). Only a limit of buyers has both properties, through an atom of mass \(\eta\to0\) placed at \(1+d/\eta\), which is why the worst case is approached and never attained.

\begin{figure}[t]
\centering
\includegraphics[width=\linewidth]{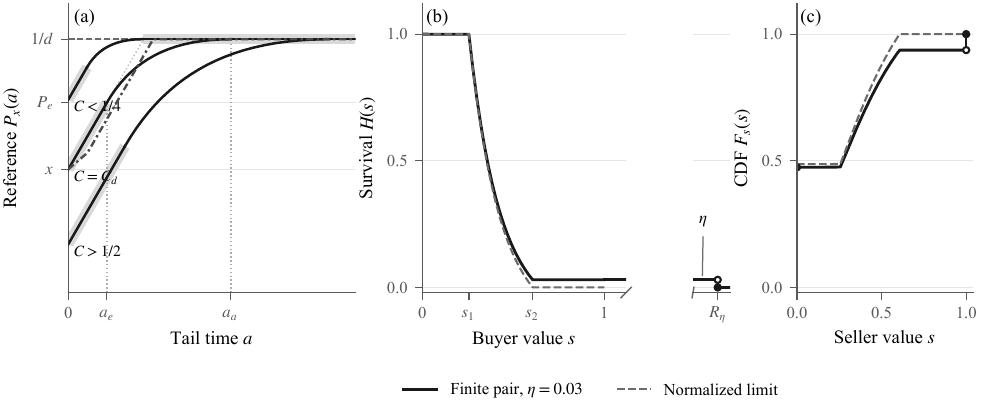}
\caption{Reference curves, and the bounded hard pairs at \(C=C_*\). (a) Reference curves at three endpoints, for \(C<1/4\), \(C=C_d\), and \(C>1/2\), between the line obstacles \(x+a\) and the cap \(1/d\); contact segments are shaded, and \(x\), \(P_e\), \(a_e\), \(a_a\) refer to the middle curve. The dash-dotted curve is a competitor in the obstacle class with the endpoint of the middle curve. (b,c) Solid curves give an actual independent pair with \(\eta=0.03\); dashed curves give the normalized limit. The buyer has an atom of mass \(\eta\) at \(R_\eta=1+d/\eta\), shown after the axis break. The seller of the actual pair has atoms at zero and one; the atom at one vanishes in the limit. The buyer's free boundaries are \(s_1=C^2/(1-C)\) and \(s_2=1-C\).}
\label{fig:extremal}
\end{figure}

\paragraph{The worst-case pair.}
At the optimal endpoint the reference curve follows the line \(x+a\), then a curved arc solving \(P''=-CP/a^2\), where \(C\) is the first integral of the Euler equation, then the cap \(1/d\) (Figure~\ref{fig:extremal}a). Transforming back, the worst-case buyer's survival function equals \(1\) up to \(s_1=C^2/(1-C)\), decreases along an explicit curve to \(0\) at \(s_2=1-C\), and the bounded hard pairs add an atom of mass \(\eta\) at value \(1+d/\eta\), whose contribution \(\eta\cdot d/\eta=d\) to the tail integral at \(s=1\) survives as its probability vanishes (Figure~\ref{fig:extremal}b). The equalizing seller of \cite{lrw23} has atoms at zero and at one, the second vanishing as \(\eta\to0\), and makes every price in \([0,1)\) equally good (Figure~\ref{fig:extremal}c). No genuine pair attains the ratio, and every pair is strictly above \(\beta_*\). The middle arc is not an exponential on any subinterval (Proposition~\ref{prop:nonexponential}). Moving the parameter \(C\) moves the pair along the frontier of Theorem~\ref{thm:frontier}; adding the finite buyer tail gives bounded hard pairs at every point of that frontier, and the price representation and smoothing then give the dominant-strategy welfare ceiling (Corollary~\ref{cor:dsic}).

\paragraph{New and adapted ingredients.}
The change of variables (Lemma~\ref{lem:energy}) and the two gap identities (Lemma~\ref{lem:gaps}) carry the single-unit argument. Lemmas~\ref{lem:pricing} and~\ref{lem:hard} adapt the price density and the equalizing seller of \cite{lrw23} to arbitrary buyer atoms and to bounded supports, and we give their short proofs to fix the conventions. Appendix~\ref{app:representation} proves the price representation of \cite{hr87,cp16} in an almost-everywhere form for measurable mechanisms, which is the form that smoothed hard pairs require. The family optimization for two units (Appendix~\ref{app:2fam}) is computer-assisted; its interval and positivity certificates, with a replay script, are in the supplementary material.

\paragraph{Extension to two units.}
The same architecture applies to two units with one new difficulty. After fixing the buyer's two marginal values, a common price must satisfy one inequality for every ordered pair of seller costs. The order between the two costs couples the two units. A nondecreasing function, the compensation, moves slack between them: if the first unit's slack is at least minus the compensation and the second unit's slack is at least the compensation, their sum is nonnegative for every ordered pair of costs. For each compensation the least feasible price mass is the fixed point of a contraction (Theorem~\ref{thm:two-pricing}). This least mass is convex in the compensation; its minimum is a functional of the buyer, and the supremum of that functional over buyers is the one quantity that remains open (Conjecture~\ref{conj:two-exact}). Within an explicit family the worst ratio is attained at the unique admissible solution of the family's first-order conditions and lies in \((0.72908,\,0.729081)\) (Proposition~\ref{prop:2fam-optimum}); the tail normalization that reduces general instances to bounded ones is Proposition~\ref{prop:two-tail}. Figure~\ref{fig:proofmap} maps the single-unit chain.

\FloatBarrier
\section{Model and the scalar curve}\label{sec:model}

Welfare is the seller's initial value plus the gains created by trade; keeping these two terms separate also exposes the GFT frontier.

The independent values \(V_s,V_b\) are nonnegative and satisfy
\[
0<\mathbb E\max\{V_s,V_b\}<\infty.
\]
A deterministic price \(z\ge0\) trades when \(V_s\le z\le V_b\). We use this inclusive convention for fixed prices. Define
\begin{equation}
M=\mathbb EV_s,\qquad G=\mathbb E(V_b-V_s)_+,\qquad
\Gamma(z)=\mathbb E[(V_b-V_s)\mathbf1_{\{V_s\le z\le V_b\}}].
\label{eq:1}
\end{equation}
The price welfare is \(W(z)=M+\Gamma(z)\), and first-best welfare is \(W_{\mathrm{FB}}=M+G\). We write \(\Gamma_{\max}=\max_{z\ge0}\Gamma(z)\) and
\begin{equation}
r_{\mathrm{FP}}
=\inf_{F_s,F_b}\frac{M+\Gamma_{\max}}{M+G}.
\label{eq:2}
\end{equation}
The maximum exists: the inclusive rule makes \(\Gamma\) upper semicontinuous by reverse Fatou, and \(\Gamma(z)\le\mathbb E[V_b\mathbf1_{\{V_b\ge z\}}]\to0\) as \(z\to\infty\). If \(\Gamma\) is identically zero, every price is optimal. Randomization over prices cannot improve the optimum for a fixed pair of priors.

One parameter \(C\) indexes the whole construction. It is the first integral \eqref{eq:firstintegral} of the Euler equation on the middle arc of the reference curve, and it sets the arc's curvature through \(P''=-CP/a^2\). As the overview explains, \(\tau(C)\) is the initial state \(d\) at which the arc with parameter \(C\) is optimal, and \(S(C)=\max\mathcal J_{\tau(C)}\) is the optimal value there (Theorem~\ref{thm:calibration}). The quadratic \(D_C\) and its integral \(\mathcal I_C\) arise when the Euler equation is integrated along the arc, whose running parameter is \(y\); \(y_e(C)\) is the value of \(y\) where the optimal arc meets the line. We record the closed forms now and use them in Theorems~\ref{thm:exact} and~\ref{thm:frontier}. The scalar functions governing the optimum use \(1/4<C<1/2\); the comparison at arbitrary endpoints will require the larger range specified in Section~\ref{sec:reference}. For \(1/4<C<1/2\), define
\begin{equation}
\begin{aligned}
D_C(y)&=1-y+Cy^2,&
y_e(C)&=\frac{1-2C}{C(1-C)},\\
\mathcal I_C(y)&=\int_0^y\frac{\mathrm dr}{D_C(r)},& \mathcal I(C)&=\mathcal I_C(y_e(C)),
\end{aligned}
\label{eq:3}
\end{equation}
and the scalar curve
\begin{equation}
\tau(C)=\frac{C^2}{1-2C}e^{-\mathcal I(C)/2},
\qquad S(C)=C(2+\mathcal I(C)).
\label{eq:4}
\end{equation}
With \(\omega=\sqrt{C-1/4}\) and principal arctangents, integration gives
\begin{equation}
\mathcal I(C)=\frac1\omega
\left[
\arctan\frac{1-3C}{2(1-C)\omega}
+\arctan\frac1{2\omega}
\right].
\label{eq:5}
\end{equation}
The normalized functional of \cite{lrw23} uses a buyer survival function \(H(s)=\Pr(V_b>s)\), its tail integral \(L(s)=\mathbb E(V_b-s)_+\), and \(K(s)=\int_s^1H(u)^2\,\mathrm du\). At a cutoff normalized by \(L(1)=d\), reversing time gives \((h,g,k)(t)=(H,L,K)(1-t)\). For general \(d>0\) and measurable \(h:[0,1]\to[0,1]\), define
\begin{equation}
\begin{aligned}
g(t)&=d+\int_0^t h(u)\,\mathrm du,\qquad
k(t)=\int_0^t h(u)^2\,\mathrm du,\\
\mathcal J_d(h)&=\int_0^1
\left[\frac{h(t)}{g(t)}+\frac{d^2-k(t)}{g(t)^2}\right]\,\mathrm dt.
\end{aligned}
\label{eq:6}
\end{equation}
Here \(d>0\) is the fixed initial state \(g(0)\); the function \(\tau(C)\) in \eqref{eq:4} parameterizes the initial states of the maximizing family. Buyer survival functions correspond to nondecreasing controls after reversing time. We maximize \(\mathcal J_d\) over the entire measurable box.

Differentiation of \eqref{eq:3}--\eqref{eq:5} gives
\begin{equation}
\mathcal I'=-\frac{2(C\mathcal I+2)}{C(4C-1)},\qquad
S'=\frac{(2C-1)\mathcal I+8C-6}{4C-1}<0,\qquad
\frac{\tau'}\tau=\frac{S'}{2C-1}>0.
\label{eq:7}
\end{equation}
The signs follow from \(1/4<C<1/2\) and \(\mathcal I>0\). At the endpoints,
\begin{equation}
(S(C),\tau(C))\longrightarrow(\infty,0)\quad(C\downarrow1/4),
\qquad
(S(C),\tau(C))\longrightarrow(1,\infty)\quad(C\uparrow1/2).
\label{eq:8}
\end{equation}
In particular, for every \(d>0\) there is a unique \(C_d\in(1/4,1/2)\) with \(\tau(C_d)=d\). Table~\ref{tab:coordinates} in Appendix~\ref{app:notation} lists every symbol by role with its definition site.

\begin{figure}[t]
\centering
\includegraphics[width=.68\linewidth]{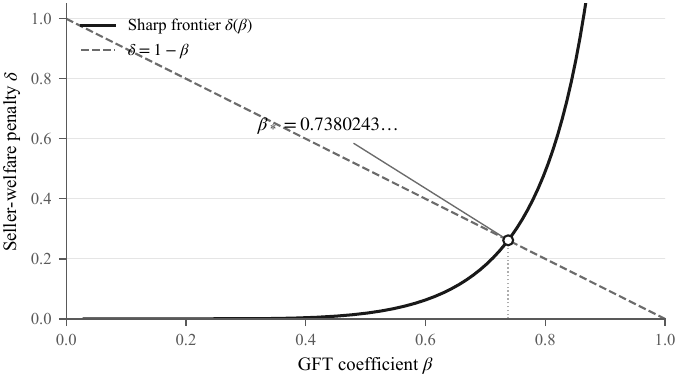}
\caption{The sharp affine frontier. Its unique intersection with \(\delta=1-\beta\) gives the welfare constant. Tangents with slope \(\kappa\) determine the exact conditional GFT guarantee through \(\kappa\rho(\kappa)=\delta^*(\kappa)\).}
\label{fig:frontier}
\end{figure}

\FloatBarrier

\section{The complete gains-from-trade frontier}\label{sec:frontier}

Define the sharp affine coefficient by
\begin{equation}
\delta(\beta)=\inf\left\{\delta\in\mathbb R:
\Gamma_{\max}\ge\beta G-\delta M
\text{ for every pair in Section~\ref{sec:model}}\right\},
\qquad 0<\beta<1.
\label{eq:46}
\end{equation}

For \(\kappa>0\), define
\begin{equation}
\rho(\kappa)=
\inf_{\substack{F_s,F_b\ \text{as in Section~\ref{sec:model}}\\M>0,\ G/M=\kappa}}
\frac{\Gamma_{\max}}G.
\label{eq:50}
\end{equation}

\begin{theoremB}[Sharp affine and conditional GFT guarantees]\label{thm:frontier}
The scalar curve \eqref{eq:4} gives the entire frontier:
\begin{equation}
\beta(C)=\frac1{S(C)},\qquad
\delta(\beta(C))=\frac{\tau(C)}{S(C)},
\qquad \frac14<C<\frac12.
\label{eq:47}
\end{equation}
Every coefficient smaller than \(\delta(\beta)\) is refuted by a bounded hard pair of independent distributions.

The function \(\delta\) is strictly increasing and strictly convex on \((0,1)\). Its derivative ranges from zero to infinity.

With \(u=1-2C\), the map
\begin{equation}
\kappa(C)=\frac{\tau(C)(S(C)+u)}u
\label{eq:51}
\end{equation}
is a strictly increasing bijection from \((1/4,1/2)\) onto \((0,\infty)\), and
\begin{equation}
\rho(\kappa(C))
=\frac1{S(C)+u}
=\frac1{1+C\mathcal I(C)}.
\label{eq:52}
\end{equation}
Equivalently, extending \(\delta\) by \(+\infty\) outside \((0,1)\) for the purpose of conjugation,
\begin{equation}
\kappa\rho(\kappa)=\delta^*(\kappa)
:=\sup_{0<\beta<1}\{\kappa\beta-\delta(\beta)\}.
\label{eq:53}
\end{equation}
For each fixed \(\kappa\), bounded hard pairs with exactly \(G/M=\kappa\) approach \eqref{eq:52}.

\end{theoremB}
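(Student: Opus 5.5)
The plan rests on the overview's reduction. By Lemma~\ref{lem:pricing} and the equalizing seller of Lemma~\ref{lem:hard}, for \(d>0\) and \(0<\beta<1\) the bound \(\Gamma_{\max}\ge\beta G-\beta d\,M\) holds for every pair exactly when \(\beta V(d)\le1\). Theorem~\ref{thm:calibration} evaluates the value: \(V(d)=S(C_d)\), where \(C_d\) is the unique solution of \(\tau(C_d)=d\). Put \(\delta=\beta d\). An affine bound \(\beta G-\delta M\) is then valid iff \(\beta S(C_{\delta/\beta})\le1\).

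\textbf{The frontier.} By \eqref{eq:7}, \(S\) is strictly decreasing and \(\tau\) strictly increasing in \(C\), so \(d\mapsto S(C_d)\) is strictly decreasing. For fixed \(\beta\) the condition \(\beta S(C_d)\le1\) therefore reads \(d\ge d_\beta\), where \(S(C_{d_\beta})=1/\beta\). The limits \eqref{eq:8} guarantee that \(d_\beta\) exists for every \(\beta\in(0,1)\). Hence \(\delta(\beta)=\beta d_\beta\). Writing \(\beta=1/S(C)\) gives \(d_\beta=\tau(C)\), which is \eqref{eq:47}. Any \(\delta<\delta(\beta)\) corresponds to some \(d<d_\beta\) with \(\beta V(d)>1\). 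The bounded hard pairs of Lemma~\ref{lem:hard} at that \(d\) push \(\Gamma_{\max}-\beta G+\beta dM\) below zero, so they refute the coefficient.

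\textbf{Monotonicity and convexity.} I would parametrize the curve by \(C\). Let \(\beta=1/S\) and \(\delta=\tau/S\). Then \(\beta'=-S'/S^2\) and \(\delta'=(\tau'S-\tau S')/S^2\). Using \(\tau'=\tau S'/(2C-1)\) and \(u=1-2C\), this gives
\[
\frac{\mathrm d\delta}{\mathrm d\beta}=\tau\Big(1+\frac{S}{u}\Big)=\frac{\tau(S+u)}{u}=\kappa(C).
\]
So the slope of the frontier is exactly the map \eqref{eq:51}. Since \(\beta\) increases as \(C\) increases, strict convexity of \(\delta\) and strict monotonicity of \(\delta'\) both reduce to \(\kappa(C)\) being strictly increasing. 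The range of \(\delta'\) reduces to the limits of \(\kappa(C)\) at the two ends.

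\textbf{Main obstacle.} The hard step is proving that \(\kappa(C)\) is strictly increasing. Positivity of \(\kappa\) gives the "strictly increasing" part of the statement, but not monotonicity of \(\kappa\) itself. I would compute \(\kappa'/\kappa\) as \(\tau'/\tau+(S'+u')/(S+u)-u'/u\), where \(u'=-2\). Substituting \(\tau'/\tau=S'/(2C-1)\) and the explicit \(S'\) from \(\mathcal I'\) in \eqref{eq:7}, and noting \(S+u=1+C\mathcal I\), the sign reduces to a single inequality between \(\mathcal I\) and rational functions of \(C\) on \((1/4,1/2)\). I would first try to prove this via a bound on \(\mathcal I\) taken from the integral definition \eqref{eq:3}, since \(1/D_C\) has explicit bounds. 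If that fails, I would use the closed form \eqref{eq:5} with monotonicity arguments.

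The limits are separate checks. As \(C\downarrow1/4\), \(\tau\to0\) exponentially fast because \(\mathcal I\to\infty\), while \(S\) grows like \(\mathcal I\), so \(\kappa\to0\). As \(C\uparrow1/2\), \(u\to0\) with \(\tau\to\infty\), so \(\kappa\to\infty\). This gives the bijection onto \((0,\infty)\) and the derivative range \((0,\infty)\).

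\textbf{Conditional guarantee.} For a pair with \(G/M=\kappa\), every valid affine bound gives \(\Gamma_{\max}/G\ge\beta-\delta(\beta)/\kappa\). Taking the supremum over \(\beta\) gives \(\rho(\kappa)\ge\delta^*(\kappa)/\kappa\). Since \(\delta\) is strictly convex with slope \(\kappa(C)\), the supremum is attained at \(\beta=1/S(C)\) with \(\kappa(C)=\kappa\). There the value is
\[
\kappa/S-\tau/S=\frac{\kappa-\tau}{S}=\frac{\tau S/u}{S}=\frac{\tau}{u},
\]
which gives \(\rho=\tau/(u\kappa)=1/(S+u)\). For the reverse inequality, the hard pairs of Lemma~\ref{lem:hard} at \(d=\tau(C)\) make the affine bound tight for this \(\beta\). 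I would adjust the seller's atom at zero, or equivalently scale \(M\), so that \(G/M=\kappa\) exactly. This adjustment needs to be done without spoiling the tightness, and I expect it to be the delicate part of this step. Tightness of a supporting line of slope \(\kappa\) then forces \(\Gamma_{\max}/G\to\delta^*(\kappa)/\kappa\).
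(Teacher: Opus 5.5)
Your frontier identification, the slope computation \(\mathrm d\delta/\mathrm d\beta=\kappa(C)\), and the conditional lower bound all follow the paper. The one step you leave open, exact \(G/M=\kappa\) in the sharpness sequence, is a genuine gap, and the device you suggest does not close it as stated.

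\begin{itemize}
\item Rescaling values cannot change \(G/M\), because \(M\) and \(G\) scale together.
\item Translating both values up by \(c\ge0\) raises \(M\) by \(c\) and leaves \(G\) and \(\Gamma_{\max}\) unchanged. That corrects the ratio only downward, and nothing in your argument says on which side of \(\kappa\) the finite-\(\eta\) ratio \(G_\eta/M_\eta\) lies.
\item Changing the seller's atom at zero destroys the equalizing identity \eqref{eq:42}. You would then have to re-derive the limits and still find a continuous one-parameter family that crosses \(\kappa\) in both directions.
\end{itemize}

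The missing idea in the paper is to move the curve parameter rather than the seller. By \eqref{eq:40} and the continuous dependence of the controls \eqref{eq:31} on \(C\), the ratios \(\kappa_\eta(C)=G_{\eta,C}/M_{\eta,C}\) are continuous in \(C\). They converge locally uniformly to \(G_*/M_*=\kappa(C)\), with the limits \(M_*=u\), \(\Gamma_{\max,*}=d\), \(G_*=d(S+u)\) of \eqref{eq:54} bounded away from zero on compact parameter intervals. Because \(\kappa\) is strictly increasing, \(\kappa_\eta(C-\varepsilon)<\kappa(C)<\kappa_\eta(C+\varepsilon)\) once \(\eta\) is small. The intermediate value theorem on shrinking brackets then gives \(C_\eta\to C\) with \(G/M=\kappa(C)\) exactly. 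By continuity of the limits in \(C\), \(\Gamma_{\max}/G\to1/(S(C)+u)\).

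The strict monotonicity of \(\kappa\), which you flagged as the main obstacle, is immediate from your own formula. With \(\sigma=S'\) and \(\tau'/\tau=-\sigma/u\),
\[
\frac{\kappa'}{\kappa}=-\frac{\sigma}{u}+\frac{\sigma-2}{S+u}+\frac2u=\frac{(2-\sigma)\,S}{u(S+u)}>0,
\]
since \(\sigma<0\) by \eqref{eq:7}. Written out, \(2-\sigma=\bigl(4+(1-2C)\mathcal I\bigr)/(4C-1)\), so the ``inequality between \(\mathcal I\) and rational functions'' holds trivially and needs no bound on \(\mathcal I\). This is the third formula in \eqref{eq:49}.

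Your affine sharpness differs from the paper's but is valid. You refute \(\tilde\delta<\delta(\beta)\) with hard pairs at the smaller \(d=\tilde\delta/\beta\), using \(\beta V(d)>1\) and \eqref{eq:41}. The paper keeps \(d=d_\beta\) and uses the positive limiting seller mean \(M_*=1-2C\) from \eqref{eq:48}, which gives the limit \(-(\beta d-\tilde\delta)(1-2C)<0\). Your version avoids computing \(T_*\). It does need the one-line monotonicity reduction for \(\tilde\delta\le0\), and \(J_\eta\to V(d)\) at every \(d\), which the proof of Theorem~\ref{thm:exact} supplies.
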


\begin{corollaryB}[Endpoint asymptotics]\label{cor:asymptotics}
The sharp curves satisfy
\begin{equation}
\begin{aligned}
\delta(\beta)&\sim\frac e8\,\beta e^{-2/\beta}
&&(\beta\downarrow0),&
\delta(\beta)&\sim\frac1{4(1-\beta)}
&&(\beta\uparrow1),\\
\rho(\kappa)&\sim\frac2{\log(1/\kappa)}
&&(\kappa\downarrow0),&
1-\rho(\kappa)&\sim\frac1{\sqrt\kappa}
&&(\kappa\to\infty).
\end{aligned}
\label{eq:55}
\end{equation}
\end{corollaryB}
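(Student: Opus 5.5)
The plan is to read all four asymptotics off the parametrization in Theorem~\ref{thm:frontier}. There \(\beta=1/S(C)\), \(\delta=\tau(C)/S(C)=\beta\tau(C)\), \(\kappa=\tau(S+u)/u\) and \(\rho=1/(1+C\mathcal I)\), with \(u=1-2C\). By \eqref{eq:8} and the monotonicity in \eqref{eq:7} and Theorem~\ref{thm:frontier}, the limit \(\beta\downarrow0\) and the limit \(\kappa\downarrow0\) both correspond to \(C\downarrow1/4\). Likewise \(\beta\uparrow1\) and \(\kappa\to\infty\) both correspond to \(C\uparrow1/2\). So it suffices to expand \(\mathcal I(C)\) at the two ends of \((1/4,1/2)\) and substitute.

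\emph{The end \(C\downarrow1/4\).} Here \(\omega=\sqrt{C-1/4}\to0\). The numerator \(1-3C\) tends to \(1/4>0\), so both arctangents in \eqref{eq:5} tend to \(\pi/2\), and \(\mathcal I(C)=\pi/\omega+O(1)\to\infty\).

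For \(\delta\), the key identity is
\[
\frac2\beta=2C(2+\mathcal I)=4C+\tfrac12\mathcal I+2\omega^2\mathcal I .
\]
Since \(4C\to1\) and \(\omega^2\mathcal I=O(\omega)\to0\), this gives \(\mathcal I/2=2/\beta-1+o(1)\). In \eqref{eq:4} the prefactor \(C^2/(1-2C)\) tends to \(1/8\), so \(\tau\sim\tfrac18 e\,e^{-2/\beta}\). Hence \(\delta=\beta\tau\sim\tfrac e8\beta e^{-2/\beta}\).

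For \(\rho\), we have \(u\to1/2\) and \(S\sim\mathcal I/4\to\infty\), so \(\kappa\sim2\tau S\asymp\mathcal I\,e^{-\mathcal I/2}\). Taking logarithms, \(\log(1/\kappa)=\mathcal I/2+O(\log\mathcal I)\sim\mathcal I/2\). Since also \(\rho=1/(1+C\mathcal I)\sim4/\mathcal I\), we get \(\rho\sim2/\log(1/\kappa)\).

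\emph{The end \(C\uparrow1/2\).} Here \(u\to0\) and \(y_e=u/(C(1-C))\to0\), with \(C(1-C)\to1/4\). The integrand \(1/D_C(r)=1+O(r)\) uniformly near \(r=0\), so \(\mathcal I=y_e+O(y_e^2)=4u+O(u^2)\). I will use the direct integral \eqref{eq:3} here rather than the arctangent form, since it makes the error term transparent.

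Substituting this expansion:
\begin{itemize}
\item \(S=\tfrac{1-u}{2}(2+4u+O(u^2))=1+u+O(u^2)\), so \(1-\beta\sim u\).
\item \(\tau=\tfrac{C^2}{u}e^{-\mathcal I/2}\sim\tfrac1{4u}\), so \(\delta=\tau/S\sim\tfrac1{4u}\sim\tfrac1{4(1-\beta)}\).
\item \(\kappa=\tau(S+u)/u\sim\tfrac1{4u^2}\), so \(u\sim1/(2\sqrt\kappa)\).
\item \(1-\rho=\tfrac{C\mathcal I}{1+C\mathcal I}\sim C\mathcal I\sim2u\sim1/\sqrt\kappa\).
\end{itemize}

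The only delicate point is the \(\beta\downarrow0\) asymptotic for \(\delta\). There an \(O(1)\) error in the exponent would change the constant \(e/8\). So the relation \(\mathcal I/2=2/\beta-1+o(1)\) must be derived exactly from \(S=C(2+\mathcal I)\) as above, not from the cruder \(\mathcal I\sim\pi/\omega\). The remaining steps are routine bookkeeping of \(o(1)\) terms.
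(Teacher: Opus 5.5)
Your proposal is correct and follows the paper's proof essentially step for step. The paper also derives the exponent exactly from \(S=C(2+\mathcal I)\), writing \(\mathcal I/2=2/\beta-1-(4C-1)/(2C\beta)\) with the last term \(O((C-1/4)^{1/2})\), and at \(C\uparrow1/2\) it uses \(y_e\to0\) with \(D_C=1+O(1-2C)\) on the shrinking interval to get \(\mathcal I\sim y_e\); the only differences are cosmetic, namely that the paper tracks \(S\) rather than \(\mathcal I\) in the \(\kappa\downarrow0\) limit and writes \(C=1/2-\varepsilon\) with \(\varepsilon=u/2\) at the other end.
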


The frontier and its hard pairs are proved in Section~\ref{sec:frontierproof}; the endpoint expansions are proved in Appendix~\ref{app:asymptotics}.

In \eqref{eq:51}, \(\kappa(C)=\delta'(\beta(C))\) is the slope of the frontier at \(\beta(C)\), so the conditional guarantee at a ratio \(\kappa\) is read off where the frontier has slope \(\kappa\) (Figure~\ref{fig:frontier}). The welfare constant of Theorem~\ref{thm:exact-intro} is the point of this frontier on the line \(\delta=1-\beta\). There \(\tau(C_*)/S(C_*)=1-1/S(C_*)\), that is, \(S(C_*)=1+\tau(C_*)\) and \(\beta_*=1/S(C_*)\).

\section{From controls to a local energy}\label{sec:coordinates}

Each contribution \(h(u)^2\) to the accumulated second moment affects every later time. The reciprocal clock collects those later contributions into a single weight, making the objective local in the reciprocal state \(p\). The natural state for \eqref{eq:6} is \(g\); in that variable the objective is not concave and \(0\le h\le1\) constrains a slope. They are chosen so that the second-moment term becomes a Dirichlet energy with a positive weight and the two slope constraints integrate to two upper obstacles.

For a control in \eqref{eq:6}, introduce
\begin{equation}
A(t)=\int_t^1g(u)^{-2}\,\mathrm du,\qquad a_0=A(0),\qquad
p(A(t))=\frac1{g(t)},\qquad x=p(0)=\frac1{g(1)}.
\label{eq:11}
\end{equation}
Extend \(p\) by the constant \(1/d\) for \(a\ge a_0\).

\begin{lemma}[energy representation]\label{lem:energy}
For every \(d>0\), a measurable \(h:[0,1]\to[0,1]\) induces through \eqref{eq:11} an absolutely continuous \(p\) with
\begin{equation}
p'(A(t))=h(t)\quad\text{a.e.},\qquad
p(a)\le \min\{x+a,1/d\},\qquad
\frac1{d+1}\le x\le\frac1d,
\label{eq:12}
\end{equation}
and satisfies
\begin{equation}
\mathcal J_d(h)-1=\mathcal F_d[p]
:=-\log(d\cdot x)+\int_0^\infty
\left[d^2-p^{-2}-a((\log p)')^2\right]\,\mathrm da.
\tag{E}\label{eq:13}
\end{equation}
\end{lemma}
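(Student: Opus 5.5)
The plan is to verify \eqref{eq:12} and \eqref{eq:13} by a direct change of variables. No comparison argument is needed; the only care required is in justifying the chain rule and the integration by parts for merely measurable \(h\).

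\emph{Step 1: the clock is a bi-Lipschitz reparametrization.} Since \(0\le h\le1\), the state \(g\) is Lipschitz with \(d\le g\le d+1\). Hence \(A(t)=\int_t^1g^{-2}\) is strictly decreasing and \(C^1\), with \(A'=-g^{-2}\in[-(d)^{-2},-(d+1)^{-2}]\). Consequently \(a_0=A(0)\le 1/d^2<\infty\), and the inverse \(t(a)\) is bi-Lipschitz on \([0,a_0]\). So \(p=1/g\circ t\) is Lipschitz, in particular absolutely continuous, and the chain rule holds a.e. Differentiating \(p(A(t))=1/g(t)\) gives \(p'(A)\cdot(-g^{-2})=-h/g^2\), that is, \(p'(A(t))=h(t)\) a.e. Null sets are preserved because the reparametrization is bi-Lipschitz.

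\emph{Step 2: the obstacles.} Since \(p'=h\in[0,1]\) a.e., \(p\) is nondecreasing with slope at most one. From \(p(0)=x\) we get \(p(a)\le x+a\). Since \(p\) increases to \(p(a_0)=1/g(0)=1/d\) and is constant afterwards, \(p\le 1/d\). Finally \(x=1/g(1)\) with \(g(1)=d+\int_0^1h\in[d,d+1]\), which gives \(1/(d+1)\le x\le 1/d\).

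\emph{Step 3: the identity (E).} I split \(\mathcal J_d\) into three pieces.

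The first piece is \(\int_0^1 h/g\,\mathrm dt=\int_0^1 g'/g=\log(g(1)/d)=-\log(dx)\).

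For the second piece, substituting \(\mathrm da=g^{-2}\,\mathrm dt\) gives \(\int_0^1 d^2g^{-2}\,\mathrm dt=d^2a_0=\int_0^{a_0}d^2\,\mathrm da\).

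For the third piece I integrate by parts in \(t\), using \(k(0)=0\) and \(A(1)=0\):
\[
-\int_0^1 k\,g^{-2}\,\mathrm dt=\int_0^1 k\,A'\,\mathrm dt=-\int_0^1 h^2A\,\mathrm dt .
\]
This is legitimate since \(k\) is Lipschitz and \(A\) is \(C^1\). Changing variables with \(h=p'\) and \(\mathrm dt=g^2\,\mathrm da=p^{-2}\,\mathrm da\), this becomes \(-\int_0^{a_0}a\,(p'/p)^2\,\mathrm da\).

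Next, \(\int_0^{a_0}p^{-2}\,\mathrm da=\int_0^1g^2g^{-2}\,\mathrm dt=1\), so I may write \(d^2a_0=1+\int_0^{a_0}(d^2-p^{-2})\,\mathrm da\). This is the source of the shift by \(1\) in \(\mathcal J_d-1\).

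On \([a_0,\infty)\) we have \(p\equiv1/d\), so the integrand \(d^2-p^{-2}-a((\log p)')^2\) vanishes there. The integral can therefore be extended to \([0,\infty)\), which yields (E).

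I expect no real obstacle. The step needing most care is Step 1 together with the integration by parts: one must make sure that a.e.\ identities in \(t\) transfer to a.e.\ identities in \(a\), and that the boundary terms vanish. Both follow from the uniform bounds \(d\le g\le d+1\), which make the clock bi-Lipschitz and keep \(a_0\) finite.
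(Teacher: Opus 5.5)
Your proof is correct and follows the paper's argument. Both use the bi-Lipschitz clock with the chain rule for \(p'(A(t))=h(t)\) and the obstacles. Both collect the weight \(A(u)\) on \(h(u)^2\): you integrate by parts, the paper swaps the order of integration, and these are equivalent. Both then subtract the time normalization \(\int_0^{a_0}p^{-2}\,\mathrm da=1\) and note that the integrand vanishes on the constant extension.
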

\begin{proof}
Since \(d\le g\le d+1\), the map \(A\) is strictly decreasing and bi-Lipschitz. The chain rule gives \(p'(A(t))=h(t)\) and \(\mathrm dt=-p^{-2}\,\mathrm da\); the slope bound and constant extension give the two obstacles in \eqref{eq:12}.

The term \(h(u)^2\) enters \(k(t)\) for every \(t\ge u\), so the objective has memory, and in the original variables no pointwise comparison of two controls is available. We reverse the order of integration to collect the total weight of \(h(u)^2\):
\[
\int_0^1\frac{k(t)}{g(t)^2}\,\mathrm dt
=\int_0^1 h(u)^2\left(\int_u^1g(t)^{-2}\,\mathrm dt\right)\mathrm du
=\int_0^1 A(u)h(u)^2\,\mathrm du.
\]
This is the source of the weight \(a=A(u)\). Using \(g'=h\) for the remaining first-order term and then changing variables gives
\[
\mathcal J_d(h)
=\log\frac{g(1)}d+d^2a_0-\int_0^1A(t)h(t)^2\,\mathrm dt
=-\log(d\cdot x)+\int_0^{a_0}\left[d^2-a((\log p)')^2\right]\,\mathrm da.
\]
The original time interval has length one, so \(1=\int_0^{a_0}p(a)^{-2}\,\mathrm da\). Subtracting this normalization yields the energy; its integrand vanishes on the constant extension:
\[
\mathcal J_d(h)-1
=-\log(d\cdot x)+\int_0^{a_0}
\left[d^2-p^{-2}-a((\log p)')^2\right]\,\mathrm da.
\]
\end{proof}

At fixed \(x\), the integral in \eqref{eq:13} is concave in \(\ell=\log p\), and the obstacles become
\[
\ell(a)\le \min\{\log(x+a),-\log d\}.
\]
The obstacle class relaxes the original problem, since we keep the two integrated consequences of \(0\le h\le1\) and drop the slope constraint itself. This is what makes the fixed-endpoint problem concave. The price is that a maximizer of the relaxed problem need not come from an admissible control, and attaining the bound also requires the inverse coordinate change to satisfy the original slope and time constraints; this is settled only in Theorem~\ref{thm:calibration}.

For this concave problem, a reference curve \(P_x\) must solve the Euler equation between its contacts with the two obstacles. Writing \(\ell_x=\log P_x\), the equation on that interior interval is
\[
(a\ell_x')'+e^{-2\ell_x}=0.
\]
Along a solution, differentiation shows that
\begin{equation}
C=a e^{-2\ell_x}+a\ell_x'-a^2(\ell_x')^2
\quad\text{is constant},\qquad P_x''=-\frac{C}{a^2}P_x.
\label{eq:firstintegral}
\end{equation}
We integrate this equation between tangency to the line \(x+a\) and tangency to the cap \(1/d\). The construction below gives a reference curve for every endpoint, including endpoints where that reference curve fails the original time constraint.

\section{A reference curve at every endpoint}\label{sec:reference}

At a fixed initial state \(d\), which sits at the far end of the tail-time axis where \(p=1/d\), each reference curve follows the line, solves the Euler equation on its middle arc, and then meets the constant cap. Covering every endpoint is necessary for a global comparison: a bound at the optimal endpoint alone says nothing about a competitor with a different endpoint. The branch therefore has to be continued below \(C=1/4\), where \(D_C\) has real roots and the arctangent form \eqref{eq:5} no longer applies, and above \(1/2\) up to a terminal value \(\bar C(d)\).

Fix \(d>0\). To compare all endpoints, we let the curvature parameter \(C\) vary over positive values for which the contact equations are defined. Lemma~\ref{lem:reference} identifies this range as \((0,\bar C(d))\). The endpoint envelope in the next section still uses this full range; its maximum selects \(C_d\in(1/4,1/2)\). Throughout this construction, \(D_C\) and \(\mathcal I_C\) retain their formulas in \eqref{eq:3} on intervals where \(D_C>0\). Define
\begin{equation}
\chi(C,y)=\frac{\sqrt{D_C(y)}}y e^{-\mathcal I_C(y)/2},
\qquad Cy<1,\qquad D_C(r)>0\ (0\le r\le y).
\label{eq:14}
\end{equation}
The running parameter is \(y\), and \(Y_d(C)\) denotes the contact parameter solving \(\chi(C,Y_d(C))=d\). This equation is a tangency condition: traced from the cap, where its slope is zero, the middle arc of the reference curve reaches slope one at the parameter value \(Y_d(C)\), and there it touches the line \(x+a\) (Lemma~\ref{lem:reference}). Below we often write \(Y_d\) for \(Y_d(C)\), and similarly \(Z_d\), \(U_d\), and \(\mathcal B_d\) for the functions of \(C\) defined below. The corresponding reciprocal-state endpoint and contact data are
\begin{equation}
\begin{gathered}
x(C)=\frac{1-D_C(Y_d)}{D_C(Y_d)},\qquad
a_e=\frac{C Y_d^2}{D_C(Y_d)},\\
a_a=\frac C{d^2},\qquad P_e=\frac{Y_d}{D_C(Y_d)}.
\end{gathered}
\label{eq:15}
\end{equation}
For the endpoint derivative, define
\begin{equation}
Z_d(C)=\int_0^{Y_d}\frac{r^2}{D_C(r)^2}\,\mathrm dr,\qquad
U_d(C)=\int_0^{Y_d}\frac{Y_d-r}{D_C(r)^2}\,\mathrm dr.
\label{eq:branch_integrals}
\end{equation}

\begin{lemma}[reference curves for every endpoint]\label{lem:reference}
For \(d>0\), there is \(\bar C(d)>1/4\) such that \(Y_d\) from \eqref{eq:14} exists uniquely on \(0<C<\bar C(d)\). The map \(C\mapsto x(C)\) is a continuous strictly decreasing bijection from this interval onto \((0,1/d)\). For each such endpoint, using the contact data \eqref{eq:15}, the function
\begin{equation}
\begin{gathered}
P_x(a)=
\begin{cases}
x+a,&0\le a\le a_e,\\
\text{the middle arc below},&a_e\le a\le a_a,\\
1/d,&a\ge a_a,
\end{cases}\\[4pt]
a(r)=\frac C{d^2}e^{-\mathcal I_C(r)},\quad
P_x(a(r))=\frac{e^{-\mathcal I_C(r)/2}}{d\sqrt{D_C(r)}}
\quad(0\le r\le Y_d(C))
\end{gathered}
\label{eq:16}
\end{equation}
is continuously differentiable and lies below both obstacles in \eqref{eq:12}. The function
\begin{equation}
Q_x=P_x^{-2}+\bigl(a(\log P_x)'\bigr)'
\label{eq:17}
\end{equation}
has no atoms and is given almost everywhere by
\begin{equation}
Q_x(a)=
\begin{cases}
(1+x)/(x+a)^2,&0<a<a_e,\\
0,&a_e<a<a_a,\\
d^2,&a>a_a.
\end{cases}
\label{eq:18}
\end{equation}
\end{lemma}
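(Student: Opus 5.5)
The plan is to verify everything by direct computation along the parametrization \eqref{eq:16}, in four steps: the arc, the contact equation, the gluing and obstacles, and finally the endpoint map.

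\emph{The arc.} For the middle arc, differentiate in \(r\). From \(a(r)=\frac C{d^2}e^{-\mathcal I_C(r)}\) we get \(\mathrm da/\mathrm dr=-a/D_C\). From \(\log P_x=-\tfrac12\mathcal I_C-\tfrac12\log D_C-\log d\) and \(D_C'=-1+2Cr\) we get \(\mathrm d\log P_x/\mathrm dr=-Cr/D_C\). Hence
\[
a(\log P_x)'=Cr,\qquad P_x'=\frac{d\,r\,e^{\mathcal I_C/2}}{\sqrt{D_C}}=\frac{d}{\chi(C,r)} .
\]
These formulas give three facts. First, at \(r=0\) the arc starts at \(a=a_a=C/d^2\) with \(P_x=1/d\) and slope zero, so it glues \(C^1\) to the cap. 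Second, the slope equals one exactly when \(\chi(C,r)=d\), that is, at \(r=Y_d\). Third, on the arc
\[
Q_x=P_x^{-2}+\frac{\mathrm d(Cr)}{\mathrm da}=\frac{CD_C}{a}-\frac{CD_C}{a}=0,
\]
using \(P_x^{-2}=d^2D_Ce^{\mathcal I_C}=CD_C/a\). At \(r=Y_d\) the contact data check directly. The identity \(a(Y_d)=CY_d^2/D_C(Y_d)=a_e\) is equivalent to \(\chi(C,Y_d)^2=d^2\). Similarly \(P_x(a_e)=Y_d/D_C(Y_d)=P_e\). Tangency to the line then forces \(x=P_e-a_e=(Y_d-CY_d^2)/D_C(Y_d)=(1-D_C(Y_d))/D_C(Y_d)\), which is \eqref{eq:15}.

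\emph{The contact equation.} For existence and uniqueness of \(Y_d\), compute
\[
\frac{\partial\log\chi}{\partial y}=\frac{Cy-1}{D_C}-\frac1y<0
\]
on the domain of \eqref{eq:14}, so \(\chi(C,\cdot)\) is strictly decreasing from \(+\infty\) at \(y\downarrow0\). At the end of the domain there are two cases.
\begin{itemize}
\item For \(C\le1/4\), \(D_C\) reaches a root before \(y=1/C\). There \(\sqrt{D_C}\to0\) and \(\mathcal I_C\to\infty\), so \(\chi\to0\) and \(Y_d\) always exists.
\item For \(C>1/4\), the domain ends at \(y=1/C\). Since \(D_C(1/C)=1\), the limit is \(\chi=Ce^{-\mathcal I_C(1/C)/2}\), and \(Y_d\) exists iff this threshold is below \(d\).
\end{itemize}
I would define \(\bar C(d)\) as the first \(C\) at which this threshold reaches \(d\). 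This requires showing that the threshold is increasing in \(C\), or at least that it crosses \(d\) once. That is a one-variable check using \eqref{eq:5}-type closed forms, and it also gives \(\bar C(d)>1/4\), since the threshold vanishes as \(C\downarrow1/4\).

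\emph{Gluing and obstacles.} The line and cap pieces give \(Q_x\) directly. On the line, \(Q_x=(x+a)^{-2}+\bigl(a/(x+a)\bigr)'=(1+x)/(x+a)^2\). On the cap, \(Q_x=d^2\). Because \(a(\log P_x)'\) is continuous across both junctions (slopes \(1\) and \(0\) match), its distributional derivative has no atoms, which gives \eqref{eq:18}. For the obstacles:
\begin{itemize}
\item \(P_x\) is nondecreasing in \(a\) (slope \(Cr/a\cdot P_x\ge0\)), so \(P_x\le1/d\).
\item On the arc \(P_x''=-CP_x/a^2<0\), so \(P_x\) is concave there. Being \(C^1\), linear before \(a_e\) and constant after \(a_a\), the whole curve is concave and therefore lies below its tangent line \(x+a\) at \(a_e\).
\end{itemize}

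\emph{The endpoint map.} The main obstacle is that \(C\mapsto x(C)\) is a continuous strictly decreasing bijection onto \((0,1/d)\). Continuity follows from the implicit function theorem applied to \(\chi(C,Y)=d\), using \(\partial_y\chi\neq0\). For the limits:
\begin{itemize}
\item As \(C\uparrow\bar C(d)\), \(Y_d\to1/C\) and \(D_C\to1\), so \(x\to0\).
\item As \(C\downarrow0\), I expect \(Y_d\to\) the value with \(D_0(Y)=1-Y\) and \(\sqrt{1-Y}\,e^{-\mathcal I_0/2}/Y=d\). Here \(\mathcal I_0(Y)=-\log(1-Y)\), so \((1-Y)/Y=d\), giving \(D=1-Y=d/(1+d)\) and \(x=1/d\).
\end{itemize}
For strict monotonicity I would differentiate \(x=1/D_C(Y_d)-1\). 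This reduces to showing that \(\frac{\mathrm d}{\mathrm dC}D_C(Y_d(C))=Y_d^2+(2CY_d-1)Y_d'>0\), with \(Y_d'=-\partial_C\log\chi/\partial_y\log\chi\). Since \(\partial_C\mathcal I_C(y)=-\int_0^y r^2/D_C^2\), the term \(Z_d\) of \eqref{eq:branch_integrals} appears here. I would try to rewrite the sign condition as positivity of an integral such as \(U_d\), whose integrand is manifestly positive. If that direct rearrangement fails, the fallback is a sign argument on \(Y_d'\) separately for \(C\) below and above \(1/2\).
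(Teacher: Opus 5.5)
Your verification of the reference curve follows the paper's route and is correct. This covers $\mathrm da/\mathrm dr=-a/D_C$, $a(\log P_x)'=Cr$ and $P_x'=d/\chi(C,r)$. It also covers the contact data at $r=Y_d$, $Q_x=0$ on the arc, the $C^1$ joins (hence no atoms), and $P_x''=-CP_x/a^2$ with concavity for the obstacles. The strict decrease of $\chi(C,\cdot)$ and the limit $x\to1/d$ as $C\downarrow0$ are right as well.

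\textbf{The gap: strict decrease of $x(C)$.} This is the one step of the lemma that is not a direct substitution, and you leave it as something you ``would try''. Your fallback cannot work. The quantity $Y_d^2+(2CY_d-1)Y_d'$ mixes signs:
\begin{itemize}
\item $Y_d'=\tfrac12\bigl(Y_d^3+Y_dD_C(Y_d)Z_d\bigr)$ is always positive, so its sign carries no information.
\item $2CY_d-1<0$ for every $C\le1/4$, because $Y_d$ lies below the first root of $D_C$, which is at most $1/(2C)$.
\item $2CY_d-1<0$ also for $C$ just above $1/2$ when $d$ is large, since then $Y_d\approx1/d$.
\end{itemize}
So where $2CY_d<1$ you are comparing two positive terms, and splitting at $C=1/2$ decides nothing.

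What is missing is the exact identity the paper uses. Integrate the exact derivatives
\[
\frac{\mathrm d}{\mathrm dr}\frac{r}{D_C}=\frac{1-Cr^2}{D_C^2},\qquad \frac{\mathrm d}{\mathrm dr}\frac{r^2}{D_C}=\frac{2r-r^2}{D_C^2}
\]
over $[0,Y_d]$. This expresses $U_d$ through $Z_d$ as $2U_d=Y_d^2/D_C(Y_d)+(2CY_d-1)Z_d$. Inserting this and $Y_d'$ gives
\[
\frac{\mathrm d}{\mathrm dC}D_C(Y_d)=Y_dD_C(Y_d)\bigl(Y_d+U_d\bigr)>0 .
\]
Here the $Z_d$-terms agree, and the remaining terms both equal $Y_d^2-\tfrac12Y_d^3+CY_d^4$. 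Hence $x'(C)=-Y_d(Y_d+U_d)/D_C(Y_d)<0$. Without this identity, or an equivalent one, the bijection onto $(0,1/d)$ is not proved.

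\textbf{Two smaller points, asserted but not proved.}
\begin{itemize}
\item You never show that $\chi_{\mathrm{end}}(C)=Ce^{-\mathcal I_C(1/C)/2}$ reaches $d$, so $\bar C(d)$ might be undefined. The bound $\mathcal I_C(1/C)\le[C(1-1/(4C))]^{-1}$ gives $\chi_{\mathrm{end}}\to\infty$. With your first-crossing definition, monotonicity of $\chi_{\mathrm{end}}$ is not actually needed for the lemma as stated. It is easy anyway: the paper uses $(\log\chi_{\mathrm{end}})'=\partial_C\log\chi(C,1/C)+1/C>0$. Alternatively, $\mathcal I_C(1/C)=\frac4w\arctan\frac1w$ with $w=\sqrt{4C-1}$.
\item The limit $Y_d\to1/\bar C(d)$ needs an argument. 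Any limit point $\bar Y$ of $Y_d$ as $C\uparrow\bar C(d)$ satisfies $\chi(\bar C(d),\bar Y)=d$ by continuity. The function $\chi(\bar C(d),\cdot)$ is strictly decreasing and equals $d$ at its terminal point $1/\bar C(d)$, so it equals $d$ only there. Hence $\bar Y=1/\bar C(d)$, and then $D_C(Y_d)\to1$ and $x\to0$.
\end{itemize}
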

\begin{proof}
The contact equation determines an endpoint branch. Its monotone endpoint map gives coverage; the three pieces then determine the reference curve and its multiplier.

\textit{Existence of the endpoint branch.} The partial derivatives of \eqref{eq:14} are
\begin{equation}
\partial_y\log\chi=-\frac1{yD_C(y)}<0,\qquad
\partial_C\log\chi
=\frac{y^2}{2D_C(y)}+\frac12\int_0^y\frac{r^2}{D_C(r)^2}\,\mathrm dr>0.
\label{eq:19}
\end{equation}
For \(0<C\le1/4\), the permitted interval ends at the first positive root of \(D_C\). As \(y\) ranges from zero to that root, \(\chi\) decreases from infinity to zero. For \(C>1/4\), the permitted interval is \((0,1/C)\), and its terminal value is
\[
\chi_{\mathrm{end}}(C)
=C e^{-\mathcal I_C(1/C)/2}.
\]
Along this endpoint, \eqref{eq:19} gives
\[
(\log\chi_{\mathrm{end}})'
=\partial_C\log\chi(C,1/C)+1/C>0.
\]
This endpoint value tends to zero as \(C\downarrow1/4\) and to infinity as \(C\to\infty\); for the latter limit, use
\(\mathcal I_C(1/C)\le [C(1-1/(4C))]^{-1}\).
There is therefore a unique \(\bar C(d)>1/4\) with \(\chi_{\mathrm{end}}(\bar C(d))=d\), and the branch exists exactly on \((0,\bar C(d))\).

\textit{Coverage of every reciprocal-state endpoint.} Along \(\chi(C,Y_d)=d\), implicit differentiation using \eqref{eq:19} gives
\[
Y_d'
=-\left.\frac{\partial_C\log\chi}{\partial_y\log\chi}\right|_{y=Y_d}
=\frac{Y_d^3+Y_dD_C(Y_d)Z_d}2>0.
\]
To express \(U_d\) through \(Z_d\), we integrate two exact derivatives over \([0,Y_d]\):
\[
\frac{\mathrm d}{\mathrm dr}\frac{r}{D_C(r)}=\frac{1-Cr^2}{D_C(r)^2},\qquad
\frac{\mathrm d}{\mathrm dr}\frac{r^2}{D_C(r)}=\frac{2r-r^2}{D_C(r)^2}.
\]
They give
\[
\int_0^{Y_d}\frac{\mathrm dr}{D_C(r)^2}=\frac{Y_d}{D_C(Y_d)}+CZ_d,\qquad
2\int_0^{Y_d}\frac{r\,\mathrm dr}{D_C(r)^2}=\frac{Y_d^2}{D_C(Y_d)}+Z_d,
\]
and together with \(Y_d'\) they give
\begin{equation}
\begin{aligned}
2U_d&=\frac{Y_d^2}{D_C(Y_d)}+(2CY_d-1)Z_d,\\
\frac{\mathrm d}{\mathrm dC}D_C(Y_d)
&=Y_d^2+(2CY_d-1)Y_d'\\
&=Y_dD_C(Y_d)(Y_d+U_d).
\end{aligned}
\label{eq:20}
\end{equation}
The first line holds because \(U_d\) is \(Y_d\) times the first integral above minus the second. The second line substitutes \(Y_d'\) into the total derivative of \(D_C(Y_d)\); after inserting the first line and \(D_C(Y_d)=1-Y_d+CY_d^2\), both sides of its last equality expand to \(Y_d^2-\tfrac12Y_d^3+CY_d^4+\tfrac12(2CY_d-1)Y_dD_C(Y_d)Z_d\). Since \(U_d>0\) by its integral definition,
\begin{equation}
x'(C)=-\frac{Y_d(Y_d+U_d)}{D_C(Y_d)}<0.
\label{eq:21}
\end{equation}
At \(C=0\), the continuous extension of \eqref{eq:14} is \(\chi(0,y)=(1-y)/y\). The implicit root tends to \(1/(1+d)\), so \(x(C)\to1/d\) as \(C\downarrow0\). At the other end, \(Y_d\) is increasing and bounded by \(1/C\), so it has a limit \(\bar Y\le1/\bar C(d)\) as \(C\uparrow\bar C(d)\), and continuity of \(\chi\) gives \(\chi(\bar C(d),\bar Y)=d\). The function \(\chi(\bar C(d),\cdot)\) is strictly decreasing and takes the value \(d\) at its terminal point \(1/\bar C(d)\), so \(\bar Y=1/\bar C(d)\). Consequently \(D_C(Y_d)\to1-1/\bar C(d)+1/\bar C(d)=1\) and \(x(C)\to0\).

\textit{The reference curve and its multiplier.} We differentiate the parametric middle arc:
\begin{equation}
\frac{\mathrm da}{\mathrm dr}=-\frac a{D_C(r)},\qquad
P_x'=\frac{r}{P_xD_C(r)},\qquad
P_x''=-\frac C{a^2}P_x<0.
\label{eq:22}
\end{equation}
The identity \(\chi(C,Y_d)=d\) gives \(a(Y_d)=a_e\) and \(P_x(a_e)=P_e=x+a_e\), with slope one. At \(r=0\), the curve reaches \((a_a,1/d)\) with slope zero, so the pieces fit in \(C^1\). Figure~\ref{fig:extremal}(a) shows the curves at three endpoints. Concavity puts the middle arc below both obstacles and keeps its slope between zero and one. On that arc \(a(\log P_x)'=Cr\), so \eqref{eq:17} vanishes. Since the joins are \(C^1\), there are no jump terms, and on the contact sets
\[
Q_x(a)=\frac{1+x}{(x+a)^2}\quad(0<a<a_e),\qquad Q_x(a)=d^2\quad(a>a_a).
\]
\end{proof}

At the limiting endpoint, set \(Y_d(0)=1/(1+d)\), \(x(0)=1/d\), and \(P_{1/d}\equiv1/d\). A reference curve need not meet the original time constraint; it bounds the energy of every reciprocal state arising from a control with the same endpoint.

\section{Fixed-endpoint comparison and the endpoint envelope}\label{sec:global}

At a fixed endpoint, the energy gap splits into a square, an exponential remainder, and a contact term. Once these terms are nonnegative, only the endpoint envelope remains to be maximized.

For the reciprocal state \(p\) induced by a control \(h\), let
\begin{equation}
v(a)=\log\frac{p(a)}{P_x(a)},\qquad
\Phi_d(x)=\mathcal F_d[P_x],\qquad
\mathcal B_d(C)=2C-1+C(1-C)Y_d.
\label{eq:23}
\end{equation}

\begin{lemma}[nonnegative gaps]\label{lem:gaps}
For \(p\) from \eqref{eq:11} and the notation \eqref{eq:23}, the fixed-endpoint gap is
\begin{equation}
\Phi_d(x)-\mathcal F_d[p]
=\int_0^\infty
\left[
a(v')^2+\frac{e^{-2v}-1+2v}{P_x^2}-2Q_xv
\right]\,\mathrm da\ \ge0.
\tag{G1}\label{eq:24}
\end{equation}
It vanishes exactly when \(p=P_x\). The endpoint envelope has a unique maximum at \(C=C_d\), and
\begin{equation}
\begin{aligned}
\Phi_d(x(C))&=C\mathcal I_C(Y_d(C))-\frac{C^2Y_d(C)}{1-CY_d(C)},\\
\frac{\mathrm d}{\mathrm dC}\Phi_d(x(C))
&=-\frac{(Y_d(C)+U_d(C))\mathcal B_d(C)}{(1-CY_d(C))^2},\\
\Phi_d(x(C_d))-\Phi_d(x(C))
&=\int_{\min\{C,C_d\}}^{\max\{C,C_d\}}
\frac{(Y_d(u)+U_d(u))|\mathcal B_d(u)|}
{(1-uY_d(u))^2}\,\mathrm du.
\end{aligned}
\tag{G2}\label{eq:25}
\end{equation}
\end{lemma}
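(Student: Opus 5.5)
For (G1) I would substitute \(p=P_xe^{v}\), so \(\log p=\log P_x+v\), into the energy \eqref{eq:13}. Both \(p\) and \(P_x\) share the endpoint \(x\), so the \(-\log(d\cdot x)\) terms cancel and \(v(0)=0\). For \(a\ge\max\{a_0,a_a\}\) both states equal \(1/d\), so \(v\) is compactly supported. Expanding the square gives
\[
a((\log p)')^2=a((\log P_x)')^2+2a(\log P_x)'v'+a(v')^2 .
\]
I would then integrate the cross term by parts. The boundary term at \(a=0\) carries the factor \(a\), and \(v\) vanishes at infinity, so
\[
-\int 2a(\log P_x)'v'\,\mathrm da=\int 2\bigl(a(\log P_x)'\bigr)'v\,\mathrm da=\int 2(Q_x-P_x^{-2})v\,\mathrm da
\]
by \eqref{eq:17}. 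This step is legitimate because Lemma~\ref{lem:reference} says \(a(\log P_x)'\) is absolutely continuous with derivative given by \eqref{eq:18} and no atoms. For the potential term I would write \(-p^{-2}=-P_x^{-2}e^{-2v}\) and group \(-P_x^{-2}e^{-2v}+P_x^{-2}-2P_x^{-2}v\). Collecting everything gives exactly the integrand of \eqref{eq:24}.

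Nonnegativity follows term by term. The first term \(a(v')^2\) is clearly nonnegative. The second is nonnegative because \(e^{-2v}-1+2v\ge0\). For the third, \(Q_x\ge0\) and \(Q_x\) vanishes off the contact sets. On those sets \(P_x\) equals an obstacle, and \eqref{eq:12} gives \(p\le P_x\) there, so \(v\le0\) and \(-2Q_xv\ge0\). If the gap is zero, the exponential term forces \(v=0\) almost everywhere, hence \(p=P_x\) by continuity.

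For (G2) I would compute \(\Phi_d(x(C))\) directly. Integrate the three pieces of \(P_x\): the line piece \(x+a\) contributes an elementary integral of \(d^2-(x+a)^{-2}-a/(x+a)^2\) over \((0,a_e)\). The cap piece contributes zero. The middle arc, parametrized by \(r\) via \eqref{eq:16} and \eqref{eq:22}, uses \(\mathrm da=-a\,\mathrm dr/D_C\), \(a(\log P_x)'=Cr\), and \(aP_x^{-2}=CD_C(r)\). It therefore contributes \(\int_0^{Y_d}[C/d^2\cdot e^{-\mathcal I_C}\cdot d^2-CD_C-C^2r^2]\,\mathrm dr/D_C\). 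Simplifying with \(\chi(C,Y_d)=d\), \(-\log(dx)\), and \eqref{eq:15} should collapse everything to \(C\mathcal I_C(Y_d)-C^2Y_d/(1-CY_d)\); note \(1-CY_d=(1-D_C(Y_d)+\dots)\) ties to \(x\). I would differentiate this closed form in \(C\), using \(Y_d'\) and \eqref{eq:20}, to obtain the stated derivative. The third line of (G2) is then the fundamental theorem of calculus, once I show \(\mathcal B_d\) changes sign exactly once, at \(C_d\), from negative to positive.

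For that sign analysis I would note that \(Y_d\) is increasing (shown in Lemma~\ref{lem:reference}), so \(\mathcal B_d\) can be studied via \(\mathcal B_d'=2+(1-2C)Y_d+C(1-C)Y_d'\). This is positive for \(C\le1/2\), and beyond \(1/2\) one has \(\mathcal B_d>0\) outright since \(2C-1>0\). At \(C\downarrow0\), \(\mathcal B_d\to-1\). The root satisfies \(Y_d=(1-2C)/(C(1-C))=y_e(C)\), and \(\chi(C,y_e(C))=\tau(C)\) by \eqref{eq:4}, using \(D_C(y_e)\) and \(\mathcal I(C)\). Hence the root is exactly \(C_d\) with \(\tau(C_d)=d\). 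The main obstacle I expect is the algebraic collapse of \(\Phi_d(x(C))\) and its derivative; the sign structure is comparatively clean.
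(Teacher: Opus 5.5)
Your proof of \eqref{eq:24} and your piece-by-piece evaluation of $\Phi_d(x(C))$ follow the paper. The collapse uses $x/P_e=1-CY_d$, $1-D_C(Y_d)=Y_d(1-CY_d)$, $\log(d\,P_e)=-\tfrac12[\log D_C(Y_d)+\mathcal I_C(Y_d)]$, and $C^2\int_0^{Y_d}r^2D_C(r)^{-1}\,\mathrm dr=CY_d-C\mathcal I_C(Y_d)+\tfrac12[\log D_C(Y_d)+\mathcal I_C(Y_d)]$.

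The genuine gap is the derivative. Differentiating the closed form gives
\[
\frac{\mathrm d}{\mathrm dC}\Phi_d(x(C))=\mathcal I_C(Y_d)-CZ_d+\frac{CY_d'}{D_C(Y_d)}-\frac{2CY_d-C^2Y_d^2+C^2Y_d'}{(1-CY_d)^2}.
\]
After you insert $2Y_d'=Y_d^3+Y_dD_C(Y_d)Z_d$, every term is rational in $(C,Y_d)$ and affine in $Z_d$, except the bare term $\mathcal I_C(Y_d)$. The target $-(Y_d+U_d)\mathcal B_d/(1-CY_d)^2$, with $U_d$ from the first line of \eqref{eq:20}, contains no $\mathcal I$. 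So $Y_d'$ and \eqref{eq:20} cannot close the computation. You need a separate relation between $\mathcal I_C(Y_d)$ and $Z_d$, namely
\[
(4C-1)Z_d=2\mathcal I_C(Y_d)+\frac{Y_d(Y_d-2)}{D_C(Y_d)}.
\]
To prove it, replace $Y_d$ by a running $y$. Both sides vanish at $y=0$, and their $y$-derivatives agree because $(4C-1)y^2=2yD_C(y)-y(y-2)(2Cy-1)$. No division by $4C-1$ occurs, so it holds on the whole branch.

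Once you eliminate $\mathcal I_C(Y_d)$ with this identity, the $Z_d$-coefficient factors as $-(2CY_d-1)\mathcal B_d/(2(1-CY_d)^2)$. The $Z_d$-free part factors as $-Y_d(2D_C(Y_d)+Y_d)\mathcal B_d/(2D_C(Y_d)(1-CY_d)^2)$, and the two sum to the stated formula. A route that avoids the identity is the multiplier formula $\mathrm d\Phi_d/\mathrm dx=-1/x+\int_0^{a_e}2Q_x(a)/(x+a)\,\mathrm da=1/x^2-(1+x)/P_e^2$ multiplied by \eqref{eq:21}. But then you must justify differentiating $\mathcal F_d[P_x]$ in $x$ through the moving contact points.

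Your sign analysis via $\mathcal B_d'=2+(1-2C)Y_d+C(1-C)Y_d'>0$ on $(0,1/2]$ is correct. It is a tidy substitute for the paper's comparison of the increasing $Y_d$ with the decreasing $y_e$. Once $C_d$ is known to be a zero, it also makes the paper's separate bound $2CY_d<1$ for $C\le1/4$ unnecessary. Two holes remain.

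First, the case $C>1$. It lies on the branch once $d$ is large, since $\bar C(d)\to\infty$. There the term $C(1-C)Y_d$ is negative, so $2C-1>0$ is not enough. Use $CY_d<1$ from the permitted interval to get $\mathcal B_d>C>0$.

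Second, existence of the zero. You argued only that a zero must be $C_d$, not that one exists on the branch. Your facts do not give this when $\bar C(d)\le1/2$, which happens for small $d$. To close it, note that $C_dy_e(C_d)=(1-2C_d)/(1-C_d)<1$ and $D_{C_d}>0$, so $y_e(C_d)$ lies in the permitted interval. Then $\chi(C_d,y_e(C_d))=\tau(C_d)=d$ and strict decrease of $\chi$ in $y$ force $Y_d(C_d)=y_e(C_d)$. Hence $C_d<\bar C(d)$ and $\mathcal B_d(C_d)=0$. Alternatively, use the right-end limit $\mathcal B_d\to\bar C(d)>0$.
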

\begin{proof}
The fixed-endpoint gap bounds the energy at a prescribed endpoint; the endpoint envelope then selects the maximizing endpoint.

\textit{The fixed-endpoint gap (G1).} The reciprocal state \(p\) and reference curve \(P_x\) agree at zero and eventually equal \(1/d\), so \(v(0)=0\) and \(v\) has compact support. Expanding the energy gives
\[
\begin{aligned}
\Phi_d(x)-\mathcal F_d[p]
&=\int_0^\infty\!\left[\frac{e^{-2v}-1}{P_x^2}+a(2\ell_x'v'+(v')^2)\right]\,\mathrm da\\
&=\int_0^\infty\!\left[a(v')^2+\frac{e^{-2v}-1+2v}{P_x^2}
                         +2a\ell_x'v'-\frac{2v}{P_x^2}\right]\,\mathrm da\\
&=\int_0^\infty\!\left[a(v')^2+\frac{e^{-2v}-1+2v}{P_x^2}-2Q_xv\right]\,\mathrm da.
\end{aligned}
\]
Integration by parts contributes \(2a\ell_x'v\), which vanishes at both ends; the \(C^1\) joins produce no internal boundary terms.

The exponential remainder has the positive-kernel representation
\begin{equation}
e^{-2v}-1+2v
=4v^2\int_0^1(1-\theta)e^{-2\theta v}\,\mathrm d\theta.
\label{eq:26}
\end{equation}
Where \(Q_x>0\), the reference curve \(P_x\) equals an upper obstacle. By \eqref{eq:12}, \(v\le0\) there; this sign is the only place where the constraint \(0\le h\le1\) enters the comparison. Figure~\ref{fig:extremal}(a) shows a competitor with the endpoint of the middle curve: it lies below that curve on both contact segments and above it in between, where \(Q_x=0\). Every summand in \eqref{eq:24} is nonnegative, and \eqref{eq:26} forces \(v=0\) almost everywhere in the equality case. Continuity gives \(p=P_x\).

\textit{Value of the envelope.} We first evaluate \(\Phi_d\) by integrating the three pieces of the reference curve \(P_x\). The contact data \eqref{eq:15} and \(\chi(C,Y_d)=d\) give three identities that are used throughout this step:
\[
\begin{gathered}
1-D_C(Y_d)=Y_d\bigl(1-CY_d\bigr),\qquad
\frac x{P_e}=1-CY_d,\\
\log(d\cdot P_e)=-\tfrac12\bigl[\log D_C(Y_d)+\mathcal I_C(Y_d)\bigr].
\end{gathered}
\]
On the line,
\[
\int_0^{a_e}\frac{\mathrm da}{(x+a)^2}=\frac1x-\frac1{P_e}=\frac{CD_C(Y_d)}{1-CY_d},\qquad
\int_0^{a_e}\frac{a\,\mathrm da}{(x+a)^2}=\log\frac{P_e}x+\frac x{P_e}-1.
\]
On the middle arc we integrate in the parameter \(r\) of \eqref{eq:16}. By \eqref{eq:22}, \(\mathrm da=-(a/D_C(r))\,\mathrm dr\), \(P_x^{-2}\,\mathrm da=-C\,\mathrm dr\), and \(a(\log P_x)'=Cr\), so
\[
\begin{aligned}
\int_{a_e}^{a_a}P_x^{-2}\,\mathrm da&=CY_d,\\
\int_{a_e}^{a_a}a\bigl((\log P_x)'\bigr)^2\,\mathrm da
&=C^2\int_0^{Y_d}\frac{r^2\,\mathrm dr}{D_C(r)}\\
&=CY_d-C\mathcal I_C(Y_d)+\tfrac12\bigl[\log D_C(Y_d)+\mathcal I_C(Y_d)\bigr].
\end{aligned}
\]
The last equality uses \(Cr^2=D_C(r)-1+r\) and \(2C(1-r)=(2C-1)-D_C'(r)\). The cap contributes nothing, and \(d^2a_a=C\). The two integrals of \(P_x^{-2}\) give
\begin{equation}
\int_0^{a_a}P_x^{-2}\,\mathrm da
=C Y_d+\frac{C D_C(Y_d)}{1-CY_d}.
\label{eq:27}
\end{equation}
In the sum of all pieces the logarithms cancel against \(-\log(d\cdot x)\) by the third contact identity, and \(x/P_e=1-CY_d\) removes the remaining constants:
\[
\Phi_d(x(C))
=C\bigl(1+\mathcal I_C(Y_d)\bigr)-CY_d-\frac{CD_C(Y_d)}{1-CY_d}
=C\mathcal I_C(Y_d)-\frac{C^2Y_d}{1-CY_d},
\]
which is the first identity in \eqref{eq:25}.

\textit{Derivative along the branch.} The derivative of \(\Phi_d\) along the branch is the one computation in the single-unit proof that does not reduce to a single identity: it needs the branch derivative \(Y_d'\), the contact identities, and the following relation together:
\begin{equation}
(4C-1)Z_d=2\mathcal I_C(Y_d)+
\frac{Y_d(Y_d-2)}{D_C(Y_d)}.
\label{eq:28}
\end{equation}
To prove \eqref{eq:28}, we replace \(Y_d\) by a running value \(y\). Both sides vanish at \(y=0\), and their \(y\)-derivatives agree because \((4C-1)y^2=2yD_C(y)-y(y-2)(2Cy-1)\). No division by \(4C-1\) occurs, so the identity also holds at \(C=1/4\). Since \(\partial_C\mathcal I_C(y)=-\int_0^yr^2D_C(r)^{-2}\,\mathrm dr\), the first identity in \eqref{eq:25} gives
\[
\frac{\mathrm d}{\mathrm dC}\Phi_d(x(C))
=\mathcal I_C(Y_d)-CZ_d+\frac{CY_d'}{D_C(Y_d)}
-\frac{2CY_d-C^2Y_d^2+C^2Y_d'}{(1-CY_d)^2}.
\]
We eliminate \(\mathcal I_C(Y_d)\) by \eqref{eq:28} and substitute \(2Y_d'=Y_d^3+Y_dD_C(Y_d)Z_d\). The result is affine in \(Z_d\). Its \(Z_d\)-coefficient is
\[
\frac{(2C-1)+CY_d(3-5C)-2C^2(1-C)Y_d^2}{2(1-CY_d)^2}
=-\frac{(2CY_d-1)\,\mathcal B_d}{2(1-CY_d)^2},
\]
and the part free of \(Z_d\) is
\[
-\frac{Y_d\bigl(2D_C(Y_d)+Y_d\bigr)\mathcal B_d}{2D_C(Y_d)(1-CY_d)^2};
\]
both are polynomial identities in \(C\) and \(Y_d\) after clearing denominators. By the first line of \eqref{eq:20}, their sum is \(-(Y_d+U_d)\mathcal B_d/(1-CY_d)^2\), the derivative in \eqref{eq:25}.

\textit{Location of the maximum.} Concavity at a fixed endpoint does not give global optimality, because the nonconcavity of the original problem survives in the endpoint. A second local maximum of the envelope would break the comparison, and excluding it requires the sign of \(\mathcal B_d\) on the whole branch \(0<C<\bar C(d)\), including the ranges \(C\le1/4\) and \(C\ge1/2\) where the closed forms of Section~\ref{sec:model} are not available.

To locate the maximum, use the rational formula for \(y_e(C)\) in \eqref{eq:3} on all of \(0<C<1/2\). Within the permitted branch, \(\mathcal B_d=0\) is equivalent to \(Y_d=y_e(C)\). The contact parameter \(Y_d\) increases strictly with \(C\), whereas
\[
y_e'(C)=-\frac1{C^2}-\frac1{(1-C)^2}<0.
\]
For \(C\le1/4\), the permitted branch lies before the first root of \(D_C\), so \(2CY_d<1\) and \(\mathcal B_d<0\). For \(1/2\le C\le1\), both terms in \(2C-1+C(1-C)Y_d\) are nonnegative and their sum is positive. For \(C>1\), the bound \(CY_d<1\) gives \(\mathcal B_d>C>0\). If \(\bar C(d)<1/2\), the right-end limit instead gives \(\mathcal B_d\to\bar C(d)>0\). Thus there is exactly one zero in \((1/4,1/2)\). At that zero,
\[
D_C(y_e(C))=\frac{C^2}{(1-C)^2},\qquad
\chi(C,y_e(C))=\tau(C),
\]
so this zero is \(C_d\). The endpoint envelope increases before \(C_d\) and decreases after it. Integrating its derivative from \(C_d\) gives (G2).
\end{proof}

Dividing the derivative in \eqref{eq:25} by \eqref{eq:21} gives
\[
\frac{\mathrm d\Phi_d}{\mathrm dx}
=\frac{D_C(Y_d)\,\mathcal B_d}{Y_d(1-CY_d)^2}
=\frac1{x^2}-\frac{1+x}{P_e^2},
\]
where the second form uses \(x=(1-D_C(Y_d))/D_C(Y_d)\), \(P_e=Y_d/D_C(Y_d)\), and \(D_C(Y_d)-(1-CY_d)^2=Y_d\mathcal B_d\). This is the value that the multiplier predicts. Moving the endpoint changes \(-\log(d\cdot x)\) at rate \(-1/x\) and relaxes the line obstacle \(\log(x+a)\) at rate \(1/(x+a)\), and by \eqref{eq:18}
\[
-\frac1x+\int_0^{a_e}\frac{2Q_x(a)}{x+a}\,\mathrm da=\frac1{x^2}-\frac{1+x}{P_e^2}.
\]

\section{Global calibration and the extremal shape}\label{sec:equality}

\begin{theoremC}[Global bound and maximizing control]\label{thm:calibration}
Let \(d>0\), let \(h:[0,1]\to[0,1]\) be measurable, and let \(C_d\in(1/4,1/2)\) satisfy \(\tau(C_d)=d\). Set \(x=(d+\int_0^1h(t)\,\mathrm dt)^{-1}\). For \(x<1/d\), let \(C\in(0,\bar C(d))\) be the unique solution of \(D_C(Y_d(C))=(1+x)^{-1}\); for \(x=1/d\), set \(C=0\) and \(P_x\equiv1/d\). With the reciprocal state \(p\), reference curve \(P_x\), and logarithmic difference \(v\) from \eqref{eq:11}, \eqref{eq:16}, and \eqref{eq:23}, respectively,
\begin{equation}
\begin{aligned}
S(C_d)-\mathcal J_d(h)
={}&\int_{\min\{C,C_d\}}^{\max\{C,C_d\}}
\frac{(Y_d(u)+U_d(u))|\mathcal B_d(u)|}
{(1-uY_d(u))^2}\,\mathrm du\\
&+\int_0^\infty
\left[
a(v')^2+\frac{e^{-2v}-1+2v}{P_x^2}-2Q_xv
\right]\,\mathrm da
\ \ge0.
\end{aligned}
\tag{G}\label{eq:29}
\end{equation}
Consequently,
\begin{equation}
\max_{0\le h\le1}\mathcal J_d(h)=S(C_d).
\label{eq:30}
\end{equation}
The maximizing control is unique up to equality almost everywhere. It has a zero arc of length \(C_d\), a strictly increasing arc of length \((1-2C_d)/(1-C_d)\), and a one arc of length \(C_d^2/(1-C_d)\). On the middle arc,
\begin{equation}
t=C_d(1+y),\qquad
g(t)=d\sqrt{D_{C_d}(y)}\,e^{\mathcal I_{C_d}(y)/2},\qquad
h(t)=\frac{y\,g(t)}{D_{C_d}(y)},\qquad 0\le y\le y_e(C_d).
\label{eq:31}
\end{equation}
\end{theoremC}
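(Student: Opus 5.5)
The plan is to chain Lemma~\ref{lem:energy} with the two identities of Lemma~\ref{lem:gaps}, and then check that the reference curve at the optimal endpoint pulls back to a genuine control.

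Fix a measurable \(h\) and form \(p\), \(x\) by \eqref{eq:11}. Lemma~\ref{lem:energy} gives \(\mathcal J_d(h)-1=\mathcal F_d[p]\) and places \(x\) in \([1/(1+d),1/d]\). If \(x<1/d\), Lemma~\ref{lem:reference} says the endpoint map \(C\mapsto x(C)\) is a bijection onto \((0,1/d)\). Since \(x(C)=(1-D_C(Y_d))/D_C(Y_d)\), the equation \(x(C)=x\) is the same as \(D_C(Y_d(C))=(1+x)^{-1}\). This gives the unique \(C\) named in the statement. If \(x=1/d\), we use the constant reference curve, and the gap identity (G1) still holds by the same integration by parts: \(Q\equiv d^2\) on \((0,\infty)\) and \(p\le1/d\).

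Then we write
\[
S(C_d)-\mathcal J_d(h)=\bigl[S(C_d)-1-\Phi_d(x(C_d))\bigr]+\bigl[\Phi_d(x(C_d))-\Phi_d(x)\bigr]+\bigl[\Phi_d(x)-\mathcal F_d[p]\bigr].
\]
The third bracket is (G1) and the second is (G2). For \(C=0\) we also need continuity of the envelope as \(C\downarrow0\), which follows from the closed form in (G2).

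The first bracket must vanish, i.e. \(\Phi_d(x(C_d))=S(C_d)-1=C_d(1+\mathcal I(C_d))-1\). At \(C_d\) we have \(Y_d=y_e(C_d)\) and \(1-C_dy_e=C_d/(1-C_d)\). Hence
\[
\frac{C_d^2y_e}{1-C_dy_e}=C_d(1-C_d)y_e=1-2C_d,
\]
and the first identity in (G2) gives \(\Phi_d=C_d\mathcal I(C_d)-1+2C_d=S(C_d)-1\). This yields \eqref{eq:29}, with both integrals nonnegative.

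For \eqref{eq:30} and uniqueness, note that equality in (G) forces two things. First, \(C=C_d\), because the integrand of the first integral is positive away from the single zero of \(\mathcal B_d\). Second, \(p=P_{x(C_d)}\), by the equality case of (G1). Because \(A\) is bi-Lipschitz, \(h(t)=p'(A(t))\) is then determined almost everywhere, which gives uniqueness.

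Attainment is the main obstacle. We must show that \(P:=P_{x(C_d)}\) comes from an admissible control, i.e. that defining \(t\) through \(\mathrm dt=-P^{-2}\,\mathrm da\) runs over an interval of length exactly one, and that \(h=P'\in[0,1]\). The slope bound follows from concavity of the middle arc, as in Lemma~\ref{lem:reference}. For the time constraint, \eqref{eq:27} at \(C_d\) gives total time
\[
C_dy_e+\frac{C_dD(y_e)}{1-C_dy_e}.
\]
Using \(D(y_e)=C_d^2/(1-C_d)^2\), this equals \(C_dy_e+C_d^2/(1-C_d)\), which we verify equals \(1\). The arc lengths are read off from the same computation:
\begin{itemize}
\item the cap gives the one arc of length \(C_d^2/(1-C_d)\); on the cap \(h=P'=0\), and in reversed time this is the zero arc;
\item the middle arc has \(P^{-2}\,\mathrm da=-C_d\,\mathrm dr\), so \(t\) is affine in \(r\) with slope \(C_d\). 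Together with the offset from the zero arc this gives \(t=C_d(1+y)\), and the arc length is \(C_dy_e=(1-2C_d)/(1-C_d)\);
\item the line arc has \(h=1\).
\end{itemize}
Note that the zero arc must then have length \(C_d\), while the line arc carries the remaining \(C_d^2/(1-C_d)\). These lengths are to be reconciled against the statement's ordering by tracking which end is \(t=0\), where \(g=d\) at the cap. Finally, \(g=1/P\) with the parametrization \eqref{eq:16} gives \(g=d\sqrt{D}\,e^{\mathcal I/2}\). From \(P'=r/(PD)\) we get \(h=yg/D\), which is \eqref{eq:31}.
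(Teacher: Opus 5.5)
Your architecture is the paper's. The three-bracket split is Lemma~\ref{lem:energy} plus (G1) and (G2). Your check that \(\Phi_d(x(C_d))=S(C_d)-1\) is the reduction the paper uses, and your equality analysis (first \(C=C_d\), then \(p=P_{x(C_d)}\)) is the same.

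The gap is in the attainment step, the one you flag yourself.
\begin{itemize}
\item The integral \eqref{eq:27} runs only over \([0,a_a]\), that is, over the line and the middle arc. Its second term \(C_dD_{C_d}(y_e)/(1-C_dy_e)=1/x-1/P_e=C_d^2/(1-C_d)\) is the original time spent on the line \(x+a\). There \(h=1\), so this is the one arc, not the cap. Your first bullet contradicts itself: the cap has \(h=0\) and cannot carry the one arc.
\item The sum is not \(1\):
\(C_dy_e+C_d^2/(1-C_d)=\bigl((1-2C_d)+C_d^2\bigr)/(1-C_d)=1-C_d\).
\item If the total really were \(1\), there would be no room for a zero arc. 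So the length \(C_d\) you then assign to it is not derived, and it is inconsistent with your own computation.
\end{itemize}

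The missing idea is that the reference curve does not determine the tail-time horizon \(a_0\). For \(a\ge a_a\) we have \(P=1/d\), and there the energy integrand \(d^2-p^{-2}-a((\log p)')^2\) vanishes. Hence \(\mathcal F_d[P]\) does not depend on where the cap ends.

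Attainment needs \(\int_0^{a_a}P^{-2}\,\mathrm da\le1\); a reference curve at a general endpoint can violate this. The slack is absorbed by the cap. The time constraint
\(1=\int_0^{a_0}P^{-2}\,\mathrm da=1-C_d+d^2(a_0-a_a)\)
forces \(a_0=2C_d/d^2\). The cap segment \([a_a,a_0]\) then becomes the zero arc \(t\in[0,C_d]\) at \(g=d\). This zero arc is also the source of the offset in \(t=C_d(1+y)\).

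The same point completes your uniqueness step. \(A\) depends on \(h\), so ``\(A\) is bi-Lipschitz'' is not why \(h\) is determined. Rather, \(p=P\) fixes \(a_0\) through the time constraint, hence fixes \(t(a)=\int_a^{a_0}P^{-2}\,\mathrm du\), and therefore \(h(t(a))=P'(a)\).

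With these corrections, the arc lengths you list and \eqref{eq:31} are right.
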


\begin{proof}
The two gaps give the global bound. Attainment requires reconstructing original time; the time constraint then gives uniqueness.

\textit{The global bound.} At \(C=C_d\), the first formula in \eqref{eq:25} reduces to
\[
1+\Phi_d(x(C_d))=C_d(2+\mathcal I(C_d))=S(C_d).
\]
Lemma~\ref{lem:energy} and Lemma~\ref{lem:gaps} now give \eqref{eq:29}, with four nonnegative contributions. Equality requires \(C=C_d\) and \(p=P_{x(C_d)}\).

\textit{Attainment.} So far a reference curve is only a bound: at a general endpoint it may need more than the available unit of original time. We now take \(C=C_d\) and \(x=x(C_d)\), and reconstruct original time from the reference curve by \(\mathrm dt=-P_x^{-2}\,\mathrm da\). On the middle arc, \eqref{eq:16} gives \(\mathrm dt=C\,\mathrm dy\), and \eqref{eq:27} at \(y=y_e(C)\) becomes
\begin{equation}
Cy_e(C)+\frac{CD_C(y_e(C))}{1-Cy_e(C)}=1-C.
\label{eq:32}
\end{equation}
Thus the middle and one arcs occupy time \(1-C\). We append a zero arc of length \(C\) at the initial state \(g=d\). The resulting control is feasible on \([0,1]\), and substituting \eqref{eq:16} gives \eqref{eq:31}. Its one-arc length is \(CD_C(y_e(C))/(1-Cy_e(C))=C^2/(1-C)\). The derivative \(P_x''<0\) in reversed time gives strict increase of the control on the middle arc.

\textit{Uniqueness.} The reference curve \(P_x\) at \(C=C_d\) first reaches \(1/d\) at \(a_a=C/d^2\). Any maximizing control must induce this reciprocal state up to some tail-time horizon \(a_0\ge a_a\); its time constraint gives
\begin{equation}
1=\int_0^{a_0}P_x^{-2}\,\mathrm da
=1-C+d^2(a_0-a_a),
\qquad a_0=\frac{2C}{d^2}.
\label{eq:33}
\end{equation}
The maximizing control is therefore fixed by
\[
t(a)=\int_a^{a_0}P_x(u)^{-2}\,\mathrm du,\qquad h(t(a))=P_x'(a).
\]
\end{proof}

\begin{proposition}[No exponential subarc]\label{prop:nonexponential}
For \(d>0\), let \(h\) be the maximizing control of Theorem~\ref{thm:calibration}, represented by \eqref{eq:31} on its middle arc. There is no nonempty open subinterval of that arc on which \(h(t)=\lambda_0+\lambda_1e^{\lambda_2t}\) for real constants \(\lambda_0,\lambda_1,\lambda_2\) with \(h'>0\).
\end{proposition}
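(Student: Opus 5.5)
The plan is to change variable from \(t\) to the arc parameter \(y\) and show that \(\log h'\) cannot be affine in \(y\) on any open set. By \eqref{eq:31}, the middle arc is parametrized by \(y\in[0,y_e(C_d)]\) with \(t=C_d(1+y)\). This map is affine with positive slope, so an identity \(h(t)=\lambda_0+\lambda_1e^{\lambda_2t}\) on an open \(t\)-interval is the same as an identity \(h=\lambda_0+\lambda_1'e^{\mu y}\) on an open \(y\)-interval, with \(\mu=\lambda_2C_d\) and \(\lambda_1'=\lambda_1e^{\lambda_2C_d}\).

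First, the degenerate cases are excluded. If \(\lambda_1=0\) or \(\lambda_2=0\), then \(h\) is constant, which contradicts \(h'>0\). Otherwise \(h'(y)=\lambda_1'\mu e^{\mu y}\) has constant sign and
\[
\frac{\mathrm d}{\mathrm dy}\log|h'(y)|=\mu
\]
on the interval. So it suffices to compute \(\frac{\mathrm d}{\mathrm dy}\log h'\) from \eqref{eq:31} and show that it is nonconstant on every open subinterval.

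Write \(C=C_d\), \(D=D_C\), \(\mathcal I=\mathcal I_C\). Substituting \(g\) into \(h\) in \eqref{eq:31} gives
\[
h=d\,y\,e^{\mathcal I(y)/2}D(y)^{-1/2}.
\]
Using \(\mathcal I'=1/D\) and \(D'=2Cy-1\), logarithmic differentiation yields
\[
\frac{h_y}{h}=\frac1y+\frac{1-Cy}{D}.
\]
Since \(D+y-Cy^2=1\), this collapses to
\[
h_y=d\,e^{\mathcal I/2}D^{-3/2}>0.
\]
Differentiating once more gives
\[
\frac{\mathrm d}{\mathrm dy}\log h_y=\frac{1}{2D}-\frac{3D'}{2D}=\frac{2-3Cy}{D(y)}.
\]

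Setting this equal to \(\mu\) on an open interval gives the polynomial identity
\[
2-3Cy=\mu(1-y+Cy^2)
\]
on that interval, hence identically in \(y\). The \(y^2\) coefficient forces \(\mu C=0\), so \(\mu=0\) because \(C>0\). But then \(2-3Cy\equiv0\), which is impossible. This contradiction proves the proposition.

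The only step needing care is the simplification of \(h_y\) via the identity \(D+y-Cy^2=1\). Everything else is a routine polynomial comparison, so I do not expect a genuine obstacle.
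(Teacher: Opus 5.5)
Your proof is correct and follows essentially the paper's route. Both pass to the arc parameter \(y\) and compute the logarithmic derivative of \(h'\) as \((2-3Cy)/D_C(y)\); the paper writes this as \(h''/h'=(2-3Cy)/(CD_C(y))\) in \(t\). The paper rules out a constant value by differentiating once more and finding a nonzero quadratic numerator, whereas you compare coefficients in \(2-3Cy=\mu D_C(y)\), which is the same argument.
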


\begin{proof}
We take \(C=C_d\), so that \(d=\tau(C)\). An exponential subarc with \(h'>0\) would make \(h''/h'\) constant. From \eqref{eq:31},
\begin{equation}
\frac{h''}{h'}=\frac{2-3Cy}{CD_C(y)},\qquad
\frac{\mathrm d}{\mathrm dt}\left(\frac{h''}{h'}\right)
=\frac{2-3C-4Cy+3C^2y^2}{C^2D_C(y)^2}.
\label{eq:34}
\end{equation}
The second numerator is a nonzero quadratic. Since \(y\) varies over every interior subinterval, \(h''/h'\) cannot be constant there.
\end{proof}

\section{Prices and bounded hard pairs}\label{sec:prices}

Use the buyer survival function \(H\) and tail integral \(L\) from Section~\ref{sec:model}. The tail integral is absolutely continuous with derivative \(-H\) almost everywhere. The following density is \cite[Theorem 3.1]{lrw23} in cutoff normalization. We give the direct argument to fix the treatment of arbitrary buyer atoms.

\begin{lemma}[pricing from the variational bound]\label{lem:pricing}
Let \(0<\beta<1\) and \(d>0\). If
\begin{equation}
\beta\sup_{0\le h\le1}\mathcal J_d(h)\le1,
\label{eq:35}
\end{equation}
then every independent pair of finite-mean distributions admits a price with
\begin{equation}
\Gamma(z)\ge \beta G-\beta d\cdot M.
\label{eq:36}
\end{equation}
\end{lemma}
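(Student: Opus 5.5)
The plan is to build, for an arbitrary independent pair, a random price whose expected conditional gain meets the target \(\beta(L(s)-ds)\) \emph{with equality} for every seller value \(s\) below a cutoff. Its total mass will turn out to be \(\beta\mathcal J_d(h)\) for the reversed buyer survival function \(h\), so \eqref{eq:35} makes it a probability distribution. Averaging over the seller then gives \eqref{eq:36} in expectation over the price, and hence for some fixed price.

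\emph{Normalization.} The function \(s\mapsto L(s)-ds\) has derivative \(-H(s)-d<0\). It is positive at \(s=0\) unless \(\mathbb EV_b=0\), a trivial case where \(G=0\) and \eqref{eq:36} holds at any price. It tends to \(-\infty\). So it has a unique zero \(c>0\). The quantities \(\Gamma\), \(G\) and \(M\) all scale linearly under \(V\mapsto V/c\), and the tail integral transforms as \(L(s)\mapsto L(cs)/c\). Hence the affine target is scale-invariant, and I may assume \(c=1\), that is, \(L(1)=d\). For a seller value \(s\ge1\) the target \(\beta(L(s)-ds)\) is nonpositive, so any nonnegative conditional gain suffices there. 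For a seller value \(s<1\), a price \(z\in[s,1]\) yields conditional gain \(\mathbb E[(V_b-s)\mathbf 1_{\{V_b\ge z\}}]\). This equals \(L(z)+(z-s)H(z)\), since \(\Pr(V_b\ge z)\) and \(H(z)\) differ only at countably many \(z\), a null set for a price density. Buyer atoms therefore cause no trouble.

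\emph{Solving for the density.} I look for a density \(\varphi\) on \([0,1)\) with
\[
F(s):=\int_s^1\varphi(z)\bigl[L(z)+(z-s)H(z)\bigr]\,\mathrm dz=\beta\bigl(L(s)-ds\bigr),\qquad 0\le s\le1 .
\]
Both sides vanish at \(s=1\), so it suffices to match derivatives. Set \(\psi(s)=\int_s^1\varphi H\). Then \(F'=-\varphi L-\psi\), and matching with \(-\beta(H+d)\) gives \(\varphi L+\psi=\beta(H+d)\). Combining this with \(\psi'=-\varphi H\) yields \((L\psi)'=-\beta H(H+d)\). With \(\psi(1)=0\) this integrates to
\[
L\psi=\beta\bigl(K+d(L-d)\bigr).
\]
Therefore
\[
\varphi=\beta\Bigl[\frac HL+\frac{d^2-K}{L^2}\Bigr].
\]
Reversing time, \((h,g,k)(t)=(H,L,K)(1-t)\) with \(h\) measurable and \([0,1]\)-valued, gives \(\int_0^1\varphi=\beta\mathcal J_d(h)\le1\) by \eqref{eq:35}. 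I verify directly, rather than through the derivation, that this \(\varphi\) satisfies the integral identity: all functions involved are absolutely continuous. For nonnegativity, \(H\) is nonincreasing and bounded by one, so \(K(s)\le H(s)\int_s^1H=H(s)(L(s)-d)\le HL\). Hence \(HL+d^2-K\ge d^2>0\).

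\emph{Conclusion.} The remaining mass \(1-\beta\mathcal J_d(h)\) goes to price \(0\), which only adds nonnegative gains. Integrating the conditional inequality over the seller's distribution gives \(\mathbb E_Z\Gamma(Z)\ge\beta G-\beta d M\), using \(G=\mathbb E L(V_s)\) and \(M=\mathbb EV_s\). Some price therefore achieves \eqref{eq:36}, and the maximum of \(\Gamma\) exists by Section~\ref{sec:model}.

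I expect the main obstacle to be the careful verification, for general measures, that the ODE-derived \(\varphi\) satisfies the integral identity. This involves atoms of \(V_b\), possible zeros of \(H\) below \(1\) (where \(L=d>0\) still keeps denominators safe), and Fubini in the seller average. The algebra itself is routine.
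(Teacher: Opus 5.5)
Your proposal is correct and follows the paper's proof essentially step for step: the same cutoff normalization \(L(1)=d\), the same density \(\beta\bigl[H/L+(d^2-K)/L^2\bigr]\) (you derive it from the ODE while the paper posits it and verifies it through the derivative of \((d^2-K)/L\), which is the same identity as your \(L\psi=\beta(K+d(L-d))\)), the same nonnegativity bound \(K\le H(L-d)\), the same leftover mass at price zero, and the same averaging over the seller. One small remark: your worry about zeros of \(H\) below the cutoff is vacuous, since \(L(1)=d>0\) forces \(H(1)>0\) and hence \(H>0\) on \([0,1]\).
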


\begin{proof}
The cutoff fixes the target conditional GFT; the price density realizes that target after averaging over the seller.

\textit{Cutoff normalization.} We choose the cutoff where the target conditional GFT becomes zero. If \(\mathbb EV_b=0\), the right side of \eqref{eq:36} is nonpositive. Otherwise the continuous function \(d\cdot s-L(s)\) is strictly increasing from a negative value to infinity, so it has a unique positive zero. We rescale values by this zero. On \([0,1]\), the normalized tail functions satisfy
\[
L(s)=d+\int_s^1H(u)\,\mathrm du,\qquad
K(s)=\int_s^1H(u)^2\,\mathrm du.
\]
\textit{The price density.} We choose a random price with density
\begin{equation}
q(s)=\beta\left[\frac{H(s)}{L(s)}
+\frac{d^2-K(s)}{L(s)^2}\right],\qquad 0\le s\le1,
\label{eq:37}
\end{equation}
and put any remaining probability at zero. Since \(H\) is nonincreasing,
\(K(s)\le H(s)(L(s)-d)\); the density is nonnegative. Its integral is
\(\beta\mathcal J_d(H(1-\cdot))\le1\).

The absolutely continuous function \((d^2-K)/L\) has derivative \(Hq/\beta\), giving
\begin{equation}
L(s)q(s)+\int_s^1H(z)q(z)\,\mathrm dz=\beta H(s)+\beta d
\quad\text{a.e.}
\label{eq:38}
\end{equation}
\textit{The conditional guarantee.} For a seller of value \(s\le1\), the conditional GFT is at least
\[
\gamma(s)=\int_s^1[L(z)+H(z)(z-s)]q(z)\,\mathrm dz.
\]
The lower bound includes every buyer with value strictly above the price; buyers at an atom equal to the price can only add GFT. By \eqref{eq:38},
\(\gamma'(s)=-\beta H(s)-\beta d\) almost everywhere, and \(\gamma(1)=0\). Hence
\(\gamma(s)=\beta L(s)-\beta d\cdot s\) for all \(s\in[0,1]\). For \(s>1\), this target is nonpositive because \(L(s)\le d\). Independence allows us to average the conditional bound over the seller value:
\[
\int_0^1 q(z)\Gamma(z)\,\mathrm dz\ge
\beta\mathbb E L(V_s)-\beta d\,\mathbb EV_s
=\beta G-\beta d\cdot M.
\]
\end{proof}

The seller CDF below is the equalizing construction of \cite[Theorem 3.5]{lrw23}. We use it with an explicit finite buyer tail and compute the endpoint atoms and residual gap needed for the sharp frontier. Figure~\ref{fig:extremal} shows the reference curve and the associated hard pair.

\begin{lemma}[bounded hard pairs]\label{lem:hard}
Let \(H:[0,1]\to[0,1]\) be continuous and nonincreasing, with \(H(1)=\eta>0\), and let \(d>0\). Extend \(H\) to a buyer survival function by keeping it equal to \(\eta\) on \([1,R_\eta)\), where \(R_\eta=1+d/\eta\), and setting it to zero at and above \(R_\eta\). Include an atom of size \(1-H(0)\) at zero if needed. Using the tail integral \(L\), define the auxiliary integral \(T\) and seller CDF \(F_s\) by
\begin{equation}
L(s)=d+\int_s^1H(u)\,\mathrm du,\qquad
T(s)=\int_0^sL(u)^{-2}\,\mathrm du,
\qquad
F_s(s)=d\bigl(L(s)^{-1}-H(s)T(s)\bigr)\quad(0\le s<1),
\label{eq:39}
\end{equation}
and complete \(F_s\) with an atom at one. This is a valid seller CDF. With
\(T_\eta=T(1)\) and \(J_\eta=\mathcal J_d(H(1-\cdot))\), the pair satisfies
\begin{equation}
M_\eta=1-d^2T_\eta,\qquad
\Gamma_{\max,\eta}=d+\eta d^2T_\eta,\qquad
G_\eta=d+d\cdot J_\eta-d^3T_\eta.
\label{eq:40}
\end{equation}
In particular, for every \(\beta\),
\begin{equation}
\Gamma_{\max,\eta}-\beta G_\eta+\beta d\cdot M_\eta
=d(1-\beta J_\eta)+\eta d^2T_\eta.
\label{eq:41}
\end{equation}
\end{lemma}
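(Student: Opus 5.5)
The plan is to verify the four claims in order: that \(F_s\) is a CDF, then the formulas for \(M_\eta\), \(\Gamma_{\max,\eta}\), and \(G_\eta\); identity \eqref{eq:41} is then algebra. Everything rests on one antiderivative. On \([0,1)\) we have \(L'=-H\) and \(T'=L^{-2}\), so \((LT)'=-HT+L^{-1}\), hence
\[
\int_0^sF_s(u)\,\mathrm du=d\,L(s)T(s)\qquad(0\le s\le1).
\]

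\emph{Validity.} Write \(\mathrm dH\le0\) for the (Stieltjes) measure of the nonincreasing \(H\). Since \(\mathrm d(L^{-1})=HL^{-2}\,\mathrm ds\) and \(\mathrm d(HT)=T\,\mathrm dH+HL^{-2}\,\mathrm ds\), we get \(\mathrm dF_s=-d\,T\,\mathrm dH\ge0\), and \(F_s(0)=d/L(0)\ge0\). For \(F_s(1^-)=1-d\eta T_\eta\) to be at most one with a nonnegative atom at one, I would bound \(T_\eta\) using \(L(u)\ge d+\eta(1-u)\), which gives \(T_\eta\le 1/(d(d+\eta))\), so \(d\eta T_\eta\le\eta/(d+\eta)<1\). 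The atom at one therefore has size \(d\eta T_\eta\).

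\emph{Seller mean and price gains.} The seller mean is \(M_\eta=\int_0^1(1-F_s)=1-d^2T_\eta\) by the antiderivative at \(s=1\). For \(0\le z<1\), continuity of \(H\) there gives \(\mathbb E[V_b\mathbf 1_{\{V_b\ge z\}}]=L(z)+zH(z)\), so
\[
\Gamma(z)=\int_{[0,z]}\bigl(L(z)+(z-s)H(z)\bigr)\,\mathrm dF_s(s)=L(z)F_s(z)+H(z)\int_0^zF_s.
\]
Inserting the antiderivative makes this exactly \(d\): every price below one is equally good, which is the equalizing property. For \(1<z\le R_\eta\) only the atom at \(R_\eta\) trades, against every seller, so \(\Gamma(z)=\eta(R_\eta-M_\eta)=d+\eta d^2T_\eta\). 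At \(z=1\), \(\Gamma(1)=d+\eta-M_\eta\), which is smaller than \(d+\eta d^2T_\eta\) by \((1-\eta)(1-d^2T_\eta)\ge0\). For \(z>R_\eta\), \(\Gamma(z)=0\). Hence \(\Gamma_{\max,\eta}=d+\eta d^2T_\eta\).

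\emph{First-best gains (main obstacle).} I write \(G_\eta=\int L\,\mathrm dF_s\), with \(L\) the extended tail integral, counting the atom \(d/L(0)\) at zero and the atom \(d\eta T_\eta\) at one, where \(L(1)=d\). The continuous part is \(-d\int_0^1LT\,\mathrm dH\). I would integrate it by parts to move the differential onto \(LT\), using \((LT)'=L^{-1}-HT\). The boundary terms should cancel the two atoms up to the leftover \(d\). The remaining integrals \(\int H/L\) and \(\int H^2T\) must then be matched to \(J_\eta\) after the time reversal \(t=1-s\), via the same order-of-integration swap used in Lemma~\ref{lem:energy}: \(\int (d^2-K)/L^2=d^2T_\eta-\int H^2T\). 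This bookkeeping, together with the atoms of \(H\) at zero and at \(R_\eta\), is where I expect errors. As a check, the computation should reproduce \(G_\eta=d+dJ_\eta-d^3T_\eta\).

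\emph{Identity \eqref{eq:41}.} Given \eqref{eq:40}, this is direct substitution: the terms \(\beta d^3T_\eta\) cancel, and the \(\beta d\) terms combine to give \(d(1-\beta J_\eta)+\eta d^2T_\eta\).
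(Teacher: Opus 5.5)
Your proposal is correct and follows the paper's proof essentially step for step. The steps match: the Stieltjes differential \(\mathrm dF_s=-d\,T\,\mathrm dH\) with the bound \(L(u)\ge d+\eta(1-u)\); the antiderivative \(\int_0^zF_s=d\,L(z)T(z)\), which gives both the equalizing value \(d\) below one and \(M_\eta\); and, for \(G_\eta\), an integration by parts followed by the Fubini swap \(\int_0^1H^2T=\int_0^1K/L^2\). The paper writes that last step as \(G_\eta=\mathbb E L(V_s)=d+\int_0^1HF_s\), which is the same computation. Your boundary bookkeeping also works out: the term at one cancels the atom \(d^2\eta T_\eta\), and the term at zero vanishes since \(T(0)=0\), leaving \(d+d\int_0^1H/L-d\int_0^1H^2T\).

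One slip, which does not affect \(\Gamma_{\max,\eta}\). Since \(H\) is continuous at one, only the remote atom has \(V_b\ge1\). Hence \(\Gamma(1)=\eta(R_\eta-M_\eta)=d+\eta d^2T_\eta\), which equals the maximum rather than falling short of it by \((1-\eta)M_\eta\). You dropped the factor \(\Pr(V_b\ge1)=\eta\) on the seller term.
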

\begin{proof}
\textit{The seller distribution.} In the Stieltjes differential of \(F_s\), the two terms containing \(\mathrm ds\) cancel, leaving
\[
\mathrm dF_s=-d\cdot T\,\mathrm dH\ge0\qquad(0<s<1).
\]
At zero, the proposed seller law has mass \(d/L(0)\). At the right endpoint,
\(F_s(1-)=1-d\cdot\eta T_\eta\), so the completion adds mass \(d\cdot\eta T_\eta\). These are valid probabilities: \(L(s)\ge d+\eta(1-s)\) gives
\(d\cdot\eta T_\eta\le \eta/(d+\eta)<1\).

\textit{GFT at a price.} The identity \((d\cdot LT)'=F_s\) implies
\begin{equation}
\int_0^zF_s(s)\,\mathrm ds=d\cdot L(z)T(z),\qquad
F_s(z)L(z)+H(z)\int_0^zF_s(s)\,\mathrm ds=d
\quad(0\le z<1).
\label{eq:42}
\end{equation}
The left side is the GFT at price \(z\), by splitting \(V_b-V_s=(V_b-z)+(z-V_s)\) on the trade event and using independence. For prices \(1\le z\le R_\eta\), all sellers trade with the remote buyer atom, and GFT equals
\(d+\eta(1-M_\eta)\). The inclusive convention includes \(z=R_\eta\).
Prices above \(R_\eta\) give no trade. Integrating \eqref{eq:42} gives the first two identities in \eqref{eq:40}.

\textit{First-best GFT.} We integrate by parts:
\[
G_\eta
=\mathbb E L(V_s)
=d+\int_0^1H(s)F_s(s)\,\mathrm ds.
\]
Fubini identifies
\[
\int_0^1H(s)^2T(s)\,\mathrm ds
=\int_0^1\frac{K(s)}{L(s)^2}\,\mathrm ds.
\]
Since the two iterated integrals agree, substitution in \eqref{eq:39} gives the third identity in \eqref{eq:40}. Subtracting yields
\[
\Gamma_{\max,\eta}-\beta G_\eta+\beta d\cdot M_\eta
=d(1-\beta J_\eta)+\eta d^2T_\eta.
\]
\end{proof}

For the rest of the paper, let \(h_C\) be the maximizing control of Theorem~\ref{thm:calibration} at \(d=\tau(C)\), and put
\begin{equation}
H_C(s)=h_C(1-s),\qquad
H_{\eta,C}(s)=\eta+(1-\eta)H_C(s)\quad(0\le s\le1).
\label{eq:43}
\end{equation}
We use Lemma~\ref{lem:hard} with this survival function. Every member has bounded supports; only the family has an escaping buyer atom.

\section{The exact welfare guarantee}\label{sec:exactproof}

\begin{theoremA}[Exact welfare guarantee]\label{thm:exact}
Let \(V_s,V_b\) be independent nonnegative values with \(0<\mathbb E\max\{V_s,V_b\}<\infty\). A fixed price trades when \(V_s\le z\le V_b\). Its optimal universal welfare guarantee is
\[
\inf_{F_s,F_b}\max_{z\ge0}
\frac{\mathbb EV_s+\mathbb E[(V_b-V_s)\mathbf1_{\{V_s\le z\le V_b\}}]}
{\mathbb E\max\{V_s,V_b\}}
=\beta_*\approx0.73802.
\]
The constant is \(\beta_*=[C_*(2+\mathcal I(C_*))]^{-1}\), where \(C_*\) is the unique root in \((1/4,1/2)\) of
\begin{equation}
\begin{aligned}
C(2+\mathcal I(C))&=1+\frac{C^2}{1-2C}e^{-\mathcal I(C)/2},\\
\mathcal I(C)&=\frac{1}{\sqrt{C-1/4}}
\left[\arctan\frac{1-3C}{2(1-C)\sqrt{C-1/4}}
+\arctan\frac{1}{2\sqrt{C-1/4}}\right].
\end{aligned}
\label{eq:rootintro}
\end{equation}
Every pair has ratio strictly greater than \(\beta_*\). For every \(\varepsilon>0\), the explicit bounded family in Lemma~\ref{lem:hard} contains a pair with ratio below \(\beta_*+\varepsilon\).
\end{theoremA}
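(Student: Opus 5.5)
Three steps: a lower bound from Lemma~\ref{lem:pricing} and Theorem~\ref{thm:calibration}, sharpness from the bounded hard pairs of Lemma~\ref{lem:hard}, and strictness from the fact that no genuine pair reaches the relaxed maximum.

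\textbf{Reducing to one scalar equation.} The welfare ratio is at least \(\beta\) exactly when \(\Gamma_{\max}\ge\beta(M+G)-M\), that is, \(\Gamma_{\max}\ge\beta G-(1-\beta)M\). So a constant \(\beta\) is guaranteed once \(\beta d\le 1-\beta\) for some \(d\) with \(\beta V(d)\le1\), where \(V(d)=S(C_d)\) by \eqref{eq:30}. Parametrize by \(C\), set \(d=\tau(C)\) and \(\beta=1/S(C)\). The first condition becomes \(\tau(C)\le S(C)-1\).

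By \eqref{eq:7}, \(S\) is strictly decreasing and \(\tau\) strictly increasing on \((1/4,1/2)\). By \eqref{eq:8}, \(S-1-\tau\) runs from \(+\infty\) to \(-\infty\). Hence there is a unique root \(C_*\) of \(S(C)=1+\tau(C)\), which is exactly \eqref{eq:rootintro} after substituting \eqref{eq:4}--\eqref{eq:5}.

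\textbf{Lower bound.} Take \(C=C_*\), \(d=\tau(C_*)\) and \(\beta_*=1/S(C_*)\). Lemma~\ref{lem:pricing} then gives a price with
\[
\Gamma(z)\ge\beta_*G-\beta_*dM=\beta_*G-(1-\beta_*)M .
\]
This is welfare at least \(\beta_*(M+G)\).

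For the strict inequality I would revisit the proof of Lemma~\ref{lem:pricing}. The averaged price achieves at least \(\beta_*(G-dM)\) plus the slack \((1-\beta_*\mathcal J_d(h))\Gamma(0)\) from the atom at zero. By (G), \(1-\beta_*\mathcal J_d(h)>0\) unless \(h\) equals the maximizing control a.e. When \(\Gamma(0)>0\) this gives strictness. The remaining cases need separate handling:
\begin{itemize}
\item The buyer's normalized survival function coincides with \(H_{C_*}\). This is impossible, because \(h_{C_*}\) vanishes near \(t=0\) while \(L(1)=d>0\) forces \(H(1)>0\).
\item \(\Gamma(0)=0\), or the degenerate case \(\mathbb EV_b=0\). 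Here one must argue directly that the seller-side inequalities \(\gamma(s)\ge\beta L(s)-\beta ds\) are strict on a set of positive seller mass, or that \(M>0\) gives slack.
\end{itemize}
I expect this strictness bookkeeping to be the main obstacle. The tools would be the equality case of (G) and the continuity of \(H\) near the cutoff. The claim rules out \(h=h_{C_*}\) a.e., so the gap in \eqref{eq:29} is positive. I would then show that this positive gap propagates to a strict welfare inequality through the price density (strictly sub-unit mass whenever the buyer has a mass at zero, and otherwise a direct refinement).

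\textbf{Upper bound.} Use \(H_{\eta,C_*}\) from \eqref{eq:43} in Lemma~\ref{lem:hard}. By \eqref{eq:41} with \(\beta=\beta_*\) and \(\beta_*d=1-\beta_*\),
\[
\Gamma_{\max,\eta}-\beta_*G_\eta+(1-\beta_*)M_\eta=d\bigl(1-\beta_*J_\eta\bigr)+\eta d^2T_\eta .
\]
Dividing by \(M_\eta+G_\eta\), the welfare ratio equals \(\beta_*\) plus this quantity divided by \(M_\eta+G_\eta\). As \(\eta\to0\):
\begin{itemize}
\item \(J_\eta\to\mathcal J_d(h_{C_*})=S(C_*)=1/\beta_*\), by dominated convergence in \eqref{eq:6} since \(g\ge d\);
\item \(T_\eta\le1/d^2\) stays bounded, so \(\eta d^2T_\eta\to0\);
\item \(M_\eta+G_\eta\) stays bounded below by a positive constant.
\end{itemize}
Hence the ratio tends to \(\beta_*\), and for each \(\varepsilon>0\) some bounded pair in the family has ratio below \(\beta_*+\varepsilon\).

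Finally, interval arithmetic on \eqref{eq:rootintro} gives \(C_*\approx0.394\) and \(\beta_*\approx0.73802\).
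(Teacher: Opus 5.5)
Your lower bound, the uniqueness of \(C_*\), and the sharpness argument via \eqref{eq:41} follow the paper's proof. The gap is in the strictness claim, which you flag as the main obstacle and leave open.

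You correctly get \(\alpha=\beta_*\mathcal J_d(H(1-\cdot))<1\), because \(H(1)>0\) while the maximizer has an initial zero arc. But you then extract slack only from the atom at price zero, which gives \((1-\alpha)\Gamma(0)\). Since \(\Gamma(0)=\mathbb E[V_b\mathbf 1_{\{V_s=0\}}]\), this vanishes whenever the seller has no atom at zero. So the case you defer is the typical one, not an edge case.

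Your suggested remedies for that case do not work in general. The density \eqref{eq:37} is built so that \(\gamma(s)=\beta_* L(s)-\beta_* d s\) holds with equality on \([0,1]\). A seller supported on \([0,1]\) therefore gives no slack from the seller side.

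The missing step is one line: do not rely on the mass placed at zero. Since \(q\) has total mass \(\alpha\), you have \(\int_0^1 q\Gamma\le\alpha\Gamma_{\max}\). The averaged inequality in the proof of Lemma~\ref{lem:pricing} then gives \(\alpha\Gamma_{\max}\ge\beta_*G-(1-\beta_*)M\), hence
\[
M+\Gamma_{\max}-\beta_*(M+G)\ge(1-\alpha)\Gamma_{\max}.
\]
If \(G>0\), some rational \(q\) has \(\Pr(V_s<q<V_b)>0\). Then \(\Gamma_{\max}\ge\Gamma(q)>0\) and the inequality is strict. If \(G=0\), which covers \(\mathbb EV_b=0\), then \(\Gamma\equiv0\) and the ratio is \(1\). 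This is exactly how the paper closes nonattainment.

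A minor point: your lower bound on \(M_\eta+G_\eta\) in the sharpness step should be justified. It follows from \(M_\eta+G_\eta=\mathbb E\max\{V_s,V_b\}\ge\mathbb EV_b=L(0)\ge d\).
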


The special functions in \eqref{eq:rootintro} come from integrating the reciprocal quadratic \(D_C(y)=1-y+Cy^2\) along the middle arc and then reconstructing the original state: the integral gives arctangents, and reconstruction gives its exponential. The quadratic acquires a double root at \(C=1/4\); the middle arc shrinks to zero length as \(C\uparrow1/2\).

In the notation of Section~\ref{sec:model}, as noted after Theorem~\ref{thm:frontier}, the root equation \eqref{eq:rootintro} reads
\begin{equation}
S(C_*)=1+\tau(C_*),
\label{eq:44}
\end{equation}
and the constant is
\begin{equation}
\beta_*=\frac1{S(C_*)}=\frac1{1+\tau(C_*)}.
\label{eq:45}
\end{equation}

\paragraph{Numerical values.}
Interval arithmetic gives \(0.73802433573<\beta_*<0.73802433574\)
from \eqref{eq:3}--\eqref{eq:5} and the root equation \eqref{eq:44}.
The corresponding parameter is \(C_*\approx0.3937571625\), with
\(7/20<\tau(C_*)<9/25\).
At \(\eta=10^{-12}\), the independent pair \eqref{eq:43} has optimal fixed-price welfare ratio
\(r_\eta<0.738024335797\) and support contained in \([0,R_\eta]\), where
\(R_\eta<360000000001\), by Proposition~\ref{prop:finite}.

\begin{proof}[Proof of Theorem~\ref{thm:exact}]
The variational maximum gives the guarantee and the limiting hard pairs give sharpness. The zero arc then rules out attainment by a finite-mean pair.

\textit{The guarantee.} By \eqref{eq:7}--\eqref{eq:8}, \(S\) strictly decreases from infinity to one while \(\tau\) strictly increases from zero to infinity, so \eqref{eq:44} determines a unique \(C_*\). Theorem~\ref{thm:calibration} gives
\(\beta_*\sup_h\mathcal J_{\tau(C_*)}(h)=1\). Lemma~\ref{lem:pricing} and \(\beta_*\tau(C_*)=1-\beta_*\) imply
\[
M+\Gamma_{\max}\ge \beta_*(M+G).
\]

\textit{Sharpness.} We fix \(C=C_*\) and let \(\eta\downarrow0\) in \eqref{eq:43}. Uniform convergence of the controls and the bound \(g\ge d\) give
\(J_\eta\to S(C_*)=1/\beta_*\) and convergence of \(T_\eta\). Equation \eqref{eq:41} becomes
\[
(M_\eta+\Gamma_{\max,\eta})-\beta_*(M_\eta+G_\eta)
=d(1-\beta_*J_\eta)+\eta d^2T_\eta\longrightarrow0.
\]
The denominator \(M_\eta+G_\eta\) is at least the buyer mean, which is at least \(d>0\). Consequently
\[
r_{\mathrm{FP}}
\le \lim_{\eta\downarrow0}
\frac{M_\eta+\Gamma_{\max,\eta}}{M_\eta+G_\eta}
=\beta_*
\le r_{\mathrm{FP}}.
\]

\textit{Nonattainment.} First suppose \(G>0\). In the cutoff normalization of Lemma~\ref{lem:pricing}, \(L(1)=d>0\), so \(H(1)>0\). The reversed control is bounded below by \(H(1)\); it cannot equal the control in Theorem~\ref{thm:calibration}, which has an initial zero arc. Thus the density in \eqref{eq:37} has mass
\begin{equation}
\alpha=\beta_*\mathcal J_d(H(1-\cdot))<1,
\qquad \alpha\Gamma_{\max}\ge\beta_*G-(1-\beta_*)M.
\label{eq:strictmass}
\end{equation}
Since \(G>0\), the event \(\{V_s<V_b\}\) has positive probability; it is the union over rational \(q\) of \(\{V_s<q<V_b\}\), so some rational \(q\) has \(\Gamma(q)>0\), and \(\Gamma_{\max}>0\). We obtain
\[
M+\Gamma_{\max}-\beta_*(M+G)
\ge(1-\alpha)\Gamma_{\max}>0.
\]
If \(G=0\), the welfare ratio is one. In all cases,
\[
\frac{M+\Gamma_{\max}}{M+G}>\beta_*.
\]

\end{proof}

\section{Proof of the gains-from-trade frontier}\label{sec:frontierproof}

For the derivative calculations, set \(\sigma=S'(C)\), with \(C\in(1/4,1/2)\) as in Theorem~\ref{thm:frontier}.

\begin{proof}[Proof of Theorem~\ref{thm:frontier}]
The affine guarantee and its sharpness determine the frontier. Convexity then identifies the tangent giving each conditional guarantee.

\textit{Affine guarantee and sharpness.} We fix \(\beta\) and choose the unique \(C\) with \(S(C)=1/\beta\). Put \(d=\tau(C)\). Since Theorem~\ref{thm:calibration} makes \eqref{eq:35} an equality, Lemma~\ref{lem:pricing} gives the guarantee with \(\delta=\beta d\).

For sharpness, apply \eqref{eq:41} to \eqref{eq:43}. Its right side tends to zero. The limiting seller mean is strictly positive; more precisely, \eqref{eq:33} gives
\begin{equation}
T_*=2C/d^2,\qquad M_*=1-2C>0.
\label{eq:48}
\end{equation}
If \(\tilde\delta<\beta d\), the expression with coefficient \(\tilde\delta\) has limit
\[
\Gamma_{\max,\eta}-\beta G_\eta+\tilde\delta M_\eta
\longrightarrow
-(\beta d-\tilde\delta)(1-2C)<0.
\]

\textit{Convexity.} With \(u\) as in \eqref{eq:51}, formula \eqref{eq:7} gives \(\sigma<0\) and \(\tau'(C)=-d\cdot\sigma/u\). Differentiating the parametric curve gives
\begin{equation}
\frac{\mathrm d\beta}{\mathrm dC}=-\frac{\sigma}{S^2}>0,\qquad
\delta'(\beta)=d\left(1+\frac S u\right)>0,\qquad
\frac{\mathrm d}{\mathrm dC}\delta'(\beta(C))
=\frac{dS(2-\sigma)}{u^2}>0.
\label{eq:49}
\end{equation}
As \(C\downarrow1/4\), \(d\) decays exponentially in \(\mathcal I\) while \(S\) grows linearly, so the middle expression tends to zero. As \(C\uparrow1/2\), both \(d\) and \(S/u\) tend to infinity. Thus \(\delta'\) increases from zero to infinity.

\textit{Conditional guarantee.} Optimizing the affine bound gives
\[
\frac{\Gamma_{\max}}G
\ge \sup_{0<\beta<1}
\left\{\beta-\frac{\delta(\beta)}\kappa\right\}.
\]
By \eqref{eq:49}, the maximizing point is unique and satisfies
\(\delta'(\beta)=\kappa\). Equations \eqref{eq:49} and \eqref{eq:47} identify this point with \eqref{eq:51}; substitution gives \(1/(S+u)\). These derivatives also show that \(\kappa(C)\) is a bijection.

For the matching sequence, \eqref{eq:40} and \eqref{eq:48} give the limits
\begin{equation}
M_*=u,\qquad \Gamma_{\max,*}=d,\qquad G_*=d(S+u).
\label{eq:54}
\end{equation}
For the pairs in \eqref{eq:43}, the functions \(\kappa_\eta(C)=G_{\eta,C}/M_{\eta,C}\) are continuous and converge locally uniformly to \(\kappa(C)\) as \(\eta\downarrow0\): the controls in \eqref{eq:31} vary continuously across their moving joins, and \(\tau(C)\) and the limiting seller mean stay bounded away from zero on compact parameter intervals, so \eqref{eq:40} gives uniform convergence of the ratios.
Since \(\kappa\) is strictly increasing, we apply the intermediate value theorem on shrinking brackets around \(C\) to obtain \(C_\eta\to C\) with
\[
\frac{G_{\eta,C_\eta}}{M_{\eta,C_\eta}}=\kappa(C),
\qquad
\frac{\Gamma_{\max,\eta,C_\eta}}{G_{\eta,C_\eta}}
\longrightarrow\frac1{S(C)+u}.
\]
\end{proof}

\section{The dominant-strategy welfare ceiling}\label{sec:dsic}

A mechanism specifies a measurable lottery over trade and transfers for every pair of reports. Truthfulness is dominant in expected utility. We require individual rationality relative to retaining the initial endowment and equal buyer-to-seller transfers in every realization. These requirements imply that there is no transfer without trade and that every realized trade price lies between the reported values at a truthful profile.

The structural fact is the randomized-price representation of Hagerty and Rogerson \cite{hr87} and \v{C}opi\v{c} and Ponsat\'{i} \cite{cp16}. We use it in an almost-everywhere form, which accommodates weak-indifference conventions at price atoms: Proposition~\ref{lem:representation} shows that, for measurable mechanisms under an absolutely continuous joint prior on a common bounded report interval, mechanism welfare is at most best-price welfare. Smoothing (Lemma~\ref{lem:smoothing}) makes this sufficient for the hard pairs.

\begin{lemma}[smoothing bounded hard pairs]\label{lem:smoothing}
An independent pair supported on \([0,\Lambda]\) has independent absolutely continuous approximations with full support on \((0,\Lambda)\) for which first-best welfare converges and the limit superior of best fixed-price welfare is at most the original optimum.
\end{lemma}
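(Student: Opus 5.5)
Given an independent pair supported on \([0,\Lambda]\), we smooth each marginal separately by a small mixing construction that keeps the two values independent. For \(\varepsilon\in(0,1)\) let \(U_\varepsilon\) be uniform on \((0,\Lambda)\). Set
\[
V_s^\varepsilon=(1-\varepsilon)V_s+\varepsilon\Lambda\,\xi_s\ \text{w.p. }1-\varepsilon,\qquad V_s^\varepsilon=U_\varepsilon\ \text{w.p. }\varepsilon,
\]
with \(\xi_s\) uniform on \((0,1)\) and independent of everything. Define \(V_b^\varepsilon\) in the same way with fresh independent randomness. A contraction plus an independent uniform shift has a density, and mixing with a uniform gives full support on \((0,\Lambda)\). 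Each smoothed value lies in \((0,\Lambda)\), and the two remain independent because they are built from independent ingredients.

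\emph{First-best welfare.} We have \(|V_s^\varepsilon-V_s|\le \varepsilon\Lambda\) outside an event of probability \(\varepsilon\), and all values are bounded by \(\Lambda\). Since \(\max\{x,y\}\) is \(1\)-Lipschitz, this gives
\[
|\mathbb E\max\{V_s^\varepsilon,V_b^\varepsilon\}-\mathbb E\max\{V_s,V_b\}|\le 2\varepsilon\Lambda+4\varepsilon\Lambda\to0.
\]
The same argument gives \(M_\varepsilon\to M\).

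\emph{Best fixed-price welfare.} This is the main step, because \(\Gamma\) is only upper semicontinuous and the maximizing prices \(z_\varepsilon\) may move. First, any price \(z\ge\Lambda\) gives no more than price \(\Lambda\). We may therefore take \(z_\varepsilon\in[0,\Lambda]\), pass to a subsequence along which \(z_\varepsilon\to z_0\), and attain the limit superior along it.

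The trade indicator at price \(z_\varepsilon\) is \(\mathbf1\{V^\varepsilon_s\le z_\varepsilon\le V^\varepsilon_b\}\). Couple the smoothed values with the originals as above. The coupling converges almost surely: outside the probability-\(\varepsilon\) uniform branches, \(V^\varepsilon\to V\) pointwise, and Borel–Cantelli applies after restricting to a summable subsequence \(\varepsilon_n\). Along the coupled sequence,
\[
\limsup_n\,(V_b^{\varepsilon_n}-V_s^{\varepsilon_n})\mathbf1\{V_s^{\varepsilon_n}\le z_{\varepsilon_n}\le V_b^{\varepsilon_n}\}\le (V_b-V_s)\mathbf1\{V_s\le z_0\le V_b\}.
\]
To check this, consider where the limiting indicator is zero. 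If \(V_s>z_0\) or \(V_b<z_0\), the strict inequality persists, so the approximating indicator vanishes eventually. On the trade event in the limit, the product converges. The only subtlety is negative integrands. A pair with \(V_b^\varepsilon\ge V_s^\varepsilon\) that trades has a nonnegative integrand, so the integrands are automatically nonnegative on the trade event.

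The integrands are bounded by \(\Lambda\), so reverse Fatou applies and gives
\[
\limsup_\varepsilon \Gamma_\varepsilon(z_\varepsilon)\le\Gamma(z_0)\le\Gamma_{\max}.
\]
Adding \(M_\varepsilon\to M\), the limit superior of best fixed-price welfare is at most the original optimum \(M+\Gamma_{\max}\). This completes the proof.
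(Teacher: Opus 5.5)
Your proof is correct and follows essentially the paper's route: you mix each marginal with a small uniform branch and a contracted, jittered copy, couple almost surely, extract a convergent sequence of (near-)optimal prices, and apply reverse Fatou through the upper semicontinuity of the realized welfare. The only difference is that the paper uses a single branch selector \(\zeta_0\) for all \(\varepsilon\), which gives almost sure convergence for the whole family without your Borel--Cantelli step. One small correction: on the limiting trade event the product need not converge when \(V_s=z_0\) or \(V_b=z_0\), and these events can have positive probability when the original laws have atoms. There the product still lies between \(0\) and \((V_b^{\varepsilon_n}-V_s^{\varepsilon_n})_+\to V_b-V_s\), which is all your limsup inequality needs.
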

\begin{proof}
For an original value \(X\), we take independent uniforms \(\zeta_0,\zeta_1\in(0,1)\) and \(\zeta_2\in[-1/2,1/2]\) and set
\begin{equation}
X_\varepsilon=
\begin{cases}
\Lambda\zeta_1,&\zeta_0<\varepsilon,\\
(1-2\varepsilon)X+\varepsilon\Lambda+\varepsilon\Lambda\zeta_2,
&\zeta_0\ge\varepsilon.
\end{cases}
\label{eq:57}
\end{equation}
For \(0<\varepsilon<1/2\), this variable has a full-support density on \((0,\Lambda)\) and converges almost surely to \(X\). Using independent couplings for the two traders preserves independence, and bounded convergence gives convergence of first-best welfare.

The realized price welfare
\[
w(s,b,z)=s+(b-s)_+\mathbf1_{\{s\le z\le b\}}
\]
is bounded and upper semicontinuous on the compact cube. We choose approximately optimal prices along a sequence realizing the limit superior and pass to a convergent subsequence \(z_\varepsilon\to z\). Reverse Fatou gives
\[
\limsup_{\varepsilon\downarrow0}
\max_{z'}\mathbb E w(V_{s,\varepsilon},V_{b,\varepsilon},z')
\le\mathbb E w(V_s,V_b,z)
\le\max_{z'}\mathbb E w(V_s,V_b,z').
\]
\end{proof}

\begin{proposition}[Price representation under absolutely continuous priors]\label{lem:representation}
Fix a common report interval \((0,\Lambda)\), where \(\Lambda>0\). For every measurable, risk-neutral DSIC mechanism with realizationwise individual rationality and strong budget balance, there is a positive Borel measure \(\nu\) on \((0,\Lambda)\) of mass \(m\le1\) such that its allocation probability \(\varphi\) and expected transfer \(\Pi\) satisfy
\begin{equation}
\varphi(s,b)=\nu((s,b]),\qquad
\Pi(s,b)=\int_{(s,b]}z\,\nu(\mathrm dz)
\quad\text{for Lebesgue-a.e. }0<s<b<\Lambda.
\label{eq:56}
\end{equation}
Under any absolutely continuous joint prior on this report square, its expected welfare is at most that of the best deterministic price.
\end{proposition}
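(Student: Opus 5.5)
The plan follows the Myerson-style route of \cite{hr87,cp16}, done almost everywhere. It has three steps: reduce the mechanism to interim quantities, show monotonicity and payment identities, and then use realizationwise budget balance and individual rationality to read off a price measure.

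\emph{Step 1 (interim functions).} Write \(\varphi(s,b)\) for the trade probability and \(\Pi(s,b)\) for the expected buyer-to-seller transfer at reports \((s,b)\). Since the transfer is equal in every realization, the seller's expected utility with true value \(s\) is \(\Pi(s,b)-s\varphi(s,b)\), and the buyer's is \(b\varphi(s,b)-\Pi(s,b)\).

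\emph{Step 2 (monotonicity and envelope formulas).} Dominant-strategy IC fixes each opponent report and compares two own reports, as in Myerson. For each fixed \(s\), the map \(b\mapsto\varphi(s,b)\) is nondecreasing. For each fixed \(b\), the map \(s\mapsto\varphi(s,b)\) is nonincreasing. The envelope theorem then gives
\[
b\varphi(s,b)-\Pi(s,b)=u_b(s)+\int_0^b\varphi(s,t)\,\mathrm dt,\qquad
\Pi(s,b)-s\varphi(s,b)=u_s(b)+\int_s^\Lambda\varphi(t,b)\,\mathrm dt .
\]
Measurability and Fubini let me state these for a.e.\ fixed opponent report.

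\emph{Step 3 (boundary terms).} Realizationwise IR and budget balance give two facts. There is no transfer without trade, and every realized price lies in \([s,b]\) on trade. Hence \(\varphi=\Pi=0\) a.e.\ for \(s\ge b\), and \(s\varphi\le\Pi\le b\varphi\). Letting \(b\downarrow0\) in the buyer identity (or letting \(b\) reach the region \(b\le s\)) gives \(u_b(s)=0\); symmetrically, \(u_s(b)=0\).

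\emph{Step 4 (a single price measure).} Adding the two utilities yields
\[
(b-s)\varphi(s,b)=\int_s^b\bigl[\varphi(s,t)+\varphi(t,b)\bigr]\,\mathrm dt\quad\text{a.e.}
\]
This is the step I expect to be the main obstacle. The goal is to show that \(\varphi(s,b)\) splits a.e.\ as \(F(b)-F(s)\) for one nondecreasing \(F\); equivalently, that the mixed distributional derivative \(\partial_s\partial_b\varphi\) vanishes on \(\{s<b\}\).

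My first attempt is to differentiate the identity in \(b\) and in \(s\) in the distributional sense and show that \(\partial_s\varphi(s,\cdot)\) is independent of \(b\). A fallback is the classical discrete argument of \cite{hr87}. Take Lebesgue points \(s<s'<b<b'\) and compare the four corners. IR at the corners forces \(\varphi(s,b')-\varphi(s',b')=\varphi(s,b)-\varphi(s',b)\). The idea is that a trade occurring at \((s,b')\) but not at \((s',b')\) must be priced in \([s,s')\), and the same event governs \((s,b)\).

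Once additivity holds, \(F\) is monotone, so it defines a positive Borel measure \(\nu\) with \(\nu((s,b])=\varphi\) a.e. The mass bound \(m\le1\) follows from \(\varphi\le1\) as \(s\downarrow0\) and \(b\uparrow\Lambda\). The transfer identity then follows by inserting \(\varphi=\nu((s,b])\) into the buyer envelope formula and integrating by parts:
\[
b\,\nu((s,b])-\int_s^b\nu((s,t])\,\mathrm dt=\int_{(s,b]}z\,\nu(\mathrm dz).
\]

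\emph{Step 5 (welfare comparison).} Realized welfare is \(s+(b-s)\varphi(s,b)\). By \eqref{eq:56} this equals
\[
\int w(s,b,z)\,\nu(\mathrm dz)+(1-m)s,
\]
where \(w\) is as in Lemma~\ref{lem:smoothing}. The excluded boundary cases, atoms of \(\nu\) at \(s\) or \(b\) and the diagonal, are null because the prior is absolutely continuous. Integrating against the prior and applying Fubini shows that expected welfare is a mixture, with weights \(\nu\) and \(1-m\), of fixed-price welfares \(\mathbb E\,w(V_s,V_b,z)\) and the no-trade welfare \(M\le\mathbb E\,w(V_s,V_b,z)\). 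A mixture is at most its largest component, so expected welfare is at most the best deterministic price's welfare.
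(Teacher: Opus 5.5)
Your Steps 1--3 and 5 match the paper. The two envelope formulas, with zero boundary utilities, sum to the Hagerty--Rogerson identity \((b-s)\varphi(s,b)=\int_s^b[\varphi(s,t)+\varphi(t,b)]\,\mathrm dt\). The Tonelli computation gives the transfer. Off a null set, welfare is a \(\nu\)-mixture of price welfares plus \((1-m)M\).

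The gap is in Step 4, which carries the whole proposition and which you leave as a plan. Your stated equivalence is false. The equation \(\partial_s\partial_b\varphi=0\) on the triangle gives at best \(\varphi(s,b)=F_0(b)-F_1(s)\) a.e., with two unrelated nondecreasing functions, not \(F(b)-F(s)\) with a single \(F\). For instance, \(\varphi\equiv1\) on \(\{s<b\}\) has zero mixed derivative and is not of the form \(\nu((s,b])\). It would attain first-best welfare, so the welfare bound would fail. The missing idea is to use the identity a second time. Substituting \(F_0(b)-F_1(s)\) into it leaves \(\int_s^b(F_0-F_1)\,\mathrm dt=0\) for a.e.\ \(s<b\), hence \(F_0=F_1\) a.e. This is exactly what excludes \(\varphi\equiv1\): there the right side of the identity exceeds the left by \(b-s\).

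The vanishing of the mixed derivative must also be derived, not posited. Your first attempt is the right route and is the paper's, but it has to be carried out. Apply \(\partial_s\partial_b\) in \(\mathcal D'\) to the identity, using \(\partial_b\int_s^b\varphi(s,t)\,\mathrm dt=\varphi\) and \(\partial_s\int_s^b\varphi(t,b)\,\mathrm dt=-\varphi\). The first-order terms \(\partial_s\varphi-\partial_b\varphi\) appear on both sides and cancel, leaving \((b-s)\partial_s\partial_b\varphi=0\). Since \(1/(b-s)\) is smooth on the open triangle, \(\partial_s\partial_b\varphi=0\) there.

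You then still need to turn this into an a.e.\ additive splitting on rectangles and glue across the triangle, which is not a rectangle. The paper uses the rectangles \((0,q)\times(q,\Lambda)\) with \(q\) rational, which all contain points near the corner \((0,\Lambda)\). It then passes to limits along anchors \((\underline s,\overline b)\to(0,\Lambda)\), tracking null sets.

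Your fallback cannot replace this. Realizationwise IR constrains the lottery at each profile separately and does not couple the lotteries at \((s,b)\) and \((s,b')\). So ``the same event governs \((s,b)\)'' presupposes the posted-price structure you are trying to prove. The four-corner identity is a consequence of DSIC through the envelope identity, not of IR at the corners.
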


The proof is given in Appendix~\ref{app:representation}.

\begin{corollary}[exact dominant-strategy guarantee]\label{cor:dsic}
For the independent-prior model of Section~\ref{sec:model} and measurable mechanisms on the common nonnegative real report domain, suppose traders are risk neutral, truthfulness is dominant in expected lottery payoffs, and individual rationality and strong budget balance hold in every realization. The optimal universal welfare guarantee, allowing full prior information, is \(\beta_*\).
\end{corollary}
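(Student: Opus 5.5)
The corollary has two directions, and we prove them separately.

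\emph{Lower bound.} A deterministic posted price is itself a DSIC mechanism that is individually rational and strongly budget balanced in every realization: trade happens at price \(z\) exactly when \(V_s\le z\le V_b\), and the buyer pays the seller \(z\). Theorem~\ref{thm:exact} therefore gives every prior pair welfare at least \(\beta_*(M+G)\) from some price. The optimal universal guarantee is thus at least \(\beta_*\).

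\emph{Upper bound.} For the ceiling we fix \(\varepsilon>0\) and take a bounded hard pair from Lemma~\ref{lem:hard}, as in the sharpness part of Theorem~\ref{thm:exact}, whose best fixed-price ratio is below \(\beta_*+\varepsilon\). Its support lies in \([0,\Lambda]\) with \(\Lambda=R_\eta\). We then apply Lemma~\ref{lem:smoothing} to get independent absolutely continuous pairs \((V_{s,\epsilon'},V_{b,\epsilon'})\) with full support on \((0,\Lambda)\). Along these pairs first-best welfare converges, and the limit superior of best-price welfare is at most the original optimum. Hence for small \(\epsilon'\) the smoothed pair has best fixed-price ratio below \(\beta_*+2\varepsilon\); here we use that the denominator converges to a positive limit, which is at least the buyer mean \(\ge d\).

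Now take any mechanism in the class, defined on the nonnegative real report domain, and restrict it to reports in \((0,\Lambda)\). The restriction is still measurable, DSIC, realizationwise IR and strongly budget balanced, since these are pointwise conditions on reports inside the domain. Under the smoothed prior, reports fall in \((0,\Lambda)\) almost surely, and the joint prior is absolutely continuous on the square because it is a product of densities. Proposition~\ref{lem:representation} then bounds the mechanism's expected welfare by that of the best deterministic price. So its ratio on the smoothed pair is below \(\beta_*+2\varepsilon\). Because the mechanism may depend on full prior information, we must use a single prior pair that defeats every mechanism at once; the representation bound is uniform in the mechanism, so it does. Letting \(\varepsilon\to0\) gives the ceiling \(\beta_*\).

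The main obstacle is the restriction step. We must check that a mechanism defined on all nonnegative reports induces one on \((0,\Lambda)\) meeting the hypotheses of Proposition~\ref{lem:representation}, and that welfare there counts the seller's retained value consistently, i.e.\ welfare \(=\mathbb E[V_s+(V_b-V_s)\varphi]\). Under the smoothed prior, the boundary points and the null sets excluded in \eqref{eq:56} carry zero probability, so the almost-everywhere representation is enough. We would state this explicitly: the welfare comparison in the proposition only integrates \(\varphi\) against an absolutely continuous law.
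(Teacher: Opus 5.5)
Your proposal is correct and follows essentially the same route as the paper. Fixed prices belong to the class, so Theorem~\ref{thm:exact} gives the lower bound. For the ceiling, you smooth a bounded hard pair with Lemma~\ref{lem:smoothing}, restrict an arbitrary mechanism to the open support interval, and invoke Proposition~\ref{lem:representation}; the resulting bound is uniform over mechanisms chosen with full prior knowledge, and you take the smoothing limit before the hard-pair limit. Your extra checks of the restriction step and of the positive limiting denominator spell out what the paper's proof states in a single line.
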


\begin{proof}
Fixed prices belong to this class, so Theorem~\ref{thm:exact} gives a guarantee of \(\beta_*\). For the reverse bound, we choose a bounded hard pair from Theorem~\ref{thm:exact} and apply Lemma~\ref{lem:smoothing}. Every mechanism restricted to the open support interval satisfies Proposition~\ref{lem:representation}, so on each smooth pair its welfare is at most the best price welfare. We first take the smoothing parameter to zero and then the hard-pair parameter to zero. These inequalities hold even for a mechanism selected separately for each prior pair, giving
\[
\beta_*\le r_{\mathrm{DSIC}}\le r_{\mathrm{FP}}=\beta_*.
\]
\end{proof}

\input{two_units}

\section{Related work}\label{sec:related}

\paragraph{Welfare and fixed prices.}
Myerson and Satterthwaite \cite{ms83} establish the general incompatibility of full efficiency, Bayesian incentive compatibility, voluntary participation, and no outside subsidy. Fixed-price approximation keeps strong budget balance and dominant strategies while relaxing efficiency. Blumrosen and Dobzinski \cite{bd21} obtain a \(1-1/e\) welfare guarantee. Kang, Pernice, and Vondr\'{a}k \cite{kpv22} improve that guarantee and determine the exact ratio \((2+\sqrt2)/4\) for identically distributed values. On the impossibility side, Colini-Baldeschi, de Keijzer, Leonardi, and Turchetta \cite{cbklt16} constructed a pair on which no fixed price exceeds \(0.7485\); Kang and Vondr\'{a}k \cite{kv19} lowered this to \(0.7385\), the two STOC 2023 papers to \(0.7381\), and Giambartolomei and de Keijzer to \(0.73805\). The asymmetric problem leads to the programs of Cai and Wu \cite{cw23}, the variational approach of Liu, Ren, and Wang \cite{lrw23}, and the tighter numerical bounds of Giambartolomei and de Keijzer \cite{gdk26}.

The variational ingredients from \cite{lrw23} and their role in our proof are described in Section~\ref{sec:intro}. Cai and Wu's Lemma 3.3 already produces genuine hard instances from finite upper-program solutions. Our hard pairs approach the exact constant and every point of the frontier, including at an exactly prescribed moment ratio.

\paragraph{Gains from trade.}
Welfare includes the seller's initial value; GFT measures only the improvement over keeping the item. This distinction is essential for approximation. Fixed prices admit no uniform positive GFT ratio \cite{bd21}. Colini-Baldeschi, Goldberg, de Keijzer, Leonardi, and Turchetta \cite{cbg17} give asymptotically tight logarithmic guarantees in terms of efficient-trade probability. Theorem~\ref{thm:frontier} instead determines an exact curve in the moment ratio \(G/M\), including its leading logarithmic coefficient.

Allowing Bayesian incentives yields a different GFT theory. Blumrosen and Mizrahi \cite{bm16} analyze offering mechanisms under distributional assumptions. Brustle, Cai, Wu, and Zhao \cite{bcwz17} obtain a constant approximation to the second-best GFT benchmark in two-sided markets. Deng, Mao, Sivan, and Wang \cite{dmsw22} establish a constant approximation to first-best GFT for arbitrary independent bilateral priors, improved by Fei \cite{fei22}; Jo \cite{jo26} further sharpens the bounds for the random-offerer mechanism. Liu, Qin, Ren, and Wang \cite{lqrw26} determine the exact second-best-to-first-best GFT ratio as \(1/2\), using Bayesian incentive compatibility and interim individual rationality. Their benchmark optimizes over feasible Bayesian mechanisms. Our frontier optimizes fixed prices and conditions on initial welfare, so these are different exact-value questions.

\paragraph{Information and larger markets.}
Babaioff, Goldner, and Gonczarowski \cite{bgg20} study additional competition and sample pricing; D\"utting, Fusco, Lazos, Leonardi, and Reiffenh\"auser \cite{dfflr21} establish welfare guarantees from a single seller sample in broader two-sided markets. Cai and Wu \cite{cw23} compare full and one-sided priors and analyze fixed numbers of samples. Deng, Mao, Sivan, Wang, and Wu \cite{dmsww25} study sample-based offering mechanisms for GFT. The one-sided-prior consequence of Theorem~\ref{thm:exact} uses the existing minimax equivalence; it does not supply a new sample bound.

McAfee's double auction \cite{mcafee92} obtains efficiency as markets grow while preserving dominant strategies and avoiding a deficit. Babaioff, Cai, Gonczarowski, and Zhao \cite{bcgz18} combine asymptotic efficiency with a finite-market GFT guarantee. Gerstgrasser, Goldberg, de Keijzer, Lazos, and Skopalik \cite{ggkls19} characterize multi-unit bilateral mechanisms and analyze their welfare; \cite{gdk26} further improves the fixed-price bounds in that setting. With unrestricted correlated values, Dobzinski and Shaulker \cite{ds24} show that DSIC mechanisms have no positive universal welfare ratio, while a mechanism with one side DSIC and the other Bayesian IC does. Their later reserve mechanism \cite{ds26} exceeds the fixed-price welfare ceiling even for independent values.

Giambartolomei and de Keijzer \cite{gdk26} report a two-unit upper bound of \(0.7291\), just below their single-unit lower bound of \(0.7292\), and conjecture strict separation from single-unit trade. The margin between these two numerically computed bounds is \(10^{-4}\); Theorem~\ref{thm:two-separation} widens it to \(0.0089\), with an exact rational instance below \(0.7290804\) on one side and the exact constant \(\beta_*\) on the other, and thereby proves the separation. Their marginal formulation retains the buyer and seller order constraints; Theorem~\ref{thm:two-pricing} uses the seller order through its prefix compensation. Their symmetric two-unit bound is \(0.8372\).

\paragraph{Representation and variational comparison.}
The interpretation through dominant strategies uses the randomized-price representations of Hagerty and Rogerson \cite{hr87} and \v{C}opi\v{c} and Ponsat\'{i} \cite{cp16}. The almost-everywhere representation and smoothing argument in Section~\ref{sec:dsic} handle the measure-theoretic conventions needed by the hard pairs. The representation relies on strong budget balance. When only no deficit is required, dominant-strategy mechanisms need not be randomized prices. The generalized double auctions of Zhang \cite{zhang26} trade when the bid--ask difference exceeds a random spread, and they maximize the worst-case gains from trade when values in \([0,1]\) may be correlated and only the first-best GFT is known.

Calibration is a classical way to prove global variational optimality by an exact comparison, including in nonconvex problems \cite{abd03,bf16}. Contact multipliers and complementary slackness belong to the theory of obstacle problems and variational inequalities \cite{ks00}. Here the specific construction is the reciprocal clock, a reference curve for every endpoint, and the endpoint envelope controlling the remaining nonconcavity. The conditional frontier uses ordinary convex conjugacy \cite{rock70}; related minimax methods \cite{sion58} enter the information equivalence of \cite{cw23} and the second-best analysis of \cite{lqrw26}. We evaluate the frontier and construct hard pairs at each slope.

\section{Open questions}\label{sec:open}

For randomized DSIC mechanisms, what is the exact welfare guarantee when realizationwise strong budget balance is weakened to no deficit? Beyond dominant strategies, the optimal Bayesian welfare guarantee under ex-post individual rationality remains to be determined \cite{ds26}. Unrestricted correlation destroys a positive DSIC ratio \cite{ds24}; a quantitative frontier under restricted dependence could describe how the independent guarantee deteriorates. For two units, Conjecture~\ref{conj:two-exact} asks for the candidate constant, with \eqref{eq:two-open-budget} giving a sufficient buyer-functional comparison. For \(k\ge3\), exact common-price constants and extremal structures remain open \cite{ggkls19,gdk26}. Direct compression of the second buyer to its mean and Jensen comparisons for the optimized first-unit value fail on explicit instances, which are given in the supplementary material. Which other single-parameter mechanism families admit a state transformation that makes the adversary's objective concave after fixing finitely many boundary values?

\section*{AI disclosure}
\addcontentsline{toc}{section}{AI disclosure}
Generative AI tools, including Codex and Claude, assisted with mathematical exploration, literature searches, and prose polishing. In particular, they were used to explore the reciprocal-state representation and calibration in Lemmas~\ref{lem:energy} and~\ref{lem:gaps}, the common-tail normalization in Proposition~\ref{prop:two-tail}, the structured-family optimization in Proposition~\ref{prop:2fam-optimum}, and the rational comparison construction in Lemma~\ref{lem:2fam-witness}. They were also used to search for counterexamples to intermediate claims and to implement linear programming, symbolic algebra, exact rational arithmetic, interval arithmetic, and SMT checks. With AI assistance, the authors conceived the paper's core ideas and developed its methods. The authors improved the paper's writing, readability, and economic interpretation. One of the authors verified the analytic derivations by hand. The computer-assisted parts are too large to be checked by hand: the interval and positivity certificates for the two-unit family optimization (Proposition~\ref{prop:2fam-optimum} and Appendix~\ref{app:2fam}), the exact rational evaluation of the two-unit instances in Theorem~\ref{thm:two-separation}, including the 131-type instance (Appendix~\ref{app:two-finite}), and the numerical enclosures in Section~\ref{sec:exactproof}. These checks are carried out by deterministic software (exact rational arithmetic, interval arithmetic, and SMT solvers), not by language models. Although parts of the code were written with AI assistance, the checks themselves do not depend on any language model and can be rerun independently; the code, certificates, and a replay script are available at \url{https://github.com/lintingyee-del/fixed-price-bilateral-trade}.

Another author used \href{https://github.com/frenzymath/Archon}{\textcolor{blue}{Archon}}, an agent for Lean~4 proof development, to assist in formalizing the core results of the paper in Lean~4 with Mathlib. The Lean code is available at \url{https://github.com/lintingyee-del/fixed-price-bilateral-trade}; the \href{https://drive.google.com/file/d/1bG8ySCTCiVD0MK4EAhtyTkSXpvBY_4NC/view?usp=drivesdk}{\textcolor{blue}{supplementary material}} describes the formalization. The authors checked the correctness and originality of all content, including the references, and take full responsibility for the paper.

\bibliographystyle{alpha}
\addcontentsline{toc}{section}{References}
\bibliography{references}

\clearpage
\appendix
\addtocontents{toc}{\protect\setcounter{tocdepth}{2}}

\section{Notation}\label{app:notation}

Table~\ref{tab:coordinates} lists the symbols of the single-unit analysis by role; its continuation lists the two-unit symbols.

\begin{table}[!htbp]
\centering
\fontsize{9}{10.4}\selectfont
\setlength{\tabcolsep}{4pt}
\renewcommand{\arraystretch}{1.06}
\begin{tabularx}{\linewidth}{@{}>{\raggedright\arraybackslash}p{.24\linewidth}>{\raggedright\arraybackslash}X>{\raggedright\arraybackslash}p{.15\linewidth}@{}}
\toprule
Symbol & Meaning & Definition \\
\midrule
\multicolumn{3}{@{}l}{\textit{Model quantities}}\\
\(V_s,V_b\) & Seller and buyer values. & Section~\ref{sec:model}\\
\(M,G\) & Initial seller welfare and first-best GFT. & \eqref{eq:1}\\
\(\Gamma(z),\Gamma_{\max}\) & GFT at price \(z\) and its maximum over prices. & \eqref{eq:1}--\eqref{eq:2}\\
\(r_{\mathrm{FP}}\) & Optimal universal fixed-price welfare ratio. & \eqref{eq:2}\\
\(W(z),W_{\mathrm{FB}}\) & Price welfare \(M+\Gamma(z)\) and first-best welfare \(M+G\). & Section~\ref{sec:model}\\
\multicolumn{3}{@{}l}{\textit{Scalar curve and parameter roles}}\\
\(C\) & Curvature parameter: \((1/4,1/2)\) for the scalar curve, \((0,\bar C(d))\) for reference curves; the auxiliary endpoint-value test uses \(C>1/4\). & Sections~\ref{sec:model}, \ref{sec:reference}--\ref{sec:global}\\
\(D_C(y),y_e(C)\) & Quadratic denominator and middle-arc endpoint; \(y_e\) also extends to \(0<C<1/2\) for comparison. & \eqref{eq:3}; Section~\ref{sec:global}\\
\(\mathcal I_C(y),\mathcal I(C)\) & Integral to a running endpoint \(y\), and its value at \(y_e(C)\). & \eqref{eq:3}\\
\(\tau(C),S(C)\) & Initial-state and objective-value coordinates of the scalar curve. & \eqref{eq:4}\\
\(d,C_d\) & Fixed input to \(\mathcal J_d\), and the selected parameter satisfying \(\tau(C_d)=d\). & \eqref{eq:6}--\eqref{eq:8}\\
\(C_*,\beta_*\) & Welfare parameter solving \(S(C_*)=1+\tau(C_*)\), and \(1/S(C_*)\). & Theorem~\ref{thm:exact}\\
\multicolumn{3}{@{}l}{\textit{Control and original states}}\\
\(h,g,k,\mathcal J_d\) & Control, accumulated first state, accumulated square, and objective. & \eqref{eq:6}\\
\(H,L,K\) & Buyer survival function, tail integral, and truncated squared-survival integral. & Before \eqref{eq:6}\\
\(s,t\) & Normalized value and reversed time, with \(t=1-s\). & Before \eqref{eq:6}\\
\multicolumn{3}{@{}l}{\textit{Transformed coordinates}}\\
\(A(t),a,a_0\) & Tail-time map, a point in tail time, and the horizon \(A(0)\). & \eqref{eq:11}\\
\(p,x\) & Reciprocal state \(p(A(t))=1/g(t)\) and its endpoint \(x=p(0)\). & \eqref{eq:11}\\
\(\ell,\ell_x\) & Logarithms of \(p\) and the reference curve \(P_x\). & Section~\ref{sec:coordinates}\\
\(\mathcal F_d[p]\) & Energy equal to \(\mathcal J_d(h)-1\). & \eqref{eq:13}\\
\multicolumn{3}{@{}l}{\textit{Reference curves at a fixed initial state}}\\
\(\chi(C,y),Y_d(C),\bar C(d)\) & Contact function, contact parameter solving \(\chi(C,Y_d(C))=d\), and upper limit of the curvature range. & \eqref{eq:14}; Lemma~\ref{lem:reference}\\
\(y,r\) & Running interior parameters; the contact parameter is \(Y_d(C)\), often written \(Y_d\). & \eqref{eq:3}, \eqref{eq:16}\\
\(Z_d(C),U_d(C)\) & Auxiliary integrals for the endpoint derivative. & \eqref{eq:branch_integrals}\\
\(x(C)\) & Endpoint of the reference curve with curvature \(C\). & \eqref{eq:15}\\
\(a_e,a_a,P_e\) & Line/middle-arc contact, middle-arc/cap contact, and first contact height. & \eqref{eq:15}\\
\(P_x,Q_x\) & Reference curve at endpoint \(x\), and half its obstacle multiplier. & \eqref{eq:16}--\eqref{eq:18}\\
\(\Phi_d(x),\mathcal B_d(C)\) & Endpoint envelope and the factor in its endpoint derivative. & \eqref{eq:23}\\
\(v\) & Logarithmic difference \(\log(p/P_x)\). & \eqref{eq:23}\\
\multicolumn{3}{@{}l}{\textit{Bounded hard pairs}}\\
\(\eta,R_\eta\) & Remote buyer-atom mass and its value \(1+d/\eta\). & Lemma~\ref{lem:hard}\\
\(T,T_\eta,J_\eta\) & Integral of \(L^{-2}\), its value at one, and \(\mathcal J_d(H(1-\cdot))\). & \eqref{eq:39}--\eqref{eq:40}\\
\(h_C,H_C,H_{\eta,C}\) & Maximizing control, its reversed survival function, and the survival function with finite tail mass. & \eqref{eq:43}\\
\multicolumn{3}{@{}l}{\textit{GFT frontier}}\\
\(\beta,\delta(\beta)\) & Coefficient of first-best GFT and the sharp seller-welfare penalty. & \eqref{eq:46}\\
\(\kappa,\rho(\kappa),\kappa(C)\) & Ratio \(G/M\), its conditional GFT guarantee, and its parameterization. & \eqref{eq:50}--\eqref{eq:51}\\
\(u,\sigma\) & Limiting seller mean \(u=1-2C\) and scalar derivative \(\sigma=S'(C)\). & \eqref{eq:51}; Section~\ref{sec:frontierproof}\\
\bottomrule
\end{tabularx}
\caption{Notation by role. Definition references locate the formulas; \(h\) is the control, \(p\) the reciprocal state, and \(P_x\) the reference curve.}
\label{tab:coordinates}
\end{table}

\begin{table}[!htbp]
\addtocounter{table}{-1}
\centering
\fontsize{9}{10.4}\selectfont
\setlength{\tabcolsep}{4pt}
\renewcommand{\arraystretch}{1.06}
\begin{tabularx}{\linewidth}{@{}>{\raggedright\arraybackslash}p{.24\linewidth}>{\raggedright\arraybackslash}X>{\raggedright\arraybackslash}p{.15\linewidth}@{}}
\toprule
Symbol & Meaning & Definition\\
\midrule
\multicolumn{3}{@{}l}{\textit{Two units}}\\
\(B_i,S_i;Z_i,Y_i\) & Ordered marginal values and costs; their normalized bodies. & Section~\ref{sec:two-model}--\ref{sec:two-tail}\\
\(M_2,O_2,\Gamma_2,r_2^*\) & Initial and efficient welfare, common-price gain, and worst-case ratio. & \eqref{eq:two-model}\\
\(H_i,F_i,\bar L_i,\bar b\) & Buyer survival, seller CDF, normalized tail integral, and body bound. & Section~\ref{sec:two-pricing}; \eqref{eq:two-kernel}\\
\(\pi,\varpi,\psi_i^\varpi\) & Price measure, normalized cumulative mass, and seller potential. & \eqref{eq:two-potential}\\
\(\vartheta,\Theta_\varpi,\mathcal T_{d,\vartheta}\) & Compensation, canonical prefix compensation, and obstacle operator. & \eqref{eq:two-operator}\\
\(\mathcal J_d^{(2)}\) & Least normalized common-price mass for a fixed buyer pair. & \eqref{eq:two-functional}\\
\(\pi_{\mathrm f},f_{\pi,\mathrm f}\) & Reference price probability and its density. & \eqref{eq:2fam-price-density}\\
\(\psi_{i,\mathrm f},\Upsilon_{i,\mathrm f}\) & Reference seller and buyer measure potentials. & \eqref{eq:2fam-potentials}\\
\(\mathfrak F,r_{\mathrm{fam}},d_{\mathrm{fam}}\) & Structured family, its minimum ratio, and associated penalty. & Section~\ref{sec:two-candidate}; \eqref{eq:two-open-budget}\\
\bottomrule
\end{tabularx}
\caption[]{Notation by role (continued): two units.}
\label{tab:two-notation}
\end{table}
\FloatBarrier

\section{Omitted proofs}\label{app:omitted}

This appendix proves the statements whose proofs were deferred, in the order in which they appear: Corollary~\ref{cor:asymptotics}, Proposition~\ref{lem:representation}, Proposition~\ref{prop:two-tail}, and Theorem~\ref{thm:two-pricing}.

\subsection{Endpoint asymptotics}\label{app:asymptotics}

\begin{proof}[Proof of Corollary~\ref{cor:asymptotics}]
Along the scalar curve, we write \(d=\tau(C)\).

\textit{Small gains.} As \(C\downarrow1/4\), formula \eqref{eq:5} gives \(\mathcal I=O((C-1/4)^{-1/2})\). Since \(1/\beta=S=C(2+\mathcal I)\), we have \(\mathcal I/2=1/(2\beta C)-1\) and \(1/(2\beta C)=2/\beta-(4C-1)/(2C\beta)\), so that \(e^{-\mathcal I/2}=e\,e^{-2/\beta}\exp((4C-1)/(2C\beta))\). Because \(C^2/(1-2C)\to1/8\),
\[
\frac{\tau(C)}{(e/8)e^{-2/\beta}}
=\frac{8C^2}{1-2C}
\exp\left(\frac{4C-1}{2C\beta}\right)\longrightarrow1.
\]
The exponent tends to zero because \((C-1/4)/\beta=(C-1/4)S(C)=O((C-1/4)^{1/2})\). Multiplication by \(\beta\) gives the first asymptotic. In the same limit, \(u\to1/2\),
\(\kappa=d(S+u)/u\sim(e/4)S e^{-2S}\), and \(\rho=1/(S+u)\sim1/S\). Thus \(\log(1/\kappa)\sim2S\).

\textit{Large gains.} We write \(C=1/2-\varepsilon\). Then
\(y_e\sim8\varepsilon\) and \(D_C(y)=1+O(\varepsilon)\) uniformly on the shrinking integration interval. Hence
\[
\mathcal I\sim8\varepsilon,\qquad
S=1+2\varepsilon+o(\varepsilon),\qquad
d\sim\frac1{8\varepsilon},\qquad u=2\varepsilon.
\]
Equations \eqref{eq:47}, \eqref{eq:51}, and \eqref{eq:52} give
\[
\delta\sim\frac1{8\varepsilon}\sim\frac1{4(1-\beta)},
\qquad
\kappa\sim\frac1{16\varepsilon^2},
\qquad
1-\rho\sim4\varepsilon\sim\frac1{\sqrt\kappa}.
\]
\end{proof}

\subsection{Proof of the price representation}\label{app:representation}

The representation is due to \cite{hr87,cp16}. We prove the almost-everywhere form stated in Proposition~\ref{lem:representation} in full, because the welfare comparison on smoothed hard pairs uses exactly this form and pointwise statements depend on tie conventions at price atoms.

On the report triangle \(\mathcal T=\{(s,b):0<s<b<\Lambda\}\), partial derivatives below are understood in the sense of distributions.

\begin{proof}[Proof of Proposition~\ref{lem:representation}]
The envelope identity forces the mixed derivative to vanish, which gives local separation of the allocation probability. Gluing these representations yields a price measure and the transfer formula; averaging prices gives the welfare bound.

\textit{The envelope identity.} Realizationwise IR and equal transfers force zero transfer without trade and a price in \([s,b]\) with trade, so \(\varphi=\Pi=0\) for \(s>b\) and both truthful utilities vanish on the diagonal. For fixed \(s\), put \(u_s(t)=t\varphi(s,t)-\Pi(s,t)\). The buyer's DSIC inequalities make \(\varphi(s,\cdot)\) nondecreasing and give, for \(t_1<t_2\),
\begin{equation}
(t_2-t_1)\varphi(s,t_1)
\le u_s(t_2)-u_s(t_1)
\le(t_2-t_1)\varphi(s,t_2).
\label{eq:A1}
\end{equation}
Summing over partitions of \([s,b]\) gives the first identity below, since the bounded monotone allocation has matching lower and upper integral sums and diagonal utility is zero. The seller's inequalities give the second and make \(\varphi(\cdot,b)\) nonincreasing:
\begin{align}
b\varphi(s,b)-\Pi(s,b)&=\int_s^b\varphi(s,t)\,\mathrm dt,
\label{eq:A2}\\
\Pi(s,b)-s\varphi(s,b)&=\int_s^b\varphi(t,b)\,\mathrm dt.
\label{eq:A3}
\end{align}
Adding gives the Hagerty--Rogerson envelope identity
\begin{equation}
(b-s)\varphi(s,b)=\int_s^b[\varphi(s,t)+\varphi(t,b)]\,\mathrm dt.
\label{eq:A4}
\end{equation}

\textit{Vanishing mixed derivative.} For \(E_b(s,b)=\int_s^b\varphi(s,t)\,\mathrm dt\) and \(E_s(s,b)=\int_s^b\varphi(t,b)\,\mathrm dt\), Fubini and one-dimensional integration by parts give \(\partial_bE_b=\varphi\) and \(\partial_sE_s=-\varphi\). Applying \(\partial_s\partial_b\) to \eqref{eq:A4}, with tests compactly supported away from the diagonal, cancels \(\partial_s\varphi-\partial_b\varphi\) on both sides and leaves \((b-s)\partial_s\partial_b\varphi=0\). Since \(1/(b-s)\) is smooth on \(\mathcal T\),
\begin{equation}
\partial_s\partial_b\varphi=0\quad\text{in }\mathcal D'(\mathcal T).
\label{eq:A5}
\end{equation}

\textit{Local separation.} On an open rectangle \(\mathcal U\times\mathcal V\subset\mathcal T\), we choose smooth compactly supported functions \(\theta_s,\theta_b\) of integral one and define
\begin{equation}
\begin{aligned}
F_{\mathcal U,\mathcal V}(b)&=\int_{\mathcal U}\theta_s(s)\varphi(s,b)\,\mathrm ds,\\
H_{\mathcal U,\mathcal V}(s)&=\int_{\mathcal V}\theta_b(b)[\varphi(s,b)-F_{\mathcal U,\mathcal V}(b)]\,\mathrm db.
\end{aligned}
\label{eq:A6}
\end{equation}
Every zero-integral test has a compactly supported smooth primitive, so \eqref{eq:A5} annihilates products of two such tests. For arbitrary product tests, we subtract each test's integral times \(\theta_s\) or \(\theta_b\) and expand. By \eqref{eq:A6}, \(\varphi-F_{\mathcal U,\mathcal V}-H_{\mathcal U,\mathcal V}\) annihilates all product tests. Fubini and the one-dimensional distributional fundamental lemma give
\begin{equation}
\varphi(s,b)=F_{\mathcal U,\mathcal V}(b)+H_{\mathcal U,\mathcal V}(s)\quad\text{a.e. on }\mathcal U\times\mathcal V.
\label{eq:A7}
\end{equation}

\begin{figure}[t]
\centering
\includegraphics[width=.82\linewidth]{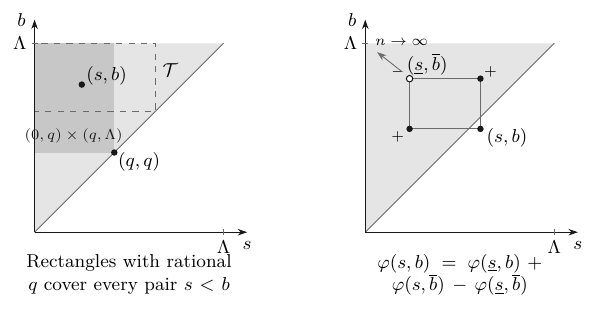}
\caption{Gluing in the proof of Proposition~\ref{lem:representation}. Left: the rectangles \((0,q)\times(q,\Lambda)\), one for each rational \(q\), cover the report triangle \(\mathcal T\). Right: identity \eqref{eq:A8} at a point \((s,b)\) with anchor \((\underline s,\overline b)\); the anchors move to \((0,\Lambda)\) along the sequence.}
\label{fig:gluing}
\end{figure}

\textit{Gluing.} Apply \eqref{eq:A7} on the countable rectangles \((0,q)\times(q,\Lambda)\), with rational \(q\in(0,\Lambda)\) (Figure~\ref{fig:gluing}, left). On each rectangle, Fubini lifts the four exceptional evaluation sets to four-dimensional null sets. These rectangles cover all ordered quadruples, giving
\begin{equation}
\varphi(s,b)=\varphi(\underline{s},b)+\varphi(s,\overline{b})-\varphi(\underline{s},\overline{b})
\label{eq:A8}
\end{equation}
for almost every \(0<\underline{s}<s<b<\overline{b}<\Lambda\). Fubini supplies a full-measure set of anchors \((\underline{s},\overline{b})\) for which \eqref{eq:A8} holds for almost every pair between them. We choose nested anchors with \(\underline{s}_n\downarrow0\) and \(\overline{b}_n\uparrow\Lambda\) (Figure~\ref{fig:gluing}, right), each from a positive-area rectangle near the endpoints. Outside one two-dimensional null set, \eqref{eq:A8} then holds for every sufficiently large \(n\).

Coordinate monotonicity gives the limits
\[
F_0(b)=\lim_{\underline{s}\downarrow0}\varphi(\underline{s},b),\qquad
H_0(s)=\lim_{\overline{b}\uparrow\Lambda}\varphi(s,\overline{b}),\qquad
m=\lim_n\varphi(\underline{s}_n,\overline{b}_n)\le1.
\]
The anchors eventually enclose every point, so \(m=\sup_{\mathcal T}\varphi=F_0(\Lambda-)=H_0(0+)\). We set \(F_1=m-H_0\) and pass to the limit in \eqref{eq:A8}:
\begin{equation}
\varphi(s,b)=F_0(b)-F_1(s)\quad\text{a.e. on }\mathcal T,
\qquad F_1(0+)=0.
\label{eq:A9}
\end{equation}
Both functions are nondecreasing and lie in \([0,m]\).

\textit{The price measure and transfer.} Substitution of \eqref{eq:A9} into both integrals of \eqref{eq:A4} is valid for almost every pair by Fubini and gives
\begin{equation}
\int_s^b(F_0(t)-F_1(t))\,\mathrm dt=0
\quad\text{for a.e. }(s,b)\in\mathcal T.
\label{eq:A10}
\end{equation}
An absolutely continuous primitive of \(F_0-F_1\) has equal values at almost every ordered pair, hence at every pair by continuity. The primitive is constant, so \(F_0=F_1\) almost everywhere. Their common right-continuous representative \(\widehat F\) satisfies \(\widehat F(0+)=0\) and \(\widehat F(\Lambda-)=m\). Its Stieltjes measure \(\nu\) is positive, has mass \(m\le1\), and gives \(\varphi(s,b)=\nu((s,b])\) almost everywhere. Substituting in \eqref{eq:A2} and applying Tonelli gives
\begin{equation}
\Pi(s,b)
=b\nu((s,b])-\int_{(s,b]}(b-z)\,\nu(\mathrm dz)
=\int_{(s,b]}z\,\nu(\mathrm dz).
\label{eq:A11}
\end{equation}

\textit{The welfare bound.} Changing \((s,b]\) to \([s,b]\) affects only seller-type lines through the countably many atoms of \(\nu\). These lines and the diagonal are Lebesgue-null, and the lower triangle has no trade. The exceptional set therefore has zero probability under the absolutely continuous prior. With \(W_\varnothing=M\), Tonelli gives
\begin{equation}
W_{\mathrm{mech}}
=(1-m)W_\varnothing+\int_{(0,\Lambda)}W(z)\,\nu(\mathrm dz).
\label{eq:A12}
\end{equation}
Bounded convergence and the absence of marginal atoms make \(W(z)\) continuous on \([0,\Lambda]\), where it attains a maximum. Since every price has welfare at least \(W_\varnothing\),
\[
W_{\mathrm{mech}}\le\max_{0\le z\le\Lambda}W(z).
\]
\end{proof}

\input{two_unit_pricing_proofs}

\section{Auxiliary results}\label{app:auxiliary}

\subsection{Quantitative finite approximation}\label{app:finite}

The hard-pair approximation error is linear in the buyer tail mass.

\begin{proposition}[finite-tail error]\label{prop:finite}
Let \(d=\tau(C_*)\), and let \(r_\eta\) be the optimal fixed-price welfare ratio of \eqref{eq:43} at \(C_*\). With
\[
L_{\mathrm{err}}(d)=\frac3d+\frac3{d^2}+\frac2{d^3},
\]
one has
\begin{equation}
0\le r_\eta-\beta_*
\le\eta\left[\beta_*L_{\mathrm{err}}(d)+\frac1d\right].
\label{eq:B1}
\end{equation}
\end{proposition}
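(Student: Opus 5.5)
The lower bound \(r_\eta\ge\beta_*\) is immediate from Theorem~\ref{thm:exact}, so the work is the upper bound. I will use identity \eqref{eq:41} of Lemma~\ref{lem:hard} with \(\beta=\beta_*\) and \(d=\tau(C_*)\), together with \(\beta_*d=1-\beta_*\). This turns the numerator gap into a gap of the welfare ratio:
\[
(M_\eta+\Gamma_{\max,\eta})-\beta_*(M_\eta+G_\eta)=d(1-\beta_*J_\eta)+\eta d^2T_\eta .
\]
Dividing by \(M_\eta+G_\eta\ge \mathbb EV_b\ge d\) gives
\[
r_\eta-\beta_*\le (1-\beta_*J_\eta)+\eta\,dT_\eta .
\]
It then suffices to prove two bounds. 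The first is \(dT_\eta\le 1/d\). The second is \(1-\beta_*J_\eta\le \eta\beta_*L_{\mathrm{err}}(d)\), which is equivalent to \(S(C_*)-J_\eta\le\eta L_{\mathrm{err}}(d)\) since \(\beta_*S(C_*)=1\).

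The tail term is easy. Because \(L\ge d\) on \([0,1]\), we have \(T_\eta=\int_0^1L^{-2}\le 1/d^2\), and hence \(\eta dT_\eta\le \eta/d\).

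The main step is the Lipschitz estimate of \(\mathcal J_d\) along the perturbation \(h_\eta=\eta+(1-\eta)h\), where \(h=h_{C_*}\) is the maximizer of Theorem~\ref{thm:calibration}. I would work directly in the original variables \eqref{eq:6} and write \(h_\eta-h=\eta(1-h)\), so that \(|h_\eta-h|\le\eta\) pointwise. This gives \(|g_\eta-g|\le\eta t\le\eta\). It also gives \(|k_\eta-k|\le\int_0^t|h_\eta^2-h^2|\le 2\eta t\le 2\eta\), since \(h,h_\eta\le1\).

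I will then bound the change of each integrand term, using \(g,g_\eta\ge d\), \(0\le h\le1\), and \(0\le k\le t\le1\):
\[
\Bigl|\tfrac{h_\eta}{g_\eta}-\tfrac{h}{g}\Bigr|\le \tfrac{|h_\eta-h|}{d}+\tfrac{|g_\eta-g|}{d^2}\le \eta\Bigl(\tfrac1d+\tfrac1{d^2}\Bigr),
\]
\[
\Bigl|\tfrac{d^2-k_\eta}{g_\eta^2}-\tfrac{d^2-k}{g^2}\Bigr|\le \tfrac{|k_\eta-k|}{d^2}+|d^2-k|\,\Bigl|g_\eta^{-2}-g^{-2}\Bigr| .
\]
For the last factor I use \(|g_\eta^{-2}-g^{-2}|\le 2|g_\eta-g|/d^3\). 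The coefficient \(|d^2-k|\) is at most \(d^2+1\), which gives a bound of the form \(\eta(2/d^2+2/d+2/d^3)\). Integrating over \(t\in[0,1]\) and adding the two terms yields \(|J_\eta-S(C_*)|\le\eta(3/d+3/d^2+2/d^3)=\eta L_{\mathrm{err}}(d)\).

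The obstacle is the bookkeeping of these constants, so that the sum matches \(L_{\mathrm{err}}\) exactly. In particular, the factor \(|d^2-k|\,|g_\eta^{-2}-g^{-2}|\) must contribute \(2/d+2/d^3\). To get this I would split it as \(d^2\cdot 2\eta/d^3+k\cdot2\eta/d^3\) and use \(k\le1\). If a constant comes out slightly larger, I would sharpen the estimates using \(t\le1\) or \(h(1-h)\le1/4\). Only the one-sided bound \(S(C_*)-J_\eta\le\eta L_{\mathrm{err}}\) is needed, and it is also supplied by the maximality of \(S(C_*)\) from \eqref{eq:30}.
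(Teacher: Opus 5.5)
Your proposal is correct and is essentially the paper's proof: identity \eqref{eq:41} at \(\beta=\beta_*\) with \(\beta_*S(C_*)=1\), the denominator bound \(M_\eta+G_\eta\ge d\), the tail bound \(d^2T_\eta\le1\), and the same termwise estimate of \(\mathcal J_d\) under \(h_\eta=\eta+(1-\eta)h\), with exactly the constants \(\frac1d+\frac1{d^2}+\frac2{d^2}+\frac{2(d^2+1)}{d^3}=L_{\mathrm{err}}(d)\). One small point concerns the step that replaces \(d(1-\beta_*J_\eta)/(M_\eta+G_\eta)\) by \(1-\beta_*J_\eta\): it needs \(1-\beta_*J_\eta\ge0\), and that sign is what the maximality \eqref{eq:30} supplies, not the upper bound \(S(C_*)-J_\eta\le\eta L_{\mathrm{err}}(d)\); the paper also uses this sign for the lower bound, where you cite Theorem~\ref{thm:exact}.
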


\begin{proof}
For \(0\le h\le1\), the perturbation \(h_\eta=\eta+(1-\eta)h\) and its states satisfy
\[
0\le h_\eta-h\le\eta,\quad
0\le g_\eta-g\le\eta,\quad
0\le k_\eta-k\le2\eta,\quad
g_\eta,g\ge d.
\]
The inverse-function derivative bounds give
\[
|g_\eta^{-1}-g^{-1}|\le\eta d^{-2},\qquad
|g_\eta^{-2}-g^{-2}|\le2\eta d^{-3}.
\]
Since \(|d^2-k|\le d^2+1\), integration over the unit interval gives
\begin{equation}
|\mathcal J_d(h_\eta)-\mathcal J_d(h)|
\le\eta\left[
\frac1d+\frac1{d^2}+\frac2{d^2}
+\frac{2(d^2+1)}{d^3}\right]
=\eta L_{\mathrm{err}}(d).
\label{eq:B2}
\end{equation}
For \(h=h_{C_*}\), equations \eqref{eq:41} and \eqref{eq:45} imply
\[
r_\eta-\beta_*
=\frac{\beta_*d(\mathcal J_d(h)-\mathcal J_d(h_\eta))
+\eta d^2T_\eta}{M_\eta+G_\eta}.
\]
The first numerator term is nonnegative by Theorem~\ref{thm:calibration}. We bound the other term by \(d^2T_\eta\le1\) and the denominator by \(M_\eta+G_\eta\ge d\), giving
\[
0\le r_\eta-\beta_*
\le\eta\left[\beta_*L_{\mathrm{err}}(d)+d^{-1}\right].
\]
\end{proof}

\subsection{Nonconcavity in the original control}\label{app:nonconcavity}

The objective in \eqref{eq:6} is not concave in \(h\). If \(h=u_0\) on \([0,\theta]\) and \(h=u_1\) on \((\theta,1]\), direct integration gives
\begin{equation}
\mathcal J_d(h)
=1+\frac{\theta(1-\theta)(d+u_0)(u_1-u_0)}
{(d+\theta u_0)(d+\theta u_0+(1-\theta)u_1)}.
\label{eq:9}
\end{equation}
At \(d=\theta=1/2\), the controls \((4/5,3/5)\) and \((1/5,2/5)\) have values \(203/216\) and \(103/96\). Their midpoint is the constant \(1/2\) control, whose value is \(1\). The Jensen gap is
\begin{equation}
\frac12\left(\frac{203}{216}+\frac{103}{96}\right)-1
=\frac{11}{1728}>0.
\label{eq:10}
\end{equation}

\addtocontents{toc}{\protect\setcounter{tocdepth}{1}}
\input{two_unit_family}

\input{two_unit_computation}

\end{document}

%% file: two_units.tex
\section{Extension to two units}\label{sec:two}

\subsection{Model and separation}\label{sec:two-model}

We use the increasing submodular model of \cite{ggkls19,gdk26}.\footnote{Increasing and submodular over two units means \(B_1\ge B_2\) and \(S_1\le S_2\): marginal values decrease and marginal costs increase.}
The seller owns two units. The buyer's marginal values and the seller's
marginal costs of relinquishing successive units satisfy
\(B_1\ge B_2>0\) and \(0<S_1\le S_2\).
The two vectors are independent, with finite first moments;
their coordinates may be dependent. A common price \(z\) trades successive
units while both agents accept, using the inclusive convention of
Section~\ref{sec:model}. Since the accepting units form an initial segment,
\begin{equation}\label{eq:two-model}
\begin{aligned}
M_2&=\mathbb E(S_1+S_2),&
O_2&=\sum_{i=1}^2\mathbb E\max\{B_i,S_i\},\\
\Gamma_2(z)&=\sum_{i=1}^2\mathbb E[(B_i-S_i)
                       \mathbf1_{\{S_i\le z\le B_i\}}],&
r_2^*&=\inf\frac{M_2+\sup_{z\ge0}\Gamma_2(z)}{O_2}.
\end{aligned}
\end{equation}
The symmetric subclass requires \((B_1,B_2)\) and \((S_2,S_1)\)
to have the same law, so the valuation functions are identically distributed.
We denote its infimum by \(r_{2,\mathrm{sym}}^*\).

\begin{theoremD}[Two units are strictly harder]\label{thm:two-separation}
There are independent, finitely supported laws of strictly positive
ordered buyer and seller vectors for which every common price has
welfare ratio less than \(0.7290804\).
There is also a symmetric such instance with ratio less than \(0.83693\).
Consequently
\[
r_2^*<0.7290804<0.738024<\beta_*,
\qquad r_{2,\mathrm{sym}}^*<0.83693.
\]
\end{theoremD}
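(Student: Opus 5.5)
The theorem is existential, so I would prove it by exhibiting two explicit finitely supported instances with rational data and certifying their ratios in exact arithmetic. For a finitely supported pair the welfare \(M_2+\Gamma_2(z)\) is piecewise constant in \(z\), changing only at support points: \(\Gamma_2(z)\) is a finite sum of terms \(\Pr(\cdot)(B_i-S_i)\) over atoms with \(S_i\le z\le B_i\). Under the inclusive convention each such term is upper semicontinuous and locally constant between consecutive support values. Hence \(\sup_z\Gamma_2(z)\) is a maximum over the finite set of support values of \(S_1,S_2,B_1,B_2\) and open intervals between them. The interval values are dominated by the endpoint values, since inclusive trade at an endpoint contains the trades at nearby interior prices. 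So \(\sup_z\Gamma_2\) is a finite maximum of rational numbers, as are \(M_2\) and \(O_2\), and the ratio is an explicit rational number. The proof then reduces to one exact comparison per candidate price.

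To construct the first instance, I would start from the single-unit extremal structure of Theorem~\ref{thm:calibration} and Lemma~\ref{lem:hard}, which the two-unit problem should perturb. The buyer has a body on an interval plus a small remote atom; the seller has an atom at zero, an equalizing body, and an atom at the cutoff. I would discretize this into finitely many types. Each buyer type is an ordered pair \((B_1,B_2)\) and each seller type an ordered pair \((S_1,S_2)\), with \(S_1>0\) to satisfy strict positivity. I would then numerically optimize the atom locations and masses to minimize the best-price ratio. That is a min-max over finitely many prices and can be solved as a linear program in the masses, once the locations are fixed, by the equalizing idea of Lemma~\ref{lem:hard}: choose seller masses so that many prices are simultaneously nearly optimal. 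The coupling between units should push the value below the single-unit constant, because one price must serve both marginal trades. The target is approximately \(0.72908\), consistent with the family value described in the introduction. I would then round all data to rationals and recheck the ratio exactly; this is the instance with many types mentioned in the appendix.

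For the symmetric instance, I would impose that \((S_2,S_1)\) has the law of \((B_1,B_2)\). One way is to take a buyer law \(\mu\) and define the seller law as the image of \(\mu\) under swapping coordinates, which also preserves the orderings \(B_1\ge B_2\) and \(S_1\le S_2\). I would optimize \(\mu\) numerically to get a ratio below \(0.83693\), then round and verify exactly. The displayed chain follows from the first instance together with \(0.7290804<0.738024<\beta_*\), where the last inequality uses the interval enclosure of \(\beta_*\) from Section~\ref{sec:exactproof}.

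The main obstacle is finding instances whose exact rational ratio clears the thresholds. The margin below the conjectured constant \(0.729080\ldots\) is tiny, so the discretization has to be fine and the rounding careful. I would first solve the LP in floating point with locations taken from the two-unit family of Proposition~\ref{prop:2fam-optimum}. Then I would round the masses to rationals, renormalize to total mass one, and verify exactly, refining the grid wherever the margin is lost.
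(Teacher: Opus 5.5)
This is essentially the paper's proof. The paper also exhibits explicit rational, finitely supported instances and certifies them by exact evaluation at the finitely many valuation events. The general instance is obtained by discretizing the family minimizer of Proposition~\ref{prop:2fam-optimum}, and the symmetric one takes a seller law together with its coordinate reversal for the buyer. The certification uses your observation that an inclusive event value contains all trades of the adjacent open intervals, and the chain is closed with the interval enclosure of \(\beta_*\). The one difference is in how the instance is built: the paper constructs it directly as a \(128\)-segment rational concave polygon approximating the minimizing curve, with a buyer atom of mass \(10^{-10}\) at \(10^{10}\) and a \(10^{-12}\) seller shift, rather than by numerically optimizing masses, but both routes end in the same exact rational check.
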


The general instance has \(131\) buyer types and \(131\) seller types.
It has a common high buyer atom, a single second-buyer body atom,
and two second-seller atoms. Its exact integer evaluation at all
\(264\) valuation events proves the bound for every real price.
Appendix~\ref{app:two-finite} gives the construction parameters and
identifies the rational data in the supplement.
Together with Theorem~\ref{thm:exact}, this resolves the separation
conjecture of \cite[Section~1]{gdk26} already for two units.

\subsection{Tail normalization}\label{sec:two-tail}

Proposition~\ref{prop:two-tail} below shows that every instance can be
replaced, without raising its ratio, by a limit of instances in which the
buyer has a common atom of vanishing probability far above all other values.
We normalize the atom's limiting first moment to one per unit and describe
the rest of each side by a bounded vector, its body: \(Z\) for the buyer and
\(Y\) for the seller.
Let \(Z=(Z_1,Z_2)\) and \(Y=(Y_1,Y_2)\) be independent bounded
nonnegative vectors with \(Z_1\ge Z_2\) and \(Y_1\le Y_2\). Define
\begin{equation}\label{eq:two-body}
\begin{aligned}
\Gamma_0(z)&=\sum_{i=1}^2\mathbb E\!\left[
 \mathbf1_{\{Y_i<z\}}
 \left(1+(Z_i-Y_i)\mathbf1_{\{z\le Z_i\}}\right)\right],\\
\mathcal R_2(Z,Y)&=
 \frac{2+\mathbb E(Y_1+Y_2)}
 {2+\sum_{i=1}^2\mathbb E\max\{Z_i,Y_i\}}.
\end{aligned}
\end{equation}
The constant \(1\) in each gain term is the contribution of that atom. The normalized model uses strict
seller acceptance and inclusive buyer acceptance, as induced by a
positive seller displacement before taking the limit.
We call \((Z,Y)\) \emph{admissible bodies} if \(\Gamma_0(z)\le2\)
for every price. Above the bodies, \(\Gamma_0(z)=2\).

\begin{proposition}[A common escaping tail]\label{prop:two-tail}
For the admissible bodies and normalized ratio defined in
\eqref{eq:two-body},
\[
r_2^*=\inf_{(Z,Y)\ \mathrm{admissible}}\mathcal R_2(Z,Y).
\]
Every admissible body pair is approached by independent, strictly
positive, finite-mean ordered instances whose best common-price welfare
ratios tend to \(\mathcal R_2(Z,Y)\).
\end{proposition}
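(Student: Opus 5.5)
The proof splits into an approximation direction, \(r_2^*\le\inf\mathcal R_2\), which also gives the second sentence of the proposition, and a reduction direction, \(r_2^*\ge\inf\mathcal R_2\). I expect the reduction direction to be the main obstacle.

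\emph{Approximation.} Fix admissible bodies \((Z,Y)\), with common bound \(\bar b\). For small \(\eta,\varepsilon>0\) I would build the following instance.

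The buyer vector equals \((Z_1+\varepsilon,Z_2+\varepsilon)\) with probability \(1-\eta\). With probability \(\eta\) it equals the common atom \((R_\eta,R_\eta)\), where \(R_\eta=1/\eta\), so the atom has first moment one per unit.

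The seller vector is \((Y_1+\varepsilon/2,\,Y_2+\varepsilon/2)\), independent of the buyer. The displacement \(\varepsilon/2\) makes all values strictly positive and keeps the orders \(Z_1\ge Z_2\) and \(Y_1\le Y_2\). It also turns the inclusive rule of Section~\ref{sec:model} into strict seller acceptance and inclusive buyer acceptance in the limit.

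The limits of the three quantities are then as follows.
\begin{itemize}
\item \(M_2\to\mathbb E(Y_1+Y_2)\).
\item \(O_2=(1-\eta)\sum_i\mathbb E\max\{Z_i,Y_i\}+2+O(\eta\bar b+\varepsilon)\).
\item For prices \(z\le\bar b+1\), the atom buyer contributes \(\eta(R_\eta-Y_i)\Pr(Y_i<z)\to\Pr(Y_i<z)\). Hence \(\Gamma_2(z)\to\Gamma_0(z)\), uniformly in \(z\) up to the displacement.
\item For prices \(\bar b+1\le z\le R_\eta\), only the atom trades, and \(\Gamma_2(z)=\eta(2R_\eta-M_2)\to2\). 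Prices above \(R_\eta\) give no trade.
\end{itemize}
Admissibility, \(\Gamma_0\le2\), then gives \(\sup_z\Gamma_2\to2\), and the ratio tends to \(\mathcal R_2(Z,Y)\).

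The only care needed is uniformity in \(z\) near atoms of \(Y\). I would control it by upper semicontinuity and reverse Fatou, exactly as in Lemma~\ref{lem:smoothing}: take a sequence of near-optimal prices and pass to a convergent subsequence. Letting \(\varepsilon,\eta\downarrow0\) completes this direction.

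\emph{Reduction.} Take any instance with ratio \(r\). I would mimic the cutoff normalization of Lemma~\ref{lem:pricing} and the tail construction of Lemma~\ref{lem:hard}. The strategy is to show that the buyer's distribution above a suitable level \(c\) can be replaced by a common escaping atom without raising the ratio.

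Concretely, I would choose \(c\) as the largest price level that any optimal price needs to resolve. I would then move each buyer marginal value above \(c\) to a common far value \(R\). The moved mass would be chosen so that the tail integral \(\mathbb E(B_i-c)_+\) is preserved for each unit, with the smaller tail of unit \(2\) topped up using the order \(B_1\ge B_2\).

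Preserving the tail integral keeps \(O_2\) fixed up to \(o(1)\), since \(\mathbb E\max\{B_i,S_i\}\) depends on \(B_i\) above the seller's support only through that integral. The modification also leaves \(M_2\) unchanged. Moreover, it does not raise the gain \(\Gamma_2(z)\) at any price \(z\le c\), because buyers above \(c\) still accept and their realized gains average the same. For prices above \(c\), it replaces a spread of high-price gains with the single value \(\eta(2R-M_2)\). Choosing \(c\) at the point where this value equals \(\sup_{z\le c}\Gamma_2\) keeps \(\sup_z\Gamma_2\) from increasing.

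After rescaling by the atom's moment, the bodies \(Z=B\wedge c\) and \(Y=S\) (suitably normalized) satisfy \(\Gamma_0\le2\). Their \(\mathcal R_2\) is at most the original ratio, up to vanishing error.

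\emph{Main obstacle.} The hard step is making the tail replacement in the reduction simultaneously common to both units and monotone in its effect on \(\sup_z\Gamma_2\). The per-unit tails differ, the seller order couples the units, and the inclusive versus strict conventions matter at atoms.

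What I would try first is a one-parameter family of cutoffs \(c\). I would show that the map \(c\mapsto\bigl(\sup_{z\le c}\Gamma_2(z),\ \text{atom gain}\bigr)\) crosses by continuity. I would then handle the order constraint by moving the unit-\(2\) excess into unit \(1\)'s tail, so that the atom is common, and verify directly that \(\mathbb E\max\) terms are unchanged.

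If that fails, the fallback is to establish the inequality \(\mathcal R_2\le r\) only in the limit. In that version I would take a sequence of cutoffs, use the finite-mean hypothesis to make \(\mathbb E[B_i\mathbf1_{\{B_i>c\}}]-c\Pr(B_i>c)\) stabilize, and use compactness to extract admissible limiting bodies.
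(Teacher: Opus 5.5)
Your overall architecture matches the paper's: a cutoff balanced against the best gain, a common escaping buyer atom, and a separate realization direction. But the reduction direction stops exactly where an idea is needed, and the remedies you suggest do not work.

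The normalized model \eqref{eq:two-body} forces the atom to carry the same first moment for both units, while the tails $e_i=\mathbb E(B_i-c)_+$ satisfy $e_1\ge e_2$. Topping unit~2 up to $e_1$ adds about $(e_1-e_2)\Pr(S_2\le z)$ to the gain at each price $z\le c$. That contradicts your claim that low-price gains do not rise. It also adds $e_1-e_2$ to both $O_2$ and the high-price gain, which raises a ratio below one.

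The paper instead gives the common atom the \emph{average} $(e_1+e_2)/2$ per unit, at the single cutoff $c_2$ with $e_1+e_2=\Gamma_{2,\max}$. The atom's total contribution to efficient welfare and to high-price gains is then unchanged. At each price $z\le c_2$ the new gain is at most the original one, because besides manifestly nonnegative terms the difference contains
\[
e_1f_1(z)+e_2f_2(z)-\tfrac{e_1+e_2}{2}\bigl(f_1(z)+f_2(z)\bigr)
=\tfrac12(e_1-e_2)\bigl(f_1(z)-f_2(z)\bigr)\ge0,
\qquad f_i(z)=\Pr(S_i\le z-\varepsilon).
\]
Its sign needs both orders: $e_1\ge e_2$ from $B_1\ge B_2$, and $f_1\ge f_2$ from $S_1\le S_2$. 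This rearrangement step is where the seller order enters, and your plan has no substitute for it.

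The seller side is also unfinished. $Y=S$ need not be bounded. Your claim that $O_2$ is preserved, because $\mathbb E\max\{B_i,S_i\}$ depends on $B_i$ above the seller's support only through $e_i$, presupposes that $c$ lies above that support. The paper clips the seller at $c_2$ as well and shows this cannot raise the ratio. It uses $\Delta_S=\sum_i\mathbb E(S_i-c_2)_+\ge\Delta_O=\sum_i\mathbb E[S_i-\max\{B_i,c_2\}]_+$ and
\[
(M_2+\Gamma_{2,\max})(O_2-\Delta_O)-(M_2-\Delta_S+\Gamma_{2,\max})O_2
=O_2(\Delta_S-\Delta_O)+(O_2-M_2-\Gamma_{2,\max})\Delta_O\ge0.
\]

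Your approximation direction has the displacement reversed. Shifting buyers by $\varepsilon$ and sellers by $\varepsilon/2$ puts sellers \emph{below} buyers, so the price $w+\varepsilon/2$ trades sellers at $w$ and buyers at $w$ at once. The limiting best gain is then at least the fully inclusive quantity $\sup_z\sum_i\mathbb E[\mathbf 1_{\{Y_i\le z\}}(1+(Z_i-Y_i)\mathbf 1_{\{z\le Z_i\}})]$. Reverse Fatou along $z_\varepsilon\to z$ only bounds the gain by this inclusive quantity, which admissibility does not control.

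A concrete failure: take $Z_1=Z_2=1$ and $Y_1=Y_2$ equal to $0$ or $1$ with probability $1/2$. Then $\Gamma_0\le2$ and $\mathcal R_2=3/4$. Your instances have gain near $3$ at the price $1+\varepsilon/2$, so their ratios tend to $1$.

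The fix is to shift sellers up relative to buyers, as the paper does: buyer body $Z$, seller $Y+(\varepsilon,\varepsilon)$, then a common translation for positivity. The gain at every price $z\le1/\eta$ is then bounded pointwise by $\Gamma_0(z)\le2$, and no limiting argument over prices is needed.
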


Appendix~\ref{app:two-tail} proves both directions, including the
price constraint at the clipping atom.

\subsection{The pricing functional}\label{sec:two-pricing}

As in the single-unit argument, we first fix the buyers and look for the
least total price mass that meets a target against every seller. With two
units the target must hold for every ordered pair of seller costs
\(s_1\le s_2\). We describe a price by its cumulative mass \(\varpi(s)\),
the mass of prices above \(s\), and measure the slack of unit \(i\) at
seller cost \(s\) by a potential \(\psi_i^\varpi(s)\). Feasibility is the
joint condition \(\psi_1^\varpi(s_1)+\psi_2^\varpi(s_2)\ge0\) on
\(s_1\le s_2\). A nondecreasing compensation \(\vartheta\) splits it into
one inequality per unit, and for fixed \(\vartheta\) the least feasible mass
is the fixed point of a contraction.

We fix buyer bodies \(Z_1\ge_{\mathrm{st}}Z_2\) supported on
\([0,\bar b]\) and write \(H_i(s)=\Pr(Z_i>s)\), so that the buyer
Stieltjes measures are \(-\mathrm dH_i\); for seller bodies we write
\(F_i(s)=\Pr(Y_i\le s)\).
For \(s,z\ge0\), define
\begin{equation}\label{eq:two-kernel}
\begin{aligned}
\bar L_i(s)&=1+\mathbb E(Z_i-s)_+,\\
g_i(s,z)&=\mathbf1_{\{s<z\}}
 \left[1+\mathbb E\bigl((Z_i-s)\mathbf1_{\{z\le Z_i\}}\bigr)\right].
\end{aligned}
\end{equation}
Thus \(s+\bar L_i(s)=1+\mathbb E\max\{Z_i,s\}\) is normalized
efficient welfare. A nonnegative price measure \(\pi\) gives conditional gain
\(\int g_i(s,z)\,\pi(\mathrm dz)\), and we minimize its total mass;
any unused probability is placed at zero.

Fix \(d>0\); as with one unit, the target charges \(\beta d\) per unit of
seller cost, and \(d=(1-\beta)/\beta\) is the welfare case
(Theorem~\ref{thm:two-pricing}(iii)). A normalized price consists of a body measure
\(\widehat\pi_0\) on \((0,\bar b]\) and an ideal tail mass \(\varpi(\bar b)\),
which contributes one unit of gain per unit of mass to every seller
in \([0,\bar b]\). Its cumulative mass
\[
\varpi(s)=\varpi(\bar b)+\widehat\pi_0((s,\bar b])
\]
is nonnegative, nonincreasing, and right-continuous.
The gain and the seller potential of unit \(i\) are
\begin{equation}\label{eq:two-potential}
\begin{aligned}
G_i^\varpi(s)&=\bar L_i(s)\varpi(s)
 -\int_{(s,\bar b]}(t-s)\varpi(t)\,(-\mathrm dH_i(t)),\\
\psi_i^\varpi(s)&=d\cdot s-\bar L_i(s)+G_i^\varpi(s).
\end{aligned}
\end{equation}
The atom at \(s\) is excluded from \(\varpi(s)\); with this trace
convention a buyer atom at the posted price still trades.

The order between the two seller costs enters through a compensation.
Let \(\mathcal V\) consist of bounded, nondecreasing, right-continuous
compensations \(\vartheta:[0,\bar b]\to\mathbb R\).
If \(\psi_1^\varpi\ge-\vartheta\) and \(\psi_2^\varpi\ge\vartheta\), then
\(\psi_1^\varpi(s_1)+\psi_2^\varpi(s_2)\ge\vartheta(s_2)-\vartheta(s_1)\ge0\)
whenever \(s_1\le s_2\), because \(\vartheta\) is nondecreasing.
Conversely, a nonnegative potential sum permits the canonical compensation
\(\Theta_\varpi(s)=-\inf_{t\le s}\psi_1^\varpi(t)\).
For fixed \(\vartheta\) the two inequalities become a fixed-point problem
for \(\varpi\): with \(\epsilon_1=-1\) and \(\epsilon_2=1\), dividing
\(\psi_i^\varpi\ge\epsilon_i\vartheta\) by \(\bar L_i\) and rearranging gives
\(\varpi\ge\mathcal A_{i,d,\vartheta}\varpi\). The operator below also takes
a supremum over future seller costs, which keeps the cumulative mass
nonincreasing. On bounded Borel functions \(\varpi\), define
\begin{equation}\label{eq:two-operator}
\begin{aligned}
(\mathcal T_{d,\vartheta}\varpi)(s)
 &=\sup_{s\le t\le\bar b}
 \max\{0,\mathcal A_{1,d,\vartheta}\varpi(t),
           \mathcal A_{2,d,\vartheta}\varpi(t)\},\\
\mathcal A_{i,d,\vartheta}\varpi(t)
 &=1-\frac{d\cdot t}{\bar L_i(t)}
 +\frac{\epsilon_i\vartheta(t)}{\bar L_i(t)}
 +\frac{\int_{(t,\bar b]}(v-t)\varpi(v)\,(-\mathrm dH_i(v))}
        {\bar L_i(t)}.
\end{aligned}
\end{equation}

\begin{theoremE}[Pricing for ordered buyer pairs]\label{thm:two-pricing}
For every ordered buyer pair supported on \([0,\bar b]\) and every \(d>0\):
\begin{enumerate}
\renewcommand{\labelenumi}{(\roman{enumi})}
\item For each \(\vartheta\in\mathcal V\), the operator
\(\mathcal T_{d,\vartheta}\) has a unique bounded fixed point
\(\varpi_{d,\vartheta}\), the pointwise least feasible cumulative mass.
Iteration from zero converges uniformly with contraction factor
\(\bar b/(1+\bar b)\).
\item The value \(\varpi_{d,\vartheta}(0)\) is jointly convex in
\((d,\vartheta)\), and the minimum
\begin{equation}\label{eq:two-functional}
\mathcal J_d^{(2)}(H_1,H_2)
 =\min_{\vartheta\in\mathcal V}\varpi_{d,\vartheta}(0)
\end{equation}
is attained.
\item For \(0<\beta<1\), there is a price probability with bounded support satisfying
\begin{equation}\label{eq:two-price-target}
\sum_{i=1}^2\int g_i(s_i,z)\,\pi(\mathrm dz)
\ge\beta\sum_{i=1}^2\bar L_i(s_i)-\beta d(s_1+s_2)
\quad(0\le s_1\le s_2)
\end{equation}
if and only if \(\beta\mathcal J_d^{(2)}(H_1,H_2)\le1\).
At \(d=(1-\beta)/\beta\), this is a welfare guarantee of \(\beta\)
against every ordered seller distribution in the normalized model.
\item If \(H_i^\delta\) is the survival function of
\(\delta\lfloor Z_i/\delta\rfloor\), then
\[
\mathcal J_d^{(2)}(H_1,H_2)
\le\mathcal J_d^{(2)}(H_1^\delta,H_2^\delta)+\delta.
\]
The supremum over bounded ordered buyer bodies consequently equals
the supremum over finite-support ordered buyer bodies.
\end{enumerate}
\end{theoremE}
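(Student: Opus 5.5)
I would prove the four parts in order, treating \(\mathcal T_{d,\vartheta}\) as a monotone obstacle operator on bounded Borel functions on \([0,\bar b]\).

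\textbf{Part (i): contraction.} First, \(\mathcal A_{i,d,\vartheta}\) is affine and monotone in \(\varpi\), because \(-\mathrm dH_i\ge0\) and \(v-t\ge0\). Second, its Lipschitz constant in the sup norm is
\[
\sup_t \frac{\mathbb E(Z_i-t)_+}{1+\mathbb E(Z_i-t)_+}\le\frac{\bar b}{1+\bar b},
\]
since \(\mathbb E(Z_i-t)_+\le\bar b\) and \(\bar L_i=1+\mathbb E(Z_i-t)_+\). Taking \(\max\) with \(0\) and a supremum over \(t\ge s\) preserves both monotonicity and this Lipschitz bound. Banach's theorem then gives a unique bounded fixed point and uniform convergence of iterates at the stated rate. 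The fixed point is nonincreasing by construction. For right-continuity, I would either pass to the right-continuous modification or verify it directly.

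\textbf{Part (i): least feasible mass.} Feasibility of a cumulative mass \(\varpi\) means \(\psi_1^\varpi\ge-\vartheta\), \(\psi_2^\varpi\ge\vartheta\), \(\varpi\ge0\), and \(\varpi\) nonincreasing. By the rearrangement stated before the theorem, this is exactly \(\varpi\ge\mathcal T_{d,\vartheta}\varpi\). Monotonicity of \(\mathcal T\) then shows that iterates from zero stay below any feasible \(\varpi\), by induction. So the fixed point is the pointwise least feasible cumulative mass.

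\textbf{Part (ii): convexity and attainment.} I would prove joint convexity by showing that the feasible set \(\{(d,\vartheta,\varpi):\varpi\ge\mathcal T_{d,\vartheta}\varpi\}\) is convex. Each \(\mathcal A_{i,d,\vartheta}\varpi(t)\) is jointly affine in \((d,\vartheta,\varpi)\), and the max and sup of affine maps are convex. So if \(\varpi^j\) is feasible for \((d^j,\vartheta^j)\), the convex combination is feasible for the combined parameters. The least element at the combined parameters is then pointwise below the combination, which gives convexity of \(\varpi_{d,\vartheta}(0)\).

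For attainment, I would reduce to a compact class of compensations.
\begin{itemize}
\item Replacing \(\vartheta\) by the canonical \(\Theta_\varpi\) of an optimal \(\varpi\) does not increase the mass.
\item Hence \(\vartheta\) can be taken uniformly bounded, with bounds coming from \(|\psi_i^\varpi|\le d\bar b+\bar L_i+\bar L_i\varpi(0)\) and a priori mass bounds.
\item Helly's selection theorem extracts a pointwise convergent subsequence of the bounded monotone functions.
\item The fixed points are equi-Lipschitz in \(\vartheta\) (sup norm over \(\bar L_i\ge1\)). Combined with monotone convergence in the integrals, \(\varpi_{d,\vartheta}(0)\) is lower semicontinuous along such limits.
\end{itemize}
This lower-semicontinuity step, which has to handle pointwise rather than uniform convergence of \(\vartheta\) and right-continuity at the limit, is where I expect the main obstacle. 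If needed, I would pass instead to the right-continuous regularization of the limit and check that feasibility is preserved, using dominated convergence inside \(\mathcal A_i\).

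\textbf{Part (iii): pricing equivalence.} For sufficiency, take the optimal \(\vartheta\) and \(\varpi\) and scale by \(\beta\). This gives a price measure whose body part is \(\beta\,\mathrm d(-\varpi)\) and whose tail mass is realized at a price above \(\bar b\) (bounded support). The remaining mass \(1-\beta\varpi(0)\ge0\) goes to zero. Expanding \(\int g_i(s,z)\,\pi(\mathrm dz)\) via Fubini gives \(\beta G_i^\varpi(s)\). Then
\[
\psi_1^\varpi(s_1)+\psi_2^\varpi(s_2)\ge0
\]
is precisely \eqref{eq:two-price-target}. For necessity, any price probability satisfying the target defines a feasible \(\varpi=\pi((s,\infty))/\beta\), with the canonical compensation \(\Theta_\varpi\); minimality gives \(\beta\mathcal J^{(2)}_d\le\beta\varpi(0)\le1\). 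For the welfare statement at \(d=(1-\beta)/\beta\), average over the seller as in Lemma~\ref{lem:pricing}, using \(s+\bar L_i(s)\) as the efficient welfare.

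\textbf{Part (iv): discretization.} Rounding \(Z_i\) down to \(\delta\lfloor Z_i/\delta\rfloor\) lowers \(\bar L_i\) by at most \(\delta\) and keeps the pair ordered. Given a feasible pair \((\varpi^\delta,\vartheta^\delta)\) for the rounded buyers, I would shift prices up by \(\delta\) and add mass \(\delta\), i.e. set \(\varpi(s)=\varpi^\delta(s-\delta)+\delta\), with compensation rescaled accordingly. I would then check feasibility for the original buyers, using \(\bar L_i\ge1\) so that an extra mass \(\delta\) covers a gain deficit of \(\delta\bar L_i\). The supremum statement follows by letting \(\delta\to0\) together with monotone approximation in the other direction.
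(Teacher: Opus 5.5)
Parts (i) and (ii) follow the paper's route: the sup-norm contraction with kernel mass $(\bar L_i-1)/\bar L_i\le\bar b/(1+\bar b)$, monotone iteration for leastness, convexity of the set of feasible triples, and the canonical prefix compensation. For attainment, your sup-norm equi-Lipschitz bound does not control pointwise limits of $\vartheta_n$. Your fallback, however, is what the paper does. It extracts pointwise limits of both $\varpi_n$ and $\vartheta_n$, passes the inequalities to the limit by bounded convergence, and then takes right limits. This preserves feasibility because the kernel is positive and $\varpi^+\le\varpi_{\mathrm{raw}}$.

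\textbf{Part (iv) is where the proposal breaks.} Shifting the rounded price up by $\delta$, i.e.\ taking $\varpi(s)=\varpi^\delta(s-\delta)+\delta$, is not harmless. It pushes price mass above buyer values and loses surplus of order $Z_i-s$ per unit of shifted mass. Meanwhile, raising $\varpi$ by a constant $\delta$ raises $G_i^\varpi$ by exactly $\delta$, not $\delta\bar L_i$, because the kernel absorbs $\delta(\bar L_i-1)$.

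Here is a concrete failure. Take $Z_1=Z_2=1=\bar b$, $d=1/2$ and $\delta=1/m$ with $m\ge3$, so rounding changes nothing. The optimal compensation is $\vartheta=0$, the least mass is $\varpi(s)=(5/2-2s)/(2-s)$, and $\psi_1^\varpi(0)=\psi_2^\varpi(0)=0$. The prices in $(1-\delta,1]$ have mass $\frac{3\delta}{2(1+\delta)}$, and each gives the seller at cost $0$ gain $2$. After your shift they lie above $Z_i$ and give only $1$. Your construction therefore has $\psi_i(0)=\delta-\frac{3\delta}{2(1+\delta)}=\frac{\delta(\delta-1/2)}{1+\delta}<0$ for both units, and no compensation can restore $\psi_1(0)+\psi_2(0)\ge0$.

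The paper uses no shift. It keeps the optimal rounded price and adds ideal tail mass $\delta$. Since $Z_i\ge Z_i^\delta$ and $(Z_i-s)\mathbf 1_{\{z\le Z_i\}}$ is nondecreasing in $Z_i$, one has $g_i(s,z)\ge g_i^\delta(s,z)$ pointwise. Also $0\le\bar L_i-\bar L_i^\delta\le\delta$. So every potential for the original buyers dominates the corresponding rounded potential, and the same compensation stays feasible.

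\textbf{Part (iii) leaves a case unchecked.} The target must hold for all $0\le s_1\le s_2$, including costs above $\bar b$. There the compensation is not defined, and only tail prices above $s$ contribute. Placing the tail at an arbitrary price above $\bar b$ fails. In the same example, an atom at $1+\epsilon$ gives a seller at $s\in(1+\epsilon,2)$ zero gain, against a positive target $\beta(1-s/2)$ per unit.

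The fix is to put the tail at or beyond $\bar b+\varpi(\bar b)/d$, or to spread it with density $\beta d$ on $(\bar b,\bar b+\varpi(\bar b)/d)$. Then each potential at $s>\bar b$ is at least its value at $\bar b$. Clipping ordered costs at $\bar b$ then reduces the target to the compensated inequality on $[0,\bar b]$.
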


The tail is realized by density \(\beta d\) on
\((\bar b,\bar b+\varpi_{d,\vartheta}(\bar b)/d)\), an empty
interval when the tail mass is zero. This measure need not be finitely
atomic, and it is not the only such price: the same tail mass may
instead be placed as a single atom at
\(\bar b+\varpi_{d,\vartheta}(\bar b)/d\).
Appendix~\ref{app:two-pricing} proves the theorem
and gives the finite-node recurrence.
This extends the buyer-dependent pricing step of
\cite[Theorem~3.1]{lrw23}; the new ingredient is the prefix compensation.

\subsection{The structured family}\label{sec:two-candidate}

All quantities carrying the subscript $\mathrm f$ belong to the family
defined here. The family describes limiting two-unit instances, constructed
in Appendix~\ref{app:2fam}: both sellers have the same mass at zero, the
second buyer's body is a single value, the second seller has two atoms, and
the first buyer is described by one concave curve $\mathcal P$. We use $t_{\mathrm f}$ and $p_{\mathrm f}$ for its scalar parameters,
$\xi$ for its curve coordinate, and $\mathcal P$ for its reciprocal state.
They are distinct from the single-unit coordinates in the main text.
For $0<t_{\mathrm f}<1$, $t_{\mathrm f}\le p_{\mathrm f}\le1$, and $\xi_0>0$, define
\begin{gather}
 m_{\mathrm f}=\frac{2t_{\mathrm f}}{1+t_{\mathrm f}},\qquad
 N_{\mathrm f}=\frac2{1+t_{\mathrm f}},\qquad
 c_{\mathrm f}=\frac{p_{\mathrm f}-t_{\mathrm f}}{2},\qquad
 \delta_{\mathrm f}=\frac{p_{\mathrm f}+t_{\mathrm f}}{2},\notag\\
 e_{\mathrm f}=\frac{1+t_{\mathrm f}}{2},\qquad
 v_{\mathrm f}=1-e_{\mathrm f},\qquad
 a_{\mathrm f}=\frac{c_{\mathrm f}}{t_{\mathrm f}p_{\mathrm f}},\qquad
 h_{\mathrm f}=\frac{v_{\mathrm f}}{\xi_0}.
 \label{eq:2fam-parameters}
\end{gather}
We require $\xi_0\ge c_{\mathrm f}$.
The closed curve class $\mathfrak F$ consists of positive,
concave Lipschitz functions on $[c_{\mathrm f},\xi_0]$ satisfying
\begin{equation}
 \mathcal P(c_{\mathrm f})=p_{\mathrm f},\qquad
 \mathcal P(\xi_0)=1,\qquad
 \mathcal P(\xi)\le U_{\mathrm f}(\xi)
 :=\min\{\xi+\delta_{\mathrm f},h_{\mathrm f}\xi+e_{\mathrm f}\}.
 \label{eq:2fam-class}
\end{equation}
A parameter triple $(t_{\mathrm f},p_{\mathrm f},\xi_0)$ is feasible when
this class is nonempty.
For a nondegenerate interval, concavity and the two endpoint obstacles
imply $h_{\mathrm f}\le\mathcal P'\le1$ almost everywhere.
The case $p_{\mathrm f}=1$ is included as a degenerate interval by continuity.
We distinguish the subclass whose endpoint slopes are $1$ and
$h_{\mathrm f}$; the closed class also admits endpoint atoms and corners.
Its functionals are
\begin{align}
 T_{\mathrm f}&=\int_{c_{\mathrm f}}^{\xi_0}\mathcal P^{-2}\,\mathrm d\xi,
 &Q_{\mathrm f}&=\int_{c_{\mathrm f}}^{\xi_0}
          \xi(\mathcal P'/\mathcal P)^2\,\mathrm d\xi,\notag\\
 M_{\mathrm f}&=N_{\mathrm f}(a_{\mathrm f}+T_{\mathrm f}-\xi_0),
 &G_{\mathrm f}&=2+2m_{\mathrm f}a_{\mathrm f}
       -N_{\mathrm f}\log p_{\mathrm f}-N_{\mathrm f}Q_{\mathrm f},\notag\\
 R_{\mathrm f}&=\frac{M_{\mathrm f}+2}{M_{\mathrm f}+G_{\mathrm f}}.
 &&\label{eq:2fam-functionals}
\end{align}
Both sellers have mass $m_{\mathrm f}$ at zero, so
$t_{\mathrm f}$ fixes this common mass, and the second buyer's body is the
single value $a_{\mathrm f}$. The curve $\mathcal P$ is the reciprocal of the
first buyer's normalized tail integral as a function of the coordinate $\xi$,
and its right derivative is that buyer's survival function. The first
buyer's body starts at $a_{\mathrm f}$, where $\mathcal P=p_{\mathrm f}$, ends
at the coordinate value $\xi_0$, and has length $T_{\mathrm f}$ in value units.
The constant $N_{\mathrm f}$ scales the first seller's CDF on the body, and
$c_{\mathrm f},\delta_{\mathrm f},e_{\mathrm f},v_{\mathrm f},h_{\mathrm f}$
locate the coordinate interval and the two affine obstacles in
\eqref{eq:2fam-class}; the choice $h_{\mathrm f}=v_{\mathrm f}/\xi_0$ puts the
second obstacle through $(\xi_0,1)$. In these instances $M_{\mathrm f}$ and
$G_{\mathrm f}$ are the limiting seller welfare and efficient gains, and
$R_{\mathrm f}$ is the welfare ratio (Lemma~\ref{lem:2fam-realization}).

\begin{proposition}[The minimum within the two-unit curve family]
\label{prop:2fam-optimum}
Let $\mathfrak F$ and $R_{\mathrm f}$ be defined by
\eqref{eq:2fam-parameters}--\eqref{eq:2fam-functionals}.
The infimum
$ r_{\mathrm{fam}}=\inf_{\mathfrak F}R_{\mathrm f}$
is attained at a unique curve and parameter triple $(t_{\mathrm f},p_{\mathrm f},\xi_0)$.
Its curve consists of a positive initial affine contact, one strictly
concave interval satisfying
$\mathcal P''+C_{\mathrm f}\mathcal P/\xi^2=0$ for a constant $C_{\mathrm f}>0$,
and a positive final affine contact. The same infimum holds in the
subclass with endpoint slopes $1,h_{\mathrm f}$, and
\[
\frac{18227}{25000}<r_{\mathrm{fam}}<\frac{729081}{10^6}.
\]
The minimizing body laws are approached by strictly positive finite-mean
ordered instances with a common escaping buyer atom.
\end{proposition}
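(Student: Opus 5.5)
The plan is to mirror the single-unit architecture of Lemmas~\ref{lem:gaps} and Theorem~\ref{thm:calibration}, but in the coordinates of the family, and to minimize \(R_{\mathrm f}\) in two nested stages: the curve \(\mathcal P\) first, with \((t_{\mathrm f},p_{\mathrm f},\xi_0)\) held fixed, and the three scalars afterwards. With the parameters fixed, \(M_{\mathrm f}\) is affine and increasing in \(T_{\mathrm f}\), and \(G_{\mathrm f}\) is affine and decreasing in \(Q_{\mathrm f}\). So \(R_{\mathrm f}\le r\) holds exactly when \((1-r)M_{\mathrm f}+2-rG_{\mathrm f}\le0\). Minimizing \(R_{\mathrm f}\) over curves therefore amounts to deciding, for each level \(r\), whether the energy
\[
rN_{\mathrm f}Q_{\mathrm f}+(1-r)N_{\mathrm f}T_{\mathrm f}
\]
can fall below an explicit affine function of the parameters. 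In \(\ell=\log\mathcal P\) this energy is \(\int[r\xi(\ell')^2+(1-r)e^{-2\ell}]\,\mathrm d\xi\), which is convex. The obstacle \(U_{\mathrm f}\) is the minimum of two lines, so \(\ell\le\log U_{\mathrm f}\) is a convex constraint set, and concavity of \(\mathcal P\) is preserved under the relaxation. This is exactly the structure of Lemma~\ref{lem:energy}, and the Euler equation \((r\xi\ell')'+(1-r)e^{-2\ell}=0\) has the first integral that yields \(\mathcal P''+C_{\mathrm f}\mathcal P/\xi^2=0\).

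For the fixed-parameter step I would build a reference curve as in Lemma~\ref{lem:reference}. It follows the line \(\xi+\delta_{\mathrm f}\) from \(c_{\mathrm f}\), leaves it with slope one onto a middle arc solving the Euler equation, and meets the line \(h_{\mathrm f}\xi+e_{\mathrm f}\) tangentially before ending at \((\xi_0,1)\). The multiplier \(Q\) should be supported on the two contact sets. A gap identity of the form (G1) then gives the comparison. Its summands are the square term, the exponential remainder \eqref{eq:26}, and a contact term that is nonnegative because \(v\le0\) wherever the reference touches an obstacle. Together these show the reference curve uniquely minimizes the energy. The same computation shows the reference curve has endpoint slopes \(1\) and \(h_{\mathrm f}\), which gives the statement about the subclass. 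Endpoint atoms and corners in the closed class only raise the energy, by lower semicontinuity.

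This reduces the infimum to a function of three scalars, expressed through \(C_{\mathrm f}\) and the two contact parameters by closed forms analogous to \eqref{eq:15}. Its first-order conditions form an explicit algebraic–transcendental system. I expect the main obstacle to be this three-dimensional minimization, because there is no one-parameter envelope identity like (G2) that exhibits a single sign change. I would first try to eliminate one parameter analytically, using the stationarity identity \(\mathrm d\Phi/\mathrm dx=-1/x+\int 2Q/(x+a)\) to reduce to two. After that I would rely on computer-assisted certification:
\begin{itemize}
\item interval arithmetic on a compact box, showing that \(R_{\mathrm f}\) exceeds \(18227/25000\) everywhere;
\item explicit boundary estimates excluding \(t_{\mathrm f}\to0,1\), \(p_{\mathrm f}\to t_{\mathrm f}\) and \(\xi_0\to\infty\);
\item a positivity certificate for the Hessian near the numerically located critical point, together with a Krawczyk or Newton interval test showing that the critical point is unique and admissible.
\end{itemize}
The upper bound \(729081/10^6\) is then just the evaluation of an explicit feasible triple with its reference curve, in rational or interval arithmetic.

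Finally, I would realize the minimizing body laws using the construction referenced as Lemma~\ref{lem:2fam-realization}. That lemma produces instances whose limiting \(M\) and \(G\) are \(M_{\mathrm f}\) and \(G_{\mathrm f}\). Proposition~\ref{prop:two-tail} then turns these admissible bodies into strictly positive finite-mean ordered instances with a common escaping buyer atom, and their best common-price ratios converge to \(R_{\mathrm f}\).
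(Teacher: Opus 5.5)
Your proposal has a genuine gap in the fixed-parameter step. You assert that at every fixed $(t_{\mathrm f},p_{\mathrm f},\xi_0)$ the energy minimizer does three things: it follows $\xi+\delta_{\mathrm f}$ from $c_{\mathrm f}$, leaves that line with slope one, and meets $h_{\mathrm f}\xi+e_{\mathrm f}$ tangentially before $\xi_0$. You then read off the endpoint slopes $1,h_{\mathrm f}$ and the subclass statement. That is false in general.

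In the single-unit problem the cap extends to infinity. Here both ends of a bounded interval are pinned on the obstacles: $(c_{\mathrm f},p_{\mathrm f})$ lies on the first line and $(\xi_0,1)$ on the second. Moreover $d_{\mathrm f}$ is fixed by the trial ratio. An Euler arc leaving the first line tangentially is determined by its departure point, and tangency to the second line imposes one further equation. Nothing then forces the departure point to lie after $c_{\mathrm f}$ or the arrival point to lie before $\xi_0$. For instance, when $\xi_0$ is near $c_{\mathrm f}+1-p_{\mathrm f}$, the feasible class shrinks to the first line itself.

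In such regimes the unique minimizer is entirely affine or has a zero-length contact, typically with $\mathcal P'(c_{\mathrm f}+)<1$ or $\mathcal P'(\xi_0-)>h_{\mathrm f}$. Lemma~\ref{lem:2fam-shape} explicitly allows this. Lower semicontinuity does not show that such endpoint corners raise the energy: strict convexity gives a unique minimizer, and that minimizer can have them. So your reduced function of three scalars is given by three-arc closed forms only on part of the parameter region. The statement's claims of positive contacts, and of the same infimum over the endpoint-slope subclass, therefore remain unproved.

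The missing ingredient is an outer-variation argument, as in Lemma~\ref{lem:2fam-boundary}, that excludes these regimes at the optimum:
\begin{itemize}
\item a vanishing initial contact loses to a decrease of $p_{\mathrm f}$, by the one-sided derivative \eqref{eq:2fam-left-variation};
\item a vanishing final contact loses to an increase of $\xi_0$, by \eqref{eq:2fam-right-variation};
\item on the face $p_{\mathrm f}=t_{\mathrm f}$ one has $\partial_{p_{\mathrm f}}K_{\mathrm f}>0$; there $c_{\mathrm f}=0$, and this face belongs to the domain rather than being a limit to exclude;
\item the entirely affine curves satisfy $R_{\mathrm f}\ge73/100$, by the polynomial bounds after \eqref{eq:2fam-affine-boundaries};
\item $p_{\mathrm f}=1$ gives $R_{\mathrm f}\ge3/4$ (Lemma~\ref{lem:2fam-compact}).
\end{itemize}
Only after this may you restrict to interior stationary three-arc curves.

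With that repair, your certification plan (box enclosures, a Hessian certificate, and a Krawczyk test in three variables) is a possible alternative. However, it must resolve a margin below $10^{-6}$ between $18227/25000$ and $r_{\mathrm{fam}}$. The paper never encloses $R_{\mathrm f}$ globally. For each trial ratio it reduces outer stationarity to the two equations \eqref{eq:2fam-algebraic}--\eqref{eq:2fam-connection}. It then uses monotonicity of $\mathcal A_{\mathrm f}$ in $p_{\mathrm f}$ to cover the branch one-dimensionally, finding exactly one physical stationary point. The lower bound, attainment, and uniqueness then follow from compactness and the sign of $K_{\mathrm f}$ at that single point.
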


The family value is approximately $0.729080$.
Appendix~\ref{app:2fam} proves the proposition and characterizes its value
by the unique self-consistent physical stationary solution.

\begin{figure}[!ht]
\centering
\includegraphics[width=.88\linewidth]{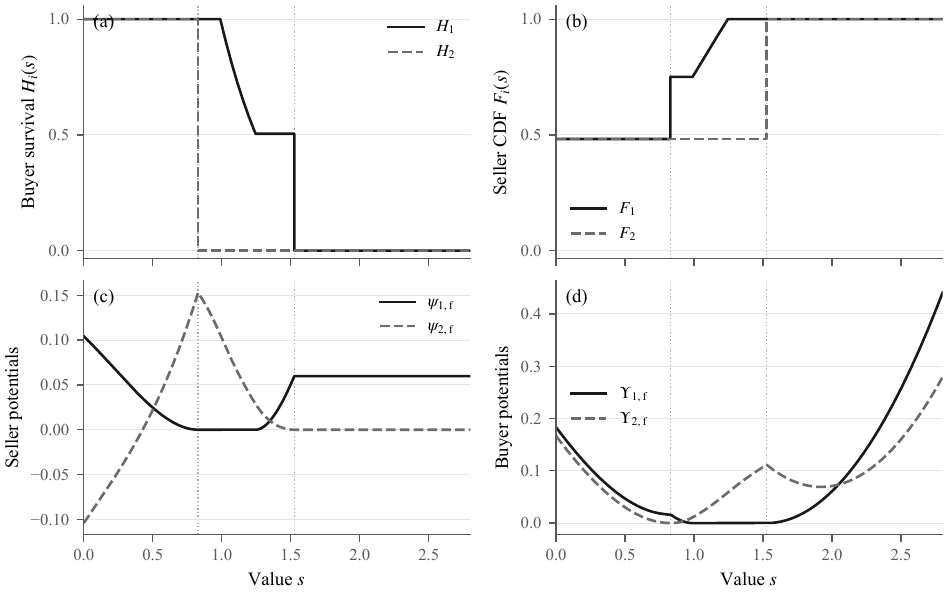}
\caption{The two-unit family minimizer. The top panels show the buyer survival functions and seller CDFs. The lower panels show the seller and buyer measure potentials for the reference price in Proposition~\ref{prop:2fam-separate-responses}. Solid and dashed curves refer to the first and second units. Their ordered sums are nonnegative. Vertical lines mark the second buyer's body atom and the common terminal body value.}
\label{fig:two-structure}
\end{figure}

\subsection{The remaining comparison}\label{sec:two-open}

\begin{conjecture}\label{conj:two-exact}
The optimal two-unit common-price welfare guarantee equals
\(r_2^*=r_{\mathrm{fam}}\).
\end{conjecture}

Proposition~\ref{prop:two-tail} and Theorem~\ref{thm:two-pricing}
give the following sufficient condition for the matching lower bound:
\begin{equation}\label{eq:two-open-budget}
\sup_{H_1\ge H_2\ \mathrm{bounded\ buyer\ bodies}}
\mathcal J_{d_{\mathrm{fam}}}^{(2)}(H_1,H_2)
\le\frac1{r_{\mathrm{fam}}},
\qquad d_{\mathrm{fam}}=\frac{1-r_{\mathrm{fam}}}{r_{\mathrm{fam}}}.
\end{equation}
Finite buyer support suffices in this supremum by part~(iv) of
Theorem~\ref{thm:two-pricing}. If this budget bound holds, averaging \(\Gamma_0\le2\) under the
price in Theorem~\ref{thm:two-pricing} gives
\(2+\mathbb E(Y_1+Y_2)\ge
r_{\mathrm{fam}}[2+\sum_i\mathbb E\max\{Z_i,Y_i\}]\)
for every admissible pair.
At the family minimizer, Proposition~\ref{prop:2fam-separate-responses}
checks this comparison from each side: in the limiting model of
Appendix~\ref{app:2fam-one-sided}, one price distribution guarantees the
ratio \(r_{\mathrm{fam}}\) against every ordered seller pair when the buyers
are held at the minimizer, and against every ordered buyer pair when the
sellers are held there. The budget bound \eqref{eq:two-open-budget} asks for
the first property at every bounded ordered buyer pair, with a price that
may depend on the buyers. The identity \(1+d_{\mathrm{fam}}=1/r_{\mathrm{fam}}\)
is exactly the budget condition under which
Appendix~\ref{app:two-recurrence} extends the finite-node constraints to
every ordered seller pair.

\paragraph{Numerical evidence.}
A floating-point search over 20,000 finite-support ordered buyer pairs,
broad random samples together with random perturbations of the discretized
family minimizer, found no pair above the budget. The largest value,
\(1.3715902\), came from a perturbed minimizer, against
\(1/r_{\mathrm{fam}}\approx1.3715909\). This is search evidence for the
conjecture and does not establish the supremum in
\eqref{eq:two-open-budget}; the sampling design is in
Appendix~\ref{app:two-recurrence}.

%% file: two_unit_pricing_proofs.tex
\subsection{Proof of the two-unit tail normalization}\label{app:two-tail}

\begin{proof}[Proof of Proposition~\ref{prop:two-tail}]
We clip the instance at a cutoff chosen by its maximum price gain,
balance the removed buyer tails, and realize arbitrary admissible
bodies by finite tails.

\textit{Choosing the cutoff.}
We write \(\Gamma_{2,\max}=\sup_z\Gamma_2(z)\) for the original maximum price gain.
For an instance with \(R_2=(M_2+\Gamma_{2,\max})/O_2<1\), we have \(\Gamma_{2,\max}>0\).
Indeed, if every rational price had zero gain, the union of
\(\{S_i<z<B_i\}\) over rational \(z\) would give \(O_2=M_2\).
Also \(\Gamma_{2,\max}<O_2-M_2\le\mathbb E(B_1+B_2)\).
The function \(c\mapsto\sum_i\mathbb E(B_i-c)_+\) is continuous,
starts above \(\Gamma_{2,\max}\), and tends to zero.
We therefore choose \(c_2>0\) so that
\begin{equation}\label{eq:two-tail-cut}
e_i=\mathbb E(B_i-c_2)_+,\qquad e_1+e_2=\Gamma_{2,\max}.
\end{equation}
Let \(Z_i=B_i\wedge c_2\) and \(Y_i=S_i\wedge c_2\).
We take the new buyer to equal \(Z\) with probability \(1-\eta\)
and \((\Gamma_{2,\max}/(2\eta),\Gamma_{2,\max}/(2\eta))\) with probability \(\eta\).
We choose \(\eta\) small enough that the tail exceeds
\(c_2+\varepsilon\), and take the new seller to be
\(Y+(\varepsilon,\varepsilon)\), independently of the buyer.

\textit{Controlling every price.}
For \(z\le c_2\), define
\[
\begin{aligned}
f_i(z)&=\Pr(S_i\le z-\varepsilon),\\
t_i(z)&=\mathbb E[(S_i+\varepsilon)
                    \mathbf1_{\{S_i\le z-\varepsilon\}}],\\
\mathscr K_i(z)&=\mathbb E[(Z_i-S_i-\varepsilon)
              \mathbf1_{\{S_i+\varepsilon\le z\le Z_i\}}],\\
\mathscr L_i(z)&=\mathbb E[(B_i-S_i)
              \mathbf1_{\{S_i\le z-\varepsilon,\ z\le B_i\}}].
\end{aligned}
\]
The displacement excludes sellers clipped to \(c_2\).
Independence and \(Z_i=B_i\wedge c_2\) give
\[
\mathscr L_i(z)=\mathscr K_i(z)+e_if_i(z)
             +\varepsilon f_i(z)\Pr(B_i\ge z).
\]
Writing \(\Gamma_{\eta,\varepsilon}\) for the new gain, we obtain
\begin{equation}\label{eq:two-tail-balance}
\begin{aligned}
\sum_i\mathscr L_i(z)-\Gamma_{\eta,\varepsilon}(z)
={}&\eta\sum_i\mathscr K_i(z)
  +\frac{(e_1-e_2)(f_1(z)-f_2(z))}{2}\\
 &+\varepsilon\sum_i f_i(z)\Pr(B_i\ge z)
  +\eta\sum_i t_i(z)\ge0.
\end{aligned}
\end{equation}
Every term is nonnegative, since both orders align the differences
in the product. The original price includes all trades counted by
\(\mathscr L_i(z)\), so
\(\Gamma_{\eta,\varepsilon}(z)\le\Gamma_2(z)\le \Gamma_{2,\max}\).
For \(z>c_2\), only the common tail can buy, and its gain is at most
\(\Gamma_{2,\max}\). Every price strictly between \(c_2+\varepsilon\) and the
tail value attains \(\Gamma_{2,\max}-\eta M_\varepsilon\), where
\(M_\varepsilon=\sum_i\mathbb E(Y_i+\varepsilon)\).
Consequently
\begin{equation}\label{eq:two-tail-gain-bounds}
\Gamma_{2,\max}-\eta M_\varepsilon
\le\sup_z\Gamma_{\eta,\varepsilon}(z)\le \Gamma_{2,\max}.
\end{equation}

\textit{Comparing the removed welfare.}
Define the removed welfare quantities
\[
\Delta_S=\sum_i\mathbb E(S_i-c_2)_+,\qquad
\Delta_O=\sum_i\mathbb E[S_i-\max\{B_i,c_2\}]_+.
\]
They satisfy
\(0\le\Delta_O\le\Delta_S\) and
\(M_\varepsilon=M_2-\Delta_S+2\varepsilon\).
The pointwise identity
\[
\max\{b,s\}-\max\{b\wedge c,s\wedge c\}-(b-c)_+
 =[s-\max\{b,c\}]_+
\]
gives \(\Gamma_{2,\max}+\sum_i\mathbb E\max\{Z_i,Y_i\}=O_2-\Delta_O\).
The new efficient welfare is
\(\Gamma_{2,\max}+(1-\eta)\sum_i\mathbb E\max\{Z_i,Y_i+\varepsilon\}\).
By \eqref{eq:two-tail-gain-bounds}, its ratio tends to
\((M_2-\Delta_S+\Gamma_{2,\max})/(O_2-\Delta_O)\).
The comparison with \(R_2\) follows from
\[
(M_2+\Gamma_{2,\max})(O_2-\Delta_O)
 -(M_2-\Delta_S+\Gamma_{2,\max})O_2
=O_2(\Delta_S-\Delta_O)+(O_2-M_2-\Gamma_{2,\max})\Delta_O\ge0.
\]
At each fixed price, letting \(\eta,\varepsilon\downarrow0\)
in the gain formula gives \eqref{eq:two-body} with \(\Gamma_{2,\max}/2\)
in place of \(1\). Scaling all values by \(2/\Gamma_{2,\max}\) therefore
gives admissible bodies with no larger ratio.

\textit{Realizing arbitrary admissible bodies.}
We fix admissible \((Z,Y)\) bounded by \(\bar b\).
We take the buyer to equal \(Z\) with probability \(1-\eta\)
and \((1/\eta,1/\eta)\) with probability \(\eta\), and use
the seller \(Y+(\varepsilon,\varepsilon)\).
We choose \(1/\eta>\bar b+\varepsilon\).
For \(z\le1/\eta\), the gain is
\[
\begin{aligned}
 &(1-\eta)\sum_i\mathbb E[(Z_i-Y_i-\varepsilon)
            \mathbf1_{\{Y_i+\varepsilon\le z\le Z_i\}}]\\
 &\quad+\sum_i\Pr(Y_i\le z-\varepsilon)
 -\eta\sum_i\mathbb E[(Y_i+\varepsilon)
                      \mathbf1_{\{Y_i\le z-\varepsilon\}}]
 \le\Gamma_0(z)\le2.
\end{aligned}
\]
Prices above \(1/\eta\) gain zero; prices between
\(\bar b+\varepsilon\) and \(1/\eta\) have gain tending to \(2\).
The initial and efficient welfare tend to
\(\mathbb E(Y_1+Y_2)\) and
\(2+\sum_i\mathbb E\max\{Z_i,Y_i\}\).
We add a common positive displacement to all values to make them
strictly positive. It translates prices and adds the same amount
to both welfare quantities. Letting this displacement vanish gives
\[
r_2^*\le\inf_{(Z,Y)\ \mathrm{admissible}}\mathcal R_2(Z,Y)\le r_2^*.
\]
\end{proof}

\subsection{Proof of the two-unit pricing theorem}\label{app:two-pricing}

\begin{proof}[Proof of Theorem~\ref{thm:two-pricing}]
We express gains through cumulative price mass and use the positive
kernel to find the least feasible price. Compactness gives an optimal
compensation, and leastness makes the value independent of \(\bar b\).
Realizing its tail, averaging over sellers, and transferring rounded
buyers then prove the two remaining parts.

\textit{Cumulative mass and the prefix constraint.}
A unit of ideal tail mass contributes
\[
\bar L_i(s)-\int_{(s,\bar b]}(t-s)\,(-\mathrm dH_i(t))=1
\]
to \eqref{eq:two-potential}.
A unit of mass at price \(z>s\) contributes
\[
\bar L_i(s)-\int_{s<t<z}(t-s)\,(-\mathrm dH_i(t))
=1+\mathbb E[(Z_i-s)\mathbf1_{\{Z_i\ge z\}}]=g_i(s,z).
\]
Integration gives \eqref{eq:two-potential}.
For nonnegative nonincreasing \(\varpi\), the potential constraints
are equivalent to \(\varpi\ge\mathcal T_{d,\vartheta}\varpi\).
If the potential sum is nonnegative on every ordered pair, the
canonical prefix \(\Theta_\varpi\) satisfies both constraints.
It is bounded and right-continuous because the potentials are
bounded and right-continuous.

\textit{The least price for fixed compensation.}
The positive kernel in \eqref{eq:two-operator} has mass
\[
\frac{\int_{(t,\bar b]}(v-t)\,(-\mathrm dH_i(v))}{\bar L_i(t)}
=\frac{\bar L_i(t)-1}{\bar L_i(t)}
\le\frac{\bar b}{1+\bar b}<1.
\]
Maxima and suprema do not increase uniform differences, so the operator
is a contraction on bounded Borel functions.
We iterate from \(\varpi^{(0)}=0\). Successive differences are
bounded by a geometric series, giving a uniform limit that is a
fixed point. The contraction inequality makes the fixed point unique.
The output is nonnegative and nonincreasing. The integral terms are
continuous in \(t\), and both obstacles are right-continuous.
Their future supremum is right-continuous as well: a strict gap
from its right limit would require an isolated obstacle peak.
These properties pass to the uniform limit.

The positive kernel makes the operator order-preserving.
For any feasible nonnegative cumulative mass \(\widetilde\varpi\),
induction gives
\[
\varpi^{(0)}\le\widetilde\varpi,\qquad
\varpi^{(n+1)}\le
\mathcal T_{d,\vartheta}\widetilde\varpi\le\widetilde\varpi.
\]
The limit is pointwise least.

\textit{Convexity and attainment.}
The potential inequalities are affine in
\((\varpi,d,\vartheta)\).
A convex combination of two feasible triples is feasible;
leastness gives the convexity in part~(ii).
The constant \(\varpi=1+\bar b\), with \(\vartheta=0\),
is feasible. We take a minimizing sequence with
\(\varpi_n(0)\le \bar\varpi<\infty\).
The gain representation bounds
\[
0\le\varpi_n(s)\le \bar\varpi,\qquad
0\le G_i^{\varpi_n}(s)\le(1+\bar b)\bar\varpi.
\]
Replacing each compensation by its canonical prefix retains feasibility
and bounds its absolute value by
\(d\cdot\bar b+(1+\bar b)(1+\bar\varpi)\).
We select a common subsequence converging at every rational point
and both endpoints. The limits at these points inherit the monotonicity
of \(\varpi_n\) and of the compensations \(\vartheta_n\), and we extend
them to monotone functions on \([0,\bar b]\) by one-sided limits.
At a continuity point \(s\) of both extensions, monotonicity squeezes
\(\varpi_n(s)\) and \(\vartheta_n(s)\) between their values at rational
points on either side of \(s\), so both sequences converge at \(s\).
A further diagonal selection at their at most countably many
discontinuities gives pointwise convergence everywhere to
\(\varpi_{\rm raw},\vartheta_{\rm raw}\).
Bounded convergence passes the integral inequalities to the limit.

We take right limits at \(s<\bar b\), keeping the values at \(\bar b\),
to obtain \(\varpi^+,\vartheta^+\).
Passing the raw inequalities along \(t\downarrow s\) leaves
\(\varpi_{\rm raw}\) in the positive kernel.
Replacing it by \(\varpi^+\le\varpi_{\rm raw}\) only decreases
the right side, so the regularized pair is feasible.
Leastness and the definition of the infimum give
\[
\mathcal J_d^{(2)}
\le\varpi_{d,\vartheta^+}(0)
\le\varpi^+(0)\le\varpi_{\rm raw}(0)=\mathcal J_d^{(2)}.
\]

\textit{The choice of \(\bar b\).}
We fix \(\bar b'>\bar b\) and show that the minimum in
\eqref{eq:two-functional} is unchanged when the operator is defined on
\([0,\bar b']\).
Because the buyers put no mass above \(\bar b\), the kernel integrals at
\(t\le\bar b\) are the same on both intervals.
A feasible pair on \([0,\bar b']\) therefore restricts to a feasible pair
on \([0,\bar b]\) with the same value at \(0\), since a future supremum
over a shorter range is no larger.
In the other direction, we extend a feasible pair on \([0,\bar b]\) by
\(\varpi(t)=\varpi(\bar b)\) and \(\vartheta(t)=\vartheta(\bar b)\) for
\(\bar b<t\le\bar b'\).
On \([\bar b,\bar b']\) we have \(\bar L_i=1\), and the kernel integral
vanishes. Since \(d>0\) and the pair is feasible at \(\bar b\), we obtain,
for \(\bar b<t\le\bar b'\),
\[
\mathcal A_{i,d,\vartheta}\varpi(t)
=1-d\cdot t+\epsilon_i\vartheta(\bar b)
\le1-d\cdot\bar b+\epsilon_i\vartheta(\bar b)
=\mathcal A_{i,d,\vartheta}\varpi(\bar b)\le\varpi(\bar b).
\]
The extended mass is nonnegative, nonincreasing, and right-continuous,
and the extended compensation is bounded, nondecreasing, and
right-continuous. The obstacles at \(t\le\bar b\) are unchanged, so the
extended pair is feasible on \([0,\bar b']\).
The two values are equal by leastness.

\textit{A bounded price interval and the reverse implication.}
We choose an optimal compensation and write
\(\varpi=\varpi_{d,\vartheta}\).
We put \(-\beta\,\mathrm d\varpi\) on \((0,\bar b]\)
and density \(\beta d\) on
\((\bar b,\bar b+\varpi(\bar b)/d)\).
Its mass is \(\beta\varpi(0)\le1\), and the remainder is put
at zero. For \(s\le\bar b\), the tail contributes
\(\beta\varpi(\bar b)\), exactly as the ideal tail does.
For \(s=\bar b+w\), the realized potential, with gain divided
by \(\beta\), differs from its value at \(\bar b\) by
\[
d\cdot w-\min\{d\cdot w,\varpi(\bar b)\}
=d\bigl(w-\varpi(\bar b)/d\bigr)_+\ge0.
\]
A single atom of the same mass at \(\bar b+\varpi(\bar b)/d\) gives the
difference \(d\cdot w-\varpi(\bar b)\mathbf1_{\{d\cdot w\ge\varpi(\bar b)\}}\ge0\)
instead.
Clipping seller values at \(\bar b\) preserves their order and weakly
decreases their potential sum. On the clipped domain,
\[
\psi_1^\varpi(s_1)+\psi_2^\varpi(s_2)
\ge-\vartheta(s_1)+\vartheta(s_2)\ge0
\qquad(s_1\le s_2).
\]
Multiplication by \(\beta\) proves \eqref{eq:two-price-target},
including \(\varpi(\bar b)=0\).

Conversely, we take a probability satisfying
\eqref{eq:two-price-target}, discard its zero-price mass,
and divide by \(\beta\).
For sellers in \([0,\bar b]\), prices above \(\bar b\)
can be collected into the ideal tail without changing gain.
The resulting cumulative mass \(\widetilde\varpi\) has a
nonnegative potential sum and a feasible canonical compensation.
Leastness gives
\[
\beta\mathcal J_d^{(2)}
\le\beta\varpi_{d,\Theta_{\widetilde\varpi}}(0)
\le\beta\widetilde\varpi(0)\le1.
\]

\textit{The welfare guarantee.}
Let \(d=(1-\beta)/\beta\), so that \(\beta(1+d)=1\), and let a price
\(\pi\) satisfy \eqref{eq:two-price-target}.
For bounded seller bodies \(Y_1\le Y_2\), independent of the buyers and
of the price, we put \((s_1,s_2)=(Y_1,Y_2)\) in
\eqref{eq:two-price-target} and take expectations.
By independence, the left side becomes the expected gain
\(\int\Gamma_0\,\mathrm d\pi\) from \eqref{eq:two-body}; since
\(s+\bar L_i(s)=1+\mathbb E\max\{Z_i,s\}\), the right side becomes
\(\beta\sum_i\mathbb E(1+\max\{Z_i,Y_i\})-\mathbb E(Y_1+Y_2)\).
Adding the initial welfare \(\mathbb E(Y_1+Y_2)\) to both sides bounds the
expected welfare at the price below by \(\beta\) times the normalized
efficient welfare:
\[
\mathbb E(Y_1+Y_2)+\int\Gamma_0\,\mathrm d\pi
\ge\beta\Bigl[2+\sum_{i=1}^2\mathbb E\max\{Z_i,Y_i\}\Bigr].
\]

\textit{Rounding the buyer bodies.}
We transfer a price for the rounded buyers and add tail mass to cover
the rounding loss. Since \(Z_i^\delta\le Z_i<Z_i^\delta+\delta\),
\[
0\le\bar L_i(s)-\bar L_i^\delta(s)\le\delta.
\]
For \(s<z\), the function
\((Z_i-s)\mathbf1_{\{z\le Z_i\}}\) is nondecreasing in \(Z_i\),
including its jump at \(z\).
Thus \(g_i(s,z)\ge g_i^\delta(s,z)\).
We take an optimal rounded price in its ideal-tail representation
and add tail mass \(\delta\).
Each potential increases by at least
\(-(\bar L_i-\bar L_i^\delta)+\delta\ge0\).
The same compensation is feasible at the additional mass \(\delta\).
Because \(\mathcal J_d^{(2)}\) does not depend on \(\bar b\), both values
are computed on \([0,\bar b]\), which also contains the rounded bodies.
The map \(x\mapsto\delta\lfloor x/\delta\rfloor\) is nondecreasing with
\(0\le\delta\lfloor x/\delta\rfloor\le x\) for \(x\ge0\), so the rounded
bodies are again ordered and take finitely many values.
Taking suprema and then letting \(\delta\downarrow0\) gives
\[
\sup_{\text{bounded ordered bodies}}\mathcal J_d^{(2)}
=\sup_{\text{finite-support ordered bodies}}\mathcal J_d^{(2)}.
\]
\end{proof}

\subsection{Finite-node evaluation}\label{app:two-recurrence}

Suppose the buyers are supported on nodes
\(0=v_0<\cdots<v_n=\bar b\).
Write \(p_{i\ell}=\Pr(Z_i=v_\ell)\),
\(\bar L_{ij}=\bar L_i(v_j)\), and
\(\vartheta_j=\vartheta(v_j)\).
For prescribed nondecreasing node compensations, the least solution
of the node inequalities is obtained backwards:
\begin{equation}\label{eq:two-backward}
\begin{aligned}
\varpi_n&=\max\{0,1-d\cdot\bar b-\vartheta_n,1-d\cdot\bar b+\vartheta_n\},\\
\varpi_j&=\max\left\{\varpi_{j+1},\
 \max_{i=1,2}\left[
  1-\frac{d\cdot v_j}{\bar L_{ij}}+\frac{\epsilon_i\vartheta_j}{\bar L_{ij}}
  +\sum_{\ell>j}
     \frac{(v_\ell-v_j)p_{i\ell}}{\bar L_{ij}}\varpi_\ell
 \right]\right\}.
\end{aligned}
\end{equation}
The value is piecewise affine and convex in
\((d,\vartheta_0,\ldots,\vartheta_n)\).
The price at \(v_j\) has normalized mass
\(\varpi_{j-1}-\varpi_j\).

For a pricing certificate, the node solution must also cover the
intervals between nodes. At the relevant budget
\(\varpi_0\le1+d\), this follows from the gain kernel.
Between consecutive nodes,
\[
(\psi_i^\varpi)'(s)
=d+H_i(s)-\int_{(s,\bar b]}\Pr(Z_i\ge z)\,
                              \widehat\pi_0(\mathrm dz)
\ge d-H_i(s)(\varpi_0-1)\ge0.
\]
The first inequality holds because \(\Pr(Z_i\ge z)\le H_i(s)\) for
\(z>s\) and \(\widehat\pi_0((s,\bar b])\le\varpi(s)\le\varpi_0\), and the
second because \(0\le H_i\le1\) and \(\varpi_0\le1+d\).
At a node, the potential drops by the price atom's mass times
\(\bar L_i(s)\). Its minimum on each interval \([v_j,v_{j+1})\) is
therefore its value at \(v_j\), and ordered node constraints
cover all ordered real seller values. At arbitrary budgets,
\eqref{eq:two-backward} is only a node optimization;
the full interval problem is \eqref{eq:two-operator}.

\paragraph{Sampling design for Section~\ref{sec:two-open}.}
At the numerical value \(d_{\mathrm{fam}}\approx0.371591\), we evaluated
20,000 finite-support ordered buyer pairs: 16,000 broad random samples
with 2--32 common support nodes and value scales between \(0.01\) and
\(100\), and 4,000 random perturbations of discretized reference buyers
with up to 65 atoms in each marginal. The broad samples had maximum
\(1.3645392\). The largest values agreed under cumulative-mass and
price-mass formulations of the inner linear program.

%% file: two_unit_family.tex
\section{Optimization within a two-unit family}\label{app:2fam}

This appendix optimizes a family with a deterministic second buyer body,
two endpoint masses for the second seller, and equal initial seller masses.
The family has a unique minimizing curve. Its value also gives a limiting
upper bound for ordered two-unit trade. A separate finite instance gives
a strict upper bound without using the optimization argument.

\subsection{Trading interpretation of the curve family}

We use the family and functionals defined before Proposition~\ref{prop:2fam-optimum}. The proof first identifies the fixed-endpoint energy minimizer, then excludes outer boundaries and isolates its stationary branch.

The trading interpretation uses the increasing map and body endpoint
\begin{equation}
 s_{\mathrm f}(\xi)=a_{\mathrm f}
       +\int_{c_{\mathrm f}}^\xi\mathcal P(r)^{-2}\,\mathrm dr,
 \qquad b_{\mathrm f}=s_{\mathrm f}(\xi_0).
 \label{eq:2fam-physical-time}
\end{equation}
The first buyer body has survival $1$ below $a_{\mathrm f}$ and
\begin{equation}
 H_{1,\mathrm f}(s_{\mathrm f}(\xi))=\mathcal P'_+(\xi),\qquad
 F_{1,\mathrm f}(s_{\mathrm f}(\xi))
       =N_{\mathrm f}\{\mathcal P(\xi)-\xi\mathcal P'_+(\xi)\}
 \quad(c_{\mathrm f}\le\xi<\xi_0).
 \label{eq:2fam-laws}
\end{equation}
The first seller has mass $m_{\mathrm f}$ at zero, has constant CDF
$m_{\mathrm f}$ between zero and $a_{\mathrm f}$, follows
\eqref{eq:2fam-laws} on the body interval, and is completed to mass one
at $b_{\mathrm f}$. If $a_{\mathrm f}=0$, coincident initial masses are
combined. The second buyer body is $Z_2=a_{\mathrm f}$, and the second
seller has law
$m_{\mathrm f}\delta_0+(1-m_{\mathrm f})\delta_{b_{\mathrm f}}$.
Endpoint atoms complete the first buyer survival to a probability law.
Coupling the bodies on each side by common quantiles gives the body
vector laws. A sequence of instances converges to $\mathcal P$ when its
buyer and seller vector laws converge weakly to them.

\begin{lemma}[Realization by limiting instances]\label{lem:2fam-realization}
Every member of $\mathfrak F$ is the limit of strictly positive,
finite-mean ordered two-unit instances with independent buyer and seller
vectors, whose optimal common-price welfare ratios converge to
$R_{\mathrm f}$. Their seller welfare and efficient gains converge to
$M_{\mathrm f}$ and $G_{\mathrm f}$, and their best price gains converge
to $2$.
\end{lemma}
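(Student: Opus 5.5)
The lemma asks us to realize a curve $\mathcal P\in\mathfrak F$ by genuine instances. The plan runs in three stages: build the limiting body pair $(Z,Y)$ from the laws \eqref{eq:2fam-laws}, check that it is an admissible body pair in the sense of \eqref{eq:two-body}, and then invoke the realization half of Proposition~\ref{prop:two-tail}. That half supplies strictly positive, finite-mean, independent ordered instances with a common escaping buyer atom. Their best price gains tend to $2$, and their welfare ratios tend to $\mathcal R_2(Z,Y)$. It then remains to show $\mathbb E(Y_1+Y_2)=M_{\mathrm f}$ and $\sum_i\mathbb E\max\{Z_i,Y_i\}=G_{\mathrm f}$, since then $\mathcal R_2=R_{\mathrm f}$ by \eqref{eq:2fam-functionals}.

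\emph{Step 1: the laws are valid and ordered.} First I check that $F_{1,\mathrm f}$ is a CDF. Its value $N_{\mathrm f}(\mathcal P-\xi\mathcal P'_+)$ is nondecreasing in $\xi$, because $\mathrm d(\mathcal P-\xi\mathcal P')=-\xi\,\mathrm d\mathcal P'\ge0$ by concavity. At $\xi=c_{\mathrm f}$ it equals at least $N_{\mathrm f}(p_{\mathrm f}-c_{\mathrm f})=N_{\mathrm f}t_{\mathrm f}=m_{\mathrm f}$, using $\mathcal P'\le1$. At $\xi_0$ it is at most $N_{\mathrm f}(1-\xi_0h_{\mathrm f})=N_{\mathrm f}e_{\mathrm f}=1$, using $\mathcal P'\ge h_{\mathrm f}$. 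Both slope bounds come from the obstacles in \eqref{eq:2fam-class}.

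Next comes the ordering. The buyer bodies satisfy $Z_1\ge_{\mathrm{st}}Z_2=a_{\mathrm f}$, because $H_1=1$ below $a_{\mathrm f}$. For the seller bodies, $Y_1\le_{\mathrm{st}}Y_2$: the second seller's CDF is constant at $m_{\mathrm f}$ on $[0,b_{\mathrm f})$, and the first seller's CDF is at least $m_{\mathrm f}$ there. Quantile coupling then gives the ordered vectors.

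\emph{Step 2: admissibility, $\Gamma_0\le2$.} This is where I expect the main work. In the $\xi$ coordinate, the change of variables $\mathrm ds=\mathcal P^{-2}\mathrm d\xi$ makes the normalized tail integral of buyer $1$ equal to $1/\mathcal P$ on the body. This mirrors Lemma~\ref{lem:energy}, with $\mathcal P$ playing the role of $p$. The seller CDF \eqref{eq:2fam-laws} is then the two-unit analogue of the equalizing seller in Lemma~\ref{lem:hard}.

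With that in place, I would compute $\Gamma_0(z)$ piece by piece, mimicking \eqref{eq:42}. For unit $1$ at a price $z=s_{\mathrm f}(\xi)$ on the body, the gain comes out to the affine combination $N_{\mathrm f}(\text{const})$. Unit $2$ contributes $m_{\mathrm f}(1+a_{\mathrm f})$ for $z\le a_{\mathrm f}$ and $m_{\mathrm f}$ above $a_{\mathrm f}$.

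I would verify that the sum equals $2$ on the body interval exactly, which is the equalizing property, and that it is at most $2$ elsewhere. Below $a_{\mathrm f}$ this uses $a_{\mathrm f}=c_{\mathrm f}/(t_{\mathrm f}p_{\mathrm f})$. At endpoint atoms and corners I would use the inclusive and strict conventions. The obstacles $U_{\mathrm f}$ should be exactly what make the gains at the two contact ends at most $2$. I will try differentiating $\Gamma_0$ along $\xi$ first; I expect the derivative to vanish identically on the body.

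\emph{Step 3: the moments.} The seller welfare $\mathbb E(Y_1+Y_2)=\int(1-F_1)+\int(1-F_2)$ is computed in $\xi$ coordinates. The integral of $\mathcal P^{-2}$ produces $T_{\mathrm f}$. Integrating by parts in $\xi\mathcal P'\mathcal P^{-2}$ gives the boundary terms $-\xi_0+a_{\mathrm f}$, scaled by $N_{\mathrm f}$, which yields $M_{\mathrm f}$.

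For the efficient gains, I write $\sum_i\mathbb E\max\{Z_i,Y_i\}=M+\sum_i\mathbb E(Z_i-Y_i)_+$. Then $\mathbb E(Z_1-Y_1)_+=\int H_1F_1\,\mathrm ds$, and its $\xi\mathcal P'^2\mathcal P^{-2}$ term gives $Q_{\mathrm f}$. The term $\mathcal P'/\mathcal P$ integrates to $\log(1/p_{\mathrm f})$. Unit $2$ gives $m_{\mathrm f}a_{\mathrm f}$, doubled through the tail normalization. Collecting terms should give $M_{\mathrm f}+G_{\mathrm f}-2$, i.e.\ $2+\sum_i\mathbb E\max\{Z_i,Y_i\}=M_{\mathrm f}+G_{\mathrm f}$, which is the identification needed for $\mathcal R_2=R_{\mathrm f}$.

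Finally, the closed class allows corners and atoms. Both functionals are continuous under uniform convergence of concave curves with bounded slopes, so the limiting statement extends to the whole class.
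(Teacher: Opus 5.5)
Your proposal is correct and follows essentially the paper's route: the body laws are validated by concavity and the two obstacles, the equalizing identity $F_{1,\mathrm f}\bar L_{1,\mathrm f}+H_{1,\mathrm f}A_{1,\mathrm f}=N_{\mathrm f}$ shows that the gain never exceeds $2$ (in fact $\Gamma_0\equiv2$ on $(0,\infty)$), an escaping common buyer atom is combined with a vanishing seller shift, and direct integration gives $M_{\mathrm f}$ and $G_{\mathrm f}$. The paper checks the gain of the shifted instances price region by price region, while you check admissibility and then invoke the realization half of Proposition~\ref{prop:two-tail}, whose general bound of the shifted gain by $\Gamma_0$ makes that case analysis unnecessary. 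There are two minor points: the factor $2$ in $2m_{\mathrm f}a_{\mathrm f}$ comes from the first unit's seller atom at zero meeting $Z_1\ge a_{\mathrm f}$ on $[0,a_{\mathrm f})$, not from the tail normalization, and your closing continuity argument is superfluous, because the computations with $\mathcal P'_+$ already handle corners and endpoint atoms.
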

\begin{proof}
We reconstruct the ordered body laws and separate coincident buyer and
seller atoms by a vanishing seller shift.

\textit{Body laws.} Concavity makes $H_{1,\mathrm f}$ nonincreasing and
$F_{1,\mathrm f}$ nondecreasing, and both are right-continuous because
$\mathcal P'_+$ is. The obstacles imply
$m_{\mathrm f}\le F_{1,\mathrm f}\le1$ and
$0\le H_{1,\mathrm f}\le1$. The common-quantile couplings give
$Z_1\ge Z_2$ and $Y_1\le Y_2$, and we draw the sides independently.
The limiting first-unit tail integral and integrated seller CDF are
\[
 \bar L_{1,\mathrm f}=\mathcal P^{-1},\qquad
 A_{1,\mathrm f}=N_{\mathrm f}\xi/\mathcal P,
 \qquad
 F_{1,\mathrm f}\bar L_{1,\mathrm f}
       +H_{1,\mathrm f}A_{1,\mathrm f}=N_{\mathrm f}.
\]

\textit{Positive approximations.} We give both buyer bodies weight
$1-\eta_{\mathrm f}$ and add a common buyer atom of mass
$\eta_{\mathrm f}$ at $1/\eta_{\mathrm f}$, so each escaping tail has
first moment one. We shift both sellers up by $\epsilon_{\mathrm f}>0$.
When $a_{\mathrm f}=0$, we also translate all values by a common
vanishing positive amount. These operations preserve the orders.

At a price $s$, the shifted first-unit gain is bounded by the unshifted
right-limit gain at $s-\epsilon_{\mathrm f}$: both the buyer stop-loss
and its acceptance probability decrease as the price moves up.
It is therefore at most $N_{\mathrm f}$ when
$s-\epsilon_{\mathrm f}\in[a_{\mathrm f},b_{\mathrm f})$.
For $a_{\mathrm f}<s<a_{\mathrm f}+\epsilon_{\mathrm f}$ the bound is
$m_{\mathrm f}(a_{\mathrm f}+1/p_{\mathrm f})\le N_{\mathrm f}$.
The second unit contributes at most $m_{\mathrm f}$ in these regions,
and $N_{\mathrm f}+m_{\mathrm f}=2$.
For prices $s\le a_{\mathrm f}$, write
$\bar z_{\mathrm f}=a_{\mathrm f}+1/p_{\mathrm f}-1$ for the first body mean.
The total gain is at most
$m_{\mathrm f}(2+\bar z_{\mathrm f}+a_{\mathrm f})=2$.
For $s\ge b_{\mathrm f}+\epsilon_{\mathrm f}$, only the escaping
buyer tails can trade; their combined first moment is two. Prices
above the shifted sellers and below the
remote buyer atom have gains tending to two. Integrating the body
states gives \eqref{eq:2fam-functionals}, so
\[
 \lim_{\eta_{\mathrm f},\epsilon_{\mathrm f}\downarrow0}
 \frac{\text{seller welfare}+\text{best price gains}}
      {\text{seller welfare}+\text{efficient gains}}
       =\frac{M_{\mathrm f}+2}{M_{\mathrm f}+G_{\mathrm f}}.
\]
\end{proof}

For a trial ratio $0<\beta_{\mathrm f}<1$, define
$d_{\mathrm f}=(1-\beta_{\mathrm f})/\beta_{\mathrm f}$ and
\begin{align}
 E_{\mathrm f}(\mathcal P)
   &=\int_{c_{\mathrm f}}^{\xi_0}
       \{\xi(\log\mathcal P)'^2+d_{\mathrm f}\mathcal P^{-2}\}
       \,\mathrm d\xi,\notag\\
 K_{\mathrm f}(\mathcal P)
   &=\frac{\beta_{\mathrm f}G_{\mathrm f}
         -(1-\beta_{\mathrm f})M_{\mathrm f}-2}
          {\beta_{\mathrm f}N_{\mathrm f}}\notag\\
   &=1-\frac{t_{\mathrm f}}{p_{\mathrm f}}-\log p_{\mathrm f}
       -d_{\mathrm f}a_{\mathrm f}
       -d_{\mathrm f}(1+t_{\mathrm f})
       +d_{\mathrm f}\xi_0-E_{\mathrm f}(\mathcal P).
 \label{eq:2fam-comparison}
\end{align}
Thus $R_{\mathrm f}<\beta_{\mathrm f}$ exactly when $K_{\mathrm f}>0$.
The normalization by $N_{\mathrm f}$ depends on $t_{\mathrm f}$; all outer
derivatives below are derivatives of $K_{\mathrm f}$ along outer
variations, in which one of $t_{\mathrm f},p_{\mathrm f},h_{\mathrm f}$
moves and the other two stay fixed. A derivative is one-sided where
only one direction of the variation is feasible.

\subsection{Compactness and the fixed-endpoint minimizer}

\begin{lemma}[Compactness below three quarters]\label{lem:2fam-compact}
Let $0<\beta_{\mathrm f}\le3/4$. Every curve with
$R_{\mathrm f}<\beta_{\mathrm f}$ satisfies
$1/32<m_{\mathrm f}<3/4$, $\xi_0<600$, and $p_{\mathrm f}<1$.
If $K_{\mathrm f}$ takes a positive value, its positive supremum is
attained in this parameter region.
\end{lemma}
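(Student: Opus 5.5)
The plan is to convert $R_{\mathrm f}<\beta_{\mathrm f}$ into lower and upper bounds on the scalar parameters. Those bounds come from crude estimates of the functionals in \eqref{eq:2fam-functionals}, used together with the obstacles in \eqref{eq:2fam-class}.

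\textbf{Step 1: a bound on $M_{\mathrm f}$.} Since $R_{\mathrm f}=(M_{\mathrm f}+2)/(M_{\mathrm f}+G_{\mathrm f})<\beta_{\mathrm f}\le3/4$, we have $G_{\mathrm f}>(4/3)(M_{\mathrm f}+2)-M_{\mathrm f}=8/3+M_{\mathrm f}/3$. I will bound $G_{\mathrm f}$ from above by parameters alone. The obstacles give $\mathcal P\ge p_{\mathrm f}$ and $h_{\mathrm f}\le\mathcal P'\le1$, so $Q_{\mathrm f}\ge0$ and
\[
G_{\mathrm f}\le2+2m_{\mathrm f}a_{\mathrm f}-N_{\mathrm f}\log p_{\mathrm f}.
\]
In the other direction, $T_{\mathrm f}\ge(\xi_0-c_{\mathrm f})$ because $\mathcal P\le1$. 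This yields $M_{\mathrm f}\ge N_{\mathrm f}(a_{\mathrm f}-c_{\mathrm f})\ge0$, since $a_{\mathrm f}=c_{\mathrm f}/(t_{\mathrm f}p_{\mathrm f})\ge c_{\mathrm f}$.

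\textbf{Step 2: the individual parameter bounds.} First, $p_{\mathrm f}=1$ forces $c_{\mathrm f}=(1-t_{\mathrm f})/2$ and a degenerate curve. Then $G_{\mathrm f}=2+2m_{\mathrm f}a_{\mathrm f}$, and a direct comparison with $M_{\mathrm f}$ gives $R_{\mathrm f}\ge3/4$, so $p_{\mathrm f}<1$. For the bounds on $m_{\mathrm f}$, as $t_{\mathrm f}\to0$ the gain term $2m_{\mathrm f}a_{\mathrm f}=2c_{\mathrm f}N_{\mathrm f}/p_{\mathrm f}$ stays bounded while $-\log p_{\mathrm f}$ is controlled by $p_{\mathrm f}\ge t_{\mathrm f}$. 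The only way to get $G_{\mathrm f}$ large is then $p_{\mathrm f}$ tiny. That inflates $a_{\mathrm f}$, and with it $M_{\mathrm f}$, linearly, but $G_{\mathrm f}$ only logarithmically. Quantifying this trade-off gives $m_{\mathrm f}>1/32$. Large $t_{\mathrm f}$ makes $c_{\mathrm f}$ and $-\log p_{\mathrm f}$ small, so $G_{\mathrm f}$ is close to $2$, and this gives $m_{\mathrm f}<3/4$, i.e.\ $t_{\mathrm f}<3/5$. Finally, for $\xi_0$: $T_{\mathrm f}-\xi_0\ge-c_{\mathrm f}$ alone is too weak. Instead I will use that $\mathcal P$ is concave and below $h_{\mathrm f}\xi+e_{\mathrm f}$ with $h_{\mathrm f}=v_{\mathrm f}/\xi_0$, so $\mathcal P\le e_{\mathrm f}+v_{\mathrm f}\xi/\xi_0$. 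This gives
\[
T_{\mathrm f}\ge\int\bigl(e_{\mathrm f}+v_{\mathrm f}\xi/\xi_0\bigr)^{-2}\,\mathrm d\xi=\xi_0/e_{\mathrm f}-O(1),
\]
so $T_{\mathrm f}-\xi_0\gtrsim\xi_0(1/e_{\mathrm f}-1)$, which grows linearly in $\xi_0$ because $e_{\mathrm f}<1$ once $t_{\mathrm f}$ is bounded away from $1$. Meanwhile $Q_{\mathrm f}\ge0$ keeps $G_{\mathrm f}$ bounded independently of $\xi_0$. This produces the explicit cutoff $\xi_0<600$, using $m_{\mathrm f}>1/32$ to bound $1/e_{\mathrm f}-1$ below.

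\textbf{Step 3: attainment.} On the compact parameter box obtained above (with $p_{\mathrm f}\ge t_{\mathrm f}$ bounded below), the class $\mathfrak F$ consists of concave curves with uniformly Lipschitz bounds. Arzel\`a--Ascoli gives uniform convergence, and concavity gives a.e.\ convergence of derivatives. $T_{\mathrm f}$ is continuous, and the Dirichlet-type term $E_{\mathrm f}$ is lower semicontinuous (indeed continuous, by derivative convergence for concave functions). Hence $K_{\mathrm f}$ in \eqref{eq:2fam-comparison} is upper semicontinuous, and the obstacle and endpoint constraints are closed. A maximizing sequence for a positive supremum has $R_{\mathrm f}<\beta_{\mathrm f}$, stays in the box, and has a limit point that attains the supremum.

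The main obstacle is Step 2, specifically getting explicit numerical constants ($1/32$, $3/4$, $600$) from crude estimates. I expect to split into the regimes of small $t_{\mathrm f}$, small $p_{\mathrm f}$, and large $\xi_0$, and to optimize a one-variable inequality in each regime.
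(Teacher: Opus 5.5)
\textbf{The gap: the lower bound $m_{\mathrm f}>1/32$.} Your Step~1 estimate $M_{\mathrm f}\ge N_{\mathrm f}(a_{\mathrm f}-c_{\mathrm f})$ is true but discards the information you need. The mechanism you describe, that a tiny $p_{\mathrm f}$ inflates $a_{\mathrm f}$, fails near the face $p_{\mathrm f}=t_{\mathrm f}$: since $a_{\mathrm f}=\frac12(1/t_{\mathrm f}-1/p_{\mathrm f})$ and $c_{\mathrm f}=(p_{\mathrm f}-t_{\mathrm f})/2$, both vanish there. Your bounds then read $M_{\mathrm f}\ge0$ and $G_{\mathrm f}\le2+N_{\mathrm f}\log(1/t_{\mathrm f})$, and the latter exceeds $8/3$ for every $t_{\mathrm f}\le1/2$. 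So your inequalities are consistent with arbitrarily small $m_{\mathrm f}$ and give no lower bound at all.

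What is large in this corner is the body, not $a_{\mathrm f}$: the first buyer's mean excess over $a_{\mathrm f}$ is $1/p_{\mathrm f}-1$. To capture it, use the slope bound $\mathcal P'\le1$, not only $\mathcal P\le1$. From $\mathcal P(\xi)\le\min\{p_{\mathrm f}+\xi-c_{\mathrm f},1\}$ you get
\[
T_{\mathrm f}-\xi_0+c_{\mathrm f}\ge\int_0^{1-p_{\mathrm f}}\bigl((p_{\mathrm f}+u)^{-2}-1\bigr)\,\mathrm du=\frac{(1-p_{\mathrm f})^2}{p_{\mathrm f}},
\]
hence
\[
M_{\mathrm f}\ge N_{\mathrm f}\Bigl(\frac1{2t_{\mathrm f}}+\frac{t_{\mathrm f}}2+\frac1{2p_{\mathrm f}}+\frac{p_{\mathrm f}}2-2\Bigr)\ge\frac{(1-t_{\mathrm f})^2}{t_{\mathrm f}(1+t_{\mathrm f})}.
\]
This grows like $1/t_{\mathrm f}$, while $G_{\mathrm f}\le4+2\log(1/t_{\mathrm f})$, so it excludes $t_{\mathrm f}\le1/63$, that is, $m_{\mathrm f}\le1/32$. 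The paper gets the same information from the trading interpretation: the second seller's atom at $b_{\mathrm f}$ gives $M_{\mathrm f}\ge(1-m_{\mathrm f})b_{\mathrm f}$, and $b_{\mathrm f}\ge\bar z_{\mathrm f}\ge(\bar z_{\mathrm f}+a_{\mathrm f})/2=(1-m_{\mathrm f})/m_{\mathrm f}$.

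\textbf{The rest is sound up to details.}

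\emph{Upper bound on $m_{\mathrm f}$.} Your crude estimates do give $m_{\mathrm f}<3/4$, but only barely. Your bound on $G_{\mathrm f}-M_{\mathrm f}/3$ decreases in $p_{\mathrm f}$, and at $p_{\mathrm f}=t_{\mathrm f}=3/5$ the needed inequality $N_{\mathrm f}\log(1/t_{\mathrm f})<2/3$ holds with a margin of about $0.03$. So ``$G_{\mathrm f}$ close to $2$'' must be replaced by this computation. The paper instead uses $G_{\mathrm f}\le2+\bar z_{\mathrm f}+a_{\mathrm f}=2/m_{\mathrm f}$, which gives $R_{\mathrm f}\ge m_{\mathrm f}$ at once.

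\emph{Bound on $\xi_0$.} Your route through the second obstacle, $T_{\mathrm f}\ge\xi_0(\xi_0-c_{\mathrm f})/(e_{\mathrm f}\xi_0+v_{\mathrm f}c_{\mathrm f})$, is a valid alternative to the paper's $N_{\mathrm f}\xi_0=A_{1,\mathrm f}(b_{\mathrm f})\le b_{\mathrm f}$. Note that it is $m_{\mathrm f}<3/4$ that bounds $1/e_{\mathrm f}-1$ below. The bound $m_{\mathrm f}>1/32$ enters only through $p_{\mathrm f}\ge t_{\mathrm f}$, to cap $G_{\mathrm f}$.

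\emph{Step~3.} The curves live on different intervals. Extend them to a common interval by $\xi+\delta_{\mathrm f}$ on the left and $h_{\mathrm f}\xi+e_{\mathrm f}$ on the right before applying Arzel\`a--Ascoli; this keeps concavity and slopes in $[0,1]$. Then exclude a limit with $p_{\mathrm f}=1$ by upper semicontinuity of $K_{\mathrm f}$ together with your value $R_{\mathrm f}\ge3/4$ for the degenerate curve.
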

\begin{proof}
We first exclude the parameter boundaries using body means, then use
compactness of concave curves on a common interval.

\textit{Parameter bounds.} The identity
$\bar z_{\mathrm f}+a_{\mathrm f}=2(1-m_{\mathrm f})/m_{\mathrm f}$
gives $G_{\mathrm f}\le2/m_{\mathrm f}$ and $R_{\mathrm f}\ge m_{\mathrm f}$.
Also $m_{\mathrm f}a_{\mathrm f}\le1-m_{\mathrm f}$,
$p_{\mathrm f}\ge m_{\mathrm f}/N_{\mathrm f}$, and $Q_{\mathrm f}\ge0$, so
\[
 G_{\mathrm f}\le4+2\log(2/m_{\mathrm f}),\qquad
 M_{\mathrm f}\ge(1-m_{\mathrm f})b_{\mathrm f}
       \ge\frac{(1-m_{\mathrm f})^2}{m_{\mathrm f}}.
\]
The last inequality uses
$b_{\mathrm f}\ge\bar z_{\mathrm f}\ge
(\bar z_{\mathrm f}+a_{\mathrm f})/2$.
For $m_{\mathrm f}\le1/32$ these estimates give
\begin{equation}
 \beta_{\mathrm f}G_{\mathrm f}-(1-\beta_{\mathrm f})M_{\mathrm f}-2
 \le\frac34\{4+2\log(2/m_{\mathrm f})\}
       -\frac{(1-m_{\mathrm f})^2}{4m_{\mathrm f}}-2<0.
 \label{eq:2fam-small-mass}
\end{equation}
The scalar expression increases on this interval, because its derivative is
$(1/4-3m_{\mathrm f}/2-m_{\mathrm f}^2/4)/m_{\mathrm f}^2\ge(831/4096)/m_{\mathrm f}^2$,
and interval evaluation shows that it is less than $-2694/10000$ at $1/32$.
When $K_{\mathrm f}>0$, the stronger estimate $G_{\mathrm f}\le2/m_{\mathrm f}$
gives $M_{\mathrm f}<184$. Since
$A_{1,\mathrm f}(b_{\mathrm f})=N_{\mathrm f}\xi_0\le b_{\mathrm f}$,
we have $M_{\mathrm f}>(5/16)\xi_0$ and hence $\xi_0<600$.
Also $p_{\mathrm f}\ge t_{\mathrm f}\ge1/63$ and $\xi_0\ge v_{\mathrm f}>0$.

\textit{A common compact domain.} We extend each curve leftwards by
$\xi+\delta_{\mathrm f}$ and rightwards by $h_{\mathrm f}\xi+e_{\mathrm f}$.
The extensions are concave and have slopes in $[0,1]$ on $[0,600]$.
We choose a uniformly convergent subsequence with convergent parameters.
The obstacles and endpoint conditions pass to the limit.
Derivatives of concave curves converge almost everywhere under uniform
convergence. The common positive lower bound on $\mathcal P$ then gives
dominated convergence in $T_{\mathrm f}$ and $Q_{\mathrm f}$.
Thus the comparison is continuous on this compact extension.
At $p_{\mathrm f}=1$, positive slopes force $\xi_0=c_{\mathrm f}=v_{\mathrm f}$,
whose limiting ratio is
\[
 R_{\mathrm f}=\frac{1+3t_{\mathrm f}^2}{(1+t_{\mathrm f})^2}
     =\frac34+\frac{(3t_{\mathrm f}-1)^2}{4(1+t_{\mathrm f})^2}\ge\frac34.
\]
\end{proof}

\begin{lemma}[Three-arc energy minimizer]\label{lem:2fam-shape}
At every feasible fixed $(t_{\mathrm f},p_{\mathrm f},\xi_0)$, the energy
$E_{\mathrm f}$ has a unique minimizer in \eqref{eq:2fam-class}.
It is entirely affine, or consists of an initial affine contact, one
strictly concave free interval, and a final affine contact.
Either contact may have zero length. On the free interval there is
a constant $C_{\mathrm f}>0$ such that
\begin{equation}
 \mathcal P''+\frac{C_{\mathrm f}}{\xi^2}\mathcal P=0,
 \qquad
 d_{\mathrm f}
  =\xi\mathcal P'^2-\mathcal P\mathcal P'
          +\frac{C_{\mathrm f}\mathcal P^2}{\xi}.
 \label{eq:2fam-euler}
\end{equation}
If $c_{\mathrm f}=0$, every non-affine minimizer has a positive initial
contact interval.
\end{lemma}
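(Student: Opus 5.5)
We work with the logarithm \(\ell=\log\mathcal P\). Following the single-unit argument of Lemma~\ref{lem:energy} and Lemma~\ref{lem:gaps}, the energy becomes
\[
E_{\mathrm f}=\int_{c_{\mathrm f}}^{\xi_0}\bigl\{\xi(\ell')^2+d_{\mathrm f}e^{-2\ell}\bigr\}\,\mathrm d\xi .
\]
Both terms are convex in \(\ell\): the first is a weighted Dirichlet energy with weight \(\xi\ge0\), and the second is strictly convex in \(\ell\). The endpoint conditions \(\ell(c_{\mathrm f})=\log p_{\mathrm f}\) and \(\ell(\xi_0)=0\) are affine in \(\ell\). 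The obstacle \(\ell\le\log U_{\mathrm f}\) defines a convex set.

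The one nonconvex constraint is concavity of \(\mathcal P\). We drop it and minimize over the relaxed class of absolutely continuous \(\ell\) with the same endpoints and obstacle. Existence of a relaxed minimizer comes from the direct method, using compactness of sublevel sets. Where \(c_{\mathrm f}>0\), the weight \(\xi\) is bounded below and we get \(H^1\) bounds. Where \(c_{\mathrm f}=0\), we get local bounds together with the upper obstacle. Uniqueness follows from strict convexity of \(e^{-2\ell}\).

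Next we identify the relaxed minimizer and show that it is concave, so it also solves the original problem. Off the contact set the minimizer satisfies the Euler equation \((\xi\ell')'+d_{\mathrm f}e^{-2\ell}=0\). In \(\mathcal P\) this reads \(\xi(\mathcal P\mathcal P''-\mathcal P'^2)+\mathcal P\mathcal P'=d_{\mathrm f}\). Differentiating the candidate first integral \(d_{\mathrm f}=\xi\mathcal P'^2-\mathcal P\mathcal P'+C_{\mathrm f}\mathcal P^2/\xi\) shows it is consistent exactly when \(\mathcal P''=-C_{\mathrm f}\mathcal P/\xi^2\). This gives \eqref{eq:2fam-euler}, with \(C_{\mathrm f}\) defined by the first-integral formula, mirroring \eqref{eq:firstintegral}.

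The sign \(C_{\mathrm f}>0\) should come from the free arc lying strictly below the concave obstacle \(U_{\mathrm f}\) while matching it at its ends. If \(C_{\mathrm f}\le0\), then \(\mathcal P\) would be convex on the free arc. Between two contact or endpoint points this puts \(\mathcal P\) above the chord, hence above a piece of the concave majorant. That contradicts the obstacle unless the arc is affine, which is the entirely affine case.

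On the contact set the variational inequality gives a nonnegative multiplier, as in \eqref{eq:18}. A contact point with the obstacle in the interior forces \(C^1\) fit, so the minimizer is \(C^1\) and concave: each piece is concave, and the junctions are tangential.

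The three-arc structure is the main obstacle. We must rule out a free arc interrupted by a contact, or contacts with both affine pieces alternating. Our first attempt: a free arc solving \(\mathcal P''=-C\mathcal P/\xi^2\) touches each line obstacle tangentially at most once. The contact set with the line \(\xi+\delta_{\mathrm f}\) must lie left of the contact set with \(h_{\mathrm f}\xi+e_{\mathrm f}\), because \(\mathcal P'\) is nonincreasing. Once the arc has left the first line at slope \(1\), concavity prevents a return to it. Two free arcs separated by a contact on the same line would each have to leave and rejoin that line tangentially, which concavity again forbids. So at most one free arc remains, between an initial contact on the first line and a final contact on the second, and either contact may have length zero.

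For \(c_{\mathrm f}=0\), the weight \(\xi\) vanishes at the left end. A free arc starting at \(\xi=0\) satisfies an Euler-type equation whose solutions behave like \(\xi^{\lambda}\) with \(\lambda(1-\lambda)=C_{\mathrm f}\). Such an arc cannot equal \(p_{\mathrm f}>0\) at \(0\) with finite slope \(\le1\) unless it is affine. We would check this via the first integral: at \(\xi\to0\) the term \(C_{\mathrm f}\mathcal P^2/\xi\) blows up while \(d_{\mathrm f}\) stays fixed, which is a contradiction. Hence a positive initial contact interval is forced.

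Finally, the original-class minimizer coincides with the relaxed one, since the relaxed minimizer is concave, Lipschitz and positive. This gives uniqueness in \(\mathfrak F\).
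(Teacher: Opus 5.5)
\textbf{There is a genuine gap at the step \(C_{\mathrm f}>0\).} Your overall route matches the paper's: pass to \(\ell=\log\mathcal P\), drop concavity, get existence and uniqueness from the direct method and strict convexity, use the variational inequality for \(C^1\) fit, use the first integral, and note that a strictly concave arc cannot leave and rejoin the same line. But everything after \(C_{\mathrm f}>0\) depends on it: concavity of the relaxed minimizer (hence its membership in \(\mathfrak F\) and uniqueness there), the at-most-one-free-arc argument, and the \(c_{\mathrm f}=0\) case.

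Your argument there runs the wrong way. If \(C_{\mathrm f}\le0\) the arc is convex, and a convex arc lies \emph{below} its chord. The concave obstacle lies above that same chord, so no contradiction arises. Tangent lines do not rescue it either. They work only when an end of the free arc is an interior tangential contact, and they fail when both contacts have zero length, so the arc runs from \((c_{\mathrm f},p_{\mathrm f})\) to \((\xi_0,1)\) with no tangency.

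The missing idea is the monotone quantity \(\omega=\xi\ell'=\xi\mathcal P'/\mathcal P\). The Euler equation gives \(\omega'=-d_{\mathrm f}/\mathcal P^2<0\), and \(C_{\mathrm f}=\omega-\omega^2+d_{\mathrm f}\xi/\mathcal P^2\) is constant.
\begin{itemize}
\item At the right end of a free arc the slope is at least \(h_{\mathrm f}>0\): it is the contact slope, or at \(\xi_0\) the obstacle \(h_{\mathrm f}\xi+e_{\mathrm f}\) passes through \((\xi_0,1)\). Hence \(\omega>0\) there.
\item At the left end \(\omega<1\). Either a tangential contact has positive intercept, giving \(\omega=\xi/(\xi+\delta_{\mathrm f})\) on the first line and \(\omega=h_{\mathrm f}\xi/(h_{\mathrm f}\xi+e_{\mathrm f})\) on the second. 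Or, at \(c_{\mathrm f}>0\), the obstacle forces \(\mathcal P'\le1\), so \(\omega\le c_{\mathrm f}/(c_{\mathrm f}+\delta_{\mathrm f})\).
\end{itemize}
Monotonicity then gives \(0<\omega<1\) on the whole arc, hence \(C_{\mathrm f}>0\) and \(\mathcal P''<0\).

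\textbf{The \(c_{\mathrm f}=0\) case has two further loose ends.}
\begin{itemize}
\item \emph{Existence.} The weighted energy does not control the trace at \(0\); for example \(\ell\approx-\log\log(1/\xi)\) has finite energy. The upper obstacle only gives \(\limsup_{\xi\to0}\ell\le\log p_{\mathrm f}\). The paper first truncates \(\ell\) from below at \(\log p_{\mathrm f}\). This lowers both terms and preserves the obstacle, since \(U_{\mathrm f}\ge p_{\mathrm f}\). The resulting bounds \(p_{\mathrm f}\le\mathcal P\le p_{\mathrm f}+\xi\) pin the trace.
\item \emph{Positive initial contact.} Your blow-up of \(C_{\mathrm f}\mathcal P^2/\xi\) needs \(C_{\mathrm f}\ne0\) and a bound on \(\mathcal P'\) near \(0\), and both come only from the missing step. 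With \(\omega\) the argument is immediate: bounded positive \(\mathcal P\) forces \(\omega(0+)=0\), so \(\omega<0\) on the arc and \(\mathcal P\) drops below \(p_{\mathrm f}\), contradicting the truncation.
\end{itemize}

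Two smaller points. Your Euler equation in \(\mathcal P\) has the wrong sign. It should read \(\xi(\mathcal P\mathcal P''-\mathcal P'^2)+\mathcal P\mathcal P'=-d_{\mathrm f}\), which is what yields the stated first integral. Also, your \(C^1\)-fit sentence does not address the corner of \(U_{\mathrm f}\). Contact there is excluded by the same multiplier sign, since touching the corner would require a downward jump in \(\mathcal P'\).
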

\begin{proof}
We minimize in the convex logarithmic obstacle class and then show that
its minimizer is concave in the original reciprocal state.

\textit{Existence in the obstacle class.} With
$z_{\mathrm f}=\log\mathcal P$, the energy is
$\int(\xi z_{\mathrm f}'^2+d_{\mathrm f}e^{-2z_{\mathrm f}})$
and the obstacle is $z_{\mathrm f}\le\log U_{\mathrm f}$.
Truncation below $\log p_{\mathrm f}$ decreases both terms.
For $c_{\mathrm f}>0$, bounded energy gives a bounded sequence in $H^1$;
the direct method and strict convexity give a unique minimizer.
For $c_{\mathrm f}=0$, we take weak limits on each compact interval
away from zero. The bounds $p_{\mathrm f}\le\mathcal P\le p_{\mathrm f}+\xi$
preserve its trace at zero, and lower semicontinuity gives a minimizer
of the weighted energy. Strict convexity still gives uniqueness.

\textit{Contact and curvature.} The variational inequality is
\begin{equation}
 \mu_{\mathrm f}=(\xi z_{\mathrm f}')'
                   +d_{\mathrm f}e^{-2z_{\mathrm f}}\ge0,
 \qquad\operatorname{supp}\mu_{\mathrm f}
        \subseteq\{\mathcal P=U_{\mathrm f}\}.
 \label{eq:2fam-multiplier}
\end{equation}
Away from zero, $\xi z_{\mathrm f}'$ has bounded variation.
At a smooth contact, the obstacle requires the left derivative of
$\mathcal P$ to be at least its affine slope and the right derivative
to be at most that slope. The measure in \eqref{eq:2fam-multiplier}
requires a nonnegative jump. Both derivatives must therefore equal
the obstacle slope. The same signs exclude contact at a genuine
downward corner of the obstacle.

On a free interval, we put $\omega_{\mathrm f}=\xi\mathcal P'/\mathcal P$
to integrate the Euler equation. It gives
\[
 \omega_{\mathrm f}'=-d_{\mathrm f}/\mathcal P^2,\qquad
 C_{\mathrm f}=\omega_{\mathrm f}-\omega_{\mathrm f}^2
                 +d_{\mathrm f}\xi/\mathcal P^2,
 \qquad C_{\mathrm f}'=0.
\]
The right endpoint slope is positive, while the left value of
$\omega_{\mathrm f}$ is less than one: this follows from the positive
intercept at a contact, or from the slope bound at $c_{\mathrm f}>0$.
Since $\omega_{\mathrm f}$ decreases, we obtain $0<\omega_{\mathrm f}<1$,
$C_{\mathrm f}>0$, and $\mathcal P''<0$ on the free interval.
A strictly concave interval cannot leave and return to the same affine
obstacle, because it lies above its chord. The first obstacle is active
before their intersection and the second afterwards, so at most one
free interval separates the initial and final contacts.
If a free interval started at zero, bounded positive $\mathcal P$
and the equation for $\omega_{\mathrm f}'$ would force $\omega_{\mathrm f}(0)=0$
and then $\omega_{\mathrm f}<0$, contradicting $\mathcal P\ge p_{\mathrm f}$.

\textit{Global comparison.} For the resulting minimizer $\mathcal P_*$,
write $\upsilon=\log(\mathcal P/\mathcal P_*)$.
Integration by parts in \eqref{eq:2fam-multiplier} gives
\begin{equation}
 E_{\mathrm f}(\mathcal P)-E_{\mathrm f}(\mathcal P_*)
  =\int_{c_{\mathrm f}}^{\xi_0}
    \{\xi\upsilon'^2+d_{\mathrm f}\mathcal P_*^{-2}
       (e^{-2\upsilon}-1+2\upsilon)\}\,\mathrm d\xi
       -2\int\upsilon\,\mathrm d\mu_{\mathrm f}\ge0.
 \label{eq:2fam-gap}
\end{equation}
\end{proof}

\subsection{Boundary exclusion and the stationary system}

The two entirely affine cases have the following rational functionals:
\begin{align}
 M_{\mathrm f,L}&=
 \frac{p_{\mathrm f}^2t_{\mathrm f}+p_{\mathrm f}t_{\mathrm f}^2-4p_{\mathrm f}t_{\mathrm f}+p_{\mathrm f}+t_{\mathrm f}}
      {p_{\mathrm f}t_{\mathrm f}(1+t_{\mathrm f})},
 &G_{\mathrm f,L}&=
 \frac{-p_{\mathrm f}^2+p_{\mathrm f}t_{\mathrm f}+5p_{\mathrm f}-t_{\mathrm f}}{p_{\mathrm f}(1+t_{\mathrm f})},
 \notag\\
 M_{\mathrm f,R}&=
 \frac{(p_{\mathrm f}-t_{\mathrm f})(t_{\mathrm f}-1)(p_{\mathrm f}t_{\mathrm f}-2p_{\mathrm f}+1)}
 {p_{\mathrm f}t_{\mathrm f}(1+t_{\mathrm f})(2p_{\mathrm f}-t_{\mathrm f}-1)},
 &G_{\mathrm f,R}&=
 \frac{p_{\mathrm f}t_{\mathrm f}+3p_{\mathrm f}-t_{\mathrm f}+1}{p_{\mathrm f}(1+t_{\mathrm f})}.
 \label{eq:2fam-affine-boundaries}
\end{align}
Their respective domains are $t_{\mathrm f}\le p_{\mathrm f}<1$ and
$(1+t_{\mathrm f})/2<p_{\mathrm f}<1$. On both domains,
$R_{\mathrm f}\ge73/100$ follows from exact polynomial positivity
after multiplying by positive denominators. Positive Bernstein expansions on rational subrectangles prove these inequalities;
the polynomial coefficients are provided in the supplementary material.

\begin{lemma}[Interior outer parameters]\label{lem:2fam-boundary}
Suppose $\beta_{\mathrm f}\in[18227/25000,729081/10^6]$.
At a positive global maximum of $K_{\mathrm f}$, or at a global
minimum of $R_{\mathrm f}$ whose value is $\beta_{\mathrm f}$, one has
$t_{\mathrm f}<p_{\mathrm f}<1$ and both affine contacts have positive length.
The derivatives of $K_{\mathrm f}$ with respect to
$(t_{\mathrm f},p_{\mathrm f},h_{\mathrm f})$ vanish there.
\end{lemma}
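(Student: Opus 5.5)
The plan is to exclude, one at a time, every boundary configuration of the outer parameters. At a configuration where some boundary holds, I will either exhibit a feasible outer variation that strictly increases \(K_{\mathrm f}\), or show that \(R_{\mathrm f}\ge73/100\) there. The latter is incompatible with \(R_{\mathrm f}=\beta_{\mathrm f}\le729081/10^6<73/100\), and also with \(K_{\mathrm f}>0\), since \(R_{\mathrm f}<\beta_{\mathrm f}\) exactly when \(K_{\mathrm f}>0\).

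The two problems are linked as follows. At a global minimum of \(R_{\mathrm f}\) with value \(\beta_{\mathrm f}\), we have \(K_{\mathrm f}=0\) and \(K_{\mathrm f}\le0\) everywhere, so the point is also a global maximizer of \(K_{\mathrm f}\). It therefore suffices to treat global maximizers of \(K_{\mathrm f}\) with value \(\ge0\), using \(R_{\mathrm f}\le\beta_{\mathrm f}\). By Lemma~\ref{lem:2fam-compact} such points satisfy \(p_{\mathrm f}<1\), \(1/32<m_{\mathrm f}<3/4\) and \(\xi_0<600\). By Lemma~\ref{lem:2fam-shape} the curve is the unique three-arc energy minimizer at its endpoint data. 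The steps are below.

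\emph{Entirely affine curves.} If the minimizing curve has no free interval, it is one of the two affine boundaries. Their functionals are given in \eqref{eq:2fam-affine-boundaries}, and the Bernstein certificates give \(R_{\mathrm f}\ge73/100\) on both domains. This excludes them.

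\emph{Zero-length contacts.} Suppose the initial contact has zero length. Then the free arc begins at \(c_{\mathrm f}\) with slope below one. By the last clause of Lemma~\ref{lem:2fam-shape} this forces \(c_{\mathrm f}>0\), i.e.\ \(p_{\mathrm f}>t_{\mathrm f}\). I would vary \(p_{\mathrm f}\), which moves the left endpoint along the obstacle \(\xi+\delta_{\mathrm f}\). Using the gap identity \eqref{eq:2fam-gap} with the multiplier \eqref{eq:2fam-multiplier}, the one-sided derivative of \(E_{\mathrm f}\) in this direction is a boundary term \(2\xi\mathcal P'/\mathcal P\) at \(c_{\mathrm f}\). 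This is the analogue of the envelope computation after Lemma~\ref{lem:gaps}. I expect this term, combined with \(\partial_{p_{\mathrm f}}\) of the explicit part \(1-t_{\mathrm f}/p_{\mathrm f}-\log p_{\mathrm f}-d_{\mathrm f}a_{\mathrm f}\), to have a strict sign when the contact is absent. That sign contradicts maximality.

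For a zero-length final contact, the symmetric variation is in \(h_{\mathrm f}\), which changes \(\xi_0=v_{\mathrm f}/h_{\mathrm f}\). Here the boundary slope at \(\xi_0\) is strictly greater than \(h_{\mathrm f}\), which produces a strict sign in the same way.

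\emph{The case \(t_{\mathrm f}=p_{\mathrm f}\).} This is the same as \(c_{\mathrm f}=0\). There, Lemma~\ref{lem:2fam-shape} already gives a positive initial contact. The remaining task is to show that increasing \(p_{\mathrm f}\) from \(t_{\mathrm f}\) increases \(K_{\mathrm f}\), either by a one-sided derivative computation or by falling back on the affine-\(L\) bound.

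\emph{Stationarity.} Once every constraint is interior, the outer derivatives are two-sided. They exist because the unique minimizer depends continuously on the data, and the envelope formula holds by the gap identity. Hence they vanish at the maximizer.

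The main obstacle is the sign of the one-sided boundary derivatives at the zero-length contacts, because these derivatives mix \(d_{\mathrm f}\), \(t_{\mathrm f}\), \(p_{\mathrm f}\) with the arc's endpoint slope. I would first try to write each derivative as the first integral \(C_{\mathrm f}\) minus an obstacle quantity, using \eqref{eq:2fam-euler} evaluated at the endpoint. If that quantity does not have a definite sign, I would fall back on rational interval bounds over the compact parameter box of Lemma~\ref{lem:2fam-compact}, restricted to \(\beta_{\mathrm f}\) in the given narrow interval.
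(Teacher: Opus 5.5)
Your outline has the same architecture as the paper's proof: reduce to maximizing $K_{\mathrm f}$, pass to the fixed-endpoint energy minimizer, discard entirely affine curves by the $73/100$ bounds, exclude each remaining boundary by a feasible outer variation of strict sign, then read off stationarity. But there is a genuine gap. The strict signs are the whole content of the lemma beyond the affine bounds, and you leave them as expectations. Your set-up for computing them also has two errors.

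First, varying $p_{\mathrm f}$ at fixed $(t_{\mathrm f},\xi_0)$ shifts both $c_{\mathrm f}$ and $\delta_{\mathrm f}$ by $\mathrm dp_{\mathrm f}/2$. So the endpoint $(c_{\mathrm f},p_{\mathrm f})$, which always lies on $\xi+\delta_{\mathrm f}$, moves by $(\mathrm dp_{\mathrm f}/2,\mathrm dp_{\mathrm f})$ while that obstacle itself rises; the endpoint does not slide along a fixed obstacle. Second, the first variation of $E_{\mathrm f}$ is not just the momentum term $2\xi\mathcal P'/\mathcal P$. With $z=\log\mathcal P$, the moving lower limit also contributes $(\xi z'^2-d_{\mathrm f}e^{-2z})\,\delta\xi$.

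Done correctly, with $H_{\mathrm f,L}=\mathcal P'(c_{\mathrm f}+)$, the computation gives $\partial_{p_{\mathrm f}}K_{\mathrm f}=-c_{\mathrm f}(H_{\mathrm f,L}-2)^2/(2p_{\mathrm f}^2)<0$. The improving direction is therefore to \emph{decrease} $p_{\mathrm f}$, which is admissible because $c_{\mathrm f}>0$. You must also build the varied curve in $\mathfrak F$:
\begin{itemize}
\item For $H_{\mathrm f,L}<1$, extend the free arc leftward and change its initial value by $(1-H_{\mathrm f,L}/2)\,\mathrm dp_{\mathrm f}$. The strict slope inequality keeps it under the lowered obstacle.
\item A zero-length contact does not force slope below one. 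Tangency exactly at $c_{\mathrm f}$, i.e.\ $H_{\mathrm f,L}=1$, is possible and needs its own construction: a short tangent segment and a shift by $\mathrm dp_{\mathrm f}/2$.
\end{itemize}
Both changes blend to zero inside the strictly concave free interval.

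The other signs are equally explicit, so no interval-arithmetic fallback is needed.
\begin{itemize}
\item At a vanishing final contact, $\mathrm dK_{\mathrm f}/\mathrm d\xi_0=\xi_0H_{\mathrm f,R}^2>0$ with $H_{\mathrm f,R}=\mathcal P'(\xi_0-)\ge h_{\mathrm f}$. Your strict inequality $H_{\mathrm f,R}>h_{\mathrm f}$ is neither guaranteed nor needed. Feasibility is not automatic here: raising $\xi_0$ lowers $h_{\mathrm f}=v_{\mathrm f}/\xi_0$, which tightens the second obstacle to the left of $\xi_0$. The paper lowers the curve near $\xi_0$ by $H_{\mathrm f,R}\,\mathrm d\xi_0$ and appends a tangent segment. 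This is admissible because $v_{\mathrm f}/(\xi_0+\mathrm d\xi_0)<h_{\mathrm f}\le H_{\mathrm f,R}$, and it is blended elsewhere through the strict obstacle gap.
\item At the face $p_{\mathrm f}=t_{\mathrm f}$, your fallback to the affine bound does not apply, since the curve is not entirely affine. The computation gives $2\partial_{p_{\mathrm f}}K_{\mathrm f}=(t_{\mathrm f}+d_{\mathrm f})(t_{\mathrm f}^{-2}-\mathcal P_\ell^{-2})>0$, because the positive initial contact gives $\mathcal P_\ell>t_{\mathrm f}$. It is realized by shifting the contact segment by $\mathrm dp_{\mathrm f}/2$ with an interior blend.
\end{itemize}

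Finally, interior stationarity also rests on explicit two-sided variations of this kind (contact shifts plus interior blends). Continuity of the minimizer does not by itself produce the outer derivatives. Neither does the gap identity \eqref{eq:2fam-gap}, which only compares curves with the same parameters and endpoint data.
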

\begin{proof}
We reduce both cases to maximization of the comparison, then exclude
each boundary by a feasible outer variation. Since
\[
 K_{\mathrm f}
 =\frac{M_{\mathrm f}+G_{\mathrm f}}{N_{\mathrm f}}
    \left(1-\frac{R_{\mathrm f}}{\beta_{\mathrm f}}\right),
\]
a ratio minimizer of value $\beta_{\mathrm f}$ maximizes
$K_{\mathrm f}$ at zero. In either case, we replace the curve by its
fixed-endpoint energy minimizer, which maximizes $K_{\mathrm f}$
at those endpoints by \eqref{eq:2fam-comparison}.
The bounds following \eqref{eq:2fam-affine-boundaries} force a nonempty free interval.
We use variations affine on contact intervals and blend them inside a
compact part of the strictly concave free interval; its strict curvature
and obstacle gap preserve feasibility for small changes.

\textit{Vanishing initial contact.} If $c_{\mathrm f}>0$, put
$H_{\mathrm f,L}=\mathcal P'(c_{\mathrm f}+)\le1$.
At fixed $(t_{\mathrm f},\xi_0)$ the endpoint variation is
\begin{equation}
 \partial_{p_{\mathrm f}}K_{\mathrm f}
 =-\frac{c_{\mathrm f}(H_{\mathrm f,L}-2)^2}{2p_{\mathrm f}^2}<0.
 \label{eq:2fam-left-variation}
\end{equation}
For a negative change $\mathrm dp_{\mathrm f}$ and $H_{\mathrm f,L}<1$, we extend
the free curve slightly leftwards and change its initial value by
$(1-H_{\mathrm f,L}/2)\mathrm dp_{\mathrm f}+O(\mathrm dp_{\mathrm f}^2)$.
The strict slope inequality preserves the initial obstacle.
When $H_{\mathrm f,L}=1$, we insert a short tangent segment and shift
the adjacent curve by $\mathrm dp_{\mathrm f}/2$. Each change blends to zero inside
the free interval and increases the comparison. The case
$c_{\mathrm f}=0$ already has positive initial contact by
Lemma~\ref{lem:2fam-shape}.

\textit{Vanishing final contact.} Put
$H_{\mathrm f,R}=\mathcal P'(\xi_0-)\ge h_{\mathrm f}>0$.
Increasing $\xi_0$ gives
\begin{equation}
 \frac{\mathrm dK_{\mathrm f}}{\mathrm d\xi_0}=\xi_0H_{\mathrm f,R}^2>0.
 \label{eq:2fam-right-variation}
\end{equation}
We lower the old endpoint by $H_{\mathrm f,R}\,\mathrm d\xi_0$ and append its
tangent segment. The new obstacle slope
$v_{\mathrm f}/(\xi_0+\mathrm d\xi_0)$ is smaller than $h_{\mathrm f}$, so
the appended segment is feasible. The same shift lies below the new
obstacle near the old endpoint; farther away, we blend it through the
strict obstacle gap. The unchanged tangent slope preserves concavity.

\textit{Parameter faces.} At $p_{\mathrm f}=t_{\mathrm f}$, the initial contact
has positive length. If $\mathcal P_\ell$ is its terminal value, then
\[
 2\partial_{p_{\mathrm f}}K_{\mathrm f}
  =(t_{\mathrm f}+d_{\mathrm f})
       (t_{\mathrm f}^{-2}-\mathcal P_\ell^{-2})>0.
\]
We realize the increase by a shift $\mathrm dp_{\mathrm f}/2$ on the initial contact
and an interior blend. Lemma~\ref{lem:2fam-compact} excludes
$p_{\mathrm f}=1$ and the noncompact boundaries. At all remaining parameters,
the same contact variations and interior blends work in both directions,
so
\[
 \partial_{t_{\mathrm f}}K_{\mathrm f}
 =\partial_{p_{\mathrm f}}K_{\mathrm f}
 =\partial_{h_{\mathrm f}}K_{\mathrm f}=0.
\]
\end{proof}

Write $\xi_\ell<\xi_r$ for the contact times,
$\mathcal P_\ell=\xi_\ell+\delta_{\mathrm f}$, and
$\mathcal P_r=h_{\mathrm f}\xi_r+e_{\mathrm f}$.
Define the two dimensionless contact slopes
$q_{\ell,\mathrm f}=\xi_\ell/\mathcal P_\ell$ and
$q_{r,\mathrm f}=h_{\mathrm f}\xi_r/\mathcal P_r$.
Integration of \eqref{eq:2fam-euler} gives
\begin{align}
 T_{\mathrm f}&=p_{\mathrm f}^{-1}-\mathcal P_\ell^{-1}
      +(q_{\ell,\mathrm f}-q_{r,\mathrm f})/d_{\mathrm f}
      +(\mathcal P_r^{-1}-1)/h_{\mathrm f},\notag\\
 Q_{\mathrm f}&=-\log p_{\mathrm f}-C_{\mathrm f}\log(\xi_r/\xi_\ell)
                    -\delta_{\mathrm f}/p_{\mathrm f}+e_{\mathrm f},\notag\\
 G_{\mathrm f}&=3-m_{\mathrm f}+m_{\mathrm f}a_{\mathrm f}
                       +N_{\mathrm f}C_{\mathrm f}\log(\xi_r/\xi_\ell).
 \label{eq:2fam-quadratures}
\end{align}
The three outer equations are
\begin{equation}
\begin{aligned}
 0=2\partial_{p_{\mathrm f}}K_{\mathrm f}
   &=\frac{t_{\mathrm f}+d_{\mathrm f}}{p_{\mathrm f}^2}
        -\frac{\delta_{\mathrm f}+d_{\mathrm f}}{\mathcal P_\ell^2},\\
 0=\partial_{h_{\mathrm f}}K_{\mathrm f}
   &=-\xi_0^2h_{\mathrm f}
     +\frac{2(h_{\mathrm f}e_{\mathrm f}+d_{\mathrm f})}{h_{\mathrm f}^2}
       \left\{\mathcal P_r^{-1}-1
          +\frac{e_{\mathrm f}}2(1-\mathcal P_r^{-2})\right\},\\
 0=\partial_{t_{\mathrm f}}K_{\mathrm f}
   &=-\frac1{2p_{\mathrm f}}+\frac{d_{\mathrm f}}{2t_{\mathrm f}^2}
      -d_{\mathrm f}
      -\frac{\delta_{\mathrm f}+d_{\mathrm f}}{2\mathcal P_\ell^2}
      -\frac{\xi_0h_{\mathrm f}}2
      +\frac{h_{\mathrm f}e_{\mathrm f}+d_{\mathrm f}}{2h_{\mathrm f}}
         (\mathcal P_r^{-2}-1).
\end{aligned}\label{eq:2fam-stationarity}
\end{equation}

\subsection{The stationary branch and the family minimum}

We eliminate the contact parameters using
\eqref{eq:2fam-stationarity}. From here on, $\mathcal P_\ell$, $\xi_\ell$, and
$C_{\mathrm f}$ denote the functions of $(t_{\mathrm f},p_{\mathrm f})$ defined in
the first line below; at a stationary point they agree with the contact
data above, by \eqref{eq:2fam-euler} and the first equation of
\eqref{eq:2fam-stationarity}. Define
\begin{gather}
 \mathcal P_\ell=p_{\mathrm f}
       \sqrt{\frac{\delta_{\mathrm f}+d_{\mathrm f}}
                    {t_{\mathrm f}+d_{\mathrm f}}},\qquad
 \xi_\ell=\mathcal P_\ell-\delta_{\mathrm f},\qquad
 C_{\mathrm f}
       =\frac{(\delta_{\mathrm f}+d_{\mathrm f})\xi_\ell}
                    {\mathcal P_\ell^2},\notag\\
 v_{\mathrm f}^2(q_{r,\mathrm f}^2-q_{r,\mathrm f}+C_{\mathrm f})
       =C_{\mathrm f}q_{r,\mathrm f}^2,\qquad 0<q_{r,\mathrm f}<v_{\mathrm f},
 \notag\\
 \gamma_{\mathrm f}=-d_{\mathrm f}/t_{\mathrm f}^2+2d_{\mathrm f}
          +1/p_{\mathrm f}+(t_{\mathrm f}+d_{\mathrm f})/p_{\mathrm f}^2,\qquad
 \mathcal A_{\mathrm f}(t_{\mathrm f},p_{\mathrm f})
       =-\gamma_{\mathrm f}
        +\frac{v_{\mathrm f}(v_{\mathrm f}-q_{r,\mathrm f})}
               {e_{\mathrm f}(v_{\mathrm f}+q_{r,\mathrm f})}=0.
 \label{eq:2fam-algebraic}
\end{gather}
For every $d_{\mathrm f}>0$ we call the set
$0<t_{\mathrm f}<p_{\mathrm f}<1$ the algebraic branch. There
$\mathcal P_\ell>p_{\mathrm f}>\delta_{\mathrm f}$, so $\xi_\ell>0$ and
$C_{\mathrm f}>0$.
The equation for $q_{r,\mathrm f}$ has one root in $(0,v_{\mathrm f})$,
because $v_{\mathrm f}^2q_{r,\mathrm f}(1-q_{r,\mathrm f})/
(v_{\mathrm f}^2-q_{r,\mathrm f}^2)$ increases from zero to infinity.
With $\mathcal D_{\mathrm f}(u)=u^2-u+C_{\mathrm f}$, define
\begin{gather}
 \mathcal I_{\mathrm f}=\int_{q_{r,\mathrm f}}^{q_{\ell,\mathrm f}}
          \frac{\mathrm du}{\mathcal D_{\mathrm f}(u)},\notag\\
 \mathcal F_{\mathrm f}(t_{\mathrm f},p_{\mathrm f})
 =\log\frac{1-q_{r,\mathrm f}}{1-q_{\ell,\mathrm f}}
    +\frac12\log\frac{\mathcal D_{\mathrm f}(q_{\ell,\mathrm f})}
                         {\mathcal D_{\mathrm f}(q_{r,\mathrm f})}
    +\frac{\mathcal I_{\mathrm f}}2-\log\frac{e_{\mathrm f}}{\delta_{\mathrm f}}=0.
 \label{eq:2fam-connection}
\end{gather}
The recovered parameters are
\begin{equation}
 h_{\mathrm f}
   =\frac{d_{\mathrm f}q_{r,\mathrm f}(1-q_{r,\mathrm f})}
               {e_{\mathrm f}\mathcal D_{\mathrm f}(q_{r,\mathrm f})},\qquad
 \mathcal P_r=\frac{e_{\mathrm f}}{1-q_{r,\mathrm f}},\qquad
 \xi_r=\frac{q_{r,\mathrm f}\mathcal P_r}{h_{\mathrm f}},\qquad
 \xi_0=\frac{v_{\mathrm f}}{h_{\mathrm f}}.
 \label{eq:2fam-recovery}
\end{equation}
A physical stationary point means that these formulas produce a curve
in \eqref{eq:2fam-class}, with
$c_{\mathrm f}<\xi_\ell<\xi_r<\xi_0$,
$0<q_{r,\mathrm f}<q_{\ell,\mathrm f}<1$, and positive
$\mathcal D_{\mathrm f}$ on the integration interval.

\begin{lemma}[Stationary domain and monotonicity]\label{lem:2fam-domain}
For $37/100\le d_{\mathrm f}\le3/8$, every physical stationary
point satisfies $1/5<t_{\mathrm f}<421/1000$.
On the algebraic branch, $\partial_{p_{\mathrm f}}\mathcal A_{\mathrm f}>0$.
\end{lemma}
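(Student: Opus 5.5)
Both claims reduce to explicit algebra in \((t_{\mathrm f},p_{\mathrm f})\) once \(d_{\mathrm f}\) is confined to \([37/100,3/8]\). I would handle the monotonicity claim first, because it feeds the localization. From \eqref{eq:2fam-algebraic}, \(\gamma_{\mathrm f}\) depends on \(p_{\mathrm f}\) only through \(1/p_{\mathrm f}+(t_{\mathrm f}+d_{\mathrm f})/p_{\mathrm f}^2\). This is strictly decreasing in \(p_{\mathrm f}\), so \(-\gamma_{\mathrm f}\) strictly increases.

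The second term of \(\mathcal A_{\mathrm f}\) is \(\frac{v_{\mathrm f}}{e_{\mathrm f}}\cdot\frac{v_{\mathrm f}-q}{v_{\mathrm f}+q}\) with \(q=q_{r,\mathrm f}\). Since \(v_{\mathrm f},e_{\mathrm f}\) depend on \(t_{\mathrm f}\) alone, this term increases in \(p_{\mathrm f}\) exactly when \(q_{r,\mathrm f}\) decreases in \(p_{\mathrm f}\). The root \(q_{r,\mathrm f}\) is defined by \(C_{\mathrm f}=v_{\mathrm f}^2q(1-q)/(v_{\mathrm f}^2-q^2)\), and the right side increases in \(q\) on \((0,v_{\mathrm f})\). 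So \(q_{r,\mathrm f}\) moves in the same direction as \(C_{\mathrm f}\), and it suffices to show \(\partial_{p_{\mathrm f}}C_{\mathrm f}\le0\).

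To get that sign, write \(\kappa=\sqrt{(\delta_{\mathrm f}+d_{\mathrm f})/(t_{\mathrm f}+d_{\mathrm f})}\). Then \(\mathcal P_\ell=p_{\mathrm f}\kappa\), and
\[
C_{\mathrm f}=(t_{\mathrm f}+d_{\mathrm f})\Bigl(\frac1{p_{\mathrm f}}-\frac{\delta_{\mathrm f}}{p_{\mathrm f}^2\kappa^2}\Bigr)
=\frac{t_{\mathrm f}+d_{\mathrm f}}{p_{\mathrm f}}-\frac{\delta_{\mathrm f}(t_{\mathrm f}+d_{\mathrm f})^2}{p_{\mathrm f}^2(\delta_{\mathrm f}+d_{\mathrm f})}.
\]
Here \(\delta_{\mathrm f}=(p_{\mathrm f}+t_{\mathrm f})/2\). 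Differentiating in \(p_{\mathrm f}\) gives a rational function, and I would check its sign by clearing the positive denominators on the branch \(0<t_{\mathrm f}<p_{\mathrm f}<1\). If \(C_{\mathrm f}\) fails to be monotone there, the fallback is to differentiate \(\mathcal A_{\mathrm f}\) directly and show that the \(\gamma_{\mathrm f}\)-contribution, of size at least \(1/p_{\mathrm f}^2+2(t_{\mathrm f}+d_{\mathrm f})/p_{\mathrm f}^3\), dominates the \(q\)-term. That fallback needs an implicit-derivative bound on \(q_{r,\mathrm f}\). I would obtain it with a polynomial positivity certificate, using Bernstein expansions on rational subrectangles as in the affine-boundary bounds.

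For the localization \(1/5<t_{\mathrm f}<421/1000\), I would use the two stationary equations \(\mathcal A_{\mathrm f}=0\) and \(\mathcal F_{\mathrm f}=0\) together with the physical inequalities \(c_{\mathrm f}<\xi_\ell<\xi_r<\xi_0\) and \(0<q_{r,\mathrm f}<q_{\ell,\mathrm f}<1\). Monotonicity in \(p_{\mathrm f}\) makes \(\mathcal A_{\mathrm f}=0\) define at most one \(p_{\mathrm f}=p^\dagger(t_{\mathrm f},d_{\mathrm f})\). At the two ends, \(p_{\mathrm f}\downarrow t_{\mathrm f}\) and \(p_{\mathrm f}\uparrow1\), I would compute the sign of \(\mathcal A_{\mathrm f}\); where no sign change occurs, no stationary point exists. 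For small \(t_{\mathrm f}\), the term \(-d_{\mathrm f}/t_{\mathrm f}^2\) in \(\gamma_{\mathrm f}\) blows up, which makes \(-\gamma_{\mathrm f}\) large and positive. I expect this to force \(\mathcal A_{\mathrm f}>0\) on the whole branch when \(t_{\mathrm f}\le1/5\), by bounding the \(q\)-term crudely between \(0\) and \(v_{\mathrm f}/e_{\mathrm f}\).

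The upper bound \(t_{\mathrm f}<421/1000\) is the step I expect to be hardest, because there \(\mathcal A_{\mathrm f}\) alone probably does have a root. Near that threshold I would have to bring in \(\mathcal F_{\mathrm f}=0\) or the physical constraint \(\xi_\ell>c_{\mathrm f}\), which is \(\mathcal P_\ell-\delta_{\mathrm f}>(p_{\mathrm f}-t_{\mathrm f})/2\). My first attempt would show that for \(t_{\mathrm f}\ge421/1000\) the stationary \(p^\dagger\) gives \(\mathcal F_{\mathrm f}\ne0\), with a fixed sign. I would do this by interval arithmetic over a grid in \((t_{\mathrm f},d_{\mathrm f})\), enclosing \(p^\dagger\) by bisection (justified by the monotonicity just proved) and enclosing the arctangent or logarithmic integral \(\mathcal I_{\mathrm f}\) in closed form. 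The certified enclosures would go in the supplementary material, in the same way as the other scalar facts.
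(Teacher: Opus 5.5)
Your monotone dependences are right: $-\gamma$ increases in $p$, and $q_r$ moves with $C$. But the step you make sufficient is false. Writing $t,p,d,C,\dots$ for $t_{\mathrm f},p_{\mathrm f},d_{\mathrm f},C_{\mathrm f},\dots$, the definitions give $C=(\delta+d)\xi_\ell/\mathcal P_\ell^2=(t+d)(\mathcal P_\ell-\delta)/p^2$. Your displayed formula differs from this, and $C$ is not rational in $p$, since $\mathcal P_\ell/p=\sqrt{(\delta+d)/(t+d)}$. At $p=t$ one has $\mathcal P_\ell=\delta=t$, so $C=0$, while $C>0$ on the branch. Hence $\partial_pC>0$ near $p\downarrow t$, and $\partial_pC\le0$ fails.

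Everything then rests on your fallback, which you leave to an unspecified certificate. The branch $0<t<p<1$ is not compact, and the terms blow up as $t,p\to0$, so subrectangle Bernstein covers do not settle it by themselves. What is missing is two elementary bounds:
\begin{itemize}
\item $\partial_pC=\frac{t+d}{p^3}\bigl(t+\frac p2-\mathcal P_\ell+\frac{\mathcal P_\ell p}{4(d+\delta)}\bigr)<\frac{t(t+d)}{p^3}$, because $\mathcal P_\ell>p$ and $p<2\delta<2(d+\delta)$.
\item Regarding $\chi_{\mathrm f}=v(v-q_r)/[e(v+q_r)]$ as a function of $C$, $\chi_{\mathrm f}'=-2(v-q_r)^2/[e\{v^2(1-2q_r)+q_r^2\}]\in(-2/e,0)$, using $v<1$.
\end{itemize}
If $\partial_pC<0$, the sign of $\partial_p\mathcal A=p^{-2}+2(t+d)p^{-3}+\chi_{\mathrm f}'\partial_pC$ is immediate. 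Otherwise the two bounds and $t<e$ give $\partial_p\mathcal A>p^{-2}$.

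Your localization plan fails at the lower end and is misdirected at the upper end. For $t\le1/5$, the term $-d/t^2$ in $\gamma$ is cancelled by $(t+d)/p^2$ when $p$ is close to $t$. As $p\downarrow t$, $\gamma\to2d+2/t$ and $C\to0$, so $q_r\to0$ and $\mathcal A\to-2d-2/t+v/e<0$. Near $p=1$, on the other hand, $\gamma<0<\mathcal A$. So $\mathcal A$ has a root for every such $t$, no sign argument on $\mathcal A$ excludes it, and you must use the phase condition $q_r<q_\ell$.

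At a root, $\gamma=\chi_{\mathrm f}>0$. For $t\le1/5$ and $d\in[37/100,3/8]$ this forces $p<7t/4$ (expand $\gamma$ at $p=7t/4+u$). Then $\delta/p>11/14$ and $\mathcal P_\ell/p\le\sqrt{43/38}<16/15$, so $q_\ell=1-\delta/\mathcal P_\ell<3/10$. Now $q\mapsto v^2q(1-q)/(v^2-q^2)$ increases on $(0,v)$ and $\mathcal D_{\mathrm f}(q_\ell)=d\xi_\ell/\mathcal P_\ell^2$. Hence the order $q_r<q_\ell$ is equivalent to $(d+\delta)q_\ell^2>dv^2$. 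But $dv^2/(d+\delta)>\frac9{16}\cdot\frac4{25}=\frac9{100}$, a contradiction.

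At the upper end, the crude bound you intended for small $t$ is what works. From $\chi_{\mathrm f}<v/e$ and $p<1$ you get $\gamma>-d/t^2+3d+1+t$. So $\mathcal A<0$ on the entire branch once $-\frac3{8t^2}+\frac{111}{100}+1+t-\frac{1-t}{1+t}>0$, and this expression is increasing and already positive at $421/1000$. For $t\ge421/1000$ there is no root $p^\dagger$ at all, so no enclosure of $\mathcal F_{\mathrm f}$ is needed.
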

\begin{proof}
Write $\chi_{\mathrm f}(C_{\mathrm f})=
v_{\mathrm f}(v_{\mathrm f}-q_{r,\mathrm f})/
[e_{\mathrm f}(v_{\mathrm f}+q_{r,\mathrm f})]$.
Differentiation gives
\begin{align*}
 \partial_{p_{\mathrm f}}C_{\mathrm f}
 &=\frac{d_{\mathrm f}+t_{\mathrm f}}{p_{\mathrm f}^3}
   \left(t_{\mathrm f}+\frac{p_{\mathrm f}}2-\mathcal P_\ell
        +\frac{\mathcal P_\ell p_{\mathrm f}}
                   {4(d_{\mathrm f}+\delta_{\mathrm f})}\right)
 <\frac{t_{\mathrm f}(d_{\mathrm f}+t_{\mathrm f})}{p_{\mathrm f}^3},\\
 \chi_{\mathrm f}'
 &=-\frac{2(v_{\mathrm f}-q_{r,\mathrm f})^2}
 {e_{\mathrm f}\{v_{\mathrm f}^2(1-2q_{r,\mathrm f})+q_{r,\mathrm f}^2\}},
 \qquad -2/e_{\mathrm f}<\chi_{\mathrm f}'<0.
\end{align*}
If $\partial_{p_{\mathrm f}}C_{\mathrm f}<0$, its contribution to
$\partial_{p_{\mathrm f}}\mathcal A_{\mathrm f}
=p_{\mathrm f}^{-2}+2(d_{\mathrm f}+t_{\mathrm f})p_{\mathrm f}^{-3}
+\chi_{\mathrm f}'\partial_{p_{\mathrm f}}C_{\mathrm f}$ is positive.
Otherwise these two bounds and $e_{\mathrm f}>t_{\mathrm f}$
give the same strict sign.

We exclude $t_{\mathrm f}\le1/5$ using $\gamma_{\mathrm f}>0$.
Expansion at $p_{\mathrm f}=7t_{\mathrm f}/4+u$ has positive coefficients in
$u\ge0$, so $p_{\mathrm f}<7t_{\mathrm f}/4$.
It follows that $\delta_{\mathrm f}/p_{\mathrm f}>11/14$ and
$\mathcal P_\ell/p_{\mathrm f}\le\sqrt{43/38}<16/15$, giving
$q_{\ell,\mathrm f}<59/224<3/10$.
But $d_{\mathrm f}/(d_{\mathrm f}+\delta_{\mathrm f})
\ge74/129>9/16$ and $v_{\mathrm f}\ge2/5$.
The order $q_{r,\mathrm f}<q_{\ell,\mathrm f}$ requires
$q_{\ell,\mathrm f}^2>
d_{\mathrm f}v_{\mathrm f}^2/
(d_{\mathrm f}+\delta_{\mathrm f})$, a contradiction.
For the upper bound, $\gamma_{\mathrm f}<v_{\mathrm f}/e_{\mathrm f}$
and $p_{\mathrm f}<1$ imply
\[
 -\frac3{8t_{\mathrm f}^2}+\frac{111}{100}+1+t_{\mathrm f}
            -\frac{1-t_{\mathrm f}}{1+t_{\mathrm f}}<0.
\]
The expression is increasing and already positive at $421/1000$, so
\[
 \frac15<t_{\mathrm f}<\frac{421}{1000}.
\]
\end{proof}

\begin{lemma}[Validated stationary branch]\label{lem:2fam-cover}
For every $\beta_{\mathrm f}\in[18227/25000,729081/10^6]$, the system
\eqref{eq:2fam-algebraic}--\eqref{eq:2fam-connection} has exactly one
physical stationary point. At $\beta_{\mathrm f}=18227/25000$ it satisfies
$\beta_{\mathrm f}G_{\mathrm f}-(1-\beta_{\mathrm f})M_{\mathrm f}-2<0$.
\end{lemma}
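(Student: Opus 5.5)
Both claims are scalar root-counting problems on a compact parameter box, so the plan is to reduce to two unknowns, prove monotonicity along one direction analytically, and certify the rest with validated interval arithmetic, as in Appendix~\ref{app:2fam}. At fixed \(\beta_{\mathrm f}\), hence fixed \(d_{\mathrm f}=(1-\beta_{\mathrm f})/\beta_{\mathrm f}\in[37/100,3/8]\), the system \eqref{eq:2fam-algebraic}--\eqref{eq:2fam-connection} involves only \((t_{\mathrm f},p_{\mathrm f})\). The quantities \(\mathcal P_\ell,\xi_\ell,C_{\mathrm f},q_{r,\mathrm f}\) are explicit, and \(q_{r,\mathrm f}\) is the unique root in \((0,v_{\mathrm f})\) noted after \eqref{eq:2fam-algebraic}. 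By Lemma~\ref{lem:2fam-domain}, every physical stationary point lies in \(1/5<t_{\mathrm f}<421/1000\), \(t_{\mathrm f}<p_{\mathrm f}<1\). On that strip, \(\partial_{p_{\mathrm f}}\mathcal A_{\mathrm f}>0\), so \(\mathcal A_{\mathrm f}=0\) determines at most one \(p_{\mathrm f}=\pi(t_{\mathrm f})\) for each \(t_{\mathrm f}\).

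First I would locate the zero curve of \(\mathcal A_{\mathrm f}\). Evaluating the sign of \(\mathcal A_{\mathrm f}\) at \(p_{\mathrm f}\downarrow t_{\mathrm f}\) and \(p_{\mathrm f}\uparrow1\) gives the subinterval of \(t_{\mathrm f}\) on which \(\pi\) exists, and the implicit function theorem makes \(\pi\) smooth there. Next I would restrict the connection equation to this curve, \(f(t_{\mathrm f})=\mathcal F_{\mathrm f}(t_{\mathrm f},\pi(t_{\mathrm f}))\). At the same time I would impose the physical conditions \(c_{\mathrm f}<\xi_\ell<\xi_r<\xi_0\), \(q_{r,\mathrm f}<q_{\ell,\mathrm f}<1\), and \(\mathcal D_{\mathrm f}>0\) on \([q_{r,\mathrm f},q_{\ell,\mathrm f}]\), where \(h_{\mathrm f},\xi_r,\xi_0\) come from \eqref{eq:2fam-recovery}.

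Existence follows from a sign change of \(f\) at the two ends of the physical part of the \(t_{\mathrm f}\)-range, certified by interval enclosures. The same endpoints must be checked over the whole \(\beta_{\mathrm f}\)-interval: \(d_{\mathrm f}\) enters as an additional interval variable, so I would subdivide in \(d_{\mathrm f}\) as well. Uniqueness needs \(\mathrm d f/\mathrm d t_{\mathrm f}\ne0\) along the curve. I would compute it by the chain rule, with \(\pi'=-\partial_t\mathcal A_{\mathrm f}/\partial_p\mathcal A_{\mathrm f}\), together with the derivatives of \(q_{r,\mathrm f}\) and of \(\mathcal I_{\mathrm f}\), the latter obtained by differentiating under the integral and at the endpoints. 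A Krawczyk or interval-Newton test on boxes in \((d_{\mathrm f},t_{\mathrm f},p_{\mathrm f})\) then certifies one zero per \(d_{\mathrm f}\)-slab. Boxes where the physical inequalities fail, or where \(\mathcal A_{\mathrm f}\) or \(\mathcal F_{\mathrm f}\) is bounded away from zero, are discarded.

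For the second claim, at \(\beta_{\mathrm f}=18227/25000\) I would enclose the unique point in a tiny box and use \eqref{eq:2fam-quadratures}. These give \(T_{\mathrm f}\) and \(G_{\mathrm f}\), hence \(M_{\mathrm f}\), in closed form, with \(C_{\mathrm f}\log(\xi_r/\xi_\ell)\) as the only transcendental term. An interval evaluation of \(\beta_{\mathrm f}G_{\mathrm f}-(1-\beta_{\mathrm f})M_{\mathrm f}-2\) should then give a strictly negative upper bound. The main obstacle I expect is controlling \(\mathcal I_{\mathrm f}\) and its derivative when \(\mathcal D_{\mathrm f}\) is nearly degenerate, that is, near \(C_{\mathrm f}=1/4\), where its closed form switches between arctangent and logarithmic branches. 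I would first try the uniform expression \(\int du/\mathcal D_{\mathrm f}\) evaluated by validated quadrature with a Taylor remainder, which avoids case splitting, and then refine the boxes adaptively near the degenerate locus.
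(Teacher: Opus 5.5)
Your plan is essentially the paper's proof. It reduces to one dimension via Lemma~\ref{lem:2fam-domain} and the monotonicity of $\mathcal A_{\mathrm f}$ in $p_{\mathrm f}$. It then certifies, on a finite rational covering of the trial interval, that the reduced derivative $\partial_{t_{\mathrm f}}\mathcal F_{\mathrm f}-\partial_{p_{\mathrm f}}\mathcal F_{\mathrm f}\,\partial_{t_{\mathrm f}}\mathcal A_{\mathrm f}/\partial_{p_{\mathrm f}}\mathcal A_{\mathrm f}$ is positive on the retained $t_{\mathrm f}$-intervals, with opposite signs at their two outer ends. The paper checks that these intervals are contiguous, which your sign-change argument also needs. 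Finally it encloses $\beta_{\mathrm f}G_{\mathrm f}-(1-\beta_{\mathrm f})M_{\mathrm f}-2$ at the lower trial value.

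There are two minor differences. First, the paper finds $C_{\mathrm f}>1/4$ on every retained interval, so the near-degenerate case you anticipate never arises and the arctangent antiderivative suffices. Second, the paper explicitly rebuilds the curve from the root to confirm it lies in $\mathfrak F$: the middle arc is $\mathcal P(\xi(u))=\sqrt{d_{\mathrm f}\xi(u)/\mathcal D_{\mathrm f}(u)}$, its contact values follow from $\mathcal D_{\mathrm f}(q_{\ell,\mathrm f})=d_{\mathrm f}\xi_\ell/\mathcal P_\ell^2$ and the connection equation, and it is extended by the two affine tangents. You should state this step rather than leave it implicit in your list of inequalities.
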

\begin{proof}
We first isolate the reduced root and then reconstruct its curve.

\textit{The reduced root.} We use Lemma~\ref{lem:2fam-domain} to reduce
the covering problem to one dimension. Strict monotonicity of
$\mathcal A_{\mathrm f}$ in $p_{\mathrm f}$
encloses its unique root over each rational $t_{\mathrm f}$ interval.
At $p_{\mathrm f}=t_{\mathrm f}$, its sign is negative; signs at bracket endpoints
exclude the remaining price domain. Each discarded interval either has
no algebraic root, violates a physical phase condition, or has a
nonzero enclosure of $\mathcal F_{\mathrm f}$.
Every interval that passes the first two tests has an enclosure of
$C_{\mathrm f}$ above $1/4$, so $\mathcal D_{\mathrm f}$ has no real root
there and the enclosures of $\mathcal F_{\mathrm f}$ evaluate
$\mathcal I_{\mathrm f}$ through its arctan antiderivative.
The retained intervals are contiguous and have
\[
 \partial_{t_{\mathrm f}}\mathcal F_{\mathrm f}
 -\partial_{p_{\mathrm f}}\mathcal F_{\mathrm f}
   \frac{\partial_{t_{\mathrm f}}\mathcal A_{\mathrm f}}
        {\partial_{p_{\mathrm f}}\mathcal A_{\mathrm f}}>0.
\]
Opposite signs at the two outer endpoints give existence and uniqueness
along this branch. A finite rational covering gives these sign enclosures
throughout the trial interval. The covering and its interval bounds are
provided in the supplementary material.

\textit{Reconstructing the curve.} Interval evaluation of
\eqref{eq:2fam-algebraic} and \eqref{eq:2fam-recovery} over the
retained root enclosure gives
\[
 c_{\mathrm f}<\xi_\ell<\xi_r<\xi_0,\qquad
 0<q_{r,\mathrm f}<q_{\ell,\mathrm f}<1,\qquad
 0<h_{\mathrm f}<1,\qquad C_{\mathrm f}>\tfrac14;
\]
the corresponding interval bounds are supplied with the covering.
In particular, $\mathcal D_{\mathrm f}(u)>0$ for every real $u$.
For $q_{r,\mathrm f}\le u\le q_{\ell,\mathrm f}$, we define
\[
 \xi(u)=\xi_\ell\exp\!\left(
   \int_u^{q_{\ell,\mathrm f}}\frac{\mathrm dv}{\mathcal D_{\mathrm f}(v)}
                         \right),\qquad
 \mathcal P(\xi(u))
   =\sqrt{\frac{d_{\mathrm f}\xi(u)}{\mathcal D_{\mathrm f}(u)}}.
\]
Differentiation gives
\[
 \frac{\mathrm d\xi}{\mathrm du}=-\frac{\xi}{\mathcal D_{\mathrm f}(u)},\qquad
 \mathcal P'=\frac{u\mathcal P}{\xi},\qquad
 \mathcal P''=-\frac{C_{\mathrm f}\mathcal P}{\xi^2}<0.
\]
The identity
$\mathcal D_{\mathrm f}(q_{\ell,\mathrm f})
 =d_{\mathrm f}\xi_\ell/\mathcal P_\ell^2$ gives
$\mathcal P(\xi_\ell)=\mathcal P_\ell$ and
$\mathcal P'(\xi_\ell)=1$. Since
$\delta_{\mathrm f}=\mathcal P_\ell(1-q_{\ell,\mathrm f})$,
the connection equation gives
\[
 \mathcal P(\xi(q_{r,\mathrm f}))=\mathcal P_r,\qquad
 \xi(q_{r,\mathrm f})
 =\frac{\mathcal P_r^2\mathcal D_{\mathrm f}(q_{r,\mathrm f})}
        {d_{\mathrm f}}=\xi_r,\qquad
 \mathcal P'(\xi_r)=h_{\mathrm f}.
\]
We extend the curve by $\xi+\delta_{\mathrm f}$ on
$[c_{\mathrm f},\xi_\ell]$ and by
$h_{\mathrm f}\xi+e_{\mathrm f}$ on $[\xi_r,\xi_0]$.
The resulting curve is positive, continuously differentiable, concave,
and Lipschitz, with endpoint values $p_{\mathrm f}$ and $1$.
Both obstacles are tangent lines to this concave curve, so it lies
below both. Substitution into \eqref{eq:2fam-stationarity} gives
\[
 \partial_{p_{\mathrm f}}K_{\mathrm f}=0,\qquad
 \partial_{h_{\mathrm f}}K_{\mathrm f}
 =\frac{d_{\mathrm f}
    \{v_{\mathrm f}^2\mathcal D_{\mathrm f}(q_{r,\mathrm f})
       -C_{\mathrm f}q_{r,\mathrm f}^2\}}
       {e_{\mathrm f}h_{\mathrm f}^2
        \mathcal D_{\mathrm f}(q_{r,\mathrm f})}=0,\qquad
 2\partial_{t_{\mathrm f}}K_{\mathrm f}
       =\mathcal A_{\mathrm f}(t_{\mathrm f},p_{\mathrm f})=0.
\]

\textit{The lower trial value.} At the lower endpoint the covering gives
\[
 \beta_{\mathrm f}G_{\mathrm f}-(1-\beta_{\mathrm f})M_{\mathrm f}-2
     \in[-2.116\cdot10^{-6},-1.157\cdot10^{-6}]<0.
\]
\end{proof}

\begin{lemma}[A rational comparison curve]\label{lem:2fam-witness}
The class \(\mathfrak F\) contains a rational concave polygon
with \(R_{\mathrm f}<729081/10^6\).
\end{lemma}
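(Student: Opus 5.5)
Our plan is to exhibit the polygon explicitly and to verify $R_{\mathrm f}<729081/10^6$ in exact rational arithmetic. We take parameters $(t_{\mathrm f},p_{\mathrm f},\xi_0)$ to be rational approximations of the physical stationary point of Lemma~\ref{lem:2fam-cover}. The trial value is $\beta_{\mathrm f}=729081/10^6$, with $t_{\mathrm f}\in(1/5,421/1000)$ by Lemma~\ref{lem:2fam-domain}. We then set $h_{\mathrm f}=v_{\mathrm f}/\xi_0$ rational by choosing $\xi_0$ rational. Next we choose rational nodes $c_{\mathrm f}=\xi^{(0)}<\xi^{(1)}<\dots<\xi^{(n)}=\xi_0$ and rational values $\mathcal P^{(j)}$ close to the three-arc minimizer of Lemma~\ref{lem:2fam-shape}. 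The first node sits on the line $\xi+\delta_{\mathrm f}$ up to about $\xi_\ell$, then the nodes sample the Euler arc, then they follow $h_{\mathrm f}\xi+e_{\mathrm f}$ to $(\xi_0,1)$. The polygon $\mathcal P$ is the piecewise linear interpolant. The endpoint conditions $\mathcal P(c_{\mathrm f})=p_{\mathrm f}$ and $\mathcal P(\xi_0)=1$ are imposed exactly, which is why the first and last pieces lie on the obstacles.

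Membership in $\mathfrak F$ reduces to finitely many rational inequalities. Concavity holds if successive slopes $\sigma_j=(\mathcal P^{(j+1)}-\mathcal P^{(j)})/(\xi^{(j+1)}-\xi^{(j)})$ are nonincreasing. Positivity is immediate because $\mathcal P\ge p_{\mathrm f}>0$. For the obstacle bound, both $\mathcal P$ and $U_{\mathrm f}$ are concave and piecewise affine, so it suffices to check $\mathcal P\le U_{\mathrm f}$ at the polygon nodes. We also check it at the kink of $U_{\mathrm f}$, where $\mathcal P$ is affine and $U_{\mathrm f}$ is concave. To guarantee these checks, we will round the sampled arc values slightly downward, giving a little slack.

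The functionals are evaluated exactly on each affine piece $\mathcal P=\alpha+\sigma\xi$. There $\int\mathcal P^{-2}\,\mathrm d\xi=[-1/(\sigma\mathcal P)]$ is rational. Likewise
\[
\int\xi\sigma^2/\mathcal P^2\,\mathrm d\xi=\log(\mathcal P_{j+1}/\mathcal P_j)+\alpha(1/\mathcal P_{j+1}-1/\mathcal P_j),
\]
which is rational plus a logarithm. So $T_{\mathrm f}$ is rational, and $Q_{\mathrm f}$ and hence $G_{\mathrm f}$ are rational plus a sum of logs. The logs telescope: $\sum_j\log(\mathcal P_{j+1}/\mathcal P_j)=\log(1/p_{\mathrm f})$, and this combines with $-N_{\mathrm f}\log p_{\mathrm f}$ in $G_{\mathrm f}$ to cancel exactly. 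Thus $R_{\mathrm f}$ is rational and the inequality is a single exact comparison.

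The main obstacle is accuracy. The target $729081/10^6$ leaves only about $10^{-6}$ of room above $r_{\mathrm{fam}}\approx0.729080$. The polygon's energy error is of order $h^2$ in the mesh size, and the rounding of parameters costs accuracy as well. We would first try a few hundred nodes, concentrated where the arc curvature $C_{\mathrm f}/\xi^2$ is large. We would check the resulting rational bound numerically and refine until the margin is safely positive. The final data and the exact evaluation would then be recorded in the supplementary material, as is done for the other certificates.
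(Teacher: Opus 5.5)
Your proposal is correct and matches the paper's proof. The paper also exhibits an explicit rational concave polygon: \(128\) interior segments approximating the three-arc minimizer, with its data in Appendix~\ref{app:two-finite} and the supplement. It then certifies \(R_{\mathrm f}<729081/10^6\) by exact segment integration, which gives an exactly rational value for the reason you identify: each segment contributes \((X_{j+1}-X_j)/(P_jP_{j+1})\) to \(T_{\mathrm f}\), and the telescoping logarithms in \(Q_{\mathrm f}\) cancel \(-N_{\mathrm f}\log p_{\mathrm f}\) in \(G_{\mathrm f}\).

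One small correction to your obstacle argument: between nodes \(\mathcal P\) is affine and \(U_{\mathrm f}\) is concave, so \(U_{\mathrm f}-\mathcal P\) is concave on each segment. The node checks alone therefore suffice, and the extra check at the kink is harmless but unnecessary.
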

\begin{proof}
We use the rational polygon specified in Appendix~\ref{app:two-finite}.
Exact segment integration with its rational vertices gives
\[
R_{\mathrm f}<729081/10^6.
\]
\end{proof}

\begin{proof}[Proof of Proposition~\ref{prop:2fam-optimum}]
We first show that stationary curves solve the reduced system, then
place the family minimum in the validated trial interval, and use
stationary-branch uniqueness to identify it globally.

\textit{Stationary curves.} We call a three-arc class curve interior
stationary when both contacts have positive length and
\eqref{eq:2fam-stationarity} holds, and we show that it gives a
physical stationary point. By the proof of Lemma~\ref{lem:2fam-shape},
the curve meets its contacts with slopes $1$ and $h_{\mathrm f}$, and on
the free interval $\omega_{\mathrm f}=\xi\mathcal P'/\mathcal P$ satisfies
$\omega_{\mathrm f}'=-d_{\mathrm f}/\mathcal P^2$ and decreases from
$q_{\ell,\mathrm f}$ to $q_{r,\mathrm f}$ inside $(0,1)$.
Multiplying the first integral in \eqref{eq:2fam-euler} by
$\xi/\mathcal P^2$ gives
$\mathcal D_{\mathrm f}(\omega_{\mathrm f})=d_{\mathrm f}\xi/\mathcal P^2>0$,
so
\[
 \frac{\mathrm d\xi}{\xi}
   =-\frac{\mathrm d\omega_{\mathrm f}}{\mathcal D_{\mathrm f}(\omega_{\mathrm f})}.
\]
Integration from $\xi_\ell$ to $\xi_r$ gives
$\log(\xi_r/\xi_\ell)=\mathcal I_{\mathrm f}$. At both contacts
$\mathcal P^2=d_{\mathrm f}\xi/\mathcal D_{\mathrm f}(\omega_{\mathrm f})$,
and $\delta_{\mathrm f}=\mathcal P_\ell(1-q_{\ell,\mathrm f})$,
$e_{\mathrm f}=\mathcal P_r(1-q_{r,\mathrm f})$; taking the logarithm of
$\mathcal P_r^2/\mathcal P_\ell^2$ gives \eqref{eq:2fam-connection}.
The first equation of \eqref{eq:2fam-stationarity} and the first
integral at $\xi_\ell$ give the first line of \eqref{eq:2fam-algebraic}.
The first integral at $\xi_r$ gives the formula for $h_{\mathrm f}$ in
\eqref{eq:2fam-recovery}, whose other three formulas restate the
definitions of $\mathcal P_r$, $q_{r,\mathrm f}$, and $h_{\mathrm f}$.
The substitution in the proof of Lemma~\ref{lem:2fam-cover} then turns
the other two equations of \eqref{eq:2fam-stationarity} into the
equation for $q_{r,\mathrm f}$ and $\mathcal A_{\mathrm f}=0$. That
equation reads
$C_{\mathrm f}(v_{\mathrm f}^2-q_{r,\mathrm f}^2)
 =v_{\mathrm f}^2q_{r,\mathrm f}(1-q_{r,\mathrm f})>0$,
so $q_{r,\mathrm f}<v_{\mathrm f}$ and $q_{r,\mathrm f}$ is the root in
$(0,v_{\mathrm f})$. The contacts give
$c_{\mathrm f}<\xi_\ell<\xi_r<\xi_0$, the decrease of
$\omega_{\mathrm f}$ gives $0<q_{r,\mathrm f}<q_{\ell,\mathrm f}<1$, and
$\mathcal D_{\mathrm f}(\omega_{\mathrm f})>0$ gives positive
$\mathcal D_{\mathrm f}$ on $[q_{r,\mathrm f},q_{\ell,\mathrm f}]$.
The reconstruction in the proof of Lemma~\ref{lem:2fam-cover} integrates
the same two identities and returns this curve.

\textit{The lower trial value.} Set
$\beta_{\mathrm f}=18227/25000$.
If $K_{\mathrm f}$ had a positive value, Lemma~\ref{lem:2fam-compact}
would give an attained positive maximum.
Lemmas~\ref{lem:2fam-shape} and~\ref{lem:2fam-boundary} would make
it an interior stationary three-arc curve, hence a physical stationary
point by the first step, contradicting Lemma~\ref{lem:2fam-cover}.
Equality $R_{\mathrm f}=\beta_{\mathrm f}$
is also impossible: it would maximize $K_{\mathrm f}$ at zero and the
same boundary and stationary argument would apply.

\textit{Attainment and uniqueness.} Lemma~\ref{lem:2fam-witness}
provides a curve below $\beta_{\mathrm f}=729081/10^6$.
We choose a minimizing sequence below this curve's ratio.
Its ratios are below $3/4$, so Lemma~\ref{lem:2fam-compact} places the
parameters in $1/32<m_{\mathrm f}<3/4$, $\xi_0<600$, $p_{\mathrm f}<1$.
The extraction in the proof of that lemma applies to this sequence: a
subsequence converges to a curve in $\mathfrak F$ along which
$R_{\mathrm f}$ converges, and its ratio $r_{\mathrm{fam}}<3/4$ excludes
$p_{\mathrm f}=1$, where $R_{\mathrm f}\ge3/4$. So the minimum
$r_{\mathrm{fam}}$ is attained.
At the trial value $\beta_{\mathrm f}=r_{\mathrm{fam}}$, every minimizing
curve maximizes $K_{\mathrm f}$ at zero. The shape and boundary lemmas
make it an interior stationary three-arc curve, hence a physical
stationary point by the first step. Lemma~\ref{lem:2fam-cover} makes
this point unique, and \eqref{eq:2fam-recovery} fixes the parameter
triple. By \eqref{eq:2fam-comparison}, a maximizer of $K_{\mathrm f}$
minimizes $E_{\mathrm f}$ at its own parameters, so the uniqueness in
Lemma~\ref{lem:2fam-shape} fixes the curve.

\textit{Self-consistency.} Suppose a physical stationary solution in
the trial interval has $K_{\mathrm f}=0$.
If a different curve had $R_{\mathrm f}<\beta_{\mathrm f}$, the attained
positive maximum of $K_{\mathrm f}$ would give another physical
stationary point, by Lemmas~\ref{lem:2fam-shape}
and~\ref{lem:2fam-boundary} and the first step.
Lemma~\ref{lem:2fam-cover} excludes this possibility.
Thus every self-consistent solution is the family minimum.
Its positive contacts put it in the endpoint-slope subclass, and
\[
 \inf_{\mathfrak F}R_{\mathrm f}
 =\inf_{\substack{\mathfrak F\text{ with endpoint slopes }1,h_{\mathrm f}}}
       R_{\mathrm f}
 =r_{\mathrm{fam}}\in
       \left(\frac{18227}{25000},\frac{729081}{10^6}\right).
\]
\end{proof}

\subsection{A price certificate with either reference side fixed}
\label{app:2fam-one-sided}

We now fix the minimizing curve of Proposition~\ref{prop:2fam-optimum}, put
$\beta_{\mathrm f}=r_{\mathrm{fam}}$, and denote its body laws by
$(Z^*,Y^*)$. This subsection uses the limiting model with escaping buyer
first moment one per unit. For ordered finite-mean body vectors $Z$ and
seller vectors $Y$, drawn independently, define
\begin{align}
 \mathcal O_{\mathrm f}(Z,Y)
   &=\sum_{i=1}^2\mathbb E[Y_i+1+(Z_i-Y_i)_+],\notag\\
 \mathcal W_{\mathrm f,\pi}(Z,Y)
   &=\sum_{i=1}^2\mathbb EY_i
     +\int_0^\infty\sum_{i=1}^2
       \mathbb E\!\left[
          \mathbf1_{Y_i<s}\{1+(Z_i-Y_i)\mathbf1_{Z_i\ge s}\}
       \right]\,\pi(\mathrm ds).
 \label{eq:2fam-limiting-welfare}
\end{align}
The strict seller convention agrees with the vanishing positive seller
shift; the atomless pricing measure also makes either seller convention
equivalent after averaging.

Let $b_{\ell,\mathrm f}=s_{\mathrm f}(\xi_\ell)$ and
$b_{r,\mathrm f}=s_{\mathrm f}(\xi_r)$.
Define the low-price and tail masses and the low-price coefficients by
\begin{align}
 \pi_{a,\mathrm f}&=\beta_{\mathrm f}
      \{1-a_{\mathrm f}(d_{\mathrm f}+t_{\mathrm f})\},
 &r_{a,\mathrm f}&=\beta_{\mathrm f}(d_{\mathrm f}+t_{\mathrm f}),
 \notag\\
 \pi_{T,\mathrm f}&=1-\pi_{a,\mathrm f}
  -\beta_{\mathrm f}C_{\mathrm f}
    \{2-c_{\mathrm f}/\xi_\ell-\xi_r/\xi_0+\log(\xi_r/\xi_\ell)\},
 \notag\\
 \lambda_{0,\mathrm f}&=2\pi_{a,\mathrm f}/a_{\mathrm f}-r_{a,\mathrm f},
 &\lambda_{1,\mathrm f}&=2(r_{a,\mathrm f}-\pi_{a,\mathrm f}/a_{\mathrm f})/a_{\mathrm f}.
 \label{eq:2fam-price-coefficients}
\end{align}
Set $\overline p_{\mathrm f}=b_{\mathrm f}+\pi_{T,\mathrm f}/(1-\beta_{\mathrm f})$.
The price density is zero outside $(0,\overline p_{\mathrm f})$ and is
\begin{equation}
 f_{\pi,\mathrm f}(s)=
 \begin{cases}
 \lambda_{0,\mathrm f}+\lambda_{1,\mathrm f}s,&0<s<a_{\mathrm f},\\
 \beta_{\mathrm f}C_{\mathrm f}\mathcal P^2/\xi_\ell,
       &a_{\mathrm f}<s<b_{\ell,\mathrm f},\\
 \beta_{\mathrm f}C_{\mathrm f}\mathcal P^2/\xi,
       &b_{\ell,\mathrm f}<s<b_{r,\mathrm f},\\
 \beta_{\mathrm f}C_{\mathrm f}\xi_r\mathcal P^2/\xi^2,
       &b_{r,\mathrm f}<s<b_{\mathrm f},\\
 (1-\beta_{\mathrm f}),&b_{\mathrm f}<s<\overline p_{\mathrm f},
 \end{cases}
 \label{eq:2fam-price-density}
\end{equation}
where $\xi=s_{\mathrm f}^{-1}(s)$ on the body interval.
Its coefficients are positive and its integral is one.
The resulting probability is $\pi_{\mathrm f}(\mathrm ds)=f_{\pi,\mathrm f}(s)\,\mathrm ds$.

For the reference states, let $F_i$ be the seller CDF, $H_i$ the buyer
body survival, $A_i(s)=\mathbb E(s-Y_i^*)_+$, and
$\bar L_i(s)=1+\mathbb E(Z_i^*-s)_+$.
Set
\begin{gather}
 \alpha_i(s)=\int_0^sF_i(r)f_{\pi,\mathrm f}(r)\,\mathrm dr,
 \qquad \Lambda_i(s)=\int_s^\infty H_i(r)f_{\pi,\mathrm f}(r)\,\mathrm dr,\notag\\
 g_i^S=f_{\pi,\mathrm f}\bar L_i+\Lambda_i
                  -\beta_{\mathrm f}H_i-(1-\beta_{\mathrm f}),\qquad
 g_i^B=f_{\pi,\mathrm f}A_i+\alpha_i-\beta_{\mathrm f}F_i.
 \label{eq:2fam-state-gradients}
\end{gather}
The seller and buyer measure potentials, respectively, are
\begin{equation}
 \psi_{1,\mathrm f}(s)=-\int_{b_{\ell,\mathrm f}}^s g_1^S,\quad
 \psi_{2,\mathrm f}(s)=-\int_{b_{\mathrm f}}^s g_2^S,\qquad
 \Upsilon_{1,\mathrm f}(z)=\int_{b_{\ell,\mathrm f}}^z g_1^B,\quad
 \Upsilon_{2,\mathrm f}(z)=\int_{a_{\mathrm f}}^z g_2^B.
 \label{eq:2fam-potentials}
\end{equation}

\begin{proposition}[Separate global responses at the reference]
\label{prop:2fam-separate-responses}
Let $(Z^*,Y^*)$ be the minimizing body pair from Proposition~\ref{prop:2fam-optimum}, and set $\beta_{\mathrm f}=r_{\mathrm{fam}}$.
For the normalized welfare quantities in \eqref{eq:2fam-limiting-welfare}, the
probability $\pi_{\mathrm f}$ defined by \eqref{eq:2fam-price-density} satisfies
\begin{align*}
 \mathcal W_{\mathrm f,\pi_{\mathrm f}}(Z^*,Y)
     &\ge\beta_{\mathrm f}\mathcal O_{\mathrm f}(Z^*,Y)
       &&\text{for every finite-mean }0\le Y_1\le Y_2,\\
 \mathcal W_{\mathrm f,\pi_{\mathrm f}}(Z,Y^*)
     &\ge\beta_{\mathrm f}\mathcal O_{\mathrm f}(Z,Y^*)
       &&\text{for every finite-mean }Z_1\ge Z_2\ge0.
\end{align*}
For every $z\ge0$,
\begin{equation}
 \Upsilon_{2,\mathrm f}(z)+\inf_{w\ge z}\Upsilon_{1,\mathrm f}(w)\ge0,
 \qquad
 \Upsilon_{2,\mathrm f}(z)+\inf_{w\ge z}\Upsilon_{1,\mathrm f}(w)=0
       \quad\Longleftrightarrow\quad z=a_{\mathrm f}.
 \label{eq:2fam-second-gap}
\end{equation}
Each welfare inequality fixes the entire opposite reference side and
keeps the escaping buyer first moments equal to one.
\end{proposition}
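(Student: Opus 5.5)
The plan is to reduce both welfare inequalities to pointwise statements about the potentials, exactly as in the single-unit pricing argument of Lemma~\ref{lem:pricing}. Fix the buyers at $Z^*$ and let the seller vary. By independence, $\mathcal W_{\mathrm f,\pi_{\mathrm f}}(Z^*,Y)-\beta_{\mathrm f}\mathcal O_{\mathrm f}(Z^*,Y)$ is the expectation over $(Y_1,Y_2)$ of a sum $\Psi_1(Y_1)+\Psi_2(Y_2)$. Here $\Psi_i(s)$ is the conditional price gain $\int[\mathbf 1_{s<z}(1+\mathbb E(Z^*_i-s)\mathbf 1_{Z^*_i\ge z})]\pi_{\mathrm f}(\mathrm dz)$, plus $s$, minus $\beta_{\mathrm f}(s+\bar L_i(s))$. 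Differentiating in $s$ shows $\Psi_i'=-g_i^S$ almost everywhere. The ingredients are the identity $\frac{\mathrm d}{\mathrm ds}\bar L_i=-H_i$ and the kernel computation used in the proof of Theorem~\ref{thm:two-pricing} (a unit at price $z>s$ contributes $g_i(s,z)$). So $\Psi_i$ equals $\psi_{i,\mathrm f}$ of \eqref{eq:2fam-potentials} up to an additive constant. The first step is to evaluate these constants at the anchors $b_{\ell,\mathrm f}$ and $b_{\mathrm f}$ from the explicit density \eqref{eq:2fam-price-density} and the quadratures \eqref{eq:2fam-quadratures}. We expect, and will verify, that they vanish. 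Then the seller inequality reduces to $\psi_{1,\mathrm f}(s_1)+\psi_{2,\mathrm f}(s_2)\ge0$ for $s_1\le s_2$. Following the compensation device of Section~\ref{sec:two-pricing}, this is equivalent to $\psi_{1,\mathrm f}(s_1)+\inf_{w\ge s_1}\psi_{2,\mathrm f}(w)\ge0$.

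Symmetrically, fix $Y^*$ and vary the buyer. For a buyer value $z$, the conditional welfare excess of unit $i$ has derivative $g_i^B$. The reason is that raising $z$ adds the new trades at prices below $z$ (the terms $\alpha_i$ and $f_{\pi}A_i$), while the target grows at rate $\beta_{\mathrm f}F_i$. Hence that excess is $\Upsilon_{i,\mathrm f}$ plus a constant, which we again check is zero at the anchors. The ordering $Z_1\ge Z_2$ turns the buyer inequality into exactly \eqref{eq:2fam-second-gap}. So both welfare claims follow once the two ordered-sum inequalities hold. The equality characterization in \eqref{eq:2fam-second-gap} then comes from tracking where each inequality below is strict.

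For the sign analysis, split $[0,\infty)$ into the five density regimes of \eqref{eq:2fam-price-density}. On the body interval, change variables to $\xi$ via $s_{\mathrm f}$ of \eqref{eq:2fam-physical-time}. There $F_1,H_1$ are given by \eqref{eq:2fam-laws}, and the Euler relation \eqref{eq:2fam-euler} together with the constant $C_{\mathrm f}$ should make $g_1^S$ and $g_1^B$ vanish identically on the free arc; this is complementary slackness, as in Lemma~\ref{lem:reference}. On the contact segments, $g_i^S,g_i^B$ become explicit rational expressions in $\xi$. Their signs follow from the stationarity equations \eqref{eq:2fam-stationarity} and the tangency identities at $\xi_\ell,\xi_r$. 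Below $a_{\mathrm f}$, the affine density with coefficients $\lambda_{0,\mathrm f},\lambda_{1,\mathrm f}$ is chosen so that the unit-$2$ potentials are quadratic with a double zero at $a_{\mathrm f}$. Above $b_{\mathrm f}$, the tail density $1-\beta_{\mathrm f}$ makes $g^S$ nonnegative and $g^B$ vanish, in the spirit of the tail realization in Theorem~\ref{thm:two-pricing}. The upshot should be that each $\psi_{i,\mathrm f}$ and $\Upsilon_{i,\mathrm f}$ is monotone on each regime with controlled sign, so the infima over $w\ge z$ are attained at regime endpoints. That leaves finitely many scalar inequalities at $a_{\mathrm f},b_{\ell,\mathrm f},b_{r,\mathrm f},b_{\mathrm f}$.

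The main obstacle is this last step. The remaining endpoint inequalities involve the transcendental data $\log(\xi_r/\xi_\ell)$, $C_{\mathrm f}$ and $q_{r,\mathrm f}$, which are known only through the validated enclosure of Lemma~\ref{lem:2fam-cover}. First we will try to prove the signs symbolically by substituting the stationarity and connection relations \eqref{eq:2fam-algebraic}--\eqref{eq:2fam-connection}. Whatever does not close symbolically we will certify by interval evaluation over the retained root enclosure, paying particular attention to the strict positivity of \eqref{eq:2fam-second-gap} away from $z=a_{\mathrm f}$ near the two contacts, where the potentials are tangent to zero.
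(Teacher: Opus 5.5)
Your reduction is the paper's. The derivatives $\Psi_i'=-g_i^S$ and $g_i^B$ are correct, and the orderings do turn both welfare claims into the ordered potential sums and \eqref{eq:2fam-second-gap}. The gap is in the normalization and at the tight points, which carry the certificate.

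\emph{The normalization step you plan first fails.} With the anchors of \eqref{eq:2fam-potentials}, the two additive constants cannot both vanish. On $[b_{\mathrm f},\infty)$ both buyer bodies are exhausted ($H_i=\Lambda_i=0$, $\bar L_i=1$), so $\Psi_1=\Psi_2$ and $g_1^S=g_2^S$ there. But $\psi_{1,\mathrm f}-\psi_{2,\mathrm f}\equiv\psi_{1,\mathrm f}(b_{\mathrm f})>0$ on that ray, because $g_1^S$ vanishes at $b_{r,\mathrm f}$ and decreases strictly along the final contact, which has positive length. Only the sum of the constants is zero. That does not follow from the density and quadratures alone; it is the identity $K_{\mathrm f}=0$ at $\beta_{\mathrm f}=r_{\mathrm{fam}}$.

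The paper never computes a constant. At $(Z^*,Y^*)$ every price in the support of $\pi_{\mathrm f}$ has gain exactly $2$, so $\mathcal W_{\mathrm f,\pi_{\mathrm f}}-\beta_{\mathrm f}\mathcal O_{\mathrm f}=0$ there. Each reference side also puts mass only where its ordered sum vanishes:
\begin{itemize}
\item on the seller side, $\psi_{1,\mathrm f}=0$ on $[a_{\mathrm f},b_{r,\mathrm f}]$, $\psi_{2,\mathrm f}(b_{\mathrm f})=0$, and the sum cancels at the common seller atom at $0$;
\item on the buyer side, $\Upsilon_{1,\mathrm f}=0$ on $[b_{\ell,\mathrm f},b_{\mathrm f}]$ and $\Upsilon_{2,\mathrm f}(a_{\mathrm f})=0$.
\end{itemize}
Hence any one-sided change of $\mathcal W_{\mathrm f,\pi_{\mathrm f}}-\beta_{\mathrm f}\mathcal O_{\mathrm f}$ is exactly the expectation of the ordered potential sum. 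Linear growth of the potentials then extends this to finite means.

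\emph{The tight points need exact identities.} The ordered sums vanish on the whole reference support, so these zeros cannot be certified by interval enclosures; the paper uses interval bounds only for strict coefficient inequalities. Your critical points $a_{\mathrm f},b_{\ell,\mathrm f},b_{r,\mathrm f},b_{\mathrm f}$ omit $s=0$, where the seller inequality is an equality:
$\psi_{1,\mathrm f}(0)+\psi_{2,\mathrm f}(0)=\beta_{\mathrm f}[K_{\mathrm f}-(1+t_{\mathrm f})\partial_{t_{\mathrm f}}K_{\mathrm f}]=0$.
This is outer stationarity in the common initial mass, and it is the missing idea. The low interval is then closed because $\psi_{1,\mathrm f}$ is nonincreasing there and $\psi_{1,\mathrm f}+\psi_{2,\mathrm f}$ increases from this zero, not by endpoint infima. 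The zero sets are also larger than you state: $g_1^S$ vanishes on the initial contact too, and $g_1^B$ on the final contact.

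\emph{Your tail picture has the signs wrong.} In fact $g_i^S=0$ on $(b_{\mathrm f},\overline p_{\mathrm f})$ and $g_i^S=-(1-\beta_{\mathrm f})$ beyond, while $g^B$ does not vanish. Instead $\Upsilon_{1,\mathrm f}(b_{\mathrm f}+w)=(1-\beta_{\mathrm f})w^2$, and $\Upsilon_{2,\mathrm f}(b_{\mathrm f}+w)$ is a quadratic in $w$ that need not be monotone. Positivity of their sum therefore needs a discriminant bound rather than endpoint values. Above $\overline p_{\mathrm f}$ the two buyer slopes cancel exactly, again because $K_{\mathrm f}=0$.
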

\begin{proof}
We prove nonnegativity of the two ordered sums of measure potentials;
integrating them then gives the separate comparisons.

\textit{Body and low values.} The contact and outer stationary equations
give $g_1^S=0$ on $[a_{\mathrm f},b_{r,\mathrm f}]$,
$g_1^B=0$ on $[b_{\ell,\mathrm f},b_{\mathrm f}]$, and
$f_{\pi,\mathrm f}(b_{\mathrm f}-)=(1-\beta_{\mathrm f})$.
On the initial contact,
$g_1^B=(\beta_{\mathrm f}N_{\mathrm f}C_{\mathrm f}/\xi_\ell)
(\mathcal P^2-\mathcal P_\ell^2)<0$; on the final contact,
$(g_1^S)'=f_{\pi,\mathrm f}'\bar L_1-2f_{\pi,\mathrm f}H_1<0$.
Thus $\psi_{1,\mathrm f}\ge0$ and $\Upsilon_{1,\mathrm f}\ge0$ on the body interval, with
their respective zero intervals just identified.
The second-unit formulas are
\[
 g_2^S=f_{\pi,\mathrm f}-(1-\beta_{\mathrm f})>0,\qquad
 g_2^B=m_{\mathrm f}
 \left\{\pi_{a,\mathrm f}+\int_{a_{\mathrm f}}^sf_{\pi,\mathrm f}
                          +sf_{\pi,\mathrm f}(s)-\beta_{\mathrm f}\right\}.
\]
The positive second-unit kernel gives
$\Upsilon_{2,\mathrm f}(s)>0$ for $a_{\mathrm f}<s\le b_{\mathrm f}$, while
$\psi_{2,\mathrm f}(s)>0$ for $a_{\mathrm f}\le s<b_{\mathrm f}$.
Below $a_{\mathrm f}$, direct integration gives
\begin{align*}
 \Upsilon_{2,\mathrm f}(z)&=m_{\mathrm f}(a_{\mathrm f}-z)^2
  \{\lambda_{0,\mathrm f}+\tfrac12\lambda_{1,\mathrm f}(z+2a_{\mathrm f})\}>0,\\
 \psi_{1,\mathrm f}(s)&=(a_{\mathrm f}-s)^2
  \{r_{a,\mathrm f}-\lambda_{1,\mathrm f}/(2p_{\mathrm f})
                          -\lambda_{1,\mathrm f}(a_{\mathrm f}-s)/2\}.
\end{align*}
Here $\Upsilon_{1,\mathrm f}(z)=\Upsilon_{2,\mathrm f}(z)+\Upsilon_{1,\mathrm f}(a_{\mathrm f})$.
The strict coefficient inequalities make $\psi_{1,\mathrm f}$ nonincreasing on this low
interval and $-g_1^S-g_2^S>0$ there.
The normalization at zero follows from the mass variation:
\[
 \frac{\psi_{1,\mathrm f}(0)+\psi_{2,\mathrm f}(0)}{\beta_{\mathrm f}}
      =K_{\mathrm f}-(1+t_{\mathrm f})\partial_{t_{\mathrm f}}K_{\mathrm f}=0.
\]
Indeed, varying the common initial mass contributes
$m_{\mathrm f}'[\psi_{1,\mathrm f}(0)+\psi_{2,\mathrm f}(0)]$; the other reference supports
lie on zero potentials and moving endpoints have zero first-order
contribution. Differentiating the normalized comparison gives this normalization identity. Consequently, if $y_1\le y_2<a_{\mathrm f}$,
$\psi_{1,\mathrm f}(y_1)+\psi_{2,\mathrm f}(y_2)\ge\psi_{1,\mathrm f}(y_2)+\psi_{2,\mathrm f}(y_2)\ge0$.

\textit{The finite price tail.} Put
$A_{1b}=N_{\mathrm f}\xi_0$, $A_{2b}=m_{\mathrm f}b_{\mathrm f}$,
\[
 \sigma_0=m_{\mathrm f}\{b_{\mathrm f}(1-\pi_{T,\mathrm f}-\beta_{\mathrm f})
                  +a_{\mathrm f}(\beta_{\mathrm f}-\pi_{a,\mathrm f})\},
 \qquad \sigma_1=m_{\mathrm f}(1-\pi_{T,\mathrm f})
                   +(1-\beta_{\mathrm f})A_{2b}-\beta_{\mathrm f}.
\]
For $0\le w\le\overline p_{\mathrm f}-b_{\mathrm f}$ the buyer potentials are
$\Upsilon_{1,\mathrm f}(b_{\mathrm f}+w)=(1-\beta_{\mathrm f})w^2$ and
$\Upsilon_{2,\mathrm f}(b_{\mathrm f}+w)=\sigma_0+\sigma_1w+(1-\beta_{\mathrm f})w^2$.
Their sum is positive because
$\sigma_0-\sigma_1^2/(8(1-\beta_{\mathrm f}))>0$.
Above $\overline p_{\mathrm f}$ their slopes are opposite, with first
slope $\pi_{T,\mathrm f}-(1-\beta_{\mathrm f})A_{1b}>0$; their sum is
constant and positive. The slope balance is
$-\beta_{\mathrm f}N_{\mathrm f}K_{\mathrm f}=0$.
The seller potentials satisfy, for every $w\ge0$,
\[
 \psi_{i,\mathrm f}(b_{\mathrm f}+w)
   =\psi_{i,\mathrm f}(b_{\mathrm f})+(1-\beta_{\mathrm f})w
                       -\min\{(1-\beta_{\mathrm f})w,\pi_{T,\mathrm f}\}\ge0.
\]
The low and body signs and the increasing first buyer potential above
$b_{\mathrm f}$ prove \eqref{eq:2fam-second-gap} and both ordered
potential inequalities. The polynomial identities and interval bounds for these coefficients,
including the positive second-unit kernel, are supplied in the
supplementary material.

\textit{Integration against the changed side.} At the reference, the
seller potential sum has expectation zero by its zero-support
conditions and cancellation at the common initial atom; the buyer
potential sum also has expectation zero. With either opposite side
fixed, the change in $\mathcal W_{\mathrm f,\pi_{\mathrm f}}
-\beta_{\mathrm f}\mathcal O_{\mathrm f}$ is exactly the expectation
of its corresponding ordered potential sum. The potentials grow at
most linearly, so bounded-body comparison extends to finite first
moments. Hence each such difference is nonnegative.
\end{proof}

%% file: two_unit_computation.tex
\clearpage
\section{The two-unit instance}\label{app:two-finite}

We approximate the three-piece curve of Proposition~\ref{prop:2fam-optimum}
by a rational concave polygon with \(128\) interior segments.
Its initial parameters, rounded to six decimal places, are
\[
\begin{aligned}
p_{\mathrm w}&\approx0.671601,&
\xi_{\ell,\mathrm w}&\approx0.258131,\\
\xi_{r,\mathrm w}&\approx0.429836,&
\xi_{0,\mathrm w}&\approx0.675567.
\end{aligned}
\]
For consecutive polygon vertices \((X_j,P_j)\), set
\(h_j=(P_{j+1}-P_j)/(X_{j+1}-X_j)\).
The physical-value increment is
\((X_{j+1}-X_j)/(P_jP_{j+1})\);
the first buyer survival is \(h_j\), and the first seller CDF is
\((2-m_{\mathrm w})(P_j-X_jh_j)\).
The finite marginal laws are obtained by rounding values and cumulative
probabilities to fixed rational grids, specified with the full vertices
and weights in the supplementary material. The table rounds body
parameters to six decimal places.

\begin{center}
\small
\begin{tabular}{@{}ll@{}}
\toprule
Quantity & Value\\
\midrule
Second buyer body & \(a_{\mathrm w}\approx0.828691\), mass \(1-\eta_{\mathrm w}\)\\
Common buyer tail & value \(10^{10}\), mass \(\eta_{\mathrm w}=10^{-10}\)\\
Second seller, lower atom & \(10^{-12}\), mass \(m_{\mathrm w}\approx0.482351\)\\
Second seller, upper atom & \(\approx1.525881\), mass \(1-m_{\mathrm w}\)\\
Marginal support sizes & \(131,2\) for the buyer; \(131,2\) for the seller\\
\bottomrule
\end{tabular}
\end{center}

Common-quantile coupling within each side gives \(131\) ordered vector
types on each side. We draw the buyer and seller vectors independently.
The seller displacement of \(10^{-12}\) separates coincident body
atoms. Every value is strictly positive.

\begin{proof}[Proof of Theorem~\ref{thm:two-separation}]
We sum welfare exactly at every valuation event of the rational instance.
Each type pair contributes its marginal gain on the closed price
interval \([S_i,B_i]\). The inclusive event value contains all trades
available in an adjacent open interval; welfare is constant between
events and equals initial welfare outside the support.
Thus the \(264\) event values cover every real price.
Their maximum divided by efficient welfare is a rational number
strictly below \(0.7290804\).

For the symmetric instance, we use the supplementary rational law
for the seller vector and its coordinate reversal for the independent
buyer vector. Exact evaluation of its \(30\) valuation events gives
a ratio below \(0.83693\).
Together with Theorem~\ref{thm:exact}, the two finite instances give
\[
r_2^*<0.7290804<0.738024<\beta_*,
\qquad r_{2,\mathrm{sym}}^*<0.83693.
\]
\end{proof}

%% file: paper.bbl
\newcommand{\etalchar}[1]{$^{#1}$}
\begin{thebibliography}{CBGdK{\etalchar{+}}17}

\bibitem[ABD03]{abd03}
Giovanni Alberti, Guy Bouchitt{\'e}, and Gianni {Dal Maso}.
\newblock The calibration method for the {Mumford--Shah} functional and
  free-discontinuity problems.
\newblock {\em Calculus of Variations and Partial Differential Equations},
  16(3):299--333, 2003.
\newblock \url{https://cvgmt.sns.it/paper/1428/}.

\bibitem[BCGZ18]{bcgz18}
Moshe Babaioff, Yang Cai, Yannai~A. Gonczarowski, and Mingfei Zhao.
\newblock The best of both worlds: Asymptotically efficient mechanisms with a
  guarantee on the expected gains-from-trade.
\newblock In {\em Proceedings of the ACM Conference on Economics and
  Computation ({EC})}, page 373, 2018.
\newblock
  \href{https://doi.org/10.1145/3219166.3219203}{doi:\nolinkurl{10.1145/3219166.3219203}}.
  Extended version: \href{https://arxiv.org/abs/1802.08023}{arXiv:1802.08023}.

\bibitem[BCWZ17]{bcwz17}
Johannes Brustle, Yang Cai, Fa~Wu, and Mingfei Zhao.
\newblock Approximating gains from trade in two-sided markets via simple
  mechanisms.
\newblock In {\em Proceedings of the ACM Conference on Economics and
  Computation ({EC})}, pages 589--590, 2017.
\newblock \href{https://arxiv.org/abs/1706.04637}{arXiv:1706.04637}.

\bibitem[BD21]{bd21}
Liad Blumrosen and Shahar Dobzinski.
\newblock ({Almost}) efficient mechanisms for bilateral trading.
\newblock {\em Games and Economic Behavior}, 130:369--383, 2021.
\newblock
  \href{https://doi.org/10.1016/j.geb.2021.08.011}{doi:\nolinkurl{10.1016/j.geb.2021.08.011}}.

\bibitem[BF16]{bf16}
Guy Bouchitt{\'e} and Ilaria Fragal{\`a}.
\newblock A duality theory for non-convex problems in the calculus of
  variations.
\newblock \href{https://arxiv.org/abs/1607.02878}{arXiv:1607.02878}, 2016.

\bibitem[BGG20]{bgg20}
Moshe Babaioff, Kira Goldner, and Yannai~A. Gonczarowski.
\newblock {Bulow--Klemperer}-style results for welfare maximization in
  two-sided markets.
\newblock In {\em Proceedings of the ACM-SIAM Symposium on Discrete Algorithms
  ({SODA})}, pages 2452--2471, 2020.
\newblock \href{https://arxiv.org/abs/1903.06696}{arXiv:1903.06696}.

\bibitem[BM16]{bm16}
Liad Blumrosen and Yehonatan Mizrahi.
\newblock Approximating gains-from-trade in bilateral trading.
\newblock In {\em Web and Internet Economics ({WINE})}, pages 400--413, 2016.
\newblock
  \href{https://doi.org/10.1007/978-3-662-54110-4_28}{doi:\nolinkurl{10.1007/978-3-662-54110-4_28}}.

\bibitem[CBdKLT16]{cbklt16}
Riccardo Colini-Baldeschi, Bart de~Keijzer, Stefano Leonardi, and Stefano
  Turchetta.
\newblock Approximately efficient double auctions with strong budget balance.
\newblock In {\em Proceedings of the ACM-SIAM Symposium on Discrete Algorithms
  ({SODA})}, pages 1424--1443, 2016.

\bibitem[CBGdK{\etalchar{+}}17]{cbg17}
Riccardo Colini-Baldeschi, Paul~W. Goldberg, Bart de~Keijzer, Stefano Leonardi,
  and Stefano Turchetta.
\newblock Fixed price approximability of the optimal gain from trade.
\newblock In {\em Web and Internet Economics ({WINE})}, pages 146--160, 2017.
\newblock Extended version:
  \href{https://arxiv.org/abs/1710.08394}{arXiv:1710.08394}.

\bibitem[{\v{C}}P16]{cp16}
Jernej {\v{C}}opi{\v{c}} and Clara Ponsat{\'i}.
\newblock Optimal robust bilateral trade: risk neutrality.
\newblock {\em Journal of Economic Theory}, 163:276--287, 2016.
\newblock
  \href{https://doi.org/10.1016/j.jet.2015.11.008}{doi:\nolinkurl{10.1016/j.jet.2015.11.008}}.

\bibitem[CW23]{cw23}
Yang Cai and Jinzhao Wu.
\newblock On the optimal fixed-price mechanism in bilateral trade.
\newblock In {\em Proceedings of the ACM Symposium on Theory of Computing
  ({STOC})}, pages 737--750, 2023.
\newblock
  \href{https://doi.org/10.1145/3564246.3585171}{doi:\nolinkurl{10.1145/3564246.3585171}}.
  Extended version:
  \href{https://arxiv.org/abs/2301.05167v2}{arXiv:2301.05167v2}.

\bibitem[DFL{\etalchar{+}}21]{dfflr21}
Paul D{\"u}tting, Federico Fusco, Philip Lazos, Stefano Leonardi, and Rebecca
  Reiffenh{\"a}user.
\newblock Efficient two-sided markets with limited information.
\newblock In {\em Proceedings of the ACM Symposium on Theory of Computing
  ({STOC})}, pages 1452--1465, 2021.
\newblock
  \href{https://doi.org/10.1145/3406325.3451076}{doi:\nolinkurl{10.1145/3406325.3451076}}.

\bibitem[DMS{\etalchar{+}}25]{dmsww25}
Yuan Deng, Jieming Mao, Balasubramanian Sivan, Kangning Wang, and Jinzhao Wu.
\newblock Approximately efficient bilateral trade with samples.
\newblock \href{https://arxiv.org/abs/2502.13122}{arXiv:2502.13122}, 2025.

\bibitem[DMSW22]{dmsw22}
Yuan Deng, Jieming Mao, Balasubramanian Sivan, and Kangning Wang.
\newblock Approximately efficient bilateral trade.
\newblock In {\em Proceedings of the ACM Symposium on Theory of Computing
  ({STOC})}, pages 718--721, 2022.
\newblock
  \href{https://doi.org/10.1145/3519935.3520054}{doi:\nolinkurl{10.1145/3519935.3520054}}.

\bibitem[DS24]{ds24}
Shahar Dobzinski and Ariel Shaulker.
\newblock Bilateral trade with correlated values.
\newblock In {\em Proceedings of the ACM Symposium on Theory of Computing
  ({STOC})}, pages 237--246, 2024.
\newblock
  \href{https://doi.org/10.1145/3618260.3649659}{doi:\nolinkurl{10.1145/3618260.3649659}}.

\bibitem[DS26]{ds26}
Shahar Dobzinski and Ariel Shaulker.
\newblock Welfare maximization in bilateral trade: improved approximation
  guarantees beyond the fixed price barrier.
\newblock \href{https://arxiv.org/abs/2606.04890v1}{arXiv:2606.04890v1}, 2026.

\bibitem[Fei22]{fei22}
Yumou Fei.
\newblock Improved approximation to first-best gains-from-trade.
\newblock In {\em Web and Internet Economics ({WINE})}, volume 13778 of {\em
  Lecture Notes in Computer Science}, pages 204--218, 2022.
\newblock
  \href{https://doi.org/10.1007/978-3-031-22832-2_12}{doi:\nolinkurl{10.1007/978-3-031-22832-2_12}}.
  Extended version: \href{https://arxiv.org/abs/2205.00140}{arXiv:2205.00140}.

\bibitem[GdK26]{gdk26}
Giordano Giambartolomei and Bart de~Keijzer.
\newblock On the approximation ratio of optimal fixed-price mechanisms for
  single and multi-unit bilateral trade.
\newblock {\em Proceedings of the AAAI Conference on Artificial Intelligence},
  40(20):16946--16953, 2026.
\newblock
  \href{https://doi.org/10.1609/aaai.v40i20.38741}{doi:\nolinkurl{10.1609/aaai.v40i20.38741}}.

\bibitem[GGdK{\etalchar{+}}19]{ggkls19}
Matthias Gerstgrasser, Paul~W. Goldberg, Bart de~Keijzer, Philip Lazos, and
  Alexander Skopalik.
\newblock Multi-unit bilateral trade.
\newblock {\em Proceedings of the AAAI Conference on Artificial Intelligence},
  33(01):1973--1980, 2019.
\newblock
  \href{https://doi.org/10.1609/aaai.v33i01.33011973}{doi:\nolinkurl{10.1609/aaai.v33i01.33011973}}.

\bibitem[HR87]{hr87}
Kathleen~M. Hagerty and William~P. Rogerson.
\newblock Robust trading mechanisms.
\newblock {\em Journal of Economic Theory}, 42(1):94--107, 1987.
\newblock
  \href{https://doi.org/10.1016/0022-0531(87)90104-9}{doi:\nolinkurl{10.1016/0022-0531(87)90104-9}}.

\bibitem[JGC26]{jgc26}
Tao Jiang, Minbo Gao, and Shaowei Cai.
\newblock The exact approximation ratio of the optimal fixed-price mechanism in
  bilateral trade.
\newblock \href{https://arxiv.org/abs/2609.27878v1}{arXiv:2609.27878v1}, 2026.

\bibitem[Jo26]{jo26}
Sunghyeon Jo.
\newblock Tighter bounds for the random-offerer mechanism in bilateral trade.
\newblock \href{https://arxiv.org/abs/2607.13959}{arXiv:2607.13959}, 2026.

\bibitem[KPV22]{kpv22}
Zi~Yang Kang, Francisco Pernice, and Jan Vondr{\'a}k.
\newblock Fixed-price approximations in bilateral trade.
\newblock In {\em Proceedings of the ACM-SIAM Symposium on Discrete Algorithms
  ({SODA})}, pages 2964--2985, 2022.
\newblock \href{https://arxiv.org/abs/2107.14327}{arXiv:2107.14327}.

\bibitem[KS00]{ks00}
David Kinderlehrer and Guido Stampacchia.
\newblock {\em An Introduction to Variational Inequalities and Their
  Applications}.
\newblock SIAM, 2000.
\newblock Reprint of the 1980 edition.
  \href{https://doi.org/10.1137/1.9780898719451}{doi:\nolinkurl{10.1137/1.9780898719451}}.

\bibitem[KV19]{kv19}
Zi~Yang Kang and Jan Vondr{\'a}k.
\newblock Fixed-price approximations to optimal efficiency in bilateral trade.
\newblock SSRN working paper,
  \href{https://ssrn.com/abstract=3460336}{ssrn.com/abstract=3460336}, 2019.
\newblock Earlier version: Strategy-proof approximations of optimal efficiency
  in bilateral trade, 2018.

\bibitem[LQRW26]{lqrw26}
Zhengyang Liu, Ying Qin, Zeyu Ren, and Zihe Wang.
\newblock Second-best bilateral trade is $1/2$ efficient.
\newblock \href{https://arxiv.org/abs/2606.03849}{arXiv:2606.03849}, 2026.

\bibitem[LRW23]{lrw23}
Zhengyang Liu, Zeyu Ren, and Zihe Wang.
\newblock Improved approximation ratios of fixed-price mechanisms in bilateral
  trades.
\newblock In {\em Proceedings of the ACM Symposium on Theory of Computing
  ({STOC})}, pages 751--760, 2023.
\newblock
  \href{https://doi.org/10.1145/3564246.3585160}{doi:\nolinkurl{10.1145/3564246.3585160}}.
  Extended version:
  \href{https://arxiv.org/abs/2303.15711v1}{arXiv:2303.15711v1}.

\bibitem[McA92]{mcafee92}
R.~Preston McAfee.
\newblock A dominant strategy double auction.
\newblock {\em Journal of Economic Theory}, 56(2):434--450, 1992.
\newblock
  \href{https://doi.org/10.1016/0022-0531(92)90091-U}{doi:\nolinkurl{10.1016/0022-0531(92)90091-U}}.

\bibitem[MS83]{ms83}
Roger~B. Myerson and Mark~A. Satterthwaite.
\newblock Efficient mechanisms for bilateral trading.
\newblock {\em Journal of Economic Theory}, 29(2):265--281, 1983.
\newblock
  \href{https://doi.org/10.1016/0022-0531(83)90048-0}{doi:\nolinkurl{10.1016/0022-0531(83)90048-0}}.

\bibitem[Roc70]{rock70}
R.~Tyrrell Rockafellar.
\newblock {\em Convex Analysis}.
\newblock Princeton University Press, 1970.
\newblock
  \href{https://doi.org/10.1515/9781400873173}{doi:\nolinkurl{10.1515/9781400873173}}.

\bibitem[Sio58]{sion58}
Maurice Sion.
\newblock On general minimax theorems.
\newblock {\em Pacific Journal of Mathematics}, 8(1):171--176, 1958.
\newblock \url{https://msp.org/pjm/1958/8-1/pjm-v8-n1-p14-p.pdf}.

\bibitem[Zha26]{zhang26}
Wanchang Zhang.
\newblock Generalized double auctions: maximizing welfare guarantees in
  bilateral trade.
\newblock {\em American Economic Review}, 2026.
\newblock Forthcoming.

\end{thebibliography}
